\newlist{inlineroman}{enumerate*}{1}
\setlist[inlineroman]{afterlabel=~,label=(\roman*)}
\newcommand{\R}{{\mathbb{R}}}
\newcommand{\N}{{\mathbb{N}}}
\newcommand{\E}{{\mathbb{E}}}
\renewcommand{\P}{{\mathbb{P}}}
\newcommand{\Q}{{\mathbb{Q}}}
\newcommand{\cF}{{\mathcal{F}}}
\newcommand{\cM}{{\mathcal{M}}}
\newcommand{\cG}{{\mathcal{G}}}
\newcommand{\cQ}{{\mathcal{Q}}}
\newcommand{\cH}{{\mathcal{H}}}
\newcommand{\cC}{{\mathcal{C}}}
\newcommand{\pack}{{\mathrm{pack}}}
\newcommand{\op}{{\mathrm{op}}}
\newcommand{\bX}{{\mathbf{X}}}
\newcommand{\bA}{{\boldsymbol{A}}}
\newcommand{\bI}{{\boldsymbol{I}}}
\newcommand{\bB}{{\boldsymbol{B}}}
\newcommand{\bU}{{\boldsymbol{U}}}
\newcommand{\bD}{{\boldsymbol{D}}}
\newcommand{\bK}{{\boldsymbol{K}}}
\newcommand{\bR}{{\boldsymbol{R}}}
\newcommand{\cX}{{\mathcal{X}}}
\newcommand{\cT}{{\mathcal{T}}}
\newcommand{\cB}{{\mathcal{B}}}
\newcommand{\cS}{{\mathcal{S}}}
\newcommand{\cA}{{\mathcal{A}}}
\newcommand{\cU}{{\mathcal{U}}}
\newcommand{\cR}{{\mathcal{R}}}
\newcommand{\T}{{\mathsf{T}}}
\newcommand{\Gauss}{{\mathrm{Gauss}}}
\newcommand{\poly}{{\mathrm{poly}}}
\newcommand{\Sob}{{\mathrm{Sob}}}
\newcommand{\nrep}{{n_\mathrm{rep}}}
\DeclareMathOperator*{\argmin}{argmin}
\DeclareMathOperator*{\sargmin}{sargmin}
\DeclareMathOperator{\Vol}{Vol}
\DeclareMathOperator*{\essinf}{ess\,inf}
\newcommand{\diffi}{\,\mathrm{d}}
\newcommand{\diffd}{\mathrm{d}}
\newcommand{\myfigscale}{0.8}
\theoremstyle{plain}
\newtheorem{theorem}{Theorem}
\newtheorem{lemma}[theorem]{Lemma}
\newtheorem{corollary}[theorem]{Corollary}
\newtheorem{proposition}[theorem]{Proposition}
\newtheorem{definition}[theorem]{Definition}
\newtheorem{example}[theorem]{Example}
\title{Upgrading survival models with CARE}
\author{
  William G.\ Underwood\textsuperscript{1,*},
  Henry W.\ J.\ Reeve\textsuperscript{2},
  Oliver Y.\ Feng\textsuperscript{3}, \\
  Samuel A.\ Lambert\textsuperscript{4},
  Bhramar Mukherjee\textsuperscript{5} and
  Richard J.\ Samworth\textsuperscript{1}
}
\date{\today}
\begin{document}

\maketitle

\footnotetext[1]{
  Statistical Laboratory,
  University of Cambridge.
}
\footnotetext[2]{
  School of Artificial Intelligence,
  Nanjing University.
}
\footnotetext[3]{
  Department of Statistics,
  London School of Economics and Political Science.
}
\footnotetext[4]{
  British Heart Foundation Cardiovascular Epidemiology Unit, \\
  Department of Public Health and Primary Care,
  University of Cambridge.
}
\footnotetext[5]{
  Department of Biostatistics,
  Yale School of Public Health.
}

\let\thefootnote\relax
\footnotetext[1]{
  \textsuperscript{*}Corresponding author:
  \href{mailto:wgu21@cam.ac.uk}{\texttt{wgu21@cam.ac.uk}}
}
\newcommand{\thefootnote}{\arabic{footnote}}

\setcounter{page}{0}\thispagestyle{empty}

\begin{abstract}
Clinical risk prediction models are regularly updated
as new data, often with additional covariates,
become available.
We propose CARE
(Convex Aggregation of relative Risk Estimators) as
a general approach for combining existing `external' estimators
with a new data set in a time-to-event survival analysis setting.
Our method initially employs the new data to fit a flexible family
of reproducing kernel estimators via penalised partial likelihood
maximisation.
The final relative risk estimator is then constructed as a convex
combination of
the kernel and external estimators, with
the convex combination coefficients and regularisation parameters
selected using cross-validation.
We establish high-probability bounds for the $L_2$-error
of our proposed aggregated estimator, showing that
it achieves a rate of convergence that is at least as good as both
the optimal kernel estimator and the best external model.
Empirical results from simulation studies align with
the theoretical results, and we illustrate the improvements
our methods provide for cardiovascular disease risk modelling.
Our methodology is implemented in the Python package
\texttt{care-survival}.
 \end{abstract}

\vspace*{10mm}
\noindent\textbf{Keywords}:
Model aggregation,
penalised estimation,
relative risk,
reproducing kernels,
survival analysis.

\vspace*{4mm}
\noindent\textbf{MSC}:
Primary
62N02; %
Secondary
62G05, %
62P10. %

\clearpage
\pagebreak

\section{Introduction}

In biomedicine, clinical prediction models for health outcomes based
on covariates
provide an important resource to aid medical decision-making.
These models are typically updated
regularly as additional potentially relevant variables are measured,
usually on a separate set of individuals. A motivating example for
this work is the SCORE2 model
\citep{esc2021score2} for estimating 10-year fatal and
non-fatal cardiovascular disease risk in Europe.
The model was initially trained using data on $677{,}684$ individuals
under a Fine--Gray competing risks framework
\citep{fine1999proportional},
including age, smoking status, blood pressure and cholesterol as covariates.
A similar time-to-event model named SCORE2-OP was developed concurrently
\citep{esc2021score2op},
specifically targeted towards individuals over $70$ years old.
Since the initial release of these models, several updates have
been proposed, adding new predictor variables to improve risk estimation
\citep{matsushita2023including,cosmi2024score,%
dienhart2024including,wong2025integration,xie2025metabolomics}.
Although the new predictors can sometimes be calculated
directly from existing covariates by applying appropriate transformations,
it is often the case that
the model must be updated using an alternative data source
\citep{matsushita2023including,xie2025metabolomics}.
Other existing survival models include
PREDICT for breast cancer and prostate cancer prognosis
\citep{wishart2010predict,wishart2012predict,wishart2014inclusion,
thurtle2019individual,thurtle2019understanding,pandiaraja2024utilisation},
QRISK for cardiovascular disease risk prediction
\citep{hippisley2007derivation,hippisley2008predicting,
hippisley2017development} and
PLCO\textsubscript{M2012} for lung cancer diagnosis
\citep{tammemagi2011lung,tammemagi2013selection}.
A common approach for upgrading such models is to
incorporate directly fitted coefficients from an existing study,
here in the form of estimated hazard ratios.
While validation studies can provide some reassurance, the underlying
strong assumption is that the
imported coefficients,
which may be based on a reduced set of variables, remain
valid in the presence of the other covariates in the model.
For example, the PREDICT tool for breast cancer prognosis was updated
twice using this approach, adding
the biomarker proteins HER2 and KI67 as new covariates
\citep[see][respectively]{wishart2012predict,wishart2014inclusion}.

In this paper, we propose general methodology for fitting and
upgrading survival analysis prediction models when new data (with
additional covariates) become available.
Our framework encompasses the common situations where the original
training data may no
longer be available, for instance due to patient confidentiality,
regulatory or proprietary reasons.
Instead, we may only be able to
query the model in the sense of obtaining predictions at an arbitrary
test point of our choosing, typically through an online user
interface, or in the case of a parametric model, through published
fitted parameter estimates.
Thus, we have access to one or more existing pre-trained models,
as well as a new independent data set
consisting of the full expanded set of covariates,
censored event times and censoring indicators.

Following some background material in Section~\ref{sec:set_up}, our
first contribution, in Section~\ref{sec:methodology}, is to introduce a
principled reproducing
kernel Hilbert space (RKHS) framework that provides considerable
flexibility for modelling the relative risk function. In the special case
of a linear kernel, this reduces to the celebrated Cox proportional
hazards model \citep{cox1972regression}, but fully nonparametric
choices are also possible. We demonstrate that such models can be
fitted in a computationally efficient manner via a restricted
penalised partial likelihood
approach.
In order to
borrow strength across all available information sources, we propose
a data-driven procedure named CARE
(Convex Aggregation of relative Risk Estimators),
which combines the existing models and our new RKHS estimator.
The final prediction function is constructed as a convex combination
of the old and new prediction functions,
with the convex combination weights and the RKHS regularisation parameter
chosen via a joint partial likelihood cross-validation strategy.

In Section~\ref{sec:analysis}, we turn to the theoretical properties
of our proposed CARE procedure. Our first two main results consider the
RKHS estimator alone. Theorem~\ref{thm:rate} provides a
high-probability bound on the error of this estimator in terms of the
sample size, the regularisation parameter and a quantity reflecting
the spectral properties of the natural integral operator associated
with the kernel, while Theorem~\ref{thm:parameter_tuning} establishes
an oracle inequality to justify the use of cross-validation for
parameter tuning. Theorem~\ref{thm:model_selection} in
Section~\ref{sec:model_selection} presents our main performance
guarantee for the CARE procedure that combines the RKHS
estimator with the prediction functions from the existing models. It
demonstrates that our joint cross-validation strategy performs nearly
as well as the oracle convex combination of the best existing
pre-trained model and the optimally tuned RKHS estimator. In
particular, CARE is able to adapt both to situations where
the best pre-trained model is a superior predictor compared to the
newly constructed RKHS estimator (for example, because the new sample
size is too small), and to settings where the RKHS estimator is a better
predictor than all of the existing estimators.
Section~\ref{sec:implementation} provides practical implementation
details for our CARE procedure, including fitting recommendations via
first-\ and second-order optimisation methods. In
Section~\ref{sec:empirical}, we present our empirical results,
first verifying our theoretical results through a series of
simulation studies.
We then consider upgrading an existing cardiovascular disease risk
model (SCORE2)
using cohort data from the UK Biobank \citep{littlejohns2019uk}.
By incorporating a new data set with additional covariates we show
that CARE can improve the concordance index \citep{harrell1982evaluating}
of the SCORE2 model by $1.21\%$ for females and $2.74\%$ for males.
Proofs and further technical details are gathered in
\iftoggle{journal}{the online supplementary material
  \citep{underwood2026caresupplement}, where results are prefixed with
the letter `S'}{the appendix}.

Recent years have seen a growing interest among health researchers in
developing methodologies for aggregating estimators and data sets.
In the parametric setting, approaches include
joint likelihood maximisation \citep{zhang2020generalized},
penalised constrained maximum likelihood for
heterogeneous populations \citep{zhai2022data},
empirical Bayes composite estimation \citep{gu2023meta},
constrained empirical likelihood \citep{ni2023empirical}
and generalised method of moments \citep{fang2025integrated}.
Semiparametric and nonparametric methods include
universal aggregation \citep{goldenshluger2009universal},
affine aggregation \citep{dalalyan2012sharp},
efficient data fusion
\citep{hu2022semiparametric,li2023efficient,graham2024towards}, model
aggregation via cross-validation \citep{yu2025unified}
and pseudo-labelling \citep{wang2023pseudo}.
In the context of time-to-event models,
\citet{liu2014estimating} use external data sources to improve
estimation of the baseline hazard function for within-cohort
Cox proportional hazards analysis of rare diseases.
\citet{huang2016efficient} describe a method for
fitting a Cox model by an empirical likelihood approach,
incorporating external information in the form of
published group-level survival probabilities;
see also \citet{chen2021combining} and \citet{sheng2021synthesizing}
for extensions.
\citet{wang2025kullback} propose a procedure for
fitting a generalised parametric Cox model by
maximising a penalised version of the log-likelihood
that imposes Kullback--Leibler-based shrinkage towards
external pre-trained survival models.

Related to our model upgrading task, \emph{transfer learning}
\citep{cai2021transfer,reeve2021adaptive} also considers settings of
multiple data sources. Specifically, one typically studies source
and target data sets, both of which are available and measure the
same features. The primary emphasis is on adapting to the
distributional shift between source and target, with the aim of
improving inference regarding some aspect of the target distribution.

A separate line of work has sought to develop general theory and
methodology for estimation in survival analysis models where the
relative risk is modelled more flexibly than via a linear predictor.
Early works include the introduction of local polynomials for partial
likelihood estimation \citep{fan1997local,chen2007local}; see also
more recent work on high-dimensional regularised Cox models
\citep{bradic2011regularization,huang2013oracle,yu2021confidence}.
\citet{liu2020local} propose RKHS methods with Sobolev kernels for
censored survival data,
building on work of
\citet{osullivan1988nonparametric,osullivan1993nonparametric}.
\citet{chung2018partial} incorporate nonparametric shape constraints
\citep{samworth2018special} into hazard ratio modelling, while
\citet{qu2016optimal} consider kernel estimation in the functional
Cox model. In a more applied vein, \citet{montaseri2025survival}
study heart failure via an RKHS framework (without regularisation)
for an accelerated failure time model.

\section{Set-up and preliminary results}
\label{sec:set_up}

We begin by providing some notation used
throughout the paper, and
give a formal description of the model for the data generating process
based on a competing risks formulation.
We also present some further introductory material on reproducing
kernel Hilbert spaces and associated norms.

\subsection{Notation}

For $n \in \N$, let $[n] \vcentcolon= \{1, \ldots, n\}$.
We write $a \land b$ for the minimum of two real numbers
$a$ and~$b$, and $a \lor b$ for their maximum.
For positive real-valued sequences $(a_n)$ and $(b_n)$, we write
$a_n \lesssim b_n$ if there exists $C > 0$
and $N \in \N$ with
$a_n \leq C b_n$ for all $n \geq N$.
Similarly, we write
$a_n \asymp b_n$ if there exists $C \geq 1$
and $N \in \N$ with
$1/C \leq a_n / b_n \leq C$ for all $n \geq N$.
For $a \in \R$ and $d \in \N$,
define $a^{\otimes d} \vcentcolon= (a, \ldots, a) \in \R^d$.
For a real square symmetric matrix $\bA$,
we write $\bA \succeq 0$ if $\bA$
is positive semi-definite and
$\bA \succ 0$ if $\bA$ is positive definite, with $\preceq$ and
$\prec$ defined analogously.
We write $\lambda_{\min}(\bA)$ and $\lambda_{\max}(\bA)$
for the minimum and maximum eigenvalues of $\bA$, respectively.
The operator norm is denoted by $\|\bA\|_{\mathrm{op}}$,
while the entrywise $\ell_\infty$-norm is $\|\bA\|_{\infty}$.
The $d \times d$ identity matrix is denoted $\bI_d$,
and we write $0_d \vcentcolon= (0, \ldots, 0)^\T \in \R^d$
and $1_d \vcentcolon= (1, \ldots, 1)^\T \in \R^d$.
For a measurable space $(\cX,\cA)$,
define $\cB(\cX)$ to be the set
of bounded, measurable functions from $\cX$ to $\R$.
The supremum norm of $f \in \cB(\cX)$ is
$\|f\|_\infty \vcentcolon= \sup_{x \in \cX} |f(x)|$.
For a totally ordered finite set $\cS$ and a function $f: \cS \to \R$,
we write $\sargmin_{s \in \cS} f(s) \vcentcolon=
\min \bigl\{s \in \cS: f(s) = \min\{f(t): t \in \cS\}\bigr\}$.
The uniform distribution on a compact interval $[a, b]$
is denoted by $\cU[a, b]$.

\subsection{Data and partial likelihood functions}
\label{sec:data}

Let $(\cX,\cA)$ denote a measurable space, and let
$(X, T_{\mathrm S}, T_{\mathrm C})$ be a random vector taking values in
$\cX \times (0, \infty) \times (0, \infty)$.
Here, $X$ represents a covariate, $T_{\mathrm S}$ a survival
time and $T_{\mathrm C}$ a censoring time. Suppose that $T_{\mathrm S}$
and $T_{\mathrm C}$ are conditionally independent given $X$, and
define the observed time $T \vcentcolon= T_{\mathrm S} \land T_{\mathrm C}$
and the censoring indicator
$I \vcentcolon= \mathbbm{1}_{\{T_{\mathrm C} < T_{\mathrm S}\}}$.
The \emph{at-risk process} is given by $R(t) \vcentcolon= \mathbbm{1}_{\{ T
\geq t \}}$ for $t \in [0, 1]$.
Define $q: \cX \times [0, 1] \to \R$ by
$q(x, t)
\vcentcolon= \P(T \geq t \mid X=x)$
and assume that
$q_1 \vcentcolon= \inf_{x \in \cX} q(x, 1) > 0$.
Let $P_X$ denote the distribution of~$X$.
We consider a proportional hazards model with conditional hazard
function
\begin{equation}
  \label{eq:hazard}
  h(t \mid x)
  \vcentcolon= - \frac{\diffd}{\diffd t}
  \log \P (T_{\mathrm S} > t \mid X=x)
  = \lambda_0(t) e^{f_0(x)}
\end{equation}
for $t \in [0, 1]$ and $P_X$-almost all $x$,
where $\lambda_0: [0, 1] \to [0, \infty)$
is a measurable baseline hazard function and
$f_0 \in \cB(\cX)$ is a function representing the relative risk.
For any
$P_X$-integrable $f: \cX \to \R$, define
$P_X(f) \vcentcolon= \int_\cX f(x) \diffi P_X(x)$.
To ensure the identifiability of $\lambda_0$ and $f_0$,
we impose the constraint that $P_X(f_0) = 0$
\citep[see][]{osullivan1993nonparametric}.
Observe that
\begin{align*}
  \Lambda &\vcentcolon=
  \int_0^1 \lambda_0(t) \diffi t
  = e^{-f_0(x)}
  \bigl\{
    \log \P(T_{\mathrm S} > 0 \mid X = x)
    - \log \P(T_{\mathrm S} \geq 1 \mid X = x)
  \bigr\} \\
  &\phantom{\vcentcolon}\leq e^{\|f_0\|_\infty} \log (1/q_1) < \infty,
\end{align*}
where the second equality holds for $P_X$-almost all
$x \in \cX$.
The observed data $(X_i, T_i, I_i)_{i \in [2n]}$ are assumed to be
independent and identically distributed copies of $(X,T,I)$.
We refer to the first $n$ triples as the \emph{training set},
while the rest form the \emph{validation set}.
For $i \in [2n]$, write
$N_i(t) \vcentcolon= \mathbbm{1}_{\{ T_i \leq t,\, I_i = 0 \}}$
for the event process \citep{fleming2013counting} and
$R_i(t) \vcentcolon= \mathbbm{1}_{\{ T_i \geq t \}}$.
Defining $N(t) \vcentcolon= \sum_{i=1}^n N_i(t) / n$ and
\begin{equation*}
  S_n(f, t) \vcentcolon= \frac{1}{n} \sum_{i=1}^n
  R_i(t) e^{f(X_i)}
\end{equation*}
for $f: \cX \to \R$, the negative log-partial likelihood
of this model is
\begin{equation}
  \label{eq:likelihood}
  \ell_n(f)
  \vcentcolon=
  \int_{0}^{1} \log S_n(f, t) \diffi {N(t)}
  - \frac{1}{n} \sum_{i=1}^n f(X_i) N_i(1).
\end{equation}
This is well-defined because
$S_n(f, T_i) \geq R_i(T_i) e^{f(X_i)} / n = e^{f(X_i)} / n > 0$
for $i \in [n]$.
For $f \in \cB(\cX)$ and $t \in [0, 1]$,
let $S(f, t) \vcentcolon= \int_\cX q(x, t) e^{f(x)} \diffi P_X(x)$
and define the limiting negative log-partial likelihood by
\begin{equation*}
  \ell_{\star}(f)
  \vcentcolon=
  \int_{0}^{1}
  \log\bigl(S(f, t)\bigr)
  S(f_0, t) \lambda_0(t) \diffi t
  -
  \int_{0}^{1}
  D S(f_0, t)(f)
  \lambda_0(t)
  \diffi t,
\end{equation*}
where $D S(f_0, t)$ is a Gateaux derivative
\citep[Chapter~7]{aliprantis2006infinite}.
This is finite, since
$S(f_0, t) \leq e^{\|f - f_0\|_\infty} S(f, t)$, and we
interpret $0 \log 0$ as zero.
Both $\ell_n(f)$ and $\ell_\star(f)$
are invariant under addition of constants to $f$.
The following lemma may be regarded as a Fisher
consistency property of $f_0$ \citep{fisher1922mathematical}.
\begin{lemma}[Characterisation of $f_0$]%
  \label{lem:characterisation_f0}
  We have $f_0 \in \argmin_{f \in \cB(\cX)} \ell_\star(f)$.
  Further, $f_0$ is unique in the sense that if
  $\tilde f \in \argmin_{f \in \cB(\cX)} \ell_\star(f)$
  and $P_X(\tilde f) = 0$,
  then $\tilde{f}(x) = f_0(x)$
  for $P_X$-almost every $x \in \cX$.
\end{lemma}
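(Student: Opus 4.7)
The plan is to show that $\ell_\star$ is convex on $\cB(\cX)$ with a stationary point at $f_0$, and then to deduce uniqueness by tracking when the relevant convexity is strict along the line segment joining $f_0$ to any candidate minimiser $\tilde f$.

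For the first part, I would decompose $\ell_\star(f) = A(f) - L(f)$, where $A(f) \vcentcolon= \int_0^1 \log S(f,t)\,S(f_0,t)\lambda_0(t)\diff{t}$ and $L(f) \vcentcolon= \int_0^1 DS(f_0,t)(f)\,\lambda_0(t)\diff{t}$. Since $DS(f_0,t)(f) = \int q(x,t) e^{f_0(x)} f(x) \diff P_X(x)$, the functional $L$ is linear in $f$. Hölder's inequality gives, for $\alpha \in (0,1)$ and $f_1, f_2 \in \cB(\cX)$,
\[
S\bigl(\alpha f_1 + (1-\alpha)f_2, t\bigr)
= \int \bigl(q(x,t) e^{f_1(x)}\bigr)^{\alpha} \bigl(q(x,t) e^{f_2(x)}\bigr)^{1-\alpha} \diff P_X(x)
\leq S(f_1,t)^{\alpha}\, S(f_2,t)^{1-\alpha},
\]
so $f \mapsto \log S(f,t)$ is convex; integration against the nonnegative measure $S(f_0,t)\lambda_0(t)\diff{t}$ preserves convexity, making $A$, and hence $\ell_\star$, convex on $\cB(\cX)$. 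Exchanging Gateaux derivative and integral (which is routine because $q$ and the relevant functions are bounded) then gives, for every $g \in \cB(\cX)$,
\[
D\ell_\star(f_0)(g) = \int_0^1 \frac{DS(f_0,t)(g)}{S(f_0,t)}\,S(f_0,t)\lambda_0(t)\diff{t} - \int_0^1 DS(f_0,t)(g)\,\lambda_0(t)\diff{t} = 0,
\]
and so by convexity, $f_0 \in \argmin_{f \in \cB(\cX)} \ell_\star(f)$.

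For uniqueness, suppose $\tilde f \in \cB(\cX)$ also minimises $\ell_\star$ with $P_X(\tilde f) = 0$. Then $\phi(\alpha) \vcentcolon= \ell_\star\bigl(\alpha \tilde f + (1-\alpha) f_0\bigr)$ is convex on $[0,1]$ and satisfies $\phi(0) = \phi(1) = \min \ell_\star$, hence is constant. Because $L$ is affine along this segment, this constancy forces the convexity inequality for $\log S(\cdot,t)$ to be an equality for $\lambda_0(t) S(f_0,t)\diff{t}$-almost every $t$. The Hölder equality condition then implies that $q(\cdot,t) e^{\tilde f(\cdot)}$ and $q(\cdot,t) e^{f_0(\cdot)}$ are proportional $P_X$-almost everywhere, i.e.~$\tilde f - f_0$ is $P_X$-a.e.\ constant on $\{x : q(x,t) > 0\}$. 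The standing assumption $q_1 > 0$ gives $q(x,t) \geq q(x,1) \geq q_1 > 0$ for all $t \in [0,1]$, upgrading this to a genuine $P_X$-a.e.\ statement, after which the identifiability constraint $P_X(\tilde f) = P_X(f_0) = 0$ forces the constant to be zero. The principal obstacle, and where care is required in writing up, is that the argument needs a set of $t \in [0,1]$ of positive Lebesgue measure on which $\lambda_0(t) > 0$: if $\lambda_0$ vanishes Lebesgue-a.e.\ then $\ell_\star \equiv 0$ and uniqueness is vacuously false, so this nondegeneracy must be taken as implicit in the model setup.
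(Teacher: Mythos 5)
Your proof is correct but takes a genuinely different route from the paper. The paper computes $D\ell_\star(f_0) = 0$ (from Lemma~\ref{lem:derivatives_lstar}) and then invokes the quantitative restricted strong convexity bound of Lemma~\ref{lem:strong_convexity}, namely $D^2\ell_\star(f)(g,g) \geq e^{-\|f-f_0\|_\infty}e^{-\|f\|_\infty}\Lambda q_1\|g - P_X(g)1_\cX\|_{L_2}^2$, to conclude via Taylor's theorem that $\ell_\star(f) - \ell_\star(f_0) \geq c\,\|f - P_X(f)1_\cX - f_0\|_{L_2}^2$; this single inequality delivers both minimality and uniqueness in one stroke. You instead establish log-convexity of $S(\cdot,t)$ via H\"older's inequality, conclude $\ell_\star$ is convex with a stationary point at $f_0$, and derive uniqueness by pushing the H\"older equality condition through the constancy of $\alpha \mapsto \ell_\star(\alpha\tilde f + (1-\alpha)f_0)$. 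Your argument is more elementary and self-contained (it does not need the separate strong-convexity lemma), but it is purely qualitative: the paper's route pays off elsewhere, since the quantitative lower bound is reused in the proofs of Theorem~\ref{thm:rate} and Lemma~\ref{lem:general_cv}, whereas the H\"older equality condition cannot be so leveraged. Your observation that a positive-Lebesgue-measure set of $t$ with $\lambda_0(t) > 0$ (equivalently $\Lambda > 0$) is needed is correct and matches an implicit assumption of the paper: the same nondegeneracy appears in Lemma~\ref{lem:strong_convexity}, whose lower bound is vacuous when $\Lambda = 0$, and the constant in Lemma~\ref{lem:general_cv} contains a factor $1/\Lambda$. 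Neither proof could succeed without it, so flagging it as a standing model assumption is the right call.
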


\subsection{Reproducing kernel Hilbert spaces}

A \emph{kernel} is a function $k: \cX \times \cX \to \R$
that is symmetric and is positive definite in the sense that for any
$n \in \N$ and $x_1,\ldots,x_n \in \mathcal{X}$, the matrix
$\bK \in \R^{n \times n}$ with entries
$\bK_{i j} \vcentcolon= k(x_i,x_j)$
is symmetric and positive semi-definite. We will assume that
the measurable space $(\mathcal{X},\mathcal{A})$ and
the kernel $k$ are sufficiently regular
to ensure that Mercer's theorem holds.
To this end, let~$\mathcal{X}$ be a locally compact metric space,
equipped with its
Borel $\sigma$-algebra $\mathcal{A}$, and assume that every open set
in $\mathcal{X}$ is
$\sigma$-compact, i.e.~a countable union of compact sets. This
ensures that~$P_X$ is a Radon probability measure
\citep[Theorem~7.8]{folland1999real}. We will further assume that~$k$
is continuous, that $P_X(U) > 0$ for every non-empty open set $U
\subseteq \mathcal{X}$,
and that $x \mapsto k(x,x)$ is $P_X$-integrable on $\mathcal{X}$.
Given a measurable function $f:\cX \rightarrow \R$ with
$\int_{\cX} f(x)^2
\, \diffi P_X(x) < \infty$, define $T_k(f): \cX \to \R$ by
\begin{equation*}
  T_k(f)(\cdot) \vcentcolon= \int_\cX k(x, \cdot) f(x) \diffi P_X(x).
\end{equation*}
Then by Mercer's theorem
\citep[Theorem~10.8.8]{samworth2026modern}, $T_k$ admits
a decreasing sequence $(\nu_r)_{r \in \mathcal{R}}$
of positive eigenvalues
with $\sum_{r \in \mathcal{R}} \nu_r < \infty$ and associated continuous
orthonormal eigenfunctions $(e_r)_{r \in \mathcal{R}}$ in
$L_2(P_X)$, with $T_k(e_r) = \nu_r e_r$ for $r \in \mathcal{R}$, where
either $\mathcal{R} = [R]$ for some $R \in \N$ or
$\mathcal{R} = \N$, endowed with the usual order.
We then have the spectral decomposition
\begin{equation*}
  k(x, y) = \sum_{r \in \mathcal{R}} \nu_r e_r(x) e_r(y)
\end{equation*}
for $x,y \in \mathcal{X}$, where the convergence is absolute and uniform on
compact subsets of $\mathcal{X} \times \mathcal{X}$. Writing
$\ell_2(\mathcal{R}) \vcentcolon= \bigl\{ (\alpha_r)_{r \in \mathcal{R}}
: \sum_{r \in \mathcal{R}} \alpha_r^2 < \infty \bigr\}$, let
\begin{equation*}
  \cH \vcentcolon=
  \biggl\{ f_\alpha \vcentcolon= \sum_{r \in \mathcal{R}}
  \alpha_r \sqrt{\nu_r} e_r : \alpha \in \ell_2(\mathcal{R}) \biggr\},
\end{equation*}
and for
$\alpha \vcentcolon= (\alpha_r)_{r \in \mathcal{R}} \in \ell_2(\mathcal{R})$
and $\beta \vcentcolon= (\beta_r)_{r \in \mathcal{R}} \in \ell_2(\mathcal{R})$,
define the inner product
$\langle f_\alpha, f_{\beta} \rangle_\cH
\vcentcolon= \sum_{r \in \mathcal{R}} \alpha_r \beta_r$.
Then by \citet[Corollary~10.8.9]{samworth2026modern},
$(\mathcal{H},\langle \cdot,\cdot \rangle_{\mathcal{H}})$
is a Hilbert space with
orthonormal basis $(\sqrt{\nu_r} e_r)_{r \in \mathcal{R}}$, and is in
fact the unique reproducing kernel Hilbert space (RKHS) associated with $k$.
Further, every $f \in \cH$ is continuous
\citep[Theorem~10.8.8 and Corollary~10.8.9]{samworth2026modern}
and satisfies
$f \in L_2(P_X)$, so
$f = \sum_{r \in \mathcal{R}} \langle f, e_r \rangle_{L_2} e_r$
where $\langle \cdot, \cdot \rangle_{L_2}
\vcentcolon= \langle \cdot, \cdot \rangle_{L_2(P_X)}$.
From this it follows that
$\langle f, g \rangle_\cH = \sum_{r \in \mathcal{R}} \frac{1}{\nu_r}
\langle f, e_r \rangle_{L_2} \langle g, e_r \rangle_{L_2}$,
so $\langle f, e_r \rangle_\cH = \frac{1}{\nu_r}
\langle f, e_r \rangle_{L_2}$ for each $r \in \mathcal{R}$,
and $\|f\|_\cH^2 = \sum_{r \in \mathcal{R}} \frac{1}{\nu_r}
\langle f, e_r \rangle_{L_2}^2$.
Now suppose that $(x, y) \mapsto k(x, y) - \delta$
is a positive definite function on $\cX \times \cX$ for some $\delta
> 0$; this ensures that
$\cH$ contains the unit constant function on $\cX$,
which we denote by $1_\cX$
\citep[see][Theorem~3.11]{paulsen2016introduction}.
We assume throughout that the true relative risk function
satisfies $f_0 \in \cH$.

In preparation for the upcoming regularisation framework
for RKHS estimation,
we define another inner product, for $\gamma \geq 0$, by
\begin{equation*}
  \langle f, g \rangle_{\cH_\gamma}
  \vcentcolon=
  \langle f, g \rangle_{L_2} + \gamma \langle f, g \rangle_\cH.
\end{equation*}
Writing $a_r \vcentcolon= 1/(1 + \gamma / \nu_r)$
for $r \in \mathcal R$, we deduce that
$\langle f, e_r \rangle_{\cH_\gamma}
= \frac{1}{a_r} \langle f, e_r \rangle_{L_2}$
and $\|f\|_{\cH_\gamma}^2 = \sum_{r \in \mathcal{R}}
\frac{1}{a_r} \langle f, e_r \rangle_{L_2}^2$.
Define $K \vcentcolon= \sup_{x \in \cX} k(x, x)$
and $H_\gamma \vcentcolon= \sup_{x \in \cX}
\sum_{r \in \mathcal R} a_r e_r(x)^2$, noting that~$H_\gamma$
does not depend on the choice of
$L_2(P_X)$-orthonormal eigenfunctions $(e_r)_{r \in \cR}$
due to the uniqueness of the eigenspace associated with
$\nu_r$, for each $r \in \cR$.
Further, we have
$H_\gamma \leq \sup_{x \in \cX} \sum_{r \in \mathcal{R}}
\nu_r e_r(x)^2 / \gamma = K / \gamma$
and $\gamma H_\gamma \to 0$ as $\gamma \searrow 0$
by the dominated convergence theorem.
In Lemma~\ref{lem:infty_bounds},
we bound $\|\cdot\|_\infty$ on $\cH$ in terms of
$\|\cdot\|_\cH$ and $\|\cdot\|_{\cH_\gamma}$ respectively.

\begin{lemma}[Supremum norm bounds]%
  \label{lem:infty_bounds}
  Let $f \in \mathcal{H}$ and $\gamma > 0$.
  Then $\|f\|_\infty^2 \leq K \|f\|_\cH^2$
  and $\|f\|_\infty^2 \leq H_\gamma \|f\|_{\cH_\gamma}^2$.
\end{lemma}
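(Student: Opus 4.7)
The plan is to derive both bounds from a single Cauchy--Schwarz argument applied to the Mercer expansion of $f$, with the factorisation of summands chosen to match the relevant inner product.

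For the first bound, I would start by recalling the reproducing property $f(x) = \langle f, k(\cdot, x) \rangle_\cH$, which holds for every $f \in \cH$ and $x \in \cX$. Cauchy--Schwarz then yields $|f(x)|^2 \leq \|k(\cdot, x)\|_\cH^2 \|f\|_\cH^2 = k(x,x) \|f\|_\cH^2$, and the bound follows by taking the supremum over $x \in \cX$ and using the definition of $K$. Alternatively, one can expand $f = \sum_r \langle f, e_r\rangle_{L_2} e_r$ (which, as noted in the excerpt, holds in $L_2(P_X)$ and in fact pointwise since $f$ is continuous and $\cH$ is generated by the $\sqrt{\nu_r} e_r$), and split each summand as $(\nu_r^{-1/2} \langle f, e_r \rangle_{L_2})(\sqrt{\nu_r}\, e_r(x))$ before applying Cauchy--Schwarz, giving the same conclusion via Mercer's spectral decomposition of $k(x,x)$.

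For the second bound, I would use exactly the same Mercer expansion, but now factor each term as
\begin{equation*}
  \langle f, e_r \rangle_{L_2} e_r(x) = \Bigl(\tfrac{1}{\sqrt{a_r}} \langle f, e_r \rangle_{L_2}\Bigr) \bigl(\sqrt{a_r}\, e_r(x)\bigr).
\end{equation*}
By Cauchy--Schwarz,
\begin{equation*}
  |f(x)|^2 \leq \Bigl(\sum_{r \in \cR} \tfrac{1}{a_r} \langle f, e_r \rangle_{L_2}^2 \Bigr)
  \Bigl(\sum_{r \in \cR} a_r e_r(x)^2 \Bigr) = \|f\|_{\cH_\gamma}^2 \sum_{r \in \cR} a_r e_r(x)^2,
\end{equation*}
where the first factor is the identity $\|f\|_{\cH_\gamma}^2 = \sum_{r} a_r^{-1} \langle f, e_r \rangle_{L_2}^2$ recorded earlier. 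Taking the supremum over $x \in \cX$ and invoking the definition of $H_\gamma$ finishes the proof.

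There is no real obstacle: the only subtlety is justifying the pointwise absolute convergence of the Mercer expansion of $f$, which is immediate from continuity of the $e_r$, the fact that $f \in \cH$, and uniform convergence of the Mercer decomposition on compacta. The whole argument is essentially a two-line application of Cauchy--Schwarz once the right factorisation is chosen.
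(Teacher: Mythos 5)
Your proposal is correct and takes essentially the same approach as the paper: for the $\cH_\gamma$ bound the paper uses exactly the factorisation $\langle f,e_r\rangle_{L_2} e_r(x) = (a_r^{-1/2}\langle f,e_r\rangle_{L_2})(\sqrt{a_r}\,e_r(x))$ and Cauchy--Schwarz, and for the $\cH$ bound it uses the same Mercer-expansion Cauchy--Schwarz argument you give as your alternative (your primary route via the reproducing property $f(x)=\langle f,k(\cdot,x)\rangle_\cH$ is a standard equivalent restatement of the same inequality).
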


\subsubsection{Examples of kernel functions}
\label{sec:kernel_examples}

We now give some examples of kernel functions that satisfy
the conditions of Mercer's theorem.  Practical guidance
on the choice of kernel function is given in Section~\ref{sec:choice_kernel}.

\begin{example}[Shifted Gaussian kernel]
  \label{ex:gaussian}
  Let $d \in \N$ and $\cX \subseteq \R^d$,
  and take $a \geq 0$ and
  $\Sigma \in \R^{d \times d}$ with $\Sigma \succ 0$.
  For $x, y \in \cX$, the $d$-dimensional shifted Gaussian kernel is given by
  \begin{equation*}
    k^\Gauss_{\Sigma,a}(x, y)
    \vcentcolon=
    a + \exp\bigl(-(x-y)^\T \Sigma^{-1} (x-y)\bigr).
  \end{equation*}
\end{example}

\begin{example}[Polynomial kernel]
  \label{ex:polynomial}
  Let $d \in \N$ and $\cX \subseteq \R^d$,
  and take $p \in \N$ and $a \geq 0$.
  For $x, y \in \cX$,
  define the polynomial kernel of degree $p$ by
  \begin{equation*}
    k^\poly_{p,a}(x, y)
    \vcentcolon=
    \bigl(x^\T y + a\bigr)^p.
  \end{equation*}
\end{example}

\begin{example}[Shifted Sobolev kernels]
  \label{ex:sobolev}
  Let $\cX = [0, 1]$ and take $a \geq 0$.
  For $x, y \in \cX$,
  the first- and second-order shifted Sobolev kernels
  are defined respectively by
  \begin{align*}
    k^\Sob_{1,a}(x, y)
    &\vcentcolon=
    a + (x \land y),
    &k^\Sob_{2,a}(x, y)
    &\vcentcolon=
    a + \int_0^{x \land y} (x - z) (y - z) \diffi z.
  \end{align*}
\end{example}

Figure~\ref{fig:kernels} plots some typical instances of these
kernels.

\begin{figure}[H]
  \centering
  \subfloat[$k_{1,1}^{\mathrm{Gauss}}(x, 0)$.$\vphantom{\sqrt{k_0^P}}$]{
    \includegraphics[scale=\myfigscale]{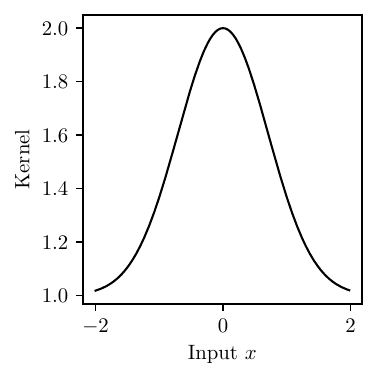}
  }
  \subfloat[$k_{3,1}^{\mathrm{Poly}}(x, 2)$.$\vphantom{\sqrt{k_0^P}}$]{
    \includegraphics[scale=\myfigscale]{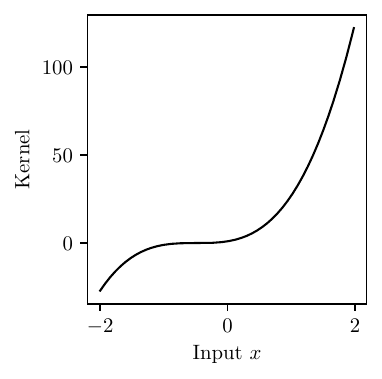}
  }
  \subfloat[$k_{1,1}^{\mathrm{Sob}}(x, 1)$.$\vphantom{\sqrt{k_0^P}}$]{
    \includegraphics[scale=\myfigscale]{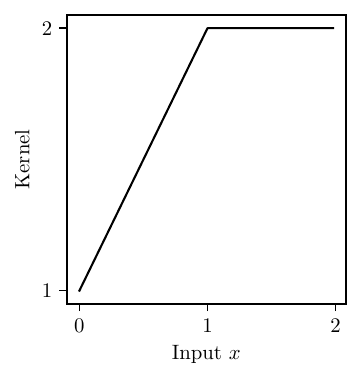}
  }
  \caption{Instances of the kernels described in Examples~\ref{ex:gaussian},
  \ref{ex:polynomial} and~\ref{ex:sobolev}.}
  \label{fig:kernels}
\end{figure}

\section{Methodology}
\label{sec:methodology}

\subsection{Kernel estimator}
\label{sec:estimator}

We begin by introducing our basic estimator, which is obtained by minimising
a penalised version of the negative log-partial
likelihood over the RKHS $\cH$.  For $\gamma > 0$ and $f \in \cH$,
define the penalised negative log-partial likelihood
\begin{equation}
  \label{eq:likelihood_penalised}
  \ell_{n,\gamma}(f)
  \vcentcolon= \ell_n(f) + \gamma \|f\|_\cH^2,
\end{equation}
where $\ell_n(f)$ is defined in \eqref{eq:likelihood}.
We show in Proposition~\ref{prop:representation} below that
there is a unique estimator
\begin{equation}
  \label{eq:argmin}
  \hat f_{n,\gamma} \vcentcolon=
  \argmin_{f \in \cH: P_n(f) = 0} \ell_{n, \gamma}(f),
\end{equation}
where $P_n(f) \vcentcolon= \sum_{i=1}^n f(X_i) / n$.  The centring
constraint $P_n(f) = 0$ in our optimisation is natural in view of the
identifiability condition $P_X(f_0) = 0$ and the fact that the
log-likelihood is unaffected by the addition of constants.
Before stating this result, we introduce some notation.
Let $\cA_n$ be a maximal subset of $[n]$ such that
$\bigl\{
  (\|1_\cX\|_\cH^2, 1, \ldots, 1)^\T,
  \bigl(1, k(X_i, X_1), \ldots, k(X_i, X_n)\bigr)^\T :
  i \in \cA_n
\bigr\} \subseteq \R^{n+1}$
is linearly independent. For $x, y \in \cX$,
let $\tilde k(x, y) \vcentcolon= k(x, y) - \bar k(y)$, where $\bar
k(y) \vcentcolon= \sum_{i=1}^n k(X_i, y) / n$. Further, for $\beta
\in \R^{\cA_n}$, define
$f_\beta(x) \vcentcolon= \sum_{i \in \cA_n} \tilde k(x, X_i) \beta_i$,
so that $P_n(f_\beta) = 0$.

\begin{proposition}[Representation of $\hat f_{n,\gamma}$]%
  \label{prop:representation}
  The function $\beta \mapsto \ell_{n,\gamma}(f_\beta)$ is
  strongly convex
  on $\R^{\cA_n}$. Writing $\hat{\beta}$ for its unique minimiser,
  we have that $\hat f_{n,\gamma} \vcentcolon= f_{\hat\beta}$ is the
  unique minimiser of $f \mapsto \ell_{n, \gamma}(f)$
  over $f \in \mathcal{H}$ with $P_n(f) = 0$.
\end{proposition}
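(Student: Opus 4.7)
The plan is to split the argument into two stages: first, establish strong convexity of the reparametrised objective $\beta \mapsto \ell_{n,\gamma}(f_\beta)$ on $\R^{\cA_n}$, which yields existence and uniqueness of $\hat\beta$; second, apply a representer-style orthogonal-decomposition argument to show that the global minimiser of $\ell_{n,\gamma}$ over $\cH$ coincides with $f_{\hat\beta}$.

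For the first stage, I would begin by noting that $P_n(\tilde{k}(\cdot, X_i)) = \frac{1}{n}\sum_{j=1}^n[k(X_j, X_i) - \bar{k}(X_i)] = 0$ by the definition of $\bar{k}$, so $P_n(f_\beta) = 0$ for all $\beta$ by linearity. Consequently, the penalty in \eqref{eq:likelihood_penalised} simplifies and $\ell_{n,\gamma}(f_\beta) = \ell_n(f_\beta) + \gamma\|f_\beta\|_\cH^2$. The likelihood term is convex in $\beta$ since $\log S_n(\cdot, t)$ is a log-sum-exp in $(f(X_i))_{i=1}^n$ and the second summand of \eqref{eq:likelihood} is linear in $f$. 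The penalty term may be written as $\|f_\beta\|_\cH^2 = \beta^\T G \beta$ with $G_{ij} \vcentcolon= \langle \tilde{k}(\cdot, X_i), \tilde{k}(\cdot, X_j)\rangle_\cH$, so strong convexity reduces to showing $G \succ 0$, equivalently linear independence of $\{\tilde{k}(\cdot, X_i) : i \in \cA_n\}$ in $\cH$. To deduce this from the defining condition on $\cA_n$, I would observe that by the reproducing property, $v_0$ and the $v_i$ for $i \in \cA_n$ are exactly the columns, indexed by $\{0\} \cup \cA_n$, of the Gram matrix of the system $\{1_\cX\} \cup \{k(\cdot, X_j) : j \in [n]\}$ in $\cH$, so their linear independence in $\R^{n+1}$ is equivalent to linear independence of $\{1_\cX\} \cup \{k(\cdot, X_i) : i \in \cA_n\}$ in $\cH$. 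A short calculation then closes the loop: if $\sum_{i \in \cA_n}\beta_i \tilde{k}(\cdot, X_i) = 0$, then $\sum_i \beta_i k(\cdot, X_i) = \bigl(\sum_i \beta_i \bar{k}(X_i)\bigr)\, 1_\cX$, and the linear independence just established forces $\beta = 0$. This gives $G \succ 0$; combining strong convexity of the penalty with convexity of the likelihood term yields strong convexity of $\beta \mapsto \ell_{n,\gamma}(f_\beta)$, and hence existence and uniqueness of $\hat\beta$.

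For the second stage, set $\cM \vcentcolon= \mathrm{span}\bigl(\{1_\cX\} \cup \{k(\cdot, X_j) : j \in [n]\}\bigr) \subseteq \cH$ and decompose any $f \in \cH$ as $f = f_\cM + f_\perp$ with $f_\cM \in \cM$ and $f_\perp \in \cM^\perp$. By the reproducing property, $f_\perp(X_j) = \langle f_\perp, k(\cdot, X_j)\rangle_\cH = 0$ for all $j \in [n]$, so $\ell_n(f) = \ell_n(f_\cM)$ and $P_n(f) = P_n(f_\cM)$; moreover the decomposition $f - P_n(f)\,1_\cX = \bigl(f_\cM - P_n(f_\cM)\,1_\cX\bigr) + f_\perp$ is $\cH$-orthogonal. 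It follows that $\ell_{n,\gamma}(f) = \ell_{n,\gamma}(f_\cM) + \gamma\|f_\perp\|_\cH^2$, so any minimiser of $\ell_{n,\gamma}$ over $\cH$ lies in $\cM$. By the maximality of $\cA_n$, the set $\{1_\cX\} \cup \{\tilde{k}(\cdot, X_i) : i \in \cA_n\}$ is a basis of $\cM$, so any $f \in \cM$ can be written uniquely as $f = c\cdot 1_\cX + f_\beta$. Using $P_n(f_\beta) = 0$ and the invariance of $\ell_n$ under addition of constants, the objective separates as $\ell_{n,\gamma}(c\, 1_\cX + f_\beta) = \ell_{n,\gamma}(f_\beta) + \gamma c^2$, whose unique minimiser is $c = 0$ and $\beta = \hat\beta$, giving $\hat f_{n,\gamma} = f_{\hat\beta}$.

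The step I expect to be the main obstacle is the linear-algebra bookkeeping that converts the combinatorial description of $\cA_n$—phrased in terms of linearly independent vectors in $\R^{n+1}$ involving the augmented column $v_0$—into positive-definiteness of the Gram matrix of the centred system $\{\tilde{k}(\cdot, X_i) : i \in \cA_n\}$, and simultaneously confirms that $\{1_\cX\} \cup \{\tilde{k}(\cdot, X_i) : i \in \cA_n\}$ is indeed a basis for $\cM$; the interplay between the constant function $1_\cX$, the kernel evaluations $k(\cdot, X_i)$ and their column-centred counterparts $\tilde k(\cdot, X_i)$ is what makes this delicate.
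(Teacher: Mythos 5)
Your proposal is correct, and it follows essentially the same representer-theorem strategy as the paper: project onto $\cH_\bX := \mathrm{span}\bigl(\{1_\cX\}\cup\{k(\cdot,X_j):j\in[n]\}\bigr)$, reduce to coefficients supported on $\cA_n$, and establish strong convexity of the resulting finite-dimensional objective by combining convexity of $\ell_n$ with positive definiteness of the quadratic penalty. Where you differ is in how the linear-algebra bookkeeping is organised. The paper proves that the principal submatrix $\tilde{\bK}_{\cA_n}$ is positive definite via a leading-principal-submatrix argument, then pushes strong convexity through the injective affine map $(\beta_i)_{i\in\cA_n}\mapsto(\beta_0,(\beta_i)_{i\in\cA_n})$ with $\beta_0=-\sum_{i\in\cA_n}\bar k(X_i)\beta_i$, and establishes the reduction to $\cA_n$-supported coefficients by an explicit algebraic replacement (finding $\beta$ with $\tilde{\bK}\beta = \tilde{\bK}\delta$ and verifying $\delta^\T\tilde\bK\delta = \beta^\T\tilde\bK\beta$). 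You instead observe, correctly, that $\tilde{\bK}$ is precisely the Gram matrix of $\{1_\cX\}\cup\{k(\cdot,X_j):j\in[n]\}$, so linear independence of its columns indexed by $\{0\}\cup\cA_n$ is equivalent to linear independence of $\{1_\cX\}\cup\{k(\cdot,X_i):i\in\cA_n\}$ in $\cH$; combined with maximality of $\cA_n$ this gives at once that $G\succ 0$ and that $\{1_\cX\}\cup\{\tilde k(\cdot,X_i):i\in\cA_n\}$ is a basis of $\cH_\bX$, which makes the final separation $\ell_{n,\gamma}(c\,1_\cX+f_\beta)=\ell_{n,\gamma}(f_\beta)+\gamma c^2$ immediate. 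Your route is a little more economical and, unlike the paper's, implicitly shows that the replacement function is actually equal to the original $f$ (not just that it has the same objective value). One point you omit, which the paper handles explicitly, is measurability of $\hat\beta$ (via convergence of gradient-descent iterates from the origin); this is not part of the stated proposition but is needed for the probabilistic results that follow, so it is worth being aware of.
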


Proposition~\ref{prop:representation} shows that in order to find
the minimiser in~\eqref{eq:argmin},
it suffices to consider functions of the form
$f_\beta$, where $\beta \in \R^{\cA_n}$.
This representation allows the feasible set of
the optimisation problem to be simplified from $\cH$
(which may be infinite-dimensional)
to $\R^{\cA_n}$ (which has at most $n$ dimensions).
In Section~\ref{sec:implementation}, we give further details on the
calculation of $\|1_\cX\|_\cH$ and
the construction of the set $\cA_n$, and discuss
practical procedures for computing $\hat \beta$ and $\hat f_{n,\gamma}$.

\subsection{Parameter tuning}

Partial likelihood cross-validation can be used to select the
regularisation parameter $\gamma$.
Let $\Gamma \subseteq (0, \infty)$ be a finite set of candidate
parameters.
With $\tilde N(t) \vcentcolon= \sum_{i=n+1}^{2n} N_i(t) / n$ and
$\tilde S_n(f, t) \vcentcolon= \sum_{i={n+1}}^{2n}
R_i(t) e^{f(X_i)} / n$,
define the negative partial log-likelihood on the validation data by
\begin{equation*}
  \tilde \ell_n(f)
  \vcentcolon=
  \int_{0}^{1} \log \tilde S_n(f, t) \diffi {\tilde N(t)}
  - \frac{1}{n} \sum_{i=n+1}^{2n} f(X_i) N_i(1),
\end{equation*}
where $f: \cX \to \R$.
For each $\gamma \in \Gamma$, fit $\hat f_{n,\gamma}$
on the training data as in \eqref{eq:argmin}.
Finally, set
\begin{equation}
  \label{eq:gamma_hat}
  \hat\gamma \vcentcolon= \sargmin_{\gamma \in \Gamma}
  \tilde \ell_n\bigl(\hat f_{n,\gamma}\bigr).
\end{equation}

\subsection{Model aggregation with CARE}

We now provide a framework for combining the
kernel estimators $(\hat f_{n,\gamma} : \gamma \in \Gamma)$
with a collection of `external' estimators $(\hat f_m: m \in \cM)$,
where $\cM = [|\cM|]$ is a non-empty finite set
and $\hat f_m$ takes values in $\cB(\cX)$ for each $m \in \cM$.
To ensure the validity of our cross-validation methodology,
we assume that $(\hat f_m: m \in \cM)$
are computed on a data set
that is independent of the training and validation data
$(X_i, T_i, I_i)_{i \in [2n]}$,
and that $\max_{m \in \cM} \|\hat f_m\|_\infty \leq M$
almost surely, for some fixed $M \in (0,\infty)$.
For $m \in \cM$, define
$\tilde f_m \vcentcolon= \hat f_m - P_n(\hat{f}_m) 1_\cX$.
Let $\Theta \subseteq \Delta_\cM \vcentcolon=
\bigl\{\theta \in [0, 1]^{\cM}:
\sum_{m \in \cM} \theta_m \leq 1\bigr\}$
be a non-empty finite set and consider the convex combination estimators
$(\check f_{n, \gamma, \theta}:\gamma \in \Gamma,\theta \in \Theta)$,
where we define
\begin{equation*}
  \check f_{n, \gamma, \theta}(x)
  \vcentcolon=
  \biggl(1 - \sum_{m \in \cM} \theta_m\biggr)
  \hat f_{n,\gamma}(x) + \sum_{m \in \cM} \theta_m \tilde f_m(x)
\end{equation*}
for $x \in \cX$. We construct $\check\gamma$ and $\check\theta$
via partial likelihood cross-validation, setting
\begin{equation}
  \label{eq:gamma_theta_check}
  \bigl(\check\gamma,\check\theta\bigr)
  \vcentcolon= \sargmin_{(\gamma, \theta) \in \Gamma \times \Theta}
  \tilde \ell_n\bigl(\check f_{n,\gamma,\theta}\bigr).
\end{equation}
In general, $\check\gamma$ defined in \eqref{eq:gamma_theta_check}
may not take the same value as $\hat\gamma$ given in \eqref{eq:gamma_hat}.
We define the CARE estimator as $\check f_{n, \check\gamma, \check\theta}$.

\section{Theoretical analysis}%
\label{sec:analysis}

In this section, we establish high-probability bounds for the deviations of
our proposed estimators from the true relative risk function $f_0$.
We are primarily interested
in approximation under the $L_2(P_X)$-norm; that is,
in controlling $\|f- f_0\|_{L_2}$.
This choice of distance can be motivated by
treating $f$ as an estimator
of the log-relative hazard, since for $f \in L_2(X)$
and $t \in [0, 1]$, and
with $h(t \mid x)$ as in \eqref{eq:hazard},
\begin{equation*}
  \|f - f_0\|_{L_2}^2
  = \E \Biggl\{
    \biggl(
      \log \frac{h(t \mid X)}{\lambda_0(t)}
      - f(X)
    \biggr)^2
  \Biggr\}.
\end{equation*}

\subsection{Kernel estimator}%

Our first main theoretical result concerns the proximity of
$\hat f_{n,\gamma}$ and $f_0$,
where $\gamma > 0$ is a fixed regularisation parameter.

\begin{theorem}[Rate of convergence]%
  \label{thm:rate}
  There exists $C_0 > 0$, depending only on
  $\|f_0\|_\cH$, $\Lambda$, $q_1$, $\|1_\cX\|_\cH$ and $K$,
  such that if $\gamma H_\gamma \leq 1 / C_0$
  and $1 \leq s \leq \sqrt{n} / (C_0 H_\gamma)$,
  then with probability at least $1 - e^{-s^2}$,
  \begin{equation}
    \label{eq:rate}
    \bigl\| \hat f_{n,\gamma} - f_0 \bigr\|_{L_2}
    \leq
    \bigl\| \hat f_{n,\gamma} - f_0 \bigr\|_{\cH_\gamma}
    \leq
    C_0 \biggl( s \sqrt{\frac{H_\gamma}{n}} + \sqrt\gamma \biggr).
  \end{equation}
\end{theorem}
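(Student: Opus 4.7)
The plan is the standard basic-inequality strategy for penalised $M$-estimation in an RKHS, with three main ingredients: an optimality inequality for $\hat f \vcentcolon= \hat f_{n,\gamma}$, a local curvature lower bound for $\ell_\star$ near $f_0$, and a localised empirical-process bound for $\ell_n - \ell_\star$. Starting from the optimality of $\hat f$ and subtracting $\ell_\star$ from both sides, one obtains
\[
  \ell_\star(\hat f) - \ell_\star(f_0) + \gamma \|\hat f\|_\cH^2
  \leq \bigl|[\ell_n - \ell_\star](\hat f) - [\ell_n - \ell_\star](f_0)\bigr|
  + \gamma \|f_0\|_\cH^2 + \text{(lower order)}.
\]
The $\gamma \|\hat f\|_\cH^2$ on the left comes from the two penalty contributions by way of $\|g\|_\cH^2 \leq 2\|g - P_n(g) 1_\cX\|_\cH^2 + 2 P_n(g)^2 \|1_\cX\|_\cH^2$; the $\gamma \|f_0\|_\cH^2$ on the right uses $P_X(f_0) = 0$ together with a Hoeffding bound on $P_n(f_0)$, for which $\|f_0\|_\infty \leq \sqrt{K}\|f_0\|_\cH$ via Lemma~\ref{lem:infty_bounds} supplies the needed sub-Gaussianity.

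For the curvature, Lemma~\ref{lem:characterisation_f0} shows $f_0$ minimises $\ell_\star$, and a direct computation yields the Hessian
\[
  D^2 \ell_\star(f_0)(g, g) = \int_0^1 \mathrm{Var}_{\mu_t}(g) \, S(f_0, t) \lambda_0(t) \diff t,
\]
where $\mu_t$ has density proportional to $q(\cdot, t) e^{f_0(\cdot)}$ with respect to $P_X$. Using $q(x, t) \geq q_1$ on $[0, 1]$ and boundedness of $f_0$, this gives a local strong-convexity lower bound $\ell_\star(f) - \ell_\star(f_0) \geq c_1 \mathrm{Var}_{P_X}(f - f_0)$ whenever $\|f - f_0\|_\infty$ is below a fixed threshold, which holds throughout a small $\|\cdot\|_{\cH_\gamma}$-ball around $f_0$ thanks to Lemma~\ref{lem:infty_bounds}. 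The difference $\|\hat f - f_0\|_{L_2}^2 - \mathrm{Var}_{P_X}(\hat f - f_0) = P_X(\hat f)^2$ is then absorbed by combining concentration of $P_n(\hat f) - P_X(\hat f)$ with the $\gamma P_n(\hat f)^2$ contribution retained on the left of the basic inequality.

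The core technical work is the empirical-process step. I would decompose $\ell_n - \ell_\star$ into a linear piece in $f$ (coming from $\frac{1}{n}\sum_i f(X_i) N_i(1)$ against its expectation) and a ``$\log S_n$'' piece, then Taylor-expand the log around $S(f, t)$; uniform control $|S_n(f, t)/S(f, t) - 1| \ll 1$ on $[0, 1]$ required for linearisation follows from $q_1 > 0$ and $\|f\|_\infty \leq \sqrt{H_\gamma}\,\|f\|_{\cH_\gamma}$, and this is precisely where the hypothesis $\gamma H_\gamma \leq 1/C_0$ is used. Over the ball $B(f_0, r) \vcentcolon= \{f \in \cH : \|f - f_0\|_{\cH_\gamma} \leq r\}$, Lemma~\ref{lem:infty_bounds} provides $\|f - f_0\|_\infty \leq \sqrt{H_\gamma}\,r$, while a local Rademacher-complexity calculation using the spectral decomposition of the kernel operator and the trace bound $\sum_{r \in \cR} a_r \leq H_\gamma$ yields
\[
  \E \sup_{f \in B(f_0, r)} \bigl|(P_n - P_X)(f - f_0)\bigr| \lesssim r\sqrt{H_\gamma/n}.
\]
Talagrand's inequality then upgrades this to a uniform high-probability bound of order $r s\sqrt{H_\gamma/n}$ valid for $s \leq \sqrt{n}/(C H_\gamma)$.

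Combining the three parts on the common high-probability event and running a peeling argument over dyadic radii, Young's inequality absorbs the cross term $r \cdot s\sqrt{H_\gamma/n}$ into $c_1 \|\hat f - f_0\|_{L_2}^2 + \gamma\|\hat f - f_0\|_\cH^2 = c_1 \|\hat f - f_0\|_{\cH_\gamma}^2$ (after rescaling $c_1$), giving $\|\hat f - f_0\|_{\cH_\gamma}^2 \leq C_0^2 (s^2 H_\gamma/n + \gamma)$; the $L_2$ bound follows since $\|\cdot\|_{L_2} \leq \|\cdot\|_{\cH_\gamma}$. The hard part is the empirical-process step: the nonlinear $\log S_n$ term forces a uniform Taylor linearisation with explicit remainder control, and obtaining the correct variance proxy in the Talagrand bound requires exploiting self-boundedness of $R_i(t) e^{f(X_i)}$ together with $q_1 > 0$, while the spectral calculation relating the local Rademacher complexity to $H_\gamma$ is what dictates the precise form of the rate.
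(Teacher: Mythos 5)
Your outline is structurally parallel to the paper's argument (basic inequality from optimality, restricted strong convexity of $\ell_\star$ via the variance representation of $D^2\ell_\star(f_0)$, empirical-process control, absorption via Young's inequality), but takes a genuinely different route in the key technical step. The paper Taylor-expands $\ell_n$ itself around $f_0$ to \emph{third} order and applies the basic inequality to a \emph{truncated} estimator $\tilde f_{n,\gamma}$ (chosen so that $\|\tilde f_{n,\gamma} - f_0 + P_n(f_0)1_\cX\|_{\cH_\gamma}$, hence $\|\cdot\|_\infty$, is a priori bounded), whereas you propose to expand $\ell_n - \ell_\star$ and run a dyadic peeling argument. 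Both are standard devices and would reach the same rate; truncation is what lets the paper control the cubic remainder without explicit peeling.

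The genuine gap is in the concentration tool. After your linearisation of $\log S_n$, the dominant stochastic term is (schematically) $\int_0^1 \bigl(S_n(f,t)-S(f,t)\bigr)S(f,t)^{-1}\,\mathrm{d}N(t)$, which is $n^{-2}\sum_{i,j}\int_0^1\bigl(R_j(t)e^{f(X_j)}-S(f,t)\bigr)S(f,t)^{-1}\,\mathrm{d}N_i(t)$ --- a \emph{degenerate second-order $U$-process} in the pair $(i,j)$, not an empirical average of independent summands. Talagrand's concentration inequality (and the local Rademacher complexity calculus you invoke) does not apply to degenerate $U$-processes; you need decoupling, Rademacher-chaos moment bounds, and chaining for $U$-processes, which is precisely the machinery the paper builds in Appendix~\ref{sec:u_statistics} and applies in Lemmas~\ref{lem:convergence_Sn}, \ref{lem:convergence_DSn}, \ref{lem:convergence_D2Sn} and Propositions~\ref{prop:convergence_Dln}, \ref{prop:convergence_D2ln}. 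The passing remark about ``self-boundedness of $R_i(t)e^{f(X_i)}$'' does not substitute for this. Relatedly, your claimed uniform control $|S_n(f,t)/S(f,t)-1|\ll 1$ over $t\in[0,1]$ also needs a covering argument in $t$ inside these $U$-process moment bounds (cf.\ the construction of $\cT_1,\cT_2,\cT_3$ in Lemma~\ref{lem:convergence_Sn}); it does not follow from $q_1>0$ and $\|f\|_\infty\leq\sqrt{H_\gamma}\|f\|_{\cH_\gamma}$ alone. The rest of the proposal is sound, including the use of $\gamma P_n(f)^2$ to absorb $P_X(\hat f)^2$ and the spectral localisation in terms of $\sum_r a_r e_r(x)^2\leq H_\gamma$.
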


The right-hand side of \eqref{eq:rate} consists of two terms,
corresponding to a bias--variance trade-off.
Specifically, the `variance' term $H_\gamma / n$ is a decreasing
function of both the sample size $n$ and the regularisation parameter $\gamma$,
while the `bias' term is $\sqrt\gamma$.
As such, choosing a small value of $\gamma > 0$ allows for less bias
at the expense of potentially increasing the variance;
the magnitude of this effect depends on the behaviour of the
kernel quantity $H_\gamma$.

The condition $\gamma H_\gamma \leq 1 / C_0$ is
satisfied for sufficiently small $\gamma > 0$ since $\gamma H_\gamma
\to 0$ as $\gamma \searrow 0$.
The best bias--variance trade-off in~\eqref{eq:rate}, in terms of the
order in $n$, is obtained by taking $\gamma = \gamma^\star$, where
$\gamma^\star \in \argmin_{\gamma > 0}
(H_\gamma / n + \gamma) \to 0$ as $n \to \infty$.
For this choice,
$\sqrt{n} / H_{\gamma^\star} \asymp 1/\sqrt{\gamma^\star H_{\gamma^\star}}$,
so the condition $1 \leq s \leq \sqrt{n} / (C_0 H_{\gamma^\star})$ is
satisfied for sufficiently large~$n$.
If $\gamma^{a} H_\gamma \asymp 1$ as $\gamma \searrow 0$
for some
$a \in [0, 1)$, as is common in applications (see
Section~\ref{sec:bounding_Hgamma} below), then
we obtain
$\gamma^\star \asymp n^{-1/(1 + a)}$, yielding an overall
convergence rate for $\hat{f}_{n,\gamma^\star}$ of $n^{-1/(2 + 2a)}$.

\subsubsection{Examples of bounding \texorpdfstring{$H_\gamma$}{H-gamma}}
\label{sec:bounding_Hgamma}

While $\gamma H_\gamma \to 0$ as $\gamma \searrow 0$ by the dominated
convergence theorem,
sharper upper bounds on~$H_\gamma$ are sometimes available
when a particular kernel is specified.
In this section, we derive such bounds for some of the kernels introduced
in Section~\ref{sec:kernel_examples}.
First, in Lemma~\ref{lem:polynomial_Hgamma}, we demonstrate that for
the multivariate polynomial kernel from Example~\ref{ex:polynomial},
on a bounded domain, $H_\gamma$ is bounded as $\gamma \to 0$,
provided that the polynomial degree $p$ and the distribution of $X$
are fixed. In Lemma~\ref{lem:sobolev_Hgamma}, we consider
the shifted first-order Sobolev kernel of Example~\ref{ex:sobolev}
under a uniform measure, showing that $H_\gamma$ increases
at a rate no faster than
$1/\sqrt{\gamma}$ as $\gamma \searrow 0$.

\begin{lemma}[Bound on $H_\gamma$ for a polynomial kernel]
  \label{lem:polynomial_Hgamma}
  Recall the polynomial kernel $k^\poly_{p,a}$ from
  Example~\ref{ex:polynomial}, and assume that $\cX$ is a bounded,
  non-empty, Borel measurable subset of $\R^d$.
  Let~$P_X$ be a Borel distribution on $\cX$.
  Then $\cR$ is a finite set with $|\cR| \leq \binom{d+p}{p}$.
  Moreover, there exists $C(p, d, P_X) > 0$, depending only on~$p$,~$d$
  and $P_X$, such that for all $a \geq 0$ and $\gamma > 0$, we have
  $\max_{r \in \cR} \nu_r \leq C(p, d, P_X)$,
  $\max_{r \in \cR} \|e_r\|_\infty \leq C(p, d, P_X)$
  and $H_\gamma \leq C(p, d, P_X)$.
\end{lemma}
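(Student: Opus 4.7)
The plan is to exploit the finite rank of the polynomial kernel. Expanding $(x^\T y + a)^p = \sum_{|\alpha| \leq p} c_\alpha\, x^\alpha y^\alpha$ with $c_\alpha \geq 0$ shows that the RKHS $\cH$ is contained in the space $P^d_p$ of polynomials in $d$ variables of degree at most $p$, which has dimension $D \vcentcolon= \binom{d+p}{p}$. Hence the integral operator $T_k$ on $L_2(P_X)$ has finite rank $R \leq D$, the index set $\cR$ is finite, and every $L_2(P_X)$-orthonormal eigenfunction $e_r$ is a polynomial of degree at most $p$. Since $a_r = \nu_r/(\nu_r + \gamma) \in (0, 1]$, it suffices to show that $\sup_{x \in \cX}\sum_{r=1}^R e_r(x)^2$ is bounded by a constant independent of $a \geq 0$ and $\gamma > 0$.

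For the main inequality, I would invoke finite-dimensional norm equivalence on the subspace $V \vcentcolon= \mathrm{span}\{e_r\}_{r=1}^R \subseteq P^d_p$. Because the $e_r$'s are $L_2(P_X)$-orthonormal, $\|\cdot\|_{L_2(P_X)}$ restricts to a genuine norm on $V$, while $\|\cdot\|_{\infty,\cX}$ is also a norm on $V \subseteq P^d_p$. In finite dimensions these norms are equivalent, so there exists $C_V > 0$ with $\|f\|_{\infty,\cX} \leq C_V \|f\|_{L_2(P_X)}$ for all $f \in V$. Applying this to each eigenfunction yields $|e_r(x)| \leq C_V$ uniformly over $x \in \cX$, and therefore
\[
  \sup_{x \in \cX} \sum_{r=1}^R e_r(x)^2 \;\leq\; R \cdot C_V^2 \;\leq\; D \cdot C_V^2,
\]
which, if $C_V$ can be bounded $a$-independently, delivers the desired constant $C(p,d,P_X)$.

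The main obstacle is obtaining a bound on $C_V$ that is uniform in $a \geq 0$, since the subspace $V = V(a)$ depends on the parameter $a$ through the eigen-equation $T_k e_r = \nu_r e_r$. A natural resolution is to control $C_V$ via the $a$-independent quotient space $[P^d_p] \vcentcolon= P^d_p / \{f \in P^d_p : \|f\|_{L_2(P_X)} = 0\}$: the quotient map restricts to an isometry $V(a) \hookrightarrow [P^d_p]$ for the $L_2(P_X)$-norm, so that evaluation at any $x \in \mathrm{supp}(P_X)$ is controlled by the (finite-dimensional, $a$-free) evaluation-functional norm on $[P^d_p]$, which depends only on $p$, $d$ and $P_X$. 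For $x \in \cX \setminus \mathrm{supp}(P_X)$ one then needs a separate polynomial growth estimate on the bounded set $\cX$, which can be obtained by writing $e_r = T_k e_r / \nu_r$ explicitly in the monomial basis and using a Markov-type inequality for polynomials of degree at most $p$ to bound the coefficients of $e_r$ in terms of $\|e_r\|_{L_2(P_X)} = 1$. These two estimates combine to give the uniform bound $C_V \leq C(p, d, P_X)$, completing the proof.
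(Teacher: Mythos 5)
Your core strategy is sound and is a genuinely different route from the paper's. The paper proves the bound constructively: it fixes an $L_2(P_X)$-orthonormal polynomial basis $\{q_r\}_{r\in[R]}$ of degree at most $p$, writes $k^\poly_{p,a}(x,y) = q(x)^\T \bB q(y)$ with a positive semi-definite matrix $\bB$ that carries all of the $a$-dependence, diagonalises $\bB = \bU^\T \bD \bU$, and observes that the eigenfunctions are $e(x) = \bU q(x)$ with $\bU$ orthogonal, so that $\sup_{x\in\cX}\max_r |e_r(x)| \leq \sup_{x\in\cX}\|q(x)\|_2$, an $a$-free and $\gamma$-free constant; together with $a_r \leq 1$ this gives the claim. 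You instead argue softly: the eigenfunctions are unit-$L_2(P_X)$ polynomials of degree at most $p$, and a sup-norm-versus-$L_2(P_X)$-norm comparison on a finite-dimensional polynomial space gives a uniform pointwise bound. You correctly identify the one real danger in this route, namely that the norm-equivalence constant must not depend on $a$ through the $a$-dependent eigenspace $V(a)$. The clean fix is simply to run the comparison on the \emph{fixed} space of polynomial functions of degree at most $p$ restricted to $\cX$ (which contains every $V(a)$): under the paper's standing Mercer assumptions, every non-empty open subset of $\cX$ has positive $P_X$-measure, so a polynomial vanishing $P_X$-almost everywhere vanishes identically on $\cX$, $\|\cdot\|_{L_2(P_X)}$ is a genuine norm on this fixed space, and the resulting constant depends only on $p$, $d$, $P_X$ (and $\cX$). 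Your quotient-space formulation of part (1) is essentially this, and in fact the evaluation-functional norm you invoke equals $\bigl(\sum_r q_r(x)^2\bigr)^{1/2}$, i.e.\ exactly the constant the paper exhibits; the paper's computation buys an explicit eigen-decomposition, while your argument is shorter but relies on the abstract equivalence of norms.

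The weak point is your case (2), the points $x \in \cX \setminus \mathrm{supp}(P_X)$. First, it is unnecessary: the standing assumption $P_X(U) > 0$ for every non-empty open $U \subseteq \cX$ forces $\cX \subseteq \mathrm{supp}(P_X)$, so the case is vacuous in the setting in which $H_\gamma$ is defined. Second, as sketched it would not work: if nonzero polynomials of degree at most $p$ can vanish $P_X$-almost everywhere (the only situation in which the off-support case is delicate), then no Markov-type inequality can bound the monomial coefficients of $e_r$ in terms of $\|e_r\|_{L_2(P_X)} = 1$, since the degenerate directions are completely unconstrained by the $L_2(P_X)$ seminorm; and the alternative of writing $e_r = T_k e_r/\nu_r$ and bounding crudely reintroduces $\sup_{\cX\times\cX}|k^\poly_{p,a}|/\nu_r$, which is not uniform over $a \geq 0$. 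So you should drop the case split and the quotient, invoke the standing assumptions to justify that $\|\cdot\|_{L_2(P_X)}$ is a norm on the fixed space of degree-$\leq p$ polynomial functions on $\cX$, and conclude directly from part (1); with that modification the proof is complete.
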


\begin{lemma}[Bound on $H_\gamma$ for a shifted
  first-order Sobolev kernel]%
  \label{lem:sobolev_Hgamma}
  Let $P_X$ denote Lebesgue measure on $\cX \vcentcolon= [0, 1]$,
  and recall the kernel $k^\Sob_{1,a}$
  from Example~\ref{ex:sobolev}.
  Then $\cR = \N$.
  Further, for all $a \geq 0$ and $\gamma > 0$, we have
  $\nu_r \leq 1 / \{(r-1)^2 \pi^2\}$ for each $r \geq 2$,
  $\sup_{r \in \cR} \|e_r\|_\infty \leq \sqrt 2$
  and $H_\gamma \leq 2 + 1 / \sqrt{\gamma}$.
\end{lemma}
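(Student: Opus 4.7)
The plan is to reduce the bound on $H_\gamma$ to a Sobolev-type embedding inequality on the RKHS $\cH$. As a first step, I would establish the variational identity
\[
  H_\gamma = \sup_{f \in \cH \setminus \{0\}} \frac{\|f\|_\infty^2}{\|f\|_{\cH_\gamma}^2}.
\]
The ``$\geq$'' direction is Lemma~\ref{lem:infty_bounds}. For the ``$\leq$'' direction, fix $x \in \cX$ and consider $g_x \vcentcolon= \sum_r a_r e_r(x) e_r$, which lies in $\cH$ since $\sum_r a_r^2 e_r(x)^2 / \nu_r \leq K/\gamma^2$. Expanding in the orthonormal basis $\{\sqrt{a_r} e_r\}_r$ of $\cH_\gamma$ yields $\|g_x\|_{\cH_\gamma}^2 = \sum_r a_r e_r(x)^2 = g_x(x)$, and hence $\|g_x\|_\infty^2 / \|g_x\|_{\cH_\gamma}^2 \geq g_x(x)^2 / g_x(x) = \sum_r a_r e_r(x)^2$; taking the supremum over $x$ completes the identity.

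Next, I would identify $\cH$ explicitly. Writing $k^\Sob_{1,a}(x,y) = a + k^\Sob_{1,0}(x,y)$ and recalling that the RKHS of $k^\Sob_{1,0}$ is $\{f \in W^{1,2}([0,1]) : f(0) = 0\}$ equipped with $\|f\|^2 = \int_0^1 f'(t)^2 \diff t$, the standard RKHS theory for sums of kernels shows that $\cH$ coincides with this space when $a = 0$ and with the full Sobolev space $W^{1,2}([0,1])$ when $a > 0$. In either case, we have $\|f\|_{\cH_\gamma}^2 \geq \|f\|_{L_2}^2 + \gamma \|f'\|_{L_2}^2$ for every $f \in \cH$.

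For the pointwise bound, for $f \in W^{1,2}([0,1])$ and $x, y \in [0,1]$, I would write $f(x)^2 - f(y)^2 = \int_y^x 2 f(t) f'(t) \diff t$ and apply Cauchy--Schwarz to obtain $|f(x)^2 - f(y)^2| \leq 2 \|f\|_{L_2} \|f'\|_{L_2}$; integrating over $y \in [0,1]$ then yields $f(x)^2 \leq \|f\|_{L_2}^2 + 2 \|f\|_{L_2} \|f'\|_{L_2}$. The weighted AM--GM inequality $2 A B \leq A^2/\sqrt{\gamma} + \sqrt{\gamma} B^2$ with $A = \|f\|_{L_2}$ and $B = \|f'\|_{L_2}$ gives
\[
  f(x)^2 \leq \Bigl(1 + \frac{1}{\sqrt{\gamma}}\Bigr) \|f\|_{L_2}^2 + \sqrt{\gamma}\, \|f'\|_{L_2}^2.
\]
Maximising this linear functional of $(\|f\|_{L_2}^2, \|f'\|_{L_2}^2)$ over the nonnegative orthant subject to $\|f\|_{L_2}^2 + \gamma \|f'\|_{L_2}^2 \leq \|f\|_{\cH_\gamma}^2$ gives a bound of $\max\{1 + 1/\sqrt{\gamma},\, 1/\sqrt{\gamma}\} \cdot \|f\|_{\cH_\gamma}^2 = (1 + 1/\sqrt{\gamma}) \|f\|_{\cH_\gamma}^2$. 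Combining with the variational identity yields $H_\gamma \leq 1 + 1/\sqrt{\gamma} \leq 2 + 1/\sqrt{\gamma}$, which is in fact slightly stronger than the claim.

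The main obstacle I anticipate is the concrete identification of $\cH$ as a Sobolev space, particularly the delicate case $a = 0$ where the constant function is not in $\cH$ and one must appeal to the zero-boundary version of the first-order Sobolev space. Once that is settled, the remainder is a standard Sobolev embedding calculation, with the AM--GM parameter $\sqrt{\gamma}$ being precisely what produces the $1/\sqrt{\gamma}$ dependence required by the claim.
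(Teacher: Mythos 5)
Your proof is correct, and it takes a genuinely different route from the paper's. The paper solves the associated Sturm--Liouville problem for $T_{k^\Sob_{1,a}}$ explicitly: it differentiates the eigenvalue equation twice to obtain $-e_\nu = \nu e_\nu''$, solves in closed form, shows $\nu_r \in \bigl[(r-1/2)^{-2}\pi^{-2}, (r-1)^{-2}\pi^{-2}\bigr)$ with $\|e_r\|_\infty^2 \leq 2$, and then bounds $H_\gamma \leq 2 + \sum_{r\geq 1} 2/(1 + \gamma r^2 \pi^2) \leq 2 + \int_0^\infty 2/(1 + \gamma x^2 \pi^2)\,\mathrm{d}x = 2 + 1/\sqrt{\gamma}$ by comparing the sum with an integral. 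You instead sidestep the spectral computation entirely: you first show the variational identity $H_\gamma = \sup_{f\in\cH\setminus\{0\}} \|f\|_\infty^2/\|f\|_{\cH_\gamma}^2$ (the lower bound via Lemma~\ref{lem:infty_bounds}, the upper bound via the reproducing element $g_x = \sum_r a_r e_r(x)\,e_r$ of $\cH_\gamma$, whose membership in $\cH$ and norm you verify correctly), then identify $\cH$ as a first-order Sobolev space with $\|f\|_\cH^2 \geq \|f'\|_{L_2}^2$, and close with the one-dimensional Sobolev embedding $f(x)^2 \leq \|f\|_{L_2}^2 + 2\|f\|_{L_2}\|f'\|_{L_2}$ and a $\sqrt{\gamma}$-weighted AM--GM. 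This is clean and even yields the marginally sharper bound $H_\gamma \leq 1 + 1/\sqrt{\gamma}$. The trade-off is that you replace the explicit eigenanalysis with two imported facts that should be cited if this were to be incorporated: Aronszajn's theorem on RKHSs of sums of kernels (to decompose $\cH$ when $a>0$ as $\R\cdot 1_\cX \oplus W^{1,2}_0$ with the additive norm), and the identification of the Brownian-motion kernel RKHS. You correctly flag this as the main obstacle; both are standard and appear, e.g., in \citet[Examples~6.8 and 6.16]{samworth2025modern}, so it is a citation rather than a genuine gap. The paper's spectral approach is more self-contained and also produces the eigenvalue asymptotics (useful elsewhere), whereas yours is shorter and more portable to other Sobolev-type kernels for which the eigenproblem is less tractable.
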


By Theorem~\ref{thm:rate}, we deduce that
for the polynomial kernel, $\gamma^\star \asymp 1/n$, giving
a parametric rate of convergence of $1/\sqrt{n}$.
For the shifted first-order Sobolev kernel,
$\gamma^\star \asymp n^{-2/3}$ and the corresponding high-probability
bound from Theorem~\ref{thm:rate} is
$\bigl\| \hat f_{n,\gamma} - f_0 \bigr\|_{L_2} \lesssim n^{-1/3}$.
More generally, if the eigenfunctions $(e_r)_{r \in \mathcal{R}}$ are
uniformly bounded, in the sense that $\sup_{x \in \cX} \sup_{r \in
\cR} |e_r(x)| \leq C$,
then $H_\gamma$ is controlled by the decay rate of the eigenvalues
$(\nu_r)_{r \in \cR}$. For example, if
$\cR = \N$ and $\nu_r \leq 1 / r^b$ for some $b > 1$, then
$H_\gamma \leq C \int_0^\infty 1 / (1 + \gamma t^b) \diffi t
= C \pi / \bigl\{b \gamma^{1/b} \sin(\pi / b)\bigr\}$.
Thus, $\gamma^\star \asymp n^{-b / (1 + b)}$
and the corresponding high-probability bound from Theorem~\ref{thm:rate} is
$\bigl\| \hat f_{n,\gamma} - f_0 \bigr\|_{L_2} \lesssim n^{-b / (2 + 2 b)}$.

\subsection{Parameter tuning}%

To state our main result on the RKHS estimator with the
cross-validated choice of $\gamma$,
define $\gamma^+ \vcentcolon= \max \Gamma$ and
$\gamma^- \vcentcolon= \min \Gamma$, and
suppose that $\Gamma \subseteq [n^{-\xi}, n^\xi]$ for some $\xi \geq 1$.

\begin{theorem}[Parameter tuning]%
  \label{thm:parameter_tuning}
  Let $\hat\gamma$ be as in \eqref{eq:gamma_hat}.
  There exists $C_0 > 0$ depending only on
  $\|f_0\|_\cH$, $\Lambda$, $q_1$, $\|1_\cX\|_\cH$, $K$ and $\xi$
  such that if $\gamma^+ H_{\gamma^+} \leq 1 / C_0$,
  $H_{\gamma^-} \sqrt{(\log n) / n} \leq 1/C_0$
  and $1 \leq s \leq \sqrt{n} / (C_0 H_{\gamma^-})$,
  then with probability at least $1 - e^{-s^2}$,
  \begin{equation*}
    \bigl\|\hat f_{n,\hat\gamma} - f_0\bigr\|_{L_2}
    \leq
    C_0
    \biggl(
      \min_{\gamma \in \Gamma}
      \bigl\|\hat f_{n,\gamma} - f_0\bigr\|_{L_2}
      + \frac{s + \sqrt {\log n}}{\sqrt n}
    \biggr).
  \end{equation*}
\end{theorem}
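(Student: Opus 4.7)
The plan is to establish the oracle inequality via four steps: (i) a quadratic comparison between the limiting partial likelihood excess $\ell_\star(f) - \ell_\star(f_0)$ and the squared $L_2$-error; (ii) a concentration bound for the validation excess partial likelihood around its population counterpart; (iii) a uniformisation over $\gamma \in \Gamma$, obtained either by union bound or via a $\poly(n)$-net argument; and (iv) inversion of a self-bounding quadratic inequality coming from the defining property of $\hat\gamma$.

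First I would prove that, uniformly over $f \in \cB(\cX)$ with $\|f\|_\infty \leq B$ (for a constant $B$ depending on $\|f_0\|_\cH$, $\|1_\cX\|_\cH$, $K$ and $\xi$), one has
\begin{equation*}
  c_1 \|f - f_0\|_{L_2}^2 \leq \ell_\star(f) - \ell_\star(f_0) \leq c_2 \|f - f_0\|_{L_2}^2,
\end{equation*}
with $c_1, c_2$ depending only on $\Lambda$, $q_1$ and $B$. This is a second-order Taylor expansion of $\ell_\star$ around its minimiser $f_0$ (Lemma~\ref{lem:characterisation_f0}), where curvature is supplied by $q_1 > 0$ and finiteness of $\Lambda$. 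In the same step I would verify, on a high-probability event given by applying Theorem~\ref{thm:rate} at $\gamma = \gamma^+$ together with Lemma~\ref{lem:infty_bounds}, that $\max_{\gamma \in \Gamma} \|\hat f_{n,\gamma}\|_\infty \leq B$; the hypothesis $\gamma^+ H_{\gamma^+} \leq 1/C_0$ is exactly what makes Theorem~\ref{thm:rate} applicable at the coarsest level of the grid.

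Second, conditionally on the training sample, I would treat each $\hat f_{n,\gamma}$ as a deterministic bounded function and, for each fixed bounded $f$, decompose $(\tilde \ell_n(f) - \tilde \ell_n(f_0)) - (\ell_\star(f) - \ell_\star(f_0))$ as a centred sum of i.i.d.\ bounded terms plus a remainder. The remainder arises from the logarithm $\log \tilde S_n(f,t)$, and is handled by a second-order expansion of $\log \tilde S_n$ around $\log S$, using uniform control of $\sup_{f,t}|\tilde S_n(f,t)/S(f,t) - 1|$ via a standard empirical process argument; here the assumption $H_{\gamma^-} \sqrt{\log |\Gamma| \land \log n}/\sqrt n \leq 1/C_0$ is what guarantees that this remainder stays dominated by the leading Bernstein term. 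A Bernstein inequality with leading deviation proportional to $\|f - f_0\|_{L_2}/\sqrt n$ is then applied. Uniformising directly over $\Gamma$ yields the $\sqrt{\log |\Gamma|}$ factor; to obtain the $\sqrt{\log n}$ alternative, I would show that $\gamma \mapsto \hat f_{n,\gamma}$ is Lipschitz on $[n^{-\xi}, n^\xi]$ in an appropriate norm (reading this off from the first-order optimality condition for $\hat\beta$ in Proposition~\ref{prop:representation}), so that $\Gamma$ can be replaced by a net of cardinality $\poly(n)$ at negligible additional cost.

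In the final step, the defining property $\tilde \ell_n(\hat f_{n,\hat\gamma}) \leq \tilde \ell_n(\hat f_{n,\gamma})$ for all $\gamma \in \Gamma$, combined with (i)--(iii), yields on the good event
\begin{equation*}
  c_1 \bigl\|\hat f_{n,\hat\gamma} - f_0\bigr\|_{L_2}^2 \leq c_2 \bigl\|\hat f_{n,\gamma} - f_0\bigr\|_{L_2}^2 + c_3 \Bigl(\bigl\|\hat f_{n,\hat\gamma} - f_0\bigr\|_{L_2} + \bigl\|\hat f_{n,\gamma} - f_0\bigr\|_{L_2}\Bigr) \cdot \frac{s + \sqrt{\log |\Gamma| \land \log n}}{\sqrt n}
\end{equation*}
for every $\gamma \in \Gamma$. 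Minimising the right-hand side over $\gamma$ and absorbing the cross term via $ab \leq \tfrac{c_1}{2c_3^2} a^2 + \tfrac{c_3^2}{2c_1} b^2$ gives the claimed bound after taking square roots. The main obstacle is step (ii): the validation partial likelihood is not a simple empirical mean because of the $\log \tilde S_n$ term, so one must carefully linearise $\log \tilde S_n(f,t) - \log \tilde S_n(f_0,t)$ around its population analogue uniformly in $t \in [0,1]$ and show that the quadratic remainder is negligible at the scale of the leading Bernstein term—this is precisely the role of the hypothesis $H_{\gamma^-} \sqrt{\log |\Gamma| \land \log n}/\sqrt n \leq 1/C_0$.
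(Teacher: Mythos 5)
Your proposal follows essentially the same route as the paper: strong convexity of $\ell_\star$ around $f_0$ (Lemma~\ref{lem:strong_convexity}), a Bernstein-type concentration bound for $(\tilde\ell_n-\ell_\star)(f)-(\tilde\ell_n-\ell_\star)(f_0)$ with leading term proportional to $\|f-f_0\|_{L_2}/\sqrt n$ (Lemma~\ref{lem:local_convergence_ln}), a uniformisation over $\Gamma$ obtained by replacing the grid with a $\poly(n)$-net via a Lipschitz bound on $\gamma\mapsto\hat f_{n,\gamma}$ (Lemma~\ref{lem:lipschitz_fhat}), and inversion of a self-bounding quadratic (Lemma~\ref{lem:general_cv}). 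Two points to flag, neither fatal. First, you misattribute the role of the hypothesis $H_{\gamma^-}\sqrt{\log|\Gamma|\land\log n}/\sqrt n\le 1/C_0$: in the paper it is \emph{not} what dominates a quadratic remainder in the linearisation of $\log\tilde S_n$. Rather, combined with $\gamma^+ H_{\gamma^+}\le 1/C_0$, it ensures (via Theorem~\ref{thm:rate} and Lemma~\ref{lem:infty_bounds}) that $\max_{\gamma\in\Gamma'}\|\hat f_{n,\gamma}-f_0\|_\infty\le 1$ on the good event, which is what makes the strong-convexity and Bernstein constants uniform over the candidate set when Lemma~\ref{lem:general_cv} is invoked; the higher-order term in that lemma is then controlled by the range restriction on $s$ rather than by this hypothesis directly. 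Second, the general cross-validation lemma requires $P_X(f_a)=0$, so the paper actually applies it to $\hat f_{n,\gamma}-P_X(\hat f_{n,\gamma})1_\cX$ and then shows $\max_{\gamma\in\Gamma'}|P_X(\hat f_{n,\gamma})|\le C_4^0 s/\sqrt n$ via Lemmas~\ref{lem:uniform_convergence_Pn} and~\ref{lem:convergence_Pn}; you omit this centring step, and without it your quadratic comparison in step (i) would not hold directly for $\hat f_{n,\gamma}$, whose empirical mean is zero but whose population mean generally is not.
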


The cross-validated choice
$\hat\gamma$ therefore achieves the same convergence rate
as the oracle choice of $\gamma$, up to a constant factor
and an additive term that is at most $\sqrt{(\log n) / n}$.
Applying Theorem~\ref{thm:rate}, and
under the conditions specified in
Theorem~\ref{thm:parameter_tuning}, we deduce that
with probability at least $1 - e^{-s^2}$,
\begin{equation*}
  \bigl\|\hat f_{n,\hat\gamma} - f_0\bigr\|_{L_2}
  \leq
  C_0
  \Biggl(
    s \sqrt{\frac{H_{\gamma^{\star}}}{n}}
    + \sqrt{\gamma^{\star}}
    + \sqrt{\frac{\log n}{n}}
  \Biggr).
\end{equation*}
Whenever $\gamma^a H_\gamma \asymp 1$ as $\gamma \searrow 0$
for some $a \in [0, 1)$, the optimal rate according to
Theorem~\ref{thm:rate} is $n^{-1/(2 + 2a)}$.
If $a > 0$, such as with the first-order Sobolev
kernel in Lemma~\ref{lem:sobolev_Hgamma} where $a = 1/2$,
then $n^{1/(2 + 2a)} \sqrt{(\log n) / n} \to 0$
as $n \to \infty$, so the logarithmic terms in
Theorem~\ref{thm:parameter_tuning}
are asymptotically negligible.
In such a setting, we obtain the same rate of convergence
when using $\hat\gamma$ as when using the oracle choice $\gamma^\star$.
When $a = 0$, such as for the
polynomial kernel of Lemma~\ref{lem:polynomial_Hgamma},
the upper bound on the convergence rate with cross-validation
is worse by a factor of at most $\sqrt{\log n}$
than with the optimal tuning parameter $\gamma^\star$.

The proof of Theorem~\ref{thm:parameter_tuning} could be adapted
to cover cross-validation of other parameters,
such as the choice of kernel, as well as hyperparameters
such as bandwidths, polynomial degrees and shape parameters.
While these may each induce a different RKHS,
Theorem~\ref{thm:parameter_tuning} is stated with respect to the
$L_2(P_X)$-norm rather than any kernel-dependent metrics and can
therefore be applied to arbitrary collections of estimators
taking values in $L_2(P_X)$;
see Section~\ref{sec:model_selection} for more details.

\subsection{Model aggregation with CARE}
\label{sec:model_selection}

Our final main theoretical result
gives an upper bound on the rate of convergence
of the CARE estimator $\check f_{n,\check\gamma,\check\theta}$.

\begin{theorem}[CARE procedure]%
  \label{thm:model_selection}
  Let $(\check\gamma,\check\theta)$ be the cross-validated choices
  as defined in \eqref{eq:gamma_theta_check}.
  There exists $C_0 > 0$ depending only on
  $\|f_0\|_\cH$, $\Lambda$, $q_1$,
  $\|1_\cX\|_\cH$, $K$, $\xi$
  and $M$ such that if
  $\gamma^+ H_{\gamma^+} \leq 1 / C_0$,
  $H_{\gamma^-} \sqrt{(\log n) / n} \leq 1/C_0$
  and $1 \leq s \leq \sqrt{n} / (C_0 H_{\gamma^-})$,
  then with probability at least $1 - e^{-s^2}$,
  \begin{equation*}
    \bigl\|\check f_{n,\check\gamma,\check\theta} - f_0\bigr\|_{L_2}
    \leq
    C_0
    \biggl(
      \min_{(\gamma, \theta) \in \Gamma \times \Theta}
      \bigl\|\check f_{n,\gamma,\theta} - f_0\bigr\|_{L_2}
      + \frac{s + \sqrt {|\cM| \log n}}
      {\sqrt n}
    \biggr).
  \end{equation*}
\end{theorem}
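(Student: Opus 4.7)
My plan is to adapt the oracle-inequality strategy used for Theorem~\ref{thm:parameter_tuning}, enlarging the index set from $\Gamma$ to $\Gamma \times \Theta$ and taking as candidates the richer family of convex combinations $\check f_{n,\gamma,\theta}$. Since $(\hat f_m)_{m \in \cM}$ is trained on data independent of $(X_i,T_i,I_i)_{i \in [2n]}$, I would first condition on the training data and on $(\hat f_m)_{m \in \cM}$; on this event, the collection $\{\check f_{n,\gamma,\theta}:(\gamma,\theta) \in \Gamma \times \Theta\}$ becomes a deterministic finite family of functions on the independent validation data, and $(\check\gamma,\check\theta)$ simply minimises $\tilde\ell_n$ over that family.

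The first ingredient is uniform sup-norm control. A union bound over $\Gamma$ in Theorem~\ref{thm:rate}, combined with Lemma~\ref{lem:infty_bounds} and the hypothesis $\gamma^+ H_{\gamma^+} \leq 1/C_0$, provides a deterministic constant $M_1$ for which $\|\hat f_{n,\gamma} - f_0\|_\infty \leq M_1$ holds for every $\gamma \in \Gamma$ with the required probability. Combining this with $\|\hat f_m\|_\infty \leq M$ and $\theta \in \Delta_\cM$ gives $\|\check f_{n,\gamma,\theta} - f_0\|_\infty \leq M_2$ uniformly. Within this bounded regime, a second-order Taylor expansion of $\ell_\star$ around $f_0$---using Lemma~\ref{lem:characterisation_f0}, the bound $\Lambda < \infty$ and the positivity $q_1 > 0$---yields the two-sided quadratic comparison
\begin{equation*}
c_1 \|f - f_0\|_{L_2}^2 \leq \ell_\star(f) - \ell_\star(f_0) \leq c_2 \|f - f_0\|_{L_2}^2
\end{equation*}
for every $f$ with $\|f - f_0\|_\infty \leq M_2$, where $c_1,c_2$ depend only on the problem parameters.

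The core probabilistic step is a localised uniform deviation inequality. Applying a Bernstein-type concentration argument to the counting-process components of $\tilde\ell_n$, for each fixed $f$ in the class I would obtain
\begin{equation*}
\bigl| \tilde\ell_n(f) - \tilde\ell_n(f_0) - \{\ell_\star(f) - \ell_\star(f_0)\} \bigr| \leq \varepsilon \|f - f_0\|_{L_2}^2 + \frac{C_1(s^2 + \log|\Gamma| + \log|\Theta|)}{n},
\end{equation*}
and a union bound over the $|\Gamma|\cdot|\Theta|$ candidates makes this uniform. To obtain the alternative $\wedge \log n$ term, I would replace the union bound by a chaining/peeling argument over the uniformly bounded class $\{\check f_{n,\gamma,\theta} - f_0\}$ at $L_2$-resolution $1/n$, which substitutes the log-cardinality by $\log n$ at the cost of constants depending on $M_2$.

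Putting the pieces together, let $(\gamma^\circ,\theta^\circ)$ be an oracle minimiser of the right-hand side. From $\tilde\ell_n(\check f_{n,\check\gamma,\check\theta}) \leq \tilde\ell_n(\check f_{n,\gamma^\circ,\theta^\circ})$, subtracting $\tilde\ell_n(f_0)$ and passing to $\ell_\star$ differences via the localised deviation bound, then applying the quadratic lower bound on the left and the upper bound on the right, produces a quadratic inequality in $\|\check f_{n,\check\gamma,\check\theta} - f_0\|_{L_2}$; choosing $\varepsilon$ small to absorb the quadratic cross-term on the right and solving yields the claimed oracle bound. The main obstacle is the localised uniform deviation: a crude uniform bound would lose the $L_2$-localisation and only produce slow rates, whereas the correct adaptive behaviour---inheriting the rate of whichever of the pre-trained models or the optimally tuned RKHS estimator is closer to $f_0$---requires the careful Bernstein split that separates the variance contribution (scaling with $\|f - f_0\|_{L_2}$) from the worst-case $L_\infty$ contribution, together with the chaining refinement needed for the $\wedge \log n$ branch.
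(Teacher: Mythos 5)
Your proposal follows essentially the same architecture as the paper's proof: condition on the training data, get uniform sup-norm control from Theorem~\ref{thm:rate}, use strong convexity of $\ell_\star$ plus a localised deviation bound for $\tilde\ell_n - \ell_\star$ to compare $\tilde\ell_n$-minimisation to $\ell_\star$-minimisation, and solve the resulting quadratic inequality. The localised deviation bound you invoke is precisely Lemma~\ref{lem:local_convergence_ln} (the $\varepsilon\|f-f_0\|_{L_2}^2$ form follows from the paper's $s\|f-f_0\|_{L_2}/\sqrt n$ form by AM--GM), and the machinery you describe is the content of Lemma~\ref{lem:general_cv}, which you are re-deriving inline rather than quoting.

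There are two points where your sketch has a genuine gap. First, the \emph{centring issue}: the two-sided quadratic comparison $c_1\|f-f_0\|_{L_2}^2 \leq \ell_\star(f)-\ell_\star(f_0)$ is false as stated, because $\ell_\star$ is invariant under addition of constants while $\|f-f_0\|_{L_2}^2$ is not. Lemma~\ref{lem:strong_convexity} gives the lower bound in terms of $\|f - P_X(f) - f_0\|_{L_2}^2$, so you must apply the argument to the $P_X$-centred candidates $\check f_{n,\gamma,\theta} - P_X(\check f_{n,\gamma,\theta})1_\cX$. But $P_X(\check f_{n,\gamma,\theta})$ is not zero (the estimators are only empirically centred under $P_n$), so you then owe an additional bound on $\max_{\gamma,\theta}|P_X(\check f_{n,\gamma,\theta})|$ of order $s/\sqrt n$; the paper obtains this from Lemmas~\ref{lem:uniform_convergence_Pn} and~\ref{lem:convergence_Pn} together with Hoeffding for the external-estimator part. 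Without this step the proof does not close. Second, the $\wedge\log n$ \emph{branch}: ``chaining/peeling at $L_2$-resolution $1/n$ over the class $\{\check f_{n,\gamma,\theta}-f_0\}$'' presupposes entropy control of that function class under $L_2(P_X)$, which is not directly available. The paper instead covers the \emph{parameter} space: it uses the Lipschitz continuity of $\gamma\mapsto\hat f_{n,\gamma}$ in $\|\cdot\|_\infty$ (Lemma~\ref{lem:lipschitz_fhat}) to get a net $\Gamma'$ of size at most $|\Gamma|\wedge n^{4\xi}$, and the simplex covering bound (Lemma~\ref{lem:covering_simplex}) together with the a.s.\ sup-norm bound $M$ on the external estimators to get a net $\Theta'$ of size at most $|\Theta|\wedge(1+n)^{|\cM|}$; the function-space Lipschitz bound \eqref{eq:model_selection_lipschitz} then transfers this to the candidates. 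Your sketch points in the right direction but would need this parameter-space Lipschitz argument to make the cardinality count.
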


The cross-validated CARE estimator
$\check f_{n,\check\gamma,\check\theta}$ therefore attains a
convergence rate under $L_2(P_X)$ comparable
to that of the oracle choices of
$\gamma \in \Gamma$ and $\theta \in \Theta$.
In particular, $\check\gamma$ may depend on properties of
the external estimators $\hat f_m$.

Denote the index of the best external estimator by
$m^\star \vcentcolon= \sargmin_{m \in \cM}
\|\tilde f_m - f_0\|_{L_2}$,
and suppose that $\Theta \supseteq
\bigl\{\theta \in \{0, 1\}^{\cM}:
\sum_{m \in \cM} \theta_m \in \{0, 1\}\bigr\}$,
corresponding to the vertices of $\Delta_\cM$.
Then under the conditions of Theorem~\ref{thm:model_selection}, by
combining this result with Theorem~\ref{thm:rate}, we have with
probability at least $1 - e^{-s^2}$ that
\begin{align}
  \label{eq:sequential}
  \bigl\|\check f_{n,\check\gamma,\check\theta}
  &- f_0\bigr\|_{L_2}
  \leq
  C_0
  \Biggl(
    \biggl\{ s \sqrt{\frac{H_{\gamma^\star}}{n}} + \sqrt{\gamma^\star} \biggr\}
    \land
    \bigl\|\tilde f_{m^\star} - f_0\bigr\|_{L_2}
    + \frac{s + \sqrt {|\cM| \log n}}
    {\sqrt n}
  \Biggr).
\end{align}

Therefore, $\check f_{n,\check\gamma,\check\theta}$ attains a rate of
convergence that is at least as good as both
\begin{inlineroman}
  \item the optimal kernel estimator $\hat f_{n,\gamma^\star}$, and
  \item the best empirically centred external
    estimator $\tilde f_{m^\star}$.
\end{inlineroman}
In this sense, CARE is able to automatically adapt to the
predictive values of the external models
$\tilde f_m$ relative to those of the kernel estimators
$\hat f_{n,\gamma}$. In particular, the performance of the CARE procedure is
not significantly degraded when some of the external estimators
may be poor predictors, for
example, if they are misspecified.

We remark that our method
can accommodate certain forms of covariate shift with no modification;
this is important when the
external estimators may have been trained on samples from
populations that differ from that of the data
$(X_i, T_i, I_i)_{i \in [2n]}$.
To this end, suppose that
$\|\tilde f_{\tilde m} - f_0\|_{\tilde L_2} \leq \delta$
for some $\tilde m \in \cM$, where
$\tilde L_2 \vcentcolon= L_2(\tilde P_X)$
and $\tilde P_X$ is a distribution on $\cX$
with Radon--Nikodym derivative
$\diffd P_X / \diffd \tilde P_X \leq C$ almost surely for some
non-random $C \in (0,\infty)$.
Then $\|\tilde f_{\tilde m} - f_0\|_{L_2} \leq C \delta$
and the high-probability bound given in \eqref{eq:sequential}
holds with $\|\tilde f_{m^\star} - f_0\|_{L_2}$ replaced by $C \delta$.

\section{Implementation details}
\label{sec:implementation}

In this section, we present some further details on the
practical implementation of our proposed CARE methodology.
Specifically, we describe procedures for calculating
$\|1_\cX\|_\cH$, computing the set $\cA_n$,
solving the optimisation problem described in
\eqref{eq:argmin}
and conducting cross-validation as
detailed in \eqref{eq:gamma_hat} and \eqref{eq:gamma_theta_check}.
We also provide an alternative `feature map' formulation
for superior computational efficiency when using certain kernels, and
provide guidance on the choice of kernel.

\subsection{Calculating \texorpdfstring{$\|1_\cX\|_\cH$}{||1X||}}

In Proposition~\ref{prop:constant_norm} below, we give the exact general form
of $\|1_\cX\|_\cH$ based on properties of the kernel.
For $n \in \N$ and $\bA \in \R^{n \times n}$
with $\bA \succeq 0$, define
\begin{align*}
  \kappa(\bA)
  &\vcentcolon=
  \lim_{\delta \searrow 0}
  \frac{1}{1_n^\T (\bA + \delta \bI_n)^{-1} 1_n}
  =
  \begin{cases}
    0 & \text{if there is } v \in \R^n \text{ with }
    \bA v = 0,
    1_n^\T v = 1, \\
    \frac{1}{1_n^\T \bA^+ 1_n} & \text{otherwise},
  \end{cases}
\end{align*}
where $\bA^+$ denotes the Moore--Penrose pseudo-inverse of $\bA$
\citep[e.g.,][Section~6.7]{friedberg2003linear}.

\begin{proposition}[Characterisation of $\|1_\cX\|_\cH$]%
  \label{prop:constant_norm}
  For $n \in \N$ and
  $x_1, \ldots, x_n \in \cX$, define the matrix
  $\bK(x_1, \ldots, x_n) \in \R^{n \times n}$
  by $\bK(x_1, \ldots, x_n)_{i j} \vcentcolon= k(x_i, x_j)$
  for $i, j \in [n]$.
  Then $1_\cX \in \cH$ if and only if
  $\kappa_k \vcentcolon=
  \inf_{n \in \N} \inf_{x_1, \ldots, x_n \in \cX}
  \kappa \bigl(\bK(x_1, \ldots, x_n)\bigr) > 0$,
  in which case $\|1_\cX\|_\cH^2 = 1 / \kappa_k$.
\end{proposition}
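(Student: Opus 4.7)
\textbf{The plan} is to identify $1/\kappa(\bK(x_1,\ldots,x_n))$ as a squared RKHS norm of a kernel-subspace approximation to $1_\cX$, and then to use the density of finite kernel combinations in $\cH$ to transfer this to a characterisation of $\|1_\cX\|_\cH$ itself. First, a linear-algebra step: for $x_1,\ldots,x_n \in \cX$ and $\alpha \in \R^n$, the reproducing property gives $f_\alpha \vcentcolon= \sum_{i=1}^{n} \alpha_i k(\cdot,x_i) \in \cH$ with $\|f_\alpha\|_\cH^2 = \alpha^\T \bK \alpha$, where $\bK \vcentcolon= \bK(x_1,\ldots,x_n)$. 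A short spectral-decomposition argument on the positive semi-definite matrix $\bK$ shows that
\[
  \sup_{\alpha \,:\, \alpha^\T \bK \alpha > 0}
  \frac{(1_n^\T \alpha)^2}{\alpha^\T \bK \alpha}
  = \frac{1}{\kappa(\bK)},
\]
with the convention $1/0 = +\infty$: when $\ker(\bK) \not\subseteq 1_n^\perp$ the ratio is unbounded (so $\kappa(\bK) = 0$), while otherwise $1_n \in \mathrm{range}(\bK)$, the supremum equals $1_n^\T \bK^+ 1_n$, and is attained at $\alpha = \bK^+ 1_n$.

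\textbf{Forward direction.} Suppose $1_\cX \in \cH$. The reproducing property gives $\langle 1_\cX, f_\alpha \rangle_\cH = \sum_i \alpha_i 1_\cX(x_i) = 1_n^\T \alpha$, so Cauchy--Schwarz yields $(1_n^\T \alpha)^2 \leq \|1_\cX\|_\cH^2 \cdot \alpha^\T \bK \alpha$. Taking the supremum over $\alpha$ and the infimum over $(x_1,\ldots,x_n)$ gives $1/\kappa_k \leq \|1_\cX\|_\cH^2$, and in particular $\kappa_k > 0$. For the matching lower bound, the equality case of Cauchy--Schwarz identifies $\|\mathrm{proj}_{V_n}(1_\cX)\|_\cH^2 = 1/\kappa(\bK)$ where $V_n \vcentcolon= \mathrm{span}\{k(\cdot,x_i): i \leq n\}$. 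Any element of $\cH$ orthogonal to every $k(\cdot,x)$ vanishes identically by the reproducing property, so the closed linear span of $\{k(\cdot,x):x \in \cX\}$ is all of $\cH$; hence $\|1_\cX - \mathrm{proj}_{V_n}(1_\cX)\|_\cH$ can be made arbitrarily small by suitable choice of $(x_1,\ldots,x_n)$. Pythagoras then forces $\sup_{(x_1,\ldots,x_n)} 1/\kappa(\bK) \geq \|1_\cX\|_\cH^2$, so $\|1_\cX\|_\cH^2 = 1/\kappa_k$.

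\textbf{Reverse direction.} Suppose $\kappa_k > 0$. The hypotheses on $\cX$ (locally compact metric, $\sigma$-compact open sets) make $\cX$ itself $\sigma$-compact and hence separable; fix a countable dense sequence $(x_j)_{j\in\N} \subseteq \cX$. For each $n$, $\kappa(\bK_n) > 0$ forces $1_n \in \mathrm{range}(\bK_n)$, so $f_n \vcentcolon= \sum_{i=1}^{n} (\bK_n^+ 1_n)_i k(\cdot,x_i) \in \cH$ is well defined, interpolates $f_n(x_j) = 1$ for $j \leq n$, and satisfies $\|f_n\|_\cH^2 = 1_n^\T \bK_n^+ 1_n = 1/\kappa(\bK_n) \leq 1/\kappa_k$. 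Since $f_n$ is the unique minimum-$\cH$-norm interpolant, it lies in $V_n = W_n^\perp$ where $W_n \vcentcolon= \{h \in \cH : h(x_i) = 0,\ i \leq n\}$; for $m > n$, $f_m - f_n \in W_n$, so Pythagoras gives $\|f_m - f_n\|_\cH^2 = \|f_m\|_\cH^2 - \|f_n\|_\cH^2$. The sequence $(\|f_n\|_\cH^2)$ is non-decreasing and bounded, hence Cauchy, so $(f_n)$ is Cauchy in $\cH$ with a limit $g \in \cH$. Continuity of $\cH$-evaluation gives $g(x_j) = 1$ for every $j$, and continuity of $g$ (since $g \in \cH$) together with density of $(x_j)$ forces $g \equiv 1_\cX$, so $1_\cX \in \cH$.

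\textbf{Main obstacle.} The most delicate point is the Cauchy argument in the reverse direction, which crucially exploits the orthogonal-decomposition geometry of the nested closed affine subspaces $A_n = f_n + W_n$ in $\cH$. The linear-algebra identification in the first step also requires care in reconciling the limiting and pseudo-inverse formulations of $\kappa(\bA)$ across the degenerate and non-degenerate regimes, and in verifying that the minimum-norm interpolant representation $\sum_i (\bK_n^+ 1_n)_i k(\cdot,x_i)$ genuinely yields the unique minimum-$\cH$-norm interpolant even when $\bK_n$ is singular.
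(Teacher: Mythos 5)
Your proposal is correct, but it takes a genuinely different route from the paper's. The paper first proves the identity $\kappa(\bA) = \sup\{s \geq 0 : \bA - s 1_n 1_n^\T \succeq 0\}$ (handling singular $\bA$ via a Weyl-inequality continuity argument), observes that $k(\cdot,\cdot) - s$ is a kernel precisely when these matrix inequalities hold for every finite configuration, i.e.\ when $s \leq \kappa_k$, and then invokes Lemma~\ref{lem:paulsen} \citep[Theorem~3.11]{paulsen2016introduction} to convert the largest admissible shift $s$ into $1/\|1_\cX\|_\cH^2$. You instead give a self-contained Hilbert-space argument: the identification $1/\kappa(\bK) = \sup_{\alpha}(1_n^\T\alpha)^2/(\alpha^\T\bK\alpha) = \|\mathrm{proj}_{V_n}(1_\cX)\|_\cH^2$, Cauchy--Schwarz together with density of finite kernel spans for the forward direction and the norm identity, and, for the reverse direction, minimum-norm interpolants on a countable dense set with the monotone-norm/Pythagoras argument showing $(f_n)$ is Cauchy, followed by continuity to upgrade $g=1$ on the dense set to $g = 1_\cX$ --- in effect you reprove the special case of Paulsen's theorem that the paper imports. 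The trade-off: the paper's route is shorter and purely algebraic, needing neither separability of $\cX$ nor continuity of members of $\cH$, whereas yours avoids the external citation and produces the limit function explicitly, but it does lean on the paper's standing topological assumptions ($\sigma$-compactness, hence separability, of $\cX$ and continuity of every $f \in \cH$). Two minor points to tidy: your quotient identity, as stated, silently excludes the fully degenerate case $\bK = 0$, where the supremum runs over an empty set; this never arises where you use it, since $1_\cX \in \cH$ forces $k(x,x) \geq 1/\|1_\cX\|_\cH^2 > 0$ and in the reverse direction $\kappa_k > 0$ excludes it, but it deserves a sentence. Likewise, the equivalence between the limiting and pseudo-inverse expressions for $\kappa$, which you dispatch as ``a short spectral-decomposition argument'', is exactly the first step the paper writes out in full and should be included if you use the case formula for $\kappa(\bK)$.
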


We now compute $\|1_\cX\|_\cH$ for the kernels introduced in
Examples~\ref{ex:gaussian}, \ref{ex:polynomial} and~\ref{ex:sobolev}.

\begin{lemma}%
  \label{lem:constant_norm_gaussian}
  Take an unbounded set $\cX \subseteq \R^d$
  and let $\mathcal{H}$ denote the RKHS of the
  shifted Gaussian kernel
  $k^\Gauss_{\Sigma,a}$ from Example~\ref{ex:gaussian}.
  Then $1_\cX \in \cH$ if and only if $a > 0$;
  in that case, $\|1_\cX\|_\cH^2 = 1/a$.
\end{lemma}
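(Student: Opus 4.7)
The plan is to apply Proposition~\ref{prop:constant_norm} after computing $\kappa_k$ for the Gram matrix of $k^\Gauss_{\Sigma,a}$. For $n \in \N$ and $x_1, \ldots, x_n \in \cX$, write $\bK^{(0)} \vcentcolon= \bK(x_1, \ldots, x_n)$ for the Gram matrix of the unshifted Gaussian kernel $k^\Gauss_{\Sigma,0}$, so that the Gram matrix of $k^\Gauss_{\Sigma,a}$ at these points equals the rank-one update $\bK^{(0)} + a\, 1_n 1_n^\T$. My strategy has two steps: first, to establish the clean identity $\kappa(\bK^{(0)} + a\, 1_n 1_n^\T) = a + \kappa(\bK^{(0)})$, and then to show that $\inf_{n,\, x_i \in \cX} \kappa(\bK^{(0)}) = 0$ by exploiting the unboundedness of $\cX$.

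For the first step, I would apply the Sherman--Morrison formula to the invertible perturbation $\bK^{(0)} + \delta \bI_n + a\, 1_n 1_n^\T$ with $\delta > 0$. Setting $u_\delta \vcentcolon= 1_n^\T (\bK^{(0)} + \delta \bI_n)^{-1} 1_n > 0$, Sherman--Morrison yields
\[
  1_n^\T \bigl(\bK^{(0)} + \delta \bI_n + a\, 1_n 1_n^\T\bigr)^{-1} 1_n
  = \frac{u_\delta}{1 + a u_\delta},
\]
so the reciprocal equals $a + 1/u_\delta$. Passing to the limit $\delta \searrow 0$ in the definition of $\kappa$ then produces $\kappa(\bK^{(0)} + a\, 1_n 1_n^\T) = a + \kappa(\bK^{(0)})$, and taking the infimum over $(n, x_1, \ldots, x_n)$ gives $\kappa_k = a + \kappa_k^{(0)}$, where $\kappa_k^{(0)}$ is the analogue for the unshifted Gaussian kernel.

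For the second step, unboundedness of $\cX$ allows me to inductively select, for any $n \in \N$ and $D > 0$, points $x_1, \ldots, x_n \in \cX$ with $\|x_k\|_2 \geq \|x_{k-1}\|_2 + D$, so that $\min_{i \neq j} \|x_i - x_j\|_2 \geq D$. These points are distinct, so $\bK^{(0)} \succ 0$ by strict positive definiteness of the Gaussian kernel, and hence $\kappa(\bK^{(0)}) = \min\{v^\T \bK^{(0)} v : 1_n^\T v = 1\}$ by standard Lagrange-multiplier duality. Evaluating at the trial vector $v = 1_n/n$ yields
\[
  \kappa(\bK^{(0)}) \leq \frac{1}{n^2} \sum_{i, j \in [n]} e^{-(x_i - x_j)^\T \Sigma^{-1}(x_i - x_j)}
  \leq \frac{1}{n} + e^{-D^2 / \lambda_{\max}(\Sigma)},
\]
which can be driven below any $\eta > 0$ by choosing $n \geq 2/\eta$ and then $D$ large. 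Hence $\kappa_k^{(0)} = 0$, so $\kappa_k = a$, and Proposition~\ref{prop:constant_norm} delivers both the criterion $1_\cX \in \cH \iff a > 0$ and the formula $\|1_\cX\|_\cH^2 = 1/a$.

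I expect no serious obstacles: the Sherman--Morrison step cleanly decouples the shift parameter $a$ from the kernel geometry, and the trial-vector bound in the second step is elementary. The only technical point is continuity of the reciprocal as $\delta \searrow 0$, which is immediate because $u_\delta$ is monotone in $\delta$ and takes values in $(0, \infty]$; unboundedness of $\cX$ is used precisely to accommodate unlimited separation between the selected points.
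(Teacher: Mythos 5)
Your argument is correct, and it follows a genuinely different route from the paper's. The paper handles the two cases of $a$ separately: for $a = 0$ it selects widely separated points and uses the operator-norm inversion bound $\|\bI_n - \bK_n^{-1}\|_{\op} \leq \|\bI_n - \bK_n\|_{\op}/(1 - \|\bI_n - \bK_n\|_{\op})$ to show $1_n^\T \bK_n^{-1} 1_n \geq n - 1$ and hence $\kappa_{k^\Gauss_{\Sigma,0}} = 0$, and then for $a > 0$ it appeals to Lemma~\ref{lem:paulsen} directly. You instead prove the clean reduction $\kappa(\bK^{(0)} + a\,1_n 1_n^\T) = a + \kappa(\bK^{(0)})$ via Sherman--Morrison, which unifies the two cases and feeds both conclusions through Proposition~\ref{prop:constant_norm} in one stroke, and then bound $\kappa(\bK^{(0)})$ from above by the constrained-quadratic characterisation $\kappa(\bK^{(0)}) = \min\{v^\T \bK^{(0)} v : 1_n^\T v = 1\}$ evaluated at $v = 1_n/n$. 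The trial-vector bound is arguably more elementary than the paper's operator-norm argument, as it avoids inverting a nearly-identity matrix; the Sherman--Morrison step buys you a modular treatment in which the shift parameter decouples cleanly from the kernel geometry. One small point that your proof invokes and the paper's does not: you use strict positive definiteness of the Gaussian kernel (so that $\bK^{(0)} \succ 0$ at distinct points), which is a standard fact but is not established in the paper, whereas the paper obtains invertibility of $\bK_n$ directly from $\|\bI_n - \bK_n\|_{\op} < 1$. Both routes are valid.
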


\begin{lemma}
  \label{lem:constant_norm_poly}
  Take a Borel subset $\cX$ of $\R^d$ with $0_d \in \cX$
  and let $\mathcal{H}$ denote the
  RKHS of the polynomial kernel $k^\poly_{p,a}$
  from Example~\ref{ex:polynomial}.
  Then $1_\cX \in \cH$ if and only if $a > 0$;
  in that case, $\|1_\cX\|_\cH^2 = 1/a^p$.
\end{lemma}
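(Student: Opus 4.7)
The plan exploits the hypothesis $0_d \in \cX$, which forces $k^\poly_{p,a}(\cdot, 0_d)$ to be a constant function on $\cX$. Indeed, $k^\poly_{p,a}(x, 0_d) = (x^\T 0_d + a)^p = a^p$ for every $x \in \cX$. With this observation in hand, both directions of the lemma reduce to applying the reproducing property at the single point $0_d$, bypassing the need to set up a feature-map representation for the whole kernel.

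For the case $a > 0$, I would use that $k(\cdot, 0_d) \in \cH$ by the definition of an RKHS, and rewrite the pointwise identity $k(\cdot, 0_d) \equiv a^p$ as $1_\cX = a^{-p} k(\cdot, 0_d) \in \cH$. The reproducing property $\|k(\cdot, y)\|_\cH^2 = k(y, y)$ evaluated at $y = 0_d$ then gives
\[
\|1_\cX\|_\cH^2 = a^{-2p} \|k(\cdot, 0_d)\|_\cH^2 = a^{-2p} \cdot k(0_d, 0_d) = a^{-p}.
\]

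For the case $a = 0$, the kernel satisfies $k(0_d, 0_d) = 0$. By the reproducing property and Cauchy--Schwarz, every $f \in \cH$ obeys
\[
|f(0_d)|^2 = \bigl|\langle f, k(\cdot, 0_d)\rangle_\cH\bigr|^2 \leq \|f\|_\cH^2 \cdot k(0_d, 0_d) = 0,
\]
so $f(0_d) = 0$. Since $1_\cX(0_d) = 1$, this rules out $1_\cX \in \cH$.

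There is essentially no technical obstacle: the whole argument rests on the single observation that $k(\cdot, 0_d)$ is constant when $0_d \in \cX$. One could alternatively route the $a > 0$ direction through Proposition~\ref{prop:constant_norm}, using the singleton configuration $x_1 = 0_d$ to obtain the $1 \times 1$ Gram matrix $[a^p]$ and hence $\kappa_k \leq a^p$; however, the matching lower bound $\kappa(\bK(x_1,\ldots,x_n)) \geq a^p$ for arbitrary configurations would then require a separate argument (e.g.\ via an explicit multinomial feature-map representation of the polynomial kernel and a minimum-norm computation on $\{w : \Phi^\T w = 1_n\}$). The direct reproducing-kernel approach sketched above is therefore noticeably shorter.
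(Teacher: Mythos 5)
Your proof is correct and takes a genuinely different, more elementary route than the paper's. The paper expands $k^\poly_{p,a}$ via the binomial theorem to show that $k^\poly_{p,a} - a^p$ is a kernel, then checks at $(0_d, 0_d)$ that no larger constant can be subtracted, and finally invokes Lemma~\ref{lem:paulsen} (Paulsen's characterisation of when $1_\cX \in \cH$ and its norm). You instead observe the single structural fact that $k^\poly_{p,a}(\cdot, 0_d)$ is identically $a^p$, so when $a > 0$ the constant function is a scalar multiple of a kernel section whose $\cH$-norm is available immediately via $\|k(\cdot, 0_d)\|_\cH^2 = k(0_d, 0_d)$, and when $a = 0$ the reproducing property combined with Cauchy--Schwarz forces every $f \in \cH$ to vanish at $0_d$, precluding $1_\cX \in \cH$. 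Your argument avoids both the binomial decomposition and Lemma~\ref{lem:paulsen}, relying only on the most basic RKHS algebra. The trade-off is that the paper's route is uniform across its trio of constant-norm lemmas (Lemmas~\ref{lem:constant_norm_gaussian}, \ref{lem:constant_norm_poly}, \ref{lem:constant_norm_sobolev}), all of which are cast as applications of Lemma~\ref{lem:paulsen}; your trick also disposes of the Sobolev case (since $k^{\Sob}_{j,a}(\cdot,0) \equiv a$), but it cannot reach the shifted Gaussian case, where no kernel section is constant and the paper must work harder. Your aside about Proposition~\ref{prop:constant_norm} is accurate: the singleton $\{0_d\}$ gives only the upper bound $\kappa_k \leq a^p$, and the matching lower bound would need the feature-map argument, which is precisely what Lemma~\ref{lem:paulsen} (via the binomial decomposition) packages more cleanly.
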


\begin{lemma}%
  \label{lem:constant_norm_sobolev}
  Take $\cX \vcentcolon= [0, 1]$
  and let $\cH_1$, $\cH_2$ denote the RKHSs associated respectively with
  the first- and second-order shifted Sobolev kernels
  $k^\Sob_{1,a}$ and $k^\Sob_{2,a}$ from Example~\ref{ex:sobolev}.
  Then $1_\cX \in \cH_1$ if and only if $a > 0$;
  in that case, $\|1_\cX\|_{\cH_1}^2 = 1/a$.
  Likewise, $1_\cX \in \cH_2$ if and only if $a > 0$,
  and again in that case, $\|1_\cX\|_{\cH_2}^2 = 1/a$.
\end{lemma}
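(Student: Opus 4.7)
My plan is to invoke Proposition~\ref{prop:constant_norm} for both kernels in parallel, exploiting the common decomposition $k^\Sob_{j,a} = a + m$, where $m$ is the unshifted PSD kernel satisfying $m(0,0) = 0$: explicitly, $m(x,y) = x \wedge y$ in the first-order case, and $m(x,y) = \int_0^{x \wedge y}(x-z)(y-z)\diff z$ in the second-order case. When $a = 0$, taking $n = 1$ and $x_1 = 0$ gives $\bK = (0) \in \R^{1 \times 1}$; the vector $v = 1$ then satisfies $\bK v = 0$ and $1_n^\T v = 1$, so $\kappa(\bK) = 0$, whence $\kappa_k = 0$ and Proposition~\ref{prop:constant_norm} yields $1_\cX \notin \cH$.

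For the case $a > 0$, given any $n \in \N$ and $x_1, \ldots, x_n \in [0,1]$, I would write $\bK = aJ + \boldsymbol{M}$, where $J \vcentcolon= 1_n 1_n^\T$ and $\boldsymbol{M} \in \R^{n \times n}$ has entries $\boldsymbol{M}_{ij} \vcentcolon= m(x_i, x_j)$. Since $\boldsymbol{M} \succeq 0$, the matrix $\boldsymbol{M} + \delta \bI_n$ is positive definite for every $\delta > 0$, and the Sherman--Morrison formula applied to $\bK + \delta \bI_n = (\boldsymbol{M} + \delta \bI_n) + (\sqrt{a}\,1_n)(\sqrt{a}\,1_n)^\T$ gives
\begin{equation*}
  1_n^\T (\bK + \delta \bI_n)^{-1} 1_n = \frac{\tilde s_\delta}{1 + a \tilde s_\delta},
  \qquad \text{where } \tilde s_\delta \vcentcolon= 1_n^\T (\boldsymbol{M} + \delta \bI_n)^{-1} 1_n \geq 0.
\end{equation*}
Rearranging yields $\bigl(1_n^\T (\bK + \delta \bI_n)^{-1} 1_n\bigr)^{-1} = 1/\tilde s_\delta + a \geq a$, and letting $\delta \searrow 0$ gives $\kappa(\bK) \geq a$.

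Taking the infimum over $n$ and $x_1, \ldots, x_n \in \cX$ therefore gives $\kappa_k \geq a$; for the matching upper bound, the choice $n = 1$, $x_1 = 0$ gives $\bK = (a)$ and $\kappa(\bK) = a$. Hence $\kappa_k = a$, and Proposition~\ref{prop:constant_norm} delivers $\|1_\cX\|_\cH^2 = 1/a$ for both kernels. The main technical point that needs care is the potential non-invertibility of $\boldsymbol{M}$ (which occurs whenever some $x_i = 0$), but this is handled transparently by the $\delta \bI_n$ regularisation already built into the definition of $\kappa$, so the Sherman--Morrison step goes through uniformly across all configurations of the points.
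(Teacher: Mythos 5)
Your proof is correct, but it takes a genuinely different route from the paper's. The paper's proof is essentially a one-liner using Lemma~\ref{lem:paulsen} directly: since the unshifted Sobolev kernels are kernels (so $k^\Sob_{j,a} - a$ is a kernel), and $k^\Sob_{j,a}(0,0) - a - \delta = -\delta < 0$ shows $k^\Sob_{j,a} - a - \delta$ fails to be positive semi-definite for $\delta > 0$, Lemma~\ref{lem:paulsen} immediately gives $\|1_\cX\|_\cH^2 = 1/a$ when $a > 0$ and $1_\cX \notin \cH$ when $a = 0$. You instead work through the matrix characterisation in Proposition~\ref{prop:constant_norm}: for $a = 0$, the singleton $\bK = (0)$ forces $\kappa_k = 0$; for $a > 0$, you split $\bK = a\,1_n 1_n^\T + \boldsymbol{M}$ with $\boldsymbol{M} \succeq 0$ and use Sherman--Morrison on $\bK + \delta \bI_n$ to get $\bigl(1_n^\T(\bK + \delta \bI_n)^{-1}1_n\bigr)^{-1} = 1/\tilde s_\delta + a \geq a$ for every configuration, combined with the trivial upper bound $\kappa\bigl((a)\bigr) = a$. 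Both arguments rest on the same underlying fact (the unshifted Sobolev kernels are PSD and vanish at the origin), but yours makes the mechanism explicit in linear-algebraic terms and handles the non-invertibility of $\boldsymbol{M}$ through the $\delta$-regularisation built into $\kappa$. The paper's version is shorter because it abstracts that same computation into the general statement of Lemma~\ref{lem:paulsen}; the Sherman--Morrison identity you use is essentially a concrete instance of the $\kappa$-calculation performed once and for all in the proof of Proposition~\ref{prop:constant_norm}.
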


\subsection{Constructing the set \texorpdfstring{$\cA_n$}{An}}

Let $\bK \in \R^{n\times n}$ be given by
$\bK_{i j} \vcentcolon= k(X_i,X_j)$,
and define the block matrix
\begin{equation}
  \label{eq:K_tilde}
  \tilde{\bK}
  \vcentcolon=
  \begin{pmatrix}
    \|1_\cX\|_\cH^2 & 1_n^\T \\
    1_n & \bK
  \end{pmatrix}
  \in \R^{(n+1) \times (n+1)}.
\end{equation}
Let $\bR \in \R^{(n+1) \times (n+1)}$
be the reduced row echelon form of $\tilde{\bK}$,
obtained by applying Gaussian elimination
\citep[e.g.][Theorem~3.14]{friedberg2003linear}
to $\tilde{\bK}$,
and define
$\cA_n \vcentcolon= \bigl\{j \in [n]: \exists i \in [n+1]:
  \bigl(\bR_{i,1}, \ldots, \bR_{i,j+1}\bigr)
= (0, \ldots, 0, 1)\bigr\}$,
that is, the one-decremented indices of the columns in $\bR$
that contain a leading one.
Then by \citet[Theorem~3.16]{friedberg2003linear},
$\bigl\{ \bigl(
    \tilde{\bK}_{i+1, 1}, \ldots,
    \tilde{\bK}_{i+1, n+1}
\bigr) : i \in \{0\} \cup \cA_n \bigr\}$
is a basis for
$\mathrm{span} \bigl\{ \bigl(
    \tilde{\bK}_{i, 1}, \ldots,
    \tilde{\bK}_{i, n+1}
\bigr) : i \in [n+1] \bigr\}$,
and hence $\cA_n$ satisfies the desired properties.

\subsection{Computing the estimator \texorpdfstring{$\hat f_{n,\gamma}$}{f}}

Proposition~\ref{prop:representation} shows that for $\gamma > 0$,
the kernel estimator $\hat f_{n,\gamma}$ is the solution to
a strongly convex optimisation problem.  Here we give the
explicit form of this problem and discuss iterative methods for its solution.
We use notation from Section~\ref{sec:estimator} throughout
and, for $x, y \in \cX$, we define
$\hat k(x, y) \vcentcolon= k(x, y) - \bar k(x) - \bar k(y)
+ \bar k(x) \bar k(y) \|1_\cX\|_\cH^2$.
First, for $\beta \in \R^{\cA_n}$,
we have $P_n(f_\beta) = 0$, so
the objective function in \eqref{eq:argmin} can be written as
\begin{equation*}
  \ell_{n,\gamma}(f_\beta)
  = \frac{1}{n}
  \sum_{i=1}^n
  \log
  S_n(f_\beta, T_i)
  N_i(1)
  - \frac{1}{n} \sum_{i=1}^n
  f_\beta(X_i) N_i(1)
  + \gamma \sum_{i \in \cA_n} \sum_{j \in \cA_n}
  \hat k(X_i, X_j) \beta_i \beta_j.
\end{equation*}
For $j \in \cA_n$, define
$\partial_j \ell_{n,\gamma}(f_\beta)
\vcentcolon= \partial \ell_{n,\gamma}(f_\beta) / \partial \beta_j$.
The gradient of the objective function is
\begin{align*}
  \partial_j \ell_{n,\gamma}(f_\beta)
  &=
  \frac{1}{n}
  \sum_{i=1}^n
  \frac{
    D S_n(f_\beta, T_i)\bigl(\tilde k(\cdot, X_j)\bigr)
  }{S_n(f_\beta, T_i)}
  N_i(1)
  \iftoggle{journal}{\\ &\qquad}{}
  -
  \frac{1}{n} \sum_{i=1}^n
  \tilde k(X_i, X_j) N_i(1)
  + 2 \gamma \sum_{i \in \cA_n} \hat k(X_i, X_j) \beta_i.
\end{align*}

Since this optimisation problem is strongly convex by
Proposition~\ref{prop:representation}, a variety of methods can be
used for approximate computation of
the optimal solution $\hat\beta$.
One first-order approach is to run steepest gradient descent
with the Armijo backtracking step size rule;
this is guaranteed to converge as the number of iterations
diverges to infinity \citep[Section~12.6]{lange2013optimization}.
However, computational speed-ups can be achieved by incorporating
second-order quasi-Newton methods
such as the Broyden--Fletcher--Goldfarb--Shanno (BFGS) algorithm
\citep{broyden1970convergence,fletcher1970new,%
goldfarb1970family,shanno1970conditioning};
this is how we implemented our procedure.

\subsection{Cross-validation of
\texorpdfstring{$\gamma$}{gamma} and \texorpdfstring{$\theta$}{theta}}

To compute $\hat\gamma$ using \eqref{eq:gamma_hat}, we suggest
taking $\Gamma$ to be a finite set of logarithmically-spaced values
and applying a grid search procedure,
computing $\ell_n(\hat f_{n,\gamma})$ for each $\gamma \in \Gamma$.
Our implementation starts with $\gamma = \max \Gamma$, initialising
the optimisation algorithm at zero. The remaining values of $\gamma$
may be evaluated in decreasing order, initialising the algorithm
each time with a warm start at the previous optimal solution
$\hat\beta_{n,\gamma}$.

The computational burden of solving \eqref{eq:gamma_theta_check}
for the CARE estimator
is higher than that of~\eqref{eq:gamma_hat},
requiring a joint optimisation over $(\gamma, \theta) \in \Gamma \times \Theta$.
However, for each $\gamma \in \Gamma$,
the expensive step of computing $\hat f_{n,\gamma}$
only needs to be performed once.
In practice we therefore recommend taking
$\Theta$ to be an approximately evenly spaced set of points in $\Delta_\cM$
and using the following procedure.
For each $\gamma \in \Gamma$, compute $\hat f_{n,\gamma}$ as before,
then search over $\theta \in \Theta$ to find
$\check \theta(\gamma) \vcentcolon=
\sargmin_{\theta \in \Theta} \tilde \ell_n(\check f_{n,\gamma,\theta})$.
It remains to search over $\gamma \in \Gamma$
to find $\check \gamma \vcentcolon= \sargmin_{\gamma \in \Gamma}
\tilde \ell_n(\check f_{n,\gamma,\check\theta(\gamma)})$;
finally set $\check \theta \vcentcolon= \check \theta(\check \gamma)$.

\subsection{Feature map formulation}
\label{sec:feature_map}

Certain kernel functions, for example polynomial kernels,
admit finite-dimensional feature maps, and the aim of this subsection
is to show how this can be exploited for substantial computational
gains.  In fact, even when the feature map is infinite-dimensional,
the scheme we outline below can be used to compute an approximate
solution to~\eqref{eq:argmin}.  Suppose that there exist $q \in
\mathbb{N}$ and a
measurable function
$\phi = (\phi_1,\ldots,\phi_{q+1}): \cX \to \R^{q+1}$ with linearly
independent coordinate functions satisfying
$k(x, y) = \phi(x)^\T \phi(y)$ and $\phi_{q+1}(x) = c$ for all $x \in
\mathcal{X}$ and some $c \in \mathbb{R} \setminus \{0\}$.
Define $\tilde \phi(\cdot) \vcentcolon=
\bigl(\phi_j(\cdot) - P_n(\phi_j): j \in [q]\bigr)$ and,
for $\alpha \in \R^q$, let
$f_\alpha(\cdot) \vcentcolon= \alpha^\T \tilde\phi(\cdot).$
Noting that $P_n(\hat f_{n,\gamma}) = 0$,
the representer theorem (Proposition~\ref{prop:representation}) asserts
the existence of $\hat{\alpha}$ taking values in $\R^q$ with
\begin{align*}
  \hat f_{n,\gamma}(\cdot)
  &= \hat\alpha^\T \tilde\phi(\cdot)
  = f_{\hat\alpha}(\cdot).
\end{align*}
For $j \in [q+1]$ and $x \in \cX$,
by the reproducing property, we have
\[
  \phi_j(x) = \langle \phi_j, k(\cdot, x) \rangle_\cH
  = \langle \phi_j, \phi(x)^\T \phi \rangle_\cH
  = \sum_{r=1}^{q+1} \phi(x)_r \langle \phi_j, \phi_r \rangle_\cH,
\]
so $\langle \phi_j, \phi_r \rangle_\cH = \mathbbm 1_{\{j = r\}}$
for each $j, r \in [q+1]$ by linear independence.
Thus, for every $\alpha = (\alpha_1,\ldots,\alpha_q)^\T \in \mathbb{R}^q$,
\begin{align*}
  \|f_{\alpha}\|_\cH^2
  &=
  \sum_{j=1}^q
  \sum_{r=1}^q
  \alpha_j \alpha_r
  \langle \tilde \phi_j, \tilde \phi_r \rangle_\cH
  =
  \sum_{j=1}^q
  \sum_{r=1}^q
  \alpha_j \alpha_r
  \bigg\langle
  \phi_j - P_n(\phi_j) \frac{\phi_{q+1}}{c},
  \phi_r - P_n(\phi_r) \frac{\phi_{q+1}}{c}
  \bigg\rangle_\cH \\
  &= \sum_{j=1}^{q} \alpha_j^2
  + \biggl(\frac{1}{c} \sum_{j=1}^q \alpha_j P_n(\phi_j)\biggr)^2.
\end{align*}
This allows us to calculate
the penalised negative partial log-likelihood~\eqref{eq:likelihood_penalised}
for any function of the form $f_\alpha$, and hence to
solve the optimisation problem in~\eqref{eq:argmin}
over $\alpha \in \mathbb{R}^q$.
When the sample size is much
larger than the dimension of the covariates, this problem can be significantly
easier to solve than the more general version obtained through
Proposition~\ref{prop:representation}.

\subsection{Choice of kernel}
\label{sec:choice_kernel}

The choice of kernel in applications depends on the
nature of the covariates under consideration.
For continuous features, the most flexible kernel
functions are those that induce large, fully nonparametric
Hilbert spaces. For example, the RKHS generated by
the first-order Sobolev kernel
(Example~\ref{ex:sobolev}) contains
all absolutely continuous functions
$f: [0, 1] \to \R$ with weak derivative $f' \in L_2([0, 1])$
\citep[Example~6.16]{samworth2026modern}.
Somewhat smaller is the RKHS arising from the
second-order Sobolev kernel, which only includes functions
possessing absolutely continuous first derivative
$f'$ and weak second derivative $f'' \in L_2([0, 1])$
\citep[Example~6.17]{samworth2026modern}.
The RKHS generated by a
Gaussian kernel on $\cX \subseteq \R^d$
(Example~\ref{ex:gaussian}) is dense in the space of
continuous functions on $\cX$ equipped with the topology
induced by uniform convergence on compact sets
\cite[Theorem~17]{micchelli2006universal}.
For such nonparametric kernels, we typically have
$H_\gamma \to \infty$ as $\gamma \searrow 0$
(Lemma~\ref{lem:sobolev_Hgamma}); appropriate regularisation
is therefore crucial to prevent overfitting (see Theorem~\ref{thm:rate}).

At the other end of the spectrum, polynomial kernels
with fixed degree induce finite-dimensional Hilbert spaces and
$H_\gamma$ is bounded as $\gamma \searrow 0$
(Lemma~\ref{lem:polynomial_Hgamma}).
While this property limits the approximation power of the resulting
kernel estimators, they are also less prone to overfitting.
Further, the existence of a finite-dimensional feature map
permits significant computational improvements in low-dimensional
settings; see Section~\ref{sec:feature_map}.

When working with a diverse collection of covariates,
we construct multivariate kernels
by combining lower-dimensional kernel functions as follows.
If $k_j$ is a kernel on $\cX_j$ for
$j \in [d]$, then with $\cX \vcentcolon= \prod_{j=1}^d \cX_j$,
the function $k_{\mathrm{sum}}: \cX \times \cX \to \R$ defined by
\begin{equation*}
  k_{\mathrm{sum}}(x, y)
  \vcentcolon=
  \sum_{j=1}^d k_j(x_j, y_j)
\end{equation*}
is a kernel \citep[Proposition~6.28]{samworth2026modern}.
This formulation is especially appropriate
for an additive model structure of the form
$f_0(x) = \sum_{j=1}^d f_j(x_j)$.

\section{Empirical studies}
\label{sec:empirical}

We investigate the empirical performance of our procedures
through experiments with simulated and real data sets.
Our methodology is implemented in the
Python package \texttt{care-survival}
\citep{underwood2025caresurvival}.

\subsection{Simulated data study}
\label{sec:simulated}

To simulate data from the model specified in \eqref{eq:hazard},
we sample $X \sim P_X$ and construct
$T_{\mathrm S}\vcentcolon= \Lambda^{-1}\bigl(- e^{-f_0(X)} \log U\bigr)$
where $U \sim \cU[0, 1]$ and
$\Lambda(t) \vcentcolon= \int_0^t \lambda_0(s) \diffi s$.
The censoring time $T_{\mathrm C}$ is then sampled from its
conditional distribution given $X$.
Finally the observed event time is
$T \vcentcolon= T_{\mathrm S} \land T_{\mathrm C}$,
and the censoring indicator is
$I \vcentcolon= \mathbbm{1}_{\{T_{\mathrm C} < T_{\mathrm S}\}}$.
Since estimation of the baseline hazard is not the focus of this paper,
we restrict to relatively simple forms for~$\lambda_0$, for which
the function $\Lambda^{-1}$ can be computed in closed form.
We define the Breslow estimator
(\citealp{breslow1972contribution}; see also \citealp{huang2006note})
of the survival function $t \mapsto \P(T_\mathrm{S} > t)$ by
\begin{equation}
  \label{eq:breslow}
  \hat P(t)
  \vcentcolon=
  \exp \biggl(
    - \int_0^t \frac{n}{\sum_{i=1}^n R_i(s)}
    \diffi N(s)
  \biggr).
\end{equation}

\subsubsection{Univariate regressors}

Let $P_X = \cU[0, 1]$ and
$\lambda_0(t) = 6$ for $t \in [0, 1]$,
so that $\Lambda(t) = 6 t$ and $\Lambda^{-1}(t) = t / 6$.
We take
\begin{equation*}
  f_0(x) \vcentcolon= 2 \sin(2 x) - 2 \sin^2(1)
\end{equation*}
for $x \in [0,1]$, so that $P_X(f_0) = 0$, and
we sample $T_C \sim \cU[1/5, 2] \land 1$ independent of $X$.
Figure~\ref{fig:data} illustrates a realisation of a sample of size
$n = 200$ from this distribution, along with
the Breslow estimator \eqref{eq:breslow}.
We use the shifted first-order Sobolev
kernel $k_{1,a}^\Sob$ defined in Example~\ref{ex:sobolev}
with $a = 1$.  By the remark in Section~\ref{sec:choice_kernel}, we
have $f_0 \in \cH$.
Since $f_0$ is not a polynomial, using
a shifted polynomial kernel as in Example~\ref{ex:polynomial}
would result in misspecification bias.
We take $\Gamma$ to consist of $50$ geometrically increasing values from
$10^{-5}$ to $10$.

\begin{figure}[H]
  \centering
  \subfloat[Survival time against covariate value.]{
    \includegraphics[scale=\myfigscale]{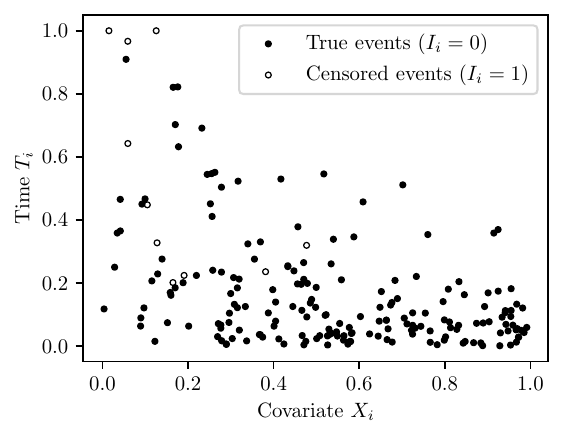}
  }
  \subfloat[Breslow survival estimator.]{
    \includegraphics[scale=\myfigscale]{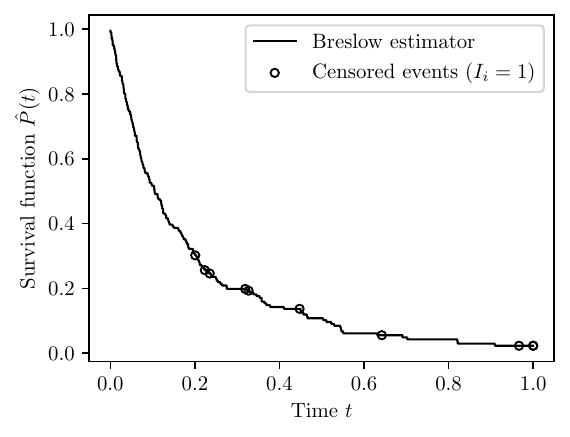}
  }
  \caption{In panel~(a), we plot a sample of size $n = 200$
    from the specified data distribution.
    In panel~(b), we display the associated Breslow estimator
  of the survival function.}
  \label{fig:data}
\end{figure}

Figure~\ref{fig:validation}(a) illustrates
our proposed cross-validation scheme for choosing the
regularisation parameter $\gamma$ based on a sample
of size $n = 200$. The negative partial
log-likelihood $\ell_n(\hat{f}_{n,\gamma})$ on the training data is
an increasing function of
$\gamma$, reflecting overfitting for small values of~$\gamma$. However,
the corresponding negative partial log-likelihood
$\tilde{\ell}_n(\hat{f}_{n,\gamma})$ on the validation
data attains a global minimum at $\gamma = \hat\gamma$.
In Figure~\ref{fig:validation}(b),
the resulting cross-validated estimator
$\hat f_{n,\hat\gamma}$ is seen to approximate
the true relative risk function $f_0$ reasonably well
when $n = 200$.

\begin{figure}[t]
  \centering
  \subfloat[Partial likelihood cross-validation curves.]{
    \includegraphics[scale=\myfigscale]{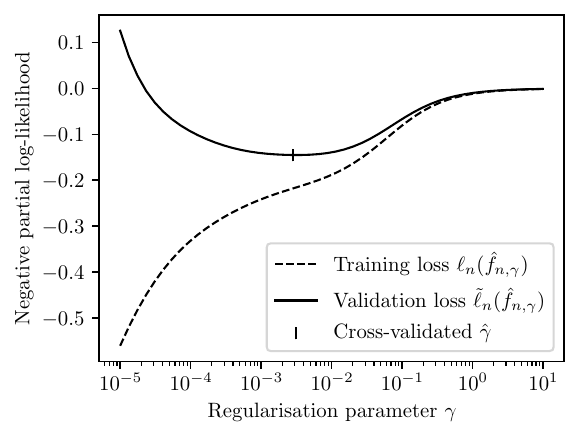}
  }
  \subfloat[Fitted estimator and true risk function.]{
    \includegraphics[scale=\myfigscale]{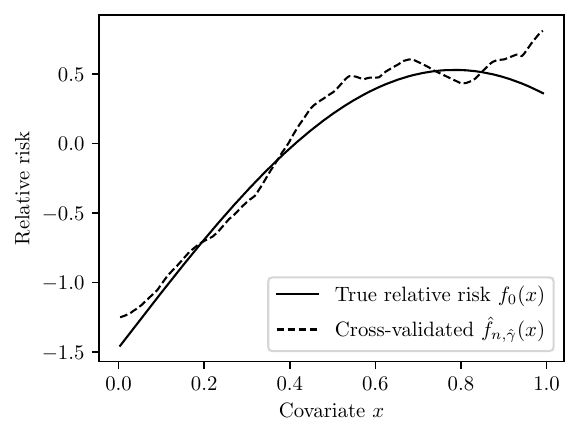}
  }
  \caption{In panel~(a), we show how
    cross-validation is conducted by minimising the negative partial
    log-likelihood on an independent sample of size $n=200$.
    In panel~(b), we illustrate how
  the resulting estimator approximates the true relative risk function.}
  \label{fig:validation}
\end{figure}

Next, we investigate the behaviour of the cross-validated RKHS estimator
as we vary the sample size $n$.
In all of the subsequent figures, results are averaged over
$\nrep = 200$ repetitions.
More precisely, for each value presented,
the mean $\hat\mu$ and standard deviation~$\hat\sigma$ are
calculated across $\nrep$ repetitions;
we then plot the point estimates~$\hat\mu$ in black,
and display $[\hat\mu \pm 2\hat\sigma / \sqrt{\nrep}]$ as grey shaded regions.

In Figure~\ref{fig:rkhs_dgp_1}(a),
we illustrate the dependence of the cross-validated and oracle
regularisation parameters on the sample size.
The oracle choice $\gamma^\star$
is calculated by directly minimising
$\gamma \mapsto \|\hat f_{n,\gamma} - f_0\|_{L_2}$,
which we approximate by Monte Carlo integration using an independent
sample of $500$ data points.
As the sample size increases,~$\gamma^\star$ decreases, reflecting the
bias--variance trade-off suggested by Theorem~\ref{thm:rate}.
Further, the cross-validated choice $\hat\gamma$ provides a good approximation
to $\gamma^\star$, even for small sample sizes.
In Figure~\ref{fig:rkhs_dgp_1}(b), we see that the
$L_2$-error $\|\hat f_{n,\hat{\gamma}} - f_0\|_{L_2}$ of the
cross-validated estimator
decreases as the sample size increases, and
it is competitive with the oracle estimator
(which requires prior knowledge of $f_0$).

\begin{figure}[H]
  \centering
  \subfloat[Cross-validated and oracle choices of $\gamma$.]{
    \includegraphics[scale=\myfigscale]{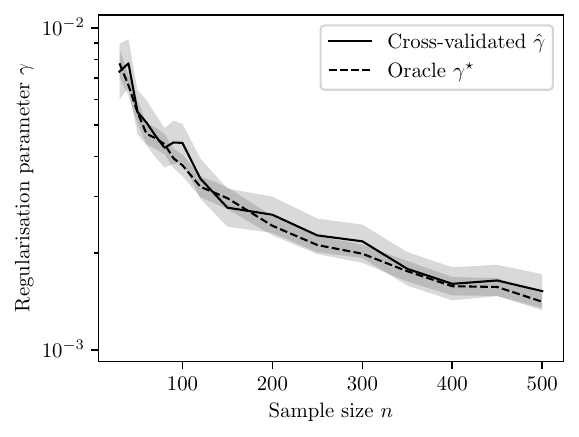}
  }
  \subfloat[Cross-validated and oracle $L_2$-error.]{
    \includegraphics[scale=\myfigscale]{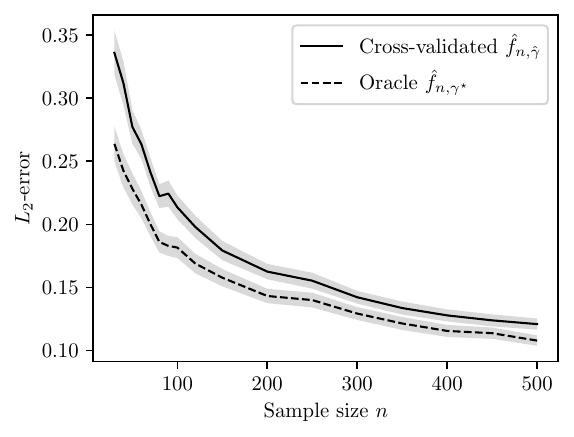}
  }
  \caption{In panel~(a), we plot the cross-validated and oracle
    choices of regularisation parameter, as a function of the sample
    size. In panel~(b), we plot the corresponding $L_2$-errors of
  the fitted prediction functions.}
  \label{fig:rkhs_dgp_1}
\end{figure}

Next, we assess the performance of the aggregated CARE estimator
$\check f_{n,\check\gamma,\check\theta}$.
We define a fixed `external' estimator $\tilde f$ by
perturbing the function $f_0$, setting
\begin{equation*}
  \tilde f(x) \vcentcolon=
  2 \sin\biggl(\frac{3x}{2}\biggr)
  - \frac{8}{3} \sin^2 \biggl(\frac{3}{4}\biggr).
\end{equation*}
The convex combination parameter $\theta$ is selected from the set
$\Theta \vcentcolon= \{(k/20, 1-k/20): k \in \{0\} \cup [20]\}$.
In Figure~\ref{fig:aggregation_dgp_1}(a) we see that
the oracle choice $\theta^\dagger$,
where $(\gamma^\dagger,\theta^\dagger)$ minimises
$(\gamma, \theta) \mapsto \|\check f_{n,\gamma,\theta} - f_0\|_{L_2}$,
decreases as the sample size increases. For small sample
sizes,~$\tilde f$ provides valuable
information due to the large variance of the kernel
estimator,
and hence~$\theta^\dagger$ is close to one.
However, with more new data, the kernel
estimators $\hat f_{n,\gamma}$ approximate $f_0$ well
and the external model $\tilde f$ becomes less informative,
yielding a value of $\theta^\dagger$ that is closer to zero.
The cross-validation scheme produces estimates $\check\theta$
that are reasonably close to the oracle choices $\theta^\dagger$
for moderate and large sample sizes.
From Figure~\ref{fig:aggregation_dgp_1}(b) we also observe that
CARE consistently outperforms
the cross-validated kernel estimator that ignores the external
estimator, although in this smooth, one-dimensional problem the effect is
smaller at larger sample sizes.
When the sample size is sufficiently large,
the CARE estimator is also seen to be superior
to the external estimator $\tilde f$.

\begin{figure}[H]
  \centering
  \subfloat[Cross-validation and oracle choices of $\theta$.]{
    \includegraphics[scale=\myfigscale]{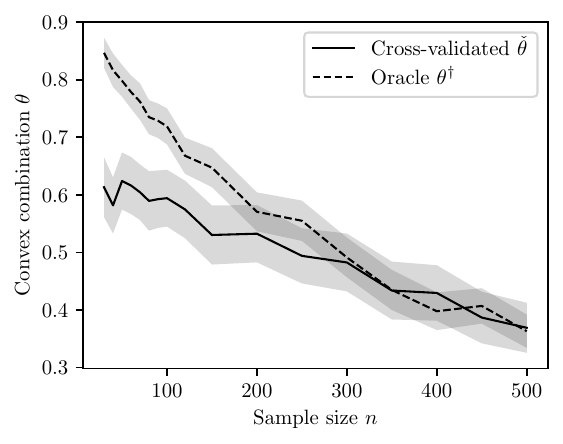}
  }
  \subfloat[Cross-validated and oracle performance.]{
    \includegraphics[scale=\myfigscale]{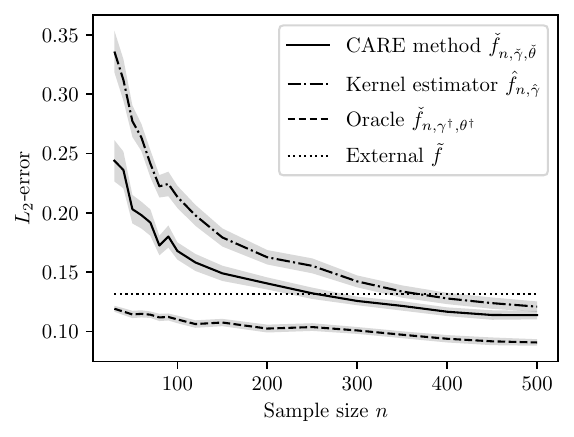}
  }
  \caption{In panel~(a), we plot the cross-validated and oracle
    choices of the convex combination parameter as a function of the
    sample size. In panel~(b), we compare the corresponding
  CARE, kernel, oracle and external estimators.}
  \label{fig:aggregation_dgp_1}
\end{figure}

\subsubsection{Multivariate regressors}

We now turn to a setting with multiple covariates. With $d = 10$,
let $P_X = \cU[0, 1]^d$ and
$\lambda_0(t) = 6$ for $t \in [0, 1]$,
so that $\Lambda(t) = 6 t$ and $\Lambda^{-1}(t) = t / 6$.
We consider the sparsely supported
relative risk function $f_0$ given by
$f_0(x) = \sum_{j=1}^5 \{2\sin(2 x_j) - 2 \sin^2(1)\}$
for $x = (x_1,\ldots,x_d)^\T \in [0,1]^d$, which depends only on
the first five components of $x$.
We draw $T_C \sim \cU[1/5, 2] \land 1$ independently of $X$, and employ
an additive shifted first-order Sobolev kernel $k$ defined by
$k(x, y) \vcentcolon= a + \sum_{j=1}^d (x_j \land y_j)$ with $a = 1$.

In Figure~\ref{fig:rkhs_dgp_2}(a),
we illustrate the dependence of the cross-validated and oracle
regularisation parameters on the sample size.
As before, $\gamma^\star$ decreases as the sample size increases.
The cross-validated choice $\hat\gamma$ provides
a reasonable approximation to $\gamma^\star$, although it
appears to exhibit some upward bias.
In Figure~\ref{fig:rkhs_dgp_2}(b), the
$L_2$-error of the cross-validated estimator
decreases as the sample size increases, and is competitive with the
oracle estimator.

\begin{figure}[H]
  \centering
  \subfloat[Cross-validated and oracle choices of $\gamma$.]{
    \includegraphics[scale=\myfigscale]{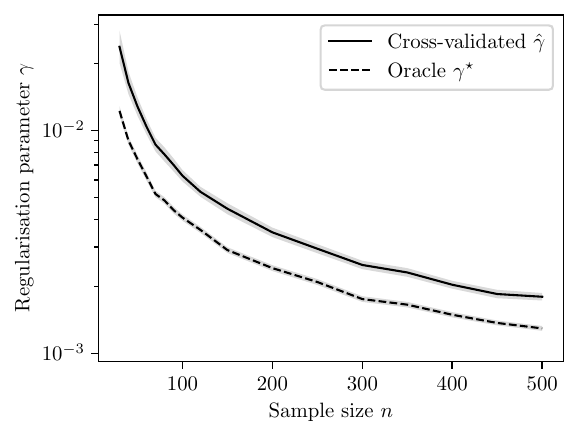}
  }
  \subfloat[Cross-validated and oracle performance.]{
    \includegraphics[scale=\myfigscale]{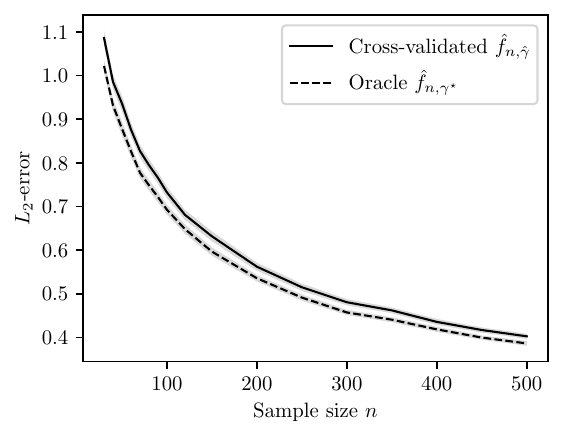}
  }
  \caption{In panel~(a), we plot the cross-validated and oracle
    choices of regularisation parameter, as a function of the sample
    size. In panel~(b), we plot the corresponding $L_2$-errors of
  the fitted prediction functions.}
  \label{fig:rkhs_dgp_2}
\end{figure}

Finally, we investigate the CARE estimator.
For the `external' estimator $\tilde f$, we retain the additive
structure, defining
\begin{equation*}
  \tilde f(x) \vcentcolon=
  \sum_{j=1}^4
  (\sin 2 - \cos 2 - 1)
  (6x_j - 3).
\end{equation*}
That is, for the first four entries of $x$, we
take the optimal linear approximation to $f_0$ under
the $L_2(P_X)$-norm.

In Figure~\ref{fig:aggregation_dgp_2}(a), we see that
again the oracle choice $\theta^\dagger$
decreases as the sample size increases.
For small sample sizes, the linear approximation
$\tilde f$ provides useful
information and hence~$\theta^\dagger$ is close to one.
With more data, the kernel
estimators $\hat f_{n,\gamma}$ approximate
the non-linear function~$f_0$ well
and reliance on the external model $\tilde f$ diminishes,
yielding $\theta^\dagger$ closer to zero.
The cross-validated estimates $\check\theta$
are reasonably close to the oracle choices $\theta^\dagger$
for moderate and large sample sizes.
In Figure~\ref{fig:aggregation_dgp_2}(b), we also observe that
CARE consistently outperforms the cross-validated kernel estimator that
ignores the external estimator.
For most sample sizes, the CARE estimator is also superior
to the external estimator $\tilde f$.

\begin{figure}[H]
  \centering
  \subfloat[Cross-validation and oracle choices of $\theta$.]{
    \includegraphics[scale=\myfigscale]{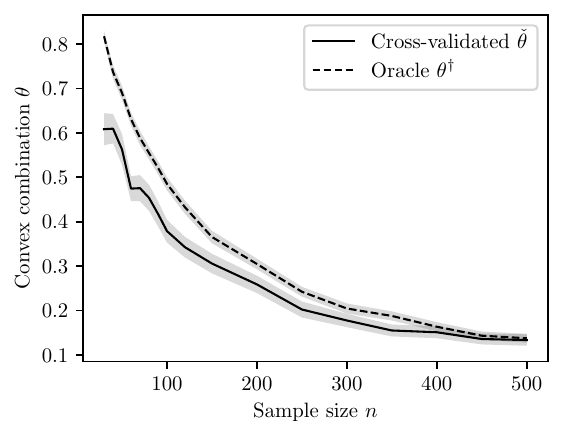}
  }
  \subfloat[Cross-validated and oracle performance.]{
    \includegraphics[scale=\myfigscale]{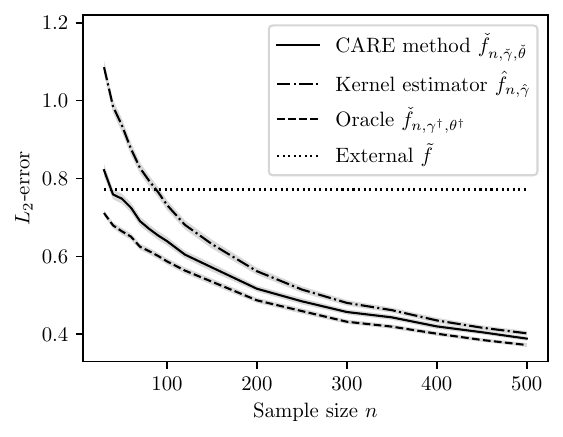}
  }
  \caption{In panel~(a) we plot the cross-validated and oracle
    choices of the convex combination parameter as a function of the
    sample size. In panel~(b), we compare the corresponding
  cross-validated, kernel, oracle and external estimators.}
  \label{fig:aggregation_dgp_2}
\end{figure}

\subsection{Cardiovascular disease risk prediction}

We consider the problem of estimating 10-year
fatal and non-fatal cardiovascular disease risk in England,
using cohort data taken from the UK Biobank \citep{littlejohns2019uk}.
As an external estimator, we employ a version of the SCORE2 model
\citep{esc2021score2}, trained on a data set of size%
\footnote{Large sample sizes are necessary for effective
  models since diagnoses of cardiovascular disease are relatively rare
  (less than 5\% of the population in any 10-year period, as indicated
by Figure~\ref{fig:data_score2}).}
$173{,}438$ from the Emerging Risk Factor Collaboration
\citep{emerging2007emerging}
that is independent of the UK Biobank data.
SCORE2 uses the following
covariates to predict the relative risk function:
\texttt{age} (at baseline survey),
\texttt{smoking} (binary-valued smoking status),
\texttt{sbp} (systolic blood pressure),
\texttt{tchol} (total cholesterol) and
\texttt{hdl} (high density lipoprotein cholesterol),
along with the two-way interactions of \texttt{age}
with \texttt{smoking}, \texttt{sbp}, \texttt{tchol} and \texttt{hdl}.
We propose to upgrade the SCORE2 model using CARE by including three
new covariates,
available in the UK Biobank data, which are known to be associated
with cardiovascular disease risk:
\texttt{imd} (multiple social deprivation index),
\texttt{pgs000018}
(polygenic score associated with coronary artery disease;
see \citealp{inouye2018genomic})
and \texttt{pgs000039}
(polygenic score associated with ischaemic stroke;
see \citealp{abraham2019genomic}).
Polygenic scores were calculated using data and software
(\texttt{pgsc\_calc}) from the Polygenic Score Catalog
\citep{lambert2021polygenic,lambert2024enhancing}.
Approximately $15\%$ of the individuals in the study had missing
covariates, and these people were removed from the data set prior to analysis.
Estimators for female and male participants were constructed separately,
as for the original SCORE2 model. We used a quadratic polynomial
kernel, allowing for non-linear effects while taking advantage
of the computationally efficient feature map representation;
see Section~\ref{sec:feature_map}.

\iftoggle{journal}{}{\pagebreak}

In Figure~\ref{fig:data_score2}, we plot Breslow estimators \eqref{eq:breslow}
of the survival functions, stratified by sex and \texttt{imd}, with
\texttt{imd} scores at most the median labelled `low'
and scores above the median labelled `high'.
We observe that higher \texttt{imd} values are associated with shorter
estimated event times \citep{conrad2024trends}.

\begin{figure}[H]
  \centering
  \subfloat[Breslow survival estimators for females.]{
    \includegraphics[scale=\myfigscale]{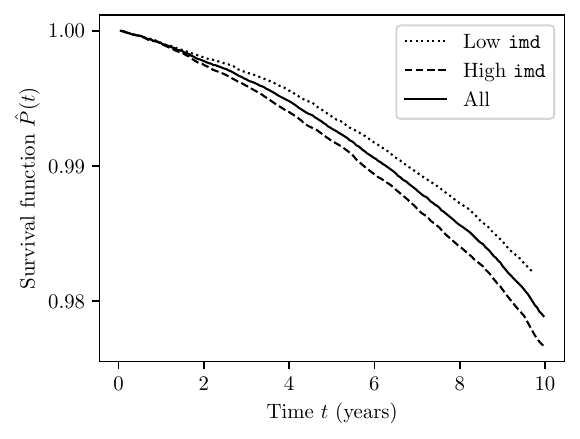}
  }
  \subfloat[Breslow survival estimators for males.]{
    \includegraphics[scale=\myfigscale]{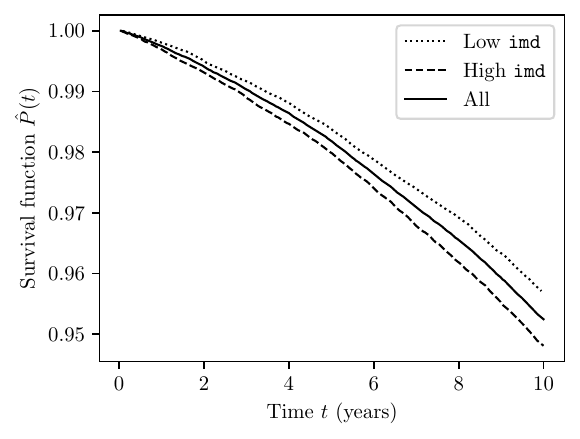}
  }
  \caption{
    We display the Breslow estimators of the survival function.
    Panel~(a) uses data from the female participants
  and panel~(b) is based on the male participants.}
  \label{fig:data_score2}
\end{figure}

\subsubsection{Model 1: SCORE2 with deprivation index}

Motivated by Figure~\ref{fig:data_score2},
we update the SCORE2 model by adding the new covariate \texttt{imd}.
To fit the cross-validated CARE estimator, we first set aside
$n_{\mathrm{test}} \vcentcolon= \lfloor N / 3 \rfloor$ data points as
a test set, where
$N$ is the total number of UK Biobank data points available
($N = 162{,}682$ for females and $N = 121{,}333$ for males).
For a range of values $n$ satisfying $1 \leq n \leq n_{\mathrm{test}}$,
we then draw $n$ training points and $n$ validation points
at random from the remaining data, without replacement.
The sampling of training and validation data is repeated
$20$ times with different randomisation seeds
and the results are averaged; uncertainty estimates
are calculated as for the simulated data in
Section~\ref{sec:simulated}. In
Figure~\ref{fig:selection_score2_model_1} we see that,
for both the female and male participants,
the cross-validated choice of the convex combination parameter
tends to decrease as $n$ increases.

\begin{figure}[t]
  \centering
  \subfloat[Cross-validated choices of $\theta$ for females.]{
    \includegraphics[scale=\myfigscale]{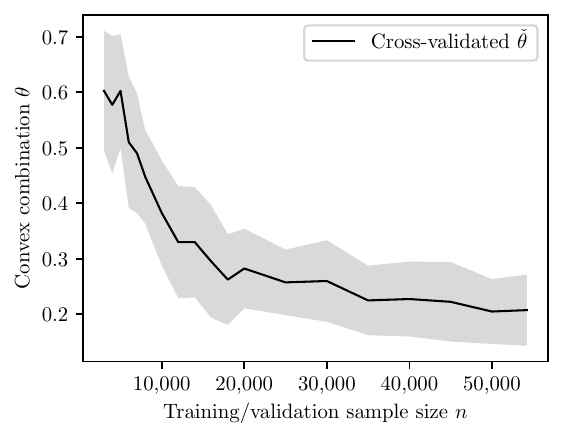}
  }
  \subfloat[Cross-validated choices of $\theta$ for males.]{
    \includegraphics[scale=\myfigscale]{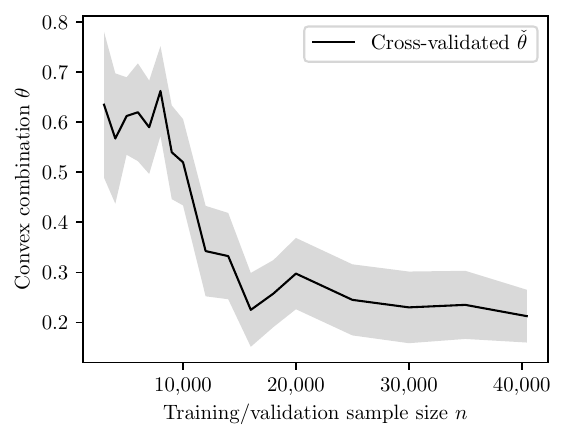}
  }
  \caption{We plot the cross-validated choices of the convex
    combination parameter as a function of the
  training and validation sample sizes.}
  \label{fig:selection_score2_model_1}
\end{figure}

Since the true relative risk function $f_0$ is unknown,
we cannot compute the $L_2$-error of the estimators; instead, we
measure performance of a relative risk estimate $\hat{f}$
using the concordance index \citep{harrell1982evaluating}, defined on
the test data $(X_i,T_i,I_i)_{i \in [n_{\mathrm{test}}]}$ as
\begin{align*}
  \mathrm{Concordance}(\hat{f})
  \vcentcolon=
  \frac{
    \sum_{i=1}^{n_{\mathrm{test}}}
    \sum_{j=1}^{n_{\mathrm{test}}}
    \mathbbm 1_{\{\hat{f}(X_i) < \hat{f}(X_j)\}}
    \mathbbm 1_{\{T_i > T_j\}}
    (1 - I_j)
  }{
    \sum_{i=1}^{n_{\mathrm{test}}}
    \sum_{j=1}^{n_{\mathrm{test}}}
    \mathbbm 1_{\{T_i > T_j\}}
    (1 - I_j)
  }.
\end{align*}

\iftoggle{journal}{}{\pagebreak}

Figure~\ref{fig:aggregation_score2_model_2} illustrates the
dependence of the concordance index on~$n$.
The CARE estimator $\check f_{n,\check\gamma,\check\theta}$
improves upon the external estimator from the SCORE2 model as soon as
$n \geq 4{,}000$, indicating that only a relatively small additional
sample size is needed for the benefits of aggregation to become
visible. Likewise, CARE tends to perform
better than the kernel estimator $\hat{f}_{n,\hat{\gamma}}$ (which
ignores the SCORE2 data), particularly for smaller values of~$n$.
With $n = 10{,}000$,
the concordance index improved by
$0.37\%$ for females and by $0.39\%$ for males,
when compared to the original SCORE2 model.
When using all of the available data ($n = n_{\mathrm{test}}$),
the concordance index improved by
$0.67\%$ for females and by $0.99\%$ for males.

\begin{figure}[H]
  \centering
  \subfloat[Cross-validated performance for females.]{
    \includegraphics[scale=\myfigscale]{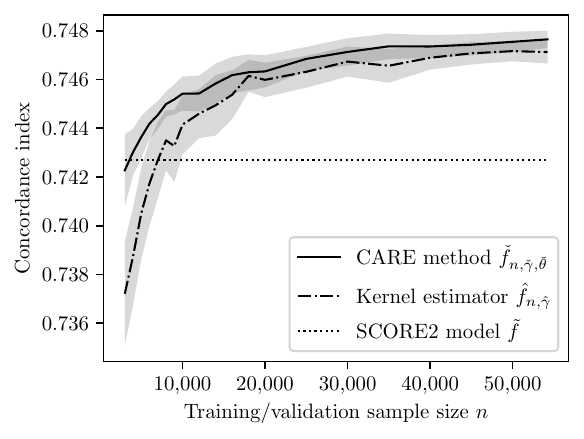}
  }
  \subfloat[Cross-validated performance for males.]{
    \includegraphics[scale=\myfigscale]{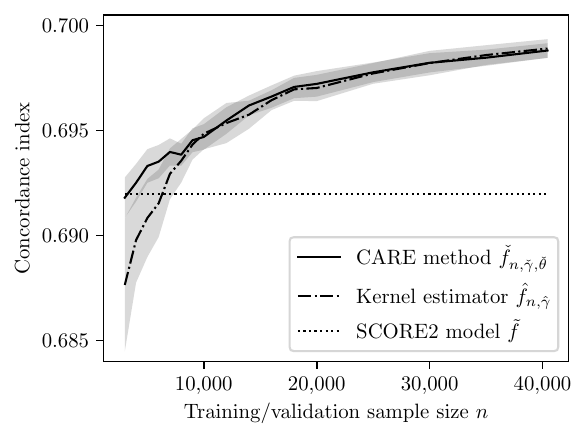}
  }
  \caption{We plot the concordance index against the training and validation
  sample size. The CARE, kernel and external SCORE2 estimators are shown.}
  \label{fig:aggregation_score2_model_2}
\end{figure}

\subsubsection{Model 2: SCORE2 with deprivation index and polygenic scores}

For our second model, we add both the \texttt{imd} feature and the
polygenic scores
\texttt{pgs000018} and \texttt{pgs000039} to SCORE2. From
Figure~\ref{fig:selection_score2_model_3}, we see a similar pattern
as in Figure~\ref{fig:selection_score2_model_1}, with the weight
assigned to the external estimator decreasing with $n$.
Figure~\ref{fig:aggregation_score2_model_3} indicates the further
benefits of incorporating the polygenic scores into the model, in
addition to the deprivation index, using our cross-validated estimator.
With $n = 10{,}000$,
the concordance index improved by
$0.85\%$ for females and by $2.05\%$ for males,
when compared to the original SCORE2 model.
When using all of the available data ($n = n_{\mathrm{test}}$),
the concordance index improved by
$1.21\%$ for females and by $2.74\%$ for males;
we conclude that the polygenic scores accounted for approximately
$44.8\%$ %
of the total predictive improvement due to
\texttt{imd}, \texttt{pgs000018} and \texttt{pgs000039}
for females and
$63.9\%$ %
of the improvement for males.
These models highlight an important robustness property of the
CARE methodology: when the existing SCORE2 model performs relatively
poorly in the presence of new covariates,
it is successfully ignored by our convex aggregation procedure
in favour of the superior kernel estimator.

\begin{figure}[H]
  \centering
  \subfloat[Cross-validated choices of $\theta$ for females.]{
    \includegraphics[scale=\myfigscale]{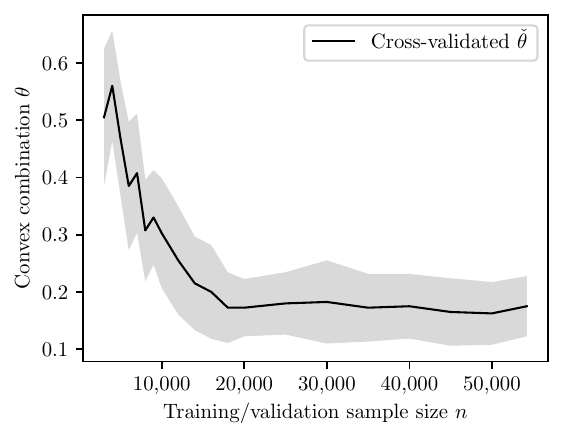}
  }
  \subfloat[Cross-validated choices of $\theta$ for males.]{
    \includegraphics[scale=\myfigscale]{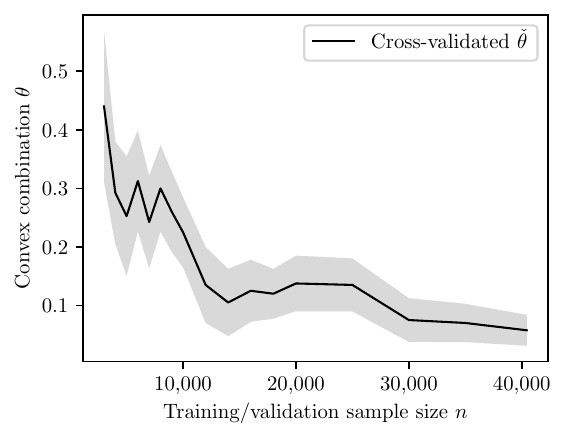}
  }
  \caption{We plot the cross-validated choices of the convex
    combination parameter as a function of the
  training and validation sample sizes.}
  \label{fig:selection_score2_model_3}
\end{figure}

\begin{figure}[H]
  \centering
  \subfloat[Cross-validated performance for females.]{
    \includegraphics[scale=\myfigscale]{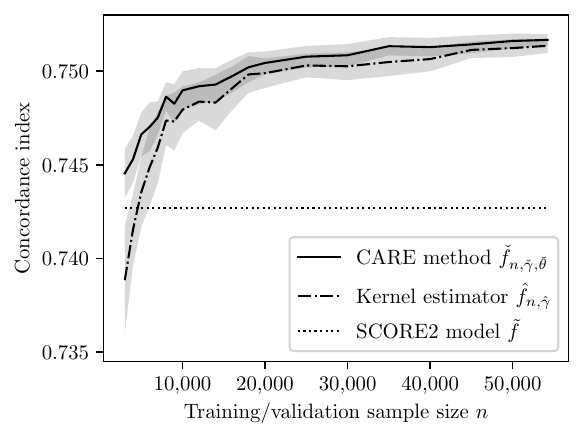}
  }
  \subfloat[Cross-validated performance for males.]{
    \includegraphics[scale=\myfigscale]{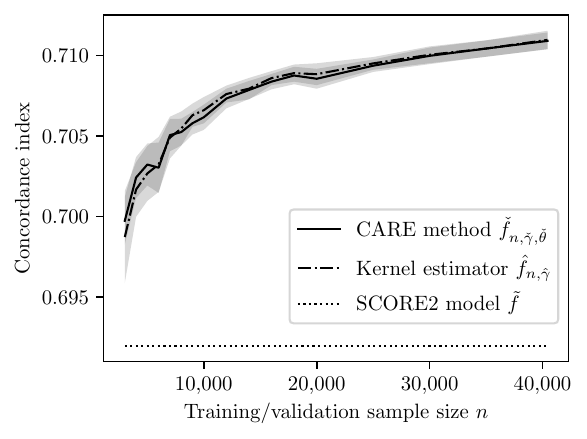}
  }
  \caption{We plot the concordance index against the training and validation
  sample size. The CARE, kernel and external SCORE2 estimators are shown.}
  \label{fig:aggregation_score2_model_3}
\end{figure}

\section{Acknowledgements and funding}

The first and last authors
were supported by the latter's European Research Council Advanced
Grant 101019498;
the first author was also funded
by the August 2025 G-Research Early Career Grant.
This research has been conducted using the
UK Biobank Resource under applications 7439 and 608471.
The work was supported by grants from the British Heart Foundation
(RG/18/13/33946 and RG/F/23/110103), British Heart Foundation Chair
Award (CH/12/2/29428) and by Health Data Research UK, which is funded
by the UK Medical Research Council, Engineering and Physical Sciences
Research Council, Economic and Social Research Council, Department of
Health and Social Care (England), Chief Scientist Office of the
Scottish Government Health and Social Care Directorates, Health and
Social Care Research and Development Division (Welsh Government),
Public Health Agency (Northern Ireland), British Heart Foundation and
the Wellcome Trust, National Institute for Health and Care Research
(NIHR) Cambridge Biomedical Research Centre (NIHR203312), BHF Centre
for Research Excellence (RE/18/1/34212 and RE/24/130011).
 
\pagebreak
\appendix

\section{Proofs and technical details}%

In this section, we present full proofs for all of our main results, along with
auxiliary lemmas and technical details.
Several of our proofs rely on the
decoupling and symmetrisation inequalities for
$U$-processes given in Appendix~\ref{sec:u_statistics},
which may be of independent interest.

\subsection{Proofs for Section~\ref{sec:set_up}}

We begin with some properties of the
partial likelihood functions introduced in Section~\ref{sec:data}.

\subsubsection{Gateaux derivatives of \texorpdfstring{$\ell_n$}{ln}
and \texorpdfstring{$\ell_\star$}{lstar}}

Firstly, Lemmas~\ref{lem:derivatives_Sn} and \ref{lem:derivatives_ln}
characterise the Gateaux derivatives of the negative
log-partial likelihood $\ell_n$.
Next, Lemmas~\ref{lem:derivatives_S} and~\ref{lem:derivatives_lstar}
yield the Gateaux derivatives of the limiting negative log-partial
likelihood $\ell_\star$.

\begin{lemma}[Gateaux derivatives of $S_n$]%
  \label{lem:derivatives_Sn}
  Let $f, f_1, \ldots, f_m: \cX \to \R$. Then for $m \in \N_0$, the $m$-th order
  Gateaux derivative of $S_n(f, t)$
  in the direction $(f_1, \ldots, f_m)$ is given by
  \begin{equation*}
    D^m S_n(f, t)(f_1, \ldots, f_m)
    = \frac{1}{n} \sum_{i=1}^n
    R_i(t)
    e^{f(X_i)}
    \prod_{j=1}^{m} f_j(X_i).
  \end{equation*}
\end{lemma}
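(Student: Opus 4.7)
The plan is to prove the formula by induction on $m \in \N_0$, treating it as a direct computation using the definition of the Gateaux derivative and linearity of the (finite) sum defining $S_n$.

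For the base case $m = 0$, both sides reduce to $S_n(f,t) = \frac{1}{n}\sum_{i=1}^n R_i(t) e^{f(X_i)}$, with the empty product interpreted as $1$. For the inductive step, assuming the claimed formula at order $m-1$, I would compute
\begin{equation*}
  D^m S_n(f,t)(f_1,\ldots,f_m)
  = \frac{d}{d\epsilon} D^{m-1} S_n(f + \epsilon f_m, t)(f_1,\ldots,f_{m-1}) \bigg|_{\epsilon = 0}.
\end{equation*}
By the inductive hypothesis, the right-hand expression inside the derivative equals $\frac{1}{n}\sum_{i=1}^n R_i(t) e^{f(X_i) + \epsilon f_m(X_i)} \prod_{j=1}^{m-1} f_j(X_i)$.

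Since this is a finite sum and each summand is a smooth function of $\epsilon$, we can interchange differentiation with the sum. Differentiating $e^{f(X_i)+\epsilon f_m(X_i)}$ with respect to $\epsilon$ yields $e^{f(X_i)+\epsilon f_m(X_i)} f_m(X_i)$, and evaluating at $\epsilon = 0$ gives $e^{f(X_i)} f_m(X_i)$. Multiplying this into the existing product $\prod_{j=1}^{m-1} f_j(X_i)$ yields the claimed expression $\frac{1}{n} \sum_{i=1}^n R_i(t) e^{f(X_i)} \prod_{j=1}^m f_j(X_i)$.

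There is no substantial obstacle here: the only things to be careful about are the interchange of differentiation with a finite sum (trivially justified) and the treatment of the empty product in the base case. No regularity assumptions on $f_1,\ldots,f_m$ beyond being real-valued functions on $\cX$ are needed, since each derivative is evaluated only at the $n$ points $X_1,\ldots,X_n$.
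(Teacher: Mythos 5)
Your proof is correct and follows essentially the same inductive argument as the paper: base case from the definition of $S_n$, inductive step by differentiating the inductive-hypothesis expression in the direction $f_m$ and using that the sum is finite so the $\varepsilon$-derivative passes through. The paper writes out the difference quotient and takes its limit rather than phrasing it as an $\varepsilon$-derivative and interchanging with the sum, but this is only a cosmetic difference.
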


\begin{proof}[Proof of Lemma~\ref{lem:derivatives_Sn}]
  We proceed by induction on $m$; the result holds
  for $m = 0$ by definition of $S_n(f, t)$.
  Now for $m \in \N$ and
  $f, f_1, \ldots, f_m: \cX \to \R$,
  suppose that
  \begin{align*}
    D^{m-1} S_n(f, t)\left(f_1, \ldots, f_{m-1}\right)
    &=
    \frac{1}{n} \sum_{i=1}^n
    R_i(t)
    e^{f(X_i)}
    \prod_{j=1}^{m-1} f_j(X_i).
  \end{align*}
  By the definition of the Gateaux derivative and the induction hypothesis,
  for every $\varepsilon \in \R \setminus \{0\}$ and $t \in [0,1]$, we have
  \begin{align*}
    &\frac{1}{\varepsilon}
    \bigl(
      D^{m-1} S_n(f + \varepsilon f_m, t)(f_1, \ldots, f_{m-1})
      - D^{m-1} S_n(f, t)(f_1, \ldots, f_{m-1})
    \bigr) \\
    &\quad=
    \frac{1}{n} \sum_{i=1}^n
    R_i(t)
    \frac{1}{\varepsilon}
    \bigl(
      e^{f(X_i) + \varepsilon f_m(X_i)} - e^{f(X_i)}
    \bigr)
    \prod_{j=1}^{m-1} f_j(X_i)
    \to
    \frac{1}{n} \sum_{i=1}^n
    R_i(t)
    e^{f(X_i)}
    \prod_{j=1}^{m} f_j(X_i)
  \end{align*}
  as $\varepsilon \rightarrow 0$, as required.
\end{proof}

\begin{lemma}[Gateaux derivatives of $\ell_n$]%
  \label{lem:derivatives_ln}
  Let $f, f_1, f_2: \cX \to \R$.
  Then the first-order
  Gateaux derivative of $\ell_n(f)$
  in the direction $f_1$ exists and is given by
  \begin{align*}
    D \ell_n(f)(f_1)
    &=
    \int_{0}^{1}
    \frac{D S_n(f, t)(f_1)}{S_n(f, t)}
    \diffi {N(t)}
    -
    \frac{1}{n}
    \sum_{i=1}^n
    f_1(X_i)
    N_i(1).
  \end{align*}
  This is well-defined as $S_n(f, T_i) > 0$ for
  $i \in [n]$. The second-order
  Gateaux derivative of $\ell_n(f)$
  in the direction $(f_1, f_2)$ also exists, and is given by
  \begin{align*}
    D^2 \ell_n(f)(f_1, f_2)
    &=
    \int_{0}^{1}
    \biggl(
      \frac{D^2 S_n(f, t)(f_1, f_2)}{S_n(f, t)}
      - \frac{D S_n(f, t)(f_1)}{S_n(f, t)}
      \frac{D S_n(f, t)(f_2)}{S_n(f, t)}
    \biggr)
    \diffi {N(t)}.
  \end{align*}
  Further, $D^2 \ell_n(f)(f_1, f_1) \geq 0$,
  so $\ell_n$ is a convex function.
  Finally, for $f, f_1, f_2, f_3: \cX \to \R$,
  the third-order Gateaux derivative
  $D^3 \ell_n(f)(f_1, f_2, f_3)$ exists.
\end{lemma}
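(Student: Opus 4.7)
The plan is to compute each derivative by hand from the definition of the Gateaux derivative, exploiting the fact that $N$ is a step function so that $\int_0^1 \varphi(t) \diff N(t) = \frac{1}{n} \sum_{i=1}^n \varphi(T_i) N_i(1)$ reduces to a finite sum. In particular, for every $i \in [n]$ with $N_i(1) = 1$ we have $S_n(f, T_i) \geq R_i(T_i) e^{f(X_i)}/n = e^{f(X_i)}/n > 0$, so differentiating under $\int \diff N(t)$ is trivial (a finite sum) and no convergence theorem is needed.

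First I would write $\ell_n(f + \varepsilon f_1) = \frac{1}{n} \sum_{i=1}^n \log S_n(f + \varepsilon f_1, T_i) N_i(1) - \frac{1}{n} \sum_{i=1}^n \{f(X_i) + \varepsilon f_1(X_i)\} N_i(1)$ and form the difference quotient in $\varepsilon$. On each summand, the chain rule combined with Lemma~\ref{lem:derivatives_Sn} at $m=1$ gives $\diff/\diff\varepsilon |_{\varepsilon = 0} \log S_n(f + \varepsilon f_1, T_i) = D S_n(f, T_i)(f_1)/S_n(f, T_i)$, which yields the claimed formula for $D \ell_n(f)(f_1)$. For the second derivative I would differentiate $\varepsilon \mapsto D S_n(f + \varepsilon f_2, t)(f_1)/S_n(f + \varepsilon f_2, t)$ at $\varepsilon = 0$ via the quotient rule, using Lemma~\ref{lem:derivatives_Sn} once with $m = 2$ (for the numerator) and once with $m = 1$ (for the denominator), and then sum over $i \in [n]$ weighted by $N_i(1)$.

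For convexity, observe that with $p_i(t) \vcentcolon= R_i(t) e^{f(X_i)} / \{n S_n(f, t)\}$, the numbers $p_1(t), \ldots, p_n(t)$ are nonnegative with $\sum_{i=1}^n p_i(t) = 1$ whenever $S_n(f, t) > 0$, and the integrand in the second derivative formula can be rewritten as $\sum_{i=1}^n p_i(t) f_1(X_i)^2 - \bigl(\sum_{i=1}^n p_i(t) f_1(X_i)\bigr)^2$, which is a variance and hence nonnegative by Jensen's inequality. Integrating against the nonnegative measure $\diff N(t)$ then gives $D^2 \ell_n(f)(f_1, f_1) \geq 0$ as required.

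Finally, the existence of the third Gateaux derivative follows the same template: differentiate the second-derivative integrand termwise in a third direction $f_3$, apply the quotient rule, and invoke Lemma~\ref{lem:derivatives_Sn} with $m \in \{1, 2, 3\}$ to identify each derivative of $S_n$ that appears. I do not foresee any genuine obstacle; the main thing to be careful about is bookkeeping in the quotient rule for the third derivative, and making clear at the start that all integrals in sight are finite sums over indices where $S_n(f, T_i)$ is bounded away from zero uniformly in a neighbourhood of $f$ in the direction of the perturbations, so that no dominated convergence argument is needed.
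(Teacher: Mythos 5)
Your proposal matches the paper's argument: the same reduction of $\ell_n$ to a finite sum over indices with $N_i(1)=1$, the same use of Lemma~\ref{lem:derivatives_Sn} with the chain and quotient rules to identify $D\ell_n$, $D^2\ell_n$ and $D^3\ell_n$, and the same variance identity for convexity (your $\sum_i p_i(t) f_1(X_i)^2 - (\sum_i p_i(t) f_1(X_i))^2$ is exactly the paper's sum-of-squares $\sum_i p_i(t)\{f_1(X_i) - DS_n(f,t)(f_1)/S_n(f,t)\}^2$ after expanding). Your observation that no dominated convergence theorem is needed because the $\diff N$-integral is a finite sum is correct and slightly tidier than the paper, which invokes DCT for the third derivative even though it is superfluous there.
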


\begin{proof}[Proof of Lemma~\ref{lem:derivatives_ln}]
  By definition of the negative log-partial likelihood,
  for any $f : \cX \to \R$,
  \begin{align*}
    \ell_n(f)
    &=
    \int_{0}^{1} \log S_n(f, t) \diffi {N(t)}
    - \frac{1}{n} \sum_{i=1}^n f(X_i) N_i(1) \\
    &=
    \frac{1}{n}
    \sum_{i=1}^n
    (1 - I_i)
    \log S_n(f, T_i)
    - \frac{1}{n} \sum_{i=1}^n f(X_i) N_i(1).
  \end{align*}
  For $\varepsilon \in \R \setminus \{0\}$,
  it follows from the chain rule that
  \begin{align*}
    \frac{1}{\varepsilon}
    \bigl(
      \ell_n(f + \varepsilon f_1)
      &- \ell_n(f)
    \bigr) \\
    &=
    \frac{1}{\varepsilon n}
    \sum_{i=1}^n
    (1 - I_i)
    \bigl(
      \log S_n(f + \varepsilon f_1, T_i)
      - \log S_n(f, T_i)
    \bigr)
    - \frac{1}{n} \sum_{i=1}^n
    f_1(X_i) N_i(1) \\
    &\to
    \frac{1}{n}
    \sum_{i=1}^n
    (1 - I_i)
    \frac{D S_n(f, T_i)(f_1)}{S_n(f, T_i)}
    - \frac{1}{n} \sum_{i=1}^n
    f_1(X_i) N_i(1) \\
    &=
    \int_0^1
    \frac{D S_n(f, t)(f_1)}{S_n(f, t)}
    \diffi N(t)
    - \frac{1}{n} \sum_{i=1}^n
    f_1(X_i) N_i(1)
  \end{align*}
  as $\varepsilon \to 0$.
  Similarly, by the quotient rule, for $\varepsilon \in \R \setminus \{0\}$,
  \begin{align*}
    \frac{1}{\varepsilon}
    \bigl(
      D \ell_n(f &+ \varepsilon f_2)(f_1)
      - D \ell_n(f)(f_1)
    \bigr) \\
    &=
    \frac{1}{\varepsilon n}
    \sum_{i=1}^n
    (1 - I_i)
    \biggl(
      \frac{D S_n(f + \varepsilon f_2, T_i)(f_1)}
      {S_n(f + \varepsilon f_2, T_i)}
      - \frac{D S_n(f, T_i)(f_1)}{S_n(f, T_i)}
    \biggr) \\
    &\to
    \frac{1}{n}
    \sum_{i=1}^n
    (1 - I_i)
    \biggl(
      \frac{D^2 S_n(f, T_i)(f_1, f_2)}{S_n(f, T_i)}
      - \frac{D S_n(f, T_i)(f_1)}{S_n(f, T_i)}
      \frac{D S_n(f, T_i)(f_2)}{S_n(f, T_i)}
    \biggr) \\
    &=
    \int_{0}^{1}
    \biggl(
      \frac{D^2 S_n(f, t)(f_1, f_2)}{S_n(f, t)}
      - \frac{D S_n(f, t)(f_1)}{S_n(f, t)}
      \frac{D S_n(f, t)(f_2)}{S_n(f, t)}
    \biggr)
    \diffi {N(t)}
  \end{align*}
  as $\varepsilon \to 0$, as required.
  To show convexity of $\ell_n$, note that
  \begin{align*}
    0
    &\leq
    \int_{0}^{1}
    \frac{1}{n}
    \sum_{i=1}^n
    \frac{R_i(t) e^{f(X_i)}}{S_n(f, t)}
    \bigg(
      f_1(X_i) - \frac{D S_n(f, t)(f_1)}{S_n(f, t)}
    \bigg)^2
    \diffi N(t) \\
    &=
    \int_{0}^{1}
    \frac{1}{S_n(f, t)}
    \biggl(
      D^2 S_n(f, t)(f_1, f_1)
      - 2 \frac{D S_n(f, t)(f_1)^2}{S_n(f, t)}
      + \frac{D S_n(f, t)(f_1)^2}{S_n(f, t)}
    \biggr)
    \diffi N(t) \\
    &=
    \int_{0}^{1}
    \biggl(
      \frac{D^2 S_n(f, t)(f_1, f_1)}{S_n(f, t)}
      - \frac{D S_n(f, t)(f_1)^2}{S_n(f, t)^2}
    \biggr)
    \diffi {N(t)} =
    D^2 \ell_n(f)(f_1, f_1).
  \end{align*}
  Finally, we show that the third-order Gateaux derivative exists.
  For $\varepsilon \in \R \setminus \{0\}$,
  by the dominated convergence
  theorem, the chain rule, the product rule and the quotient rule,
  \begin{align}
    \nonumber
    &\frac{1}{\varepsilon}
    \Bigl\{
      D^2 \ell_n(f + \varepsilon f_3)(f_1, f_2)
      - D^2 \ell_n(f)(f_1, f_2)
    \Bigr\} \\
    \nonumber
    &\,=
    \int_{0}^{1}
    \frac{1}{\varepsilon}
    \biggl(
      \frac{D^2 S_n(f + \varepsilon f_3, t)(f_1, f_2)}
      {S_n(f + \varepsilon f_3, t)}
      - \frac{D^2 S_n(f, t)(f_1, f_2)}{S_n(f, t)}
    \biggr)
    \diffi {N(t)} \\
    \nonumber
    &\quad-
    \int_{0}^{1}
    \frac{1}{\varepsilon}
    \biggl(
      \frac{D S_n(f + \varepsilon f_3, t)(f_1)}
      {S_n(f + \varepsilon f_3, t)}
      \frac{D S_n(f + \varepsilon f_3, t)(f_2)}
      {S_n(f + \varepsilon f_3, t)}
      - \frac{D S_n(f, t)(f_1)}{S_n(f, t)}
      \frac{D S_n(f, t)(f_2)}{S_n(f, t)}
    \biggr)
    \diffi {N(t)} \\
    \nonumber
    &\,\stackrel{\epsilon \rightarrow 0}{\to}
    \int_{0}^{1}
    \biggl(
      \frac{D^3 S_n(f, t)(f_1, f_2, f_3)}{S_n(f, t)}
      - \frac{D^2 S_n(f, t)(f_1, f_2) D S_n(f, t)(f_3)}{S_n(f, t)^2}
    \biggr)
    \diffi {N(t)} \\
    \nonumber
    &\quad-
    \int_{0}^{1}
    \frac{D S_n(f, t)(f_1)}{S_n(f, t)}
    \biggl(
      \frac{D^2 S_n(f, t)(f_2, f_3)}{S_n(f, t)}
      - \frac{D S_n(f, t)(f_2) D S_n(f, t)(f_3)}{S_n(f, t)^2}
    \biggr)
    \diffi {N(t)} \\
    \nonumber
    &\quad-
    \int_{0}^{1}
    \frac{D S_n(f, t)(f_2)}{S_n(f, t)}
    \biggl(
      \frac{D^2 S_n(f, t)(f_1, f_3)}{S_n(f, t)}
      - \frac{D S_n(f, t)(f_1) D S_n(f, t)(f_3)}{S_n(f, t)^2}
    \biggr)
    \diffi {N(t)} \\
    \label{eq:third_derivative_ln}
    &\,=
    \int_{0}^{1}
    \biggl(
      \frac{D^3 S_n(f, t)(f_1, f_2, f_3)}{S_n(f, t)}
      - \frac{D^2 S_n(f, t)(f_1, f_2) D S_n(f, t)(f_3)}{S_n(f, t)^2} \\
      \nonumber
      &\qquad\qquad-
      \frac{D^2 S_n(f, t)(f_1, f_3) D S_n(f, t)(f_2)}{S_n(f, t)^2}
      - \frac{D^2 S_n(f, t)(f_2, f_3) D S_n(f, t)(f_1)}{S_n(f, t)^2} \\
      \nonumber
      &\qquad\qquad+
      2 \frac{D S_n(f, t)(f_1) D S_n(f, t)(f_2) D S_n(f_3)}{S_n(f, t)^3}
    \biggr)
    \diffi {N(t)}.
    \qedhere
  \end{align}
\end{proof}

\begin{lemma}[Gateaux derivatives of $S$]
  \label{lem:derivatives_S}
  Take $m \in \N_0$ and let
  $f, f_1, \ldots, f_m \in \cB(\cX)$. Then
  \begin{align*}
    \E \bigl(
      D^m S_n(f, t)(f_1, \ldots, f_m)
    \bigr)
    = \int_\cX
    q(x, t)
    e^{f(x)}
    \prod_{j=1}^{m} f_j(x)
    \diffi P_X(x)
    = D^m S(f, t)\left(f_1, \ldots, f_m\right).
  \end{align*}
\end{lemma}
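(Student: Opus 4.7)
The plan is to establish the two claimed equalities separately. The first equality follows from Lemma~\ref{lem:derivatives_Sn} together with the i.i.d.\ structure of the data, while the second is an induction on $m$ using dominated convergence.

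For the first equality, I would start from the explicit formula
\begin{equation*}
D^m S_n(f, t)(f_1, \ldots, f_m)
= \frac{1}{n} \sum_{i=1}^n R_i(t) e^{f(X_i)} \prod_{j=1}^{m} f_j(X_i)
\end{equation*}
given by Lemma~\ref{lem:derivatives_Sn}. Taking expectations and using that the triples $(X_i, T_i, I_i)$ are i.i.d., the average collapses to a single term. Since $R_1(t) = \mathbbm{1}_{\{T_1 \geq t\}}$, conditioning on $X_1$ and using $\P(T_1 \geq t \mid X_1 = x) = q(x, t)$ gives
\begin{equation*}
\E \bigl( D^m S_n(f, t)(f_1, \ldots, f_m) \bigr)
= \int_\cX q(x, t) e^{f(x)} \prod_{j=1}^{m} f_j(x) \diff P_X(x),
\end{equation*}
which is integrable because $f, f_1, \ldots, f_m \in \cB(\cX)$ and $q \leq 1$.

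For the second equality, I would induct on $m$. The base case $m = 0$ is just the definition of $S(f, t)$. For the inductive step, assuming the formula for the $(m-1)$-th derivative, I would write the difference quotient
\begin{equation*}
\frac{1}{\varepsilon}
\bigl( D^{m-1} S(f + \varepsilon f_m, t)(f_1, \ldots, f_{m-1})
- D^{m-1} S(f, t)(f_1, \ldots, f_{m-1}) \bigr)
= \int_\cX q(x, t) \frac{e^{f(x) + \varepsilon f_m(x)} - e^{f(x)}}{\varepsilon}
\prod_{j=1}^{m-1} f_j(x) \diff P_X(x).
\end{equation*}
For each $x$, the integrand converges to $q(x, t) e^{f(x)} \prod_{j=1}^{m} f_j(x)$ as $\varepsilon \to 0$, and for $|\varepsilon| \leq 1$ it is dominated pointwise by $e^{\|f\|_\infty + \|f_m\|_\infty} \|f_m\|_\infty \prod_{j=1}^{m-1} \|f_j\|_\infty$, a $P_X$-integrable constant. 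Dominated convergence therefore yields the $m$-th derivative in the claimed form.

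The only real obstacle is keeping track of the domination hypothesis, but this is handled uniformly since every $f_j$ is in $\cB(\cX)$ and $q(x, t) \in [0, 1]$; no further regularity conditions are needed beyond what the lemma already assumes.
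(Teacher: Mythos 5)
Your proposal is correct and takes essentially the same approach as the paper: the first equality from Lemma~\ref{lem:derivatives_Sn} plus the i.i.d.\ structure, and the second by induction on $m$ with dominated convergence applied to the difference quotient. The only cosmetic difference is that you separate the two equalities cleanly, whereas the paper wraps both inside a single induction even though the first equality does not really use the induction hypothesis.
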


\begin{proof}[Proof of Lemma~\ref{lem:derivatives_S}]
  We proceed by induction on $m$. For $m = 0$, we have
  \begin{align*}
    \E \bigl(
      S_n(f, t)
    \bigr)
    &=
    \E \bigl(
      R(t) e^{f(X)}
    \bigr)
    = \int_\cX
    q(x, t)
    e^{f(x)}
    \diffi P_X(x)
    = S(f, t).
  \end{align*}
  For $m \in \N$ and
  $f, f_1, \ldots, f_m \in \cB(\cX)$,
  suppose that
  \begin{align*}
    \E \bigl(
      D^{m-1} S_n(f, t)(f_1, \ldots, f_{m-1})
    \bigr)
    &= \int_\cX
    q(x, t)
    e^{f(x)}
    \prod_{j=1}^{m-1} f_j(x)
    \diffi P_X(x) \\
    &= D^{m-1} S(f, t)(f_1, \ldots, f_{m-1}).
  \end{align*}
  Then by Lemma~\ref{lem:derivatives_Sn},
  \begin{align*}
    \E \bigl(
      D^m S_n(f, t)(f_1, \ldots, f_m)
    \bigr)
    &=
    \E \biggl\{
      R(t)
      e^{f(X)}
      \prod_{j=1}^{m} f_j(X)
    \biggr\}
    = \int_0^1
    q(x, t)
    e^{f(x)}
    \prod_{j=1}^{m} f_j(x)
    \diffi P_X(x).
  \end{align*}
  It remains to find the Gateaux derivative
  $D^m S(f, t)(f_1, \ldots, f_m)$.
  By the induction hypothesis and the dominated convergence theorem,
  for $\varepsilon \in \R \setminus \{0\}$,
  \begin{align*}
    &\frac{1}{\varepsilon}
    \bigl(
      D^{m-1} S(f + \varepsilon f_m, t)(f_1, \ldots, f_{m-1})
      - D^{m-1} S(f, t)(f_1, \ldots, f_{m-1})
    \bigr) \\
    &\quad=
    \frac{1}{\varepsilon}
    \biggl(
      \int_\cX
      q(x, t)
      e^{f(x)}
      \bigl(
        e^{\varepsilon f_m(x)} - 1
      \bigr)
      \prod_{j=1}^{m-1} f_j(x)
      \diffi P_X(x)
    \biggr)
    \to
    \int_\cX
    q(x, t)
    e^{f(x)}
    \prod_{j=1}^{m} f_j(x)
    \diffi P_X(x)
  \end{align*}
  as $\varepsilon \to 0$, as required.
\end{proof}

\begin{lemma}[Gateaux derivatives of $\ell_\star$]%
  \label{lem:derivatives_lstar}
  Let $f, f_1, f_2 \in \cB(\cX)$.
  Then the first-order
  Gateaux derivative of $\ell_{\star}(f)$
  in the direction of $f_1$ exists and is given by
  \begin{align*}
    D \ell_{\star}(f)(f_1)
    &=
    \int_{0}^{1}
    D S(f, t)(f_1)
    \frac{S(f_0, t)}{S(f, t)}
    \lambda_0(t) \diffi t
    - \int_{0}^{1}
    D S(f_0, t)(f_1)
    \lambda_0(t)
    \diffi t.
  \end{align*}
  Similarly, the second-order Gateaux derivative of $\ell_{\star}(f)$
  in the direction of $(f_1, f_2)$ is
  \begin{align*}
    D^2 \ell_{\star}(f)(f_1, f_2)
    &=
    \int_{0}^{1}
    \biggl(
      \frac{D^2 S(f, t)(f_1, f_2)}{S(f, t)}
      - \frac{D S(f, t)(f_1)}{S(f, t)}
      \frac{D S(f, t)(f_2)}{S(f, t)}
    \biggr)
    S(f_0, t)
    \lambda_0(t) \diffi t.
  \end{align*}
  In each display we interpret $0/0$ to be zero.
\end{lemma}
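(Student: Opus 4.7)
The plan is to compute both derivatives directly from the definition of Gateaux derivatives, writing $\ell_\star(f) = A(f) - B(f)$ with $A(f) \vcentcolon= \int_0^1 \log S(f, t) S(f_0, t) \lambda_0(t) \diff t$ and $B(f) \vcentcolon= \int_0^1 D S(f_0, t)(f) \lambda_0(t) \diff t$. The term $B$ is affine in $f$ (since $D S(f_0, t)$ is the linear Gateaux derivative operator at $f_0$, acting on $f$), so its first derivative in direction $f_1$ is just $\int_0^1 D S(f_0, t)(f_1) \lambda_0(t) \diff t$ and its second derivative vanishes; the only real work is with $A$.

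For the first derivative of $A$, I would form the difference quotient $(A(f + \varepsilon f_1) - A(f))/\varepsilon$, pull the quotient under the integral sign, and observe by the fundamental theorem of calculus that
\begin{equation*}
  \frac{\log S(f + \varepsilon f_1, t) - \log S(f, t)}{\varepsilon}
  = \frac{1}{\varepsilon} \int_0^\varepsilon \frac{D S(f + s f_1, t)(f_1)}{S(f + s f_1, t)} \diff s,
\end{equation*}
where differentiability of $s \mapsto \log S(f + s f_1, t)$ uses Lemma~\ref{lem:derivatives_S} and the positivity $S(g, t) \geq e^{-\|g\|_\infty} q_1 > 0$ for $g \in \cB(\cX)$. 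Pointwise convergence as $\varepsilon \to 0$ to $D S(f, t)(f_1) / S(f, t)$ follows by continuity, and a uniform bound on the integrand by $\|f_1\|_\infty S(f_0, t) \lambda_0(t)$ is immediate from $|D S(g, t)(f_1) / S(g, t)| \leq \|f_1\|_\infty$ (which comes from the integral representation of $D S$ in Lemma~\ref{lem:derivatives_S}). Since $\int_0^1 S(f_0, t) \lambda_0(t) \diff t \leq e^{\|f_0\|_\infty} \Lambda < \infty$, the dominated convergence theorem gives the stated formula for $D \ell_\star(f)(f_1)$.

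For the second derivative, I would apply the same strategy to $f \mapsto D \ell_\star(f)(f_1)$ in direction $f_2$. The quotient rule applied pointwise in $t$, together with Lemma~\ref{lem:derivatives_S} which identifies $D^2 S(f, t)(f_1, f_2)$ and the two first-order derivatives $D S(f, t)(f_i)$, yields pointwise convergence of the difference quotient to
\begin{equation*}
  \biggl(\frac{D^2 S(f, t)(f_1, f_2)}{S(f, t)}
    - \frac{D S(f, t)(f_1)}{S(f, t)} \frac{D S(f, t)(f_2)}{S(f, t)}\biggr) S(f_0, t) \lambda_0(t).
\end{equation*}
The bounds $|D^2 S(g, t)(f_1, f_2) / S(g, t)| \leq \|f_1\|_\infty \|f_2\|_\infty$ and $|D S(g, t)(f_i) / S(g, t)| \leq \|f_i\|_\infty$, valid for all $g$ in a neighbourhood of $f$ in $\cB(\cX)$ by the same Jensen/ratio argument, provide a uniform integrable envelope, and dominated convergence again delivers the claimed formula.

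The main obstacle is essentially bookkeeping: one must verify the interchange of limit and integral at each stage, which reduces to the single clean estimate that ratios of the form $D^m S(g, t)(f_1, \ldots, f_m) / S(g, t)$ are conditional expectations against the probability measure with density proportional to $q(\cdot, t) e^{g(\cdot)}$, hence bounded by $\prod_j \|f_j\|_\infty$. Once this uniform boundedness is in hand, both derivative computations follow routinely from Lemma~\ref{lem:derivatives_S}; no structural difficulty remains.
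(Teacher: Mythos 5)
Your proposal is correct and follows essentially the same approach as the paper: both compute the Gateaux derivatives via the difference quotient and justify passing the limit inside the integral by dominated convergence, appealing to $\int_0^1 S(f_0,t)\lambda_0(t)\diff t \leq e^{\|f_0\|_\infty}\Lambda < \infty$ and the positivity $S(g,t) \geq e^{-\|g\|_\infty}q_1 > 0$. Your bookkeeping is somewhat cleaner: observing that $D^m S(g,t)(f_1,\ldots,f_m)/S(g,t)$ is a (conditional) expectation of $\prod_j f_j$ against the normalised measure with density $\propto q(\cdot,t)e^{g(\cdot)}$ gives the uniform bound $\prod_j \|f_j\|_\infty$ in one line, whereas the paper uses the slightly less explicit estimate $S(f_0,t) \leq e^{\|\tilde f - f_0\|_\infty} S(\tilde f, t)$ to control the integrand; both deliver a valid integrable envelope and the same result.
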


\begin{proof}[Proof of Lemma~\ref{lem:derivatives_lstar}]
  By Lemma~\ref{lem:derivatives_S},
  \begin{align*}
    \ell_{\star}(f)
    &=
    \int_{0}^{1}
    \log\bigl(S(f, t)\bigr)
    S(f_0, t) \lambda_0(t) \diffi t
    - \int_{0}^{1}
    \int_\cX
    q(x, t)
    e^{f_0(x)}
    f(x)
    \diffi P_X(x)
    \lambda_0(t)
    \diffi t.
  \end{align*}
  Note that $\int_0^1 \lambda_0(t) \diffi t = \Lambda < \infty$.
  Hence by the dominated convergence theorem, since
  $f, f_0, f_1 \in \cB(\cX)$,
  and as $S(f_0, t) \leq e^{\|\tilde f - f_0\|_\infty} S(\tilde f, t)$
  for all $\tilde f \in \cB(\cX)$, we have
  \begin{align*}
    \frac{1}{\varepsilon}
    \bigl(
      \ell_\star(f + \varepsilon f_1) - \ell_{\star}(f)
    \bigr) &=
    \frac{1}{\varepsilon}
    \int_{0}^{1}
    \bigl(
      \log S(f + \varepsilon f_1, t)
      - \log S(f, t)
    \bigr)
    S(f_0, t) \lambda_0(t) \diffi t \\
    &\hspace{4cm}-
    \int_{0}^{1}
    \int_\cX
    q(x, t)
    e^{f_0(x)}
    f_1(x)
    \diffi P_X(x)
    \lambda_0(t)
    \diffi t \\
    &\to
    \int_{0}^{1}
    \frac{D S(f, t)(f_1)}{S(f, t)}
    S(f_0, t) \lambda_0(t) \diffi t
    - \int_{0}^{1}
    D S(f_0, t)(f_1)
    \lambda_0(t)
    \diffi t
  \end{align*}
  as $\varepsilon \to 0$.
  Similarly, for the second derivative,
  again by the dominated convergence theorem,
  \begin{align*}
    \frac{1}{\varepsilon}
    \bigl(
      D \ell_\star(f + \varepsilon f_2)(f_1)
      &- D \ell_{\star}(f)(f_1)
    \bigr) \\
    &=
    \frac{1}{\varepsilon}
    \int_{0}^{1}
    \biggl(
      \frac{D S(f + \varepsilon f_2, t)(f_1)}{S(f + \varepsilon f_2, t)}
      - \frac{D S(f, t)(f_1)}{S(f, t)}
    \biggr)
    S(f_0, t) \lambda_0(t) \diffi t \\
    &\to
    \int_{0}^{1}
    \biggl(
      \frac{D^2 S(f, t)(f_1, f_2)}{S(f, t)}
      - \frac{D S(f, t)(f_1)}{S(f, t)}
      \frac{D S(f, t)(f_2)}{S(f, t)}
    \biggr)
    S(f_0, t)
    \lambda_0(t) \diffi t
  \end{align*}
  as $\varepsilon \to 0$, as required.
\end{proof}

\subsubsection{Proofs of Lemmas~\ref{lem:characterisation_f0}
and~\ref{lem:infty_bounds}}

The proof of Lemma~\ref{lem:characterisation_f0} relies on the
following restricted
strong convexity result for $\ell_\star$.

\begin{lemma}[Restricted strong convexity of $\ell_\star$]%
  \label{lem:strong_convexity}
  If $f, f_1 \in \cB(\cX)$, then
  \begin{align*}
    e^{-\|f - f_0\|_\infty}
    e^{-\|f\|_\infty}
    \Lambda q_1
    \|f_1 - P_X(f_1)\|_{L_2}^2
    &\leq
    D^2 \ell_{\star}(f)(f_1, f_1)
    \leq
    e^{\|f-f_0\|_\infty} e^{\|f\|_\infty}
    \Lambda
    \|f_1\|_{L_2}^2.
  \end{align*}
\end{lemma}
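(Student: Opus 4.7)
The plan is to recognise the integrand in the second Gateaux derivative as a conditional variance, and then bound it above and below by controlling the likelihood ratios.

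First, I would apply Lemma~\ref{lem:derivatives_lstar} together with Lemma~\ref{lem:derivatives_S} to rewrite
\begin{equation*}
  D^2 \ell_\star(f)(f_1,f_1)
  = \int_0^1 \mathrm{Var}_{P_{f,t}}(f_1)\, S(f_0,t)\,\lambda_0(t)\diff t,
\end{equation*}
where $P_{f,t}$ denotes the probability measure on $\cX$ with Radon--Nikodym derivative $\diff P_{f,t}/\diff P_X = q(\cdot,t)e^{f(\cdot)}/S(f,t)$. Indeed, the two summands in Lemma~\ref{lem:derivatives_lstar} are exactly $\E_{P_{f,t}}(f_1^2)$ and $(\E_{P_{f,t}}(f_1))^2$ multiplied by $S(f_0,t)\lambda_0(t)$.

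For the upper bound, I would use $\mathrm{Var}_{P_{f,t}}(f_1) \leq \E_{P_{f,t}}(f_1^2)$. Since $q(x,t) \leq 1$ and $e^{f(x)} \leq e^{\|f\|_\infty}$, this is at most $e^{\|f\|_\infty}\|f_1\|_{L_2}^2/S(f,t)$. Combining with the pointwise bound $S(f_0,t)/S(f,t) \leq e^{\|f - f_0\|_\infty}$ (which follows from $e^{f_0(x)} \leq e^{\|f-f_0\|_\infty}e^{f(x)}$) and integrating against $\lambda_0$ gives the claimed upper bound, using $\int_0^1 \lambda_0(t)\diff t = \Lambda$.

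For the lower bound, the key observation is the variational characterisation
$\mathrm{Var}_{P_{f,t}}(f_1) = \min_{c \in \R} \int_\cX (f_1(x) - c)^2 \diff P_{f,t}(x)$. For every $c \in \R$, the pointwise bound $q(x,t)e^{f(x)} \geq q_1 e^{-\|f\|_\infty}$ yields
\begin{equation*}
  \int_\cX (f_1(x) - c)^2 \diff P_{f,t}(x)
  \geq \frac{q_1 e^{-\|f\|_\infty}}{S(f,t)} \int_\cX (f_1(x) - c)^2 \diff P_X(x).
\end{equation*}
Taking the minimum over $c$ on both sides (noting that the minimisers need not agree, but the resulting inequality is still valid because each side is minimised independently) gives
\begin{equation*}
  \mathrm{Var}_{P_{f,t}}(f_1) \geq \frac{q_1 e^{-\|f\|_\infty}}{S(f,t)} \|f_1 - P_X(f_1)\|_{L_2}^2.
\end{equation*}
Substituting into the integral representation, using $S(f_0,t)/S(f,t) \geq e^{-\|f - f_0\|_\infty}$ and $\int_0^1 \lambda_0(t)\diff t = \Lambda$ completes the lower bound.

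There is no significant obstacle here; the only mildly subtle point is the justification of passing the pointwise inequality through the minimisation over centering constants $c$ when deriving the lower bound, which I would explicitly note by comparing the minimisers. All other steps are direct manipulations of the explicit expressions from Lemmas~\ref{lem:derivatives_lstar} and~\ref{lem:derivatives_S}.
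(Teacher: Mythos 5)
Your proof is correct and follows essentially the same approach as the paper's: identify the integrand in the expression for $D^2\ell_\star(f)(f_1,f_1)$ as a variance under the tilted probability measure $P_{f,t}$, bound the density ratio $\diff P_{f,t}/\diff P_X$ pointwise from above and below, and exploit the fact that $\|f_1 - P_X(f_1)\|_{L_2}^2$ is the minimal second moment of $f_1$ about any centring constant under $P_X$. The one organisational difference worth noting is in how the centring is handled. The paper first replaces the direction $f_1$ by $f_2 \vcentcolon= f_1 - P_X(f_1)1_\cX$, invoking the invariance $\ell_\star(g) = \ell_\star(g + c1_\cX)$ to justify $D^2\ell_\star(f)(f_1,f_1) = D^2\ell_\star(f)(f_2,f_2)$; it then expands the integrand with the fixed centring constant $DS(f,t)(f_2)/S(f,t)$ and observes $\int_\cX(f_2 - c)^2\diff P_X \geq \|f_2\|_{L_2}^2$ for any $c$ since $P_X(f_2) = 0$. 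You instead carry the minimum over centrings through the pointwise density bound, which delivers the same conclusion without ever needing the invariance of $\ell_\star$ — a mild streamlining, but the estimates are identical. Your concern about passing the pointwise inequality through the minimisation is harmless: if $\phi(c) \geq \psi(c)$ for all $c$, then $\min_c\phi(c) = \phi(c^\star) \geq \psi(c^\star) \geq \min_c\psi(c)$, regardless of whether the minimisers agree.
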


\begin{proof}[Proof of Lemma~\ref{lem:strong_convexity}]
  For the upper bound, observe
  by Lemma~\ref{lem:derivatives_S} that
  \[
    D^2 S(f, t)(f_1, f_1)
    = \int_\cX q(x, t) e^{f(x)} f_1(x)^2 \diffi P_X(x)
    \leq e^{\|f\|_\infty} \|f_1\|_{L_2}^2,
  \]
  so by Lemma~\ref{lem:derivatives_lstar},
  \begin{align*}
    D^2 \ell_{\star}(f)(f_1, f_1)
    &\leq \int_{0}^{1}
    \frac{D^2 S(f, t)(f_1, f_1)}{S(f, t)}
    S(f_0, t) \lambda_0(t) \diffi t
    \leq
    e^{\|f-f_0\|_\infty} e^{\|f\|_\infty}
    \Lambda
    \|f_1\|_{L_2}^2.
  \end{align*}
  Since $\ell_\star(f) = \ell_\star(f + c 1_\cX)$
  for all $c \in \R$, we have
  \begin{align*}
    D^2 \ell_{\star}(f)(f_1, f_1)
    &= D^2 \ell_{\star}(f)(f_2, f_2),
  \end{align*}
  where $f_2 \vcentcolon= f_1 - P_X(f_1) 1_\cX$.
  For the lower bound, by Lemma~\ref{lem:derivatives_lstar} and
  Lemma~\ref{lem:derivatives_S} again,
  \begin{align*}
    D^2 \ell_{\star}(f)(f_2, f_2)
    &= \int_{0}^{1}
    \biggl(
      \frac{D^2 S(f, t)(f_2, f_2)}{S(f, t)}
      - \frac{D S(f, t)(f_2)^2}{S(f, t)^2}
    \biggr)
    S(f_0, t) \lambda_0(t) \diffi t \\
    &= \int_{0}^{1}
    \int_\cX
    \biggl(
      f_2(x) - \frac{D S(f, t)(f_2)}{S(f, t)}
    \biggr)^2
    e^{f(x)} q(x, t)
    \diffi P_X(x) \,
    \frac{S(f_0, t)}{S(f, t)}
    \lambda_0(t) \diffi t.
  \end{align*}
  Since $P_X(f_2) = 0$,
  we have $\int_\cX \bigl(f_2(x) - c\bigr)^2 \diffi P_X(x)
  \geq \|f_2\|_{L_2}^2$ for every $c \in \R$,
  so because $S(f_0, t) \geq e^{- \|f - f_0\|_\infty} S(f, t)$,
  we have
  \begin{align*}
    D^2 \ell_{\star}(f)(f_2, f_2)
    &\geq
    e^{-\|f - f_0\|_\infty}
    e^{-\|f\|_\infty}
    \Lambda q_1
    \|f_2\|_{L_2}^2.
    \qedhere
  \end{align*}
\end{proof}

\begin{proof}[Proof of Lemma~\ref{lem:characterisation_f0}]
  By Lemma~\ref{lem:derivatives_lstar},
  $D \ell_\star(f_0) = 0$. Let $f \in \cB(\cX)$.
  By Taylor's theorem and Lemma~\ref{lem:strong_convexity},
  as $P_X(f_0) = 0$, there exists $\check f$
  on the line segment between $f_0$ and $f$ with
  \begin{align}
    \label{eq:characterisation_f0_convex}
    \ell_\star(f) - \ell_\star(f_0)
    &= D^2 \ell(\check f)(f - f_0)^{\otimes 2}
    \geq
    e^{-\|\check f - f_0\|_\infty}
    e^{-\|\check f\|_\infty}
    \Lambda q_1
    \|f - P_X(f) - f_0\|_{L_2}^2 \geq 0,
  \end{align}
  so $f_0 \in \argmin_{f \in \cB(\cX)} \ell_\star(f)$.
  Moreover, if $\tilde f \in \argmin_{f \in \cB(\cX)} \ell_\star(f)$
  with $P_X(\tilde f) = 0$, then from~\eqref{eq:characterisation_f0_convex},
  we have $\|\tilde f - f_0\|_{L_2} = 0$, so
  $\tilde f(x) = f_0(x)$ holds for $P_X$-almost every $x \in \cX$.
\end{proof}

\begin{proof}[Proof of Lemma~\ref{lem:infty_bounds}]
  By Cauchy--Schwarz, for $x \in \cX$,
  \begin{align*}
    f(x)^2
    &=
    \biggl(
      \sum_{r\in \mathcal{R}} \langle f, e_r \rangle_{L_2} e_r(x)
    \biggr)^2
    =
    \biggl(
      \sum_{r\in \mathcal{R}} \frac{1}{\sqrt{\nu_r}} \langle f, e_r
      \rangle_{L_2}
      \cdot \sqrt{\nu_r} e_r(x)
    \biggr)^2 \\
    &\leq \|f\|_\cH^2
    \sum_{r\in \mathcal{R}} \nu_r e_r(x)^2
    = \|f\|_\cH^2 k(x, x)
    \leq K \|f\|_\cH^2.
  \end{align*}
  Similarly,
  \begin{equation*}
    f(x)^2
    =
    \biggl(
      \sum_{r\in \mathcal{R}} \frac{1}{\sqrt{a_r}} \langle f, e_r
      \rangle_{L_2}
      \cdot \sqrt{a_r} e_r(x)
    \biggr)^2
    \leq \|f\|_{\cH_\gamma}^2
    \sum_{r\in \mathcal{R}} a_r e_r(x)^2
    \leq H_\gamma \|f\|_{\cH_\gamma}^2.
    \qedhere
  \end{equation*}
\end{proof}

\subsection{Proofs for Section~\ref{sec:methodology}}

We prove our result on representation of the kernel estimator
$\hat f_{n,\gamma}$.

\begin{proof}[Proof of Proposition~\ref{prop:representation}]
  We treat $(X_i,T_i,I_i)_{i \in [n]}$ as fixed for now;
  measurability issues are addressed later.
  For $\gamma > 0$ and $f \in \cH$,
  define $\check \ell_{n,\gamma}(f)
  \vcentcolon= \ell_n(f) + \gamma
  \|f - P_n(f) 1_\cX\|_\cH^2 + \gamma P_n(f)^2$.
  Since $f \mapsto \ell_n(f)$ is invariant
  to addition of constants to $f$, the optimisation
  problem \eqref{eq:argmin} is equivalent to solving
  \begin{equation}
    \label{eq:argmin_unconstrained}
    \hat f_{n,\gamma} \vcentcolon=
    \argmin_{f \in \cH} \check\ell_{n, \gamma}(f).
  \end{equation}
  Define the subspace
  $\cH_{\bX} \vcentcolon= \mathrm{span} \bigl\{1_\cX, k(\cdot, X_1),
  \ldots, k(\cdot, X_n)\bigl\}$ of $\mathcal{H}$, which is
  finite-dimensional and hence closed.
  Further define $\pi_\bX : \cH \to \cH_\bX$ to be the projection operator
  onto $\cH_\bX$ under $\langle \cdot, \cdot\rangle_\cH$, and
  for $f \in \cH$ write $f_{\bX} \vcentcolon= \pi_\bX(f)$.
  By the reproducing property and orthogonality,
  $f(X_i) - f_{\bX}(X_i)
  = \langle f-f_{\bX}, k(\cdot, X_i)\rangle_\cH = 0$
  for each $i \in [n]$. Thus, $\ell_n(f_{\bX}) = \ell_n(f)$ and
  $P_n(f_{\bX}) = P_n(f)$. Hence, since $\pi_\bX(1_\cX) = 1_\cX$, we have
  $\pi_\bX\bigl(f - P_n(f) 1_\cX\bigr) = f_\bX - P_n(f_\bX) 1_\cX$.
  It follows that
  \[
    \|f - P_n(f) 1_\cX\|_\cH^2 = \|f_{\bX} - P_n(f_\bX) 1_\cX\|_\cH^2
    + \|f - f_{\bX}\|_\cH^2,
  \]
  so $\check\ell_{n,\gamma}(f_\bX) \leq \check\ell_{n,\gamma}(f)$,
  with equality if and only if $f \in \cH_\bX$.
  But every $f \in \cH_{\bX}$ is of the form
  $f(x) = \beta_0 + \sum_{i=1}^n k(x, X_i) \beta_i$, where $\beta_i
  \in \R$ for $i \in \{0\} \cup [n]$. Since $\ell_n(f) = \ell_n(f + a 1_\cX)$
  and $f + a 1_\cX - P_n(f + a 1_\cX) = f - P_n(f)$ for all $a \in
  \R$, we deduce that
  any minimiser~$f$ of~\eqref{eq:argmin_unconstrained}
  must satisfy $P_n(f) = 0$ and $f \in \cH_\bX$.

  Define $\tilde \cH_\bX \vcentcolon=
  \bigl\{f \in \cH : f(x) = \beta_0 + \sum_{i \in \cA_n} k(x, X_i)
  \beta_i, P_n(f) = 0, \beta \in \R^{\{0\} \cup \cA_n}\bigr\}$.
  Our final reduction is to show that it suffices to consider
  functions $f \in \tilde \cH_{\bX}$.
  Indeed, take $\delta \in \R^{\{0\} \cup [n]}$
  such that the function
  $f \in \cH_{\bX}$ defined by
  $f(x) \vcentcolon=
  \delta_0 + \sum_{i=1}^n k(x, X_i) \delta_i$ satisfies $P_n(f) = 0$.
  Then with $\tilde{\bK}$ defined as in \eqref{eq:K_tilde},
  by the reproducing property,
  \begin{align*}
    \|f - P_n(f)\|_\cH^2
    &= \|f\|_\cH^2
    = \bigg\langle
    \delta_0 1_\cX + \sum_{i=1}^n k(\cdot, X_i) \delta_i,\;
    \delta_0 1_\cX + \sum_{i=1}^n k(\cdot, X_i) \delta_i
    \bigg\rangle_\cH \\
    &= \delta_0^2 \|1_\cX\|_\cH^2
    + 2 \delta_0 \sum_{i=1}^n \delta_i
    + \sum_{i=1}^n \sum_{j=1}^n \delta_i \delta_j k(X_i, X_j)
    = \delta^\T \tilde{\bK} \delta.
  \end{align*}
  Now, by definition of $\cA_n$,
  there exist $\beta_i \in \R$
  for $i \in \{0\} \cup \cA_n$ such that
  \begin{align*}
    \beta_0 \|1_\cX\|_\cH^2 + \sum_{i \in \cA_n} \beta_i
    &=
    \delta_0 \|1_\cX\|_\cH^2 + \sum_{i=1}^n \delta_i, \\
    \beta_0 + \sum_{j \in \cA_n}
    k\bigl(X_i, X_j\bigr) \beta_j
    &=
    \delta_0 + \sum_{j=1}^n
    k(X_i, X_j) \delta_j
    \qquad\text{for all }i \in [n].
  \end{align*}
  Thus, with
  $\tilde f \in \tilde \cH_\bX$ of the form
  $\tilde f(\cdot) \vcentcolon= \beta_0 + \sum_{i \in \cA_n} k(\cdot,
  X_i) \beta_i$,
  we have $\tilde f(X_i) = f(X_i)$ for all $i \in [n]$ and so
  $\ell_n(\tilde f) = \ell_n(f)$.
  Further, $P_n(\tilde f) = P_n(f) = 0$, so
  \begin{align}
    \nonumber
    \bigl\|\tilde f - P_n(\tilde f)\bigr\|_\cH^2
    &= \|\tilde f\|_\cH^2
    = \beta_0^2 \|1_\cX\|_\cH^2 + 2 \beta_0 \sum_{i \in \cA_n} \beta_i
    + \sum_{i \in \cA_n} \sum_{j \in \cA_n} \beta_i \beta_j k(X_i, X_j) \\
    \nonumber
    &= \beta_0
    \biggl( \beta_0 \|1_\cX\|_\cH^2 + \sum_{i \in \cA_n} \beta_i \biggr)
    + \sum_{i \in \cA_n} \beta_i
    \biggl( \beta_0 + \sum_{j \in \cA_n} k(X_i, X_j) \beta_j \biggr) \\
    \nonumber
    &= \beta_0
    \biggl( \delta_0 \|1_\cX\|_\cH^2 + \sum_{i=1}^n \delta_i \biggr)
    + \sum_{i \in \cA_n} \beta_i
    \biggl( \delta_0 + \sum_{j=1}^n k(X_i, X_{j}) \delta_j \biggr) \\
    \nonumber
    &= \delta_0
    \biggl( \beta_0 \|1_\cX\|_\cH^2 + \sum_{i \in \cA_n} \beta_i \biggr)
    + \sum_{i=1}^n \delta_i
    \biggl( \beta_0 + \sum_{j \in \cA_n} k(X_i, X_j) \beta_j \biggr) \\
    \nonumber
    &= \delta_0
    \biggl( \delta_0 \|1_\cX\|_\cH^2 + \sum_{i=1}^n \delta_i \biggr)
    + \sum_{i=1}^n \delta_i
    \biggl( \delta_0 + \sum_{j=1}^n k(X_i, X_j) \delta_j \biggr) \\
    \label{eq:representation_H_norm}
    &= \delta^\T
    \tilde{\bK}
    \delta
    = \|f - P_n(f)\|_\cH^2.
  \end{align}
  Thus, $\check\ell_{n,\gamma}(\tilde f) = \check\ell_{n,\gamma}(f)$,
  so we may indeed restrict the feasible set in
  \eqref{eq:argmin_unconstrained} to $f \in \tilde \cH_\bX$.
  Since $P_n(f) = 0$ for such a function $f$, we may write it as
  $f(\cdot) = \sum_{i \in \cA_n} \tilde k(\cdot, X_i) \beta_i$.

  Next, we show that the objective
  in \eqref{eq:argmin_unconstrained}, and hence also in
  \eqref{eq:argmin},
  is strongly convex as a function of $\beta \in \R^{\cA_n}$.
  Indeed, if $\bA$ is a $d \times d$ positive
  semi-definite matrix whose first $k$ columns
  $\bB_k \in \R^{d \times k}$ are linearly independent,
  then $\bA_k \succ 0$, where $\bA_k$
  is the leading $k \times k$ principal submatrix of $\bA$. To see this,
  suppose that $\bA_k u = 0$ for some $u = (u_1,\ldots,u_k)^\T \in
  \R^k$, and define
  $v \vcentcolon= (u_1, \ldots, u_k, 0, \ldots, 0)^\T \in \R^d$.
  Then $v^\T \bA v = u^\T \bA_k u = 0$,
  so $0 = \bA v = \bB_k u$ and hence $u = 0$. Thus, with
  $\bK_{\cA_n} \in \R^{\cA_n \times \cA_n}$ defined by
  $(\bK_{\cA_n})_{i j} \vcentcolon= \bK_{i j}$
  for $i, j \in \cA_n$ and setting
  \begin{align*}
    \tilde{\bK}_{\cA_n}
    \vcentcolon=
    \begin{pmatrix}
      \|1_\cX\|_\cH^2 & 1_{\cA_n}^\T \\
      1_{\cA_n} & \bK_{\cA_n}
    \end{pmatrix}
    \in \R^{(\{0\} \cup \cA_n) \times (\{0\} \cup \cA_n)},
  \end{align*}
  we have $\tilde{\bK}_{\cA_n} \succ 0$.
  Therefore, by \eqref{eq:representation_H_norm},
  \begin{align}
    \label{eq:representation_strongly_convex}
    \beta \equiv (\beta_i)_{i \in \{0\} \cup \cA_n}
    &\mapsto
    \biggl\|
    \beta_0 + \sum_{i \in \cA_n} k(\cdot, X_i) \beta_i
    \biggr\|_\cH^2
    =
    \beta^\T \tilde{\bK}_{\cA_n} \beta
  \end{align}
  is a strongly convex quadratic function on $\R^{\{0\} \cup \cA_n}$.
  Now, the linear map from
  $\R^{\cA_n}$ to $\R^{\{0\} \cup \cA_n}$ defined by
  $(\beta_i)_{i \in \cA_n} \mapsto (\beta_i)_{i \in \{0\} \cup \cA_n}$
  with $\beta_0 \vcentcolon= -\sum_{i \in \cA_n} \bar k(X_i) \beta_i$
  has (full) rank $|\cA_n|$.
  Hence by~\eqref{eq:representation_strongly_convex}, the function
  $(\beta_i)_{i \in \cA_n} \mapsto
  \bigl\| \sum_{i \in \cA_n} \tilde k(\cdot, X_i) \beta_i
  \bigr\|_\cH^2$ is a strongly convex quadratic on~$\R^{\cA_n}$.
  Moreover, the negative log-partial likelihood $\ell_n$ is convex
  by Lemma~\ref{lem:derivatives_ln}.
  Since $\gamma > 0$,
  \begin{align*}
    (\beta_i)_{i \in \cA_n}
    &\mapsto
    \ell_{n}\biggl(\sum_{i \in \cA_n} \tilde k(\cdot, X_i) \beta_i\biggr)
    + \gamma
    \biggl\| \sum_{i \in \cA_n} \tilde k(\cdot, X_i) \beta_i \biggr\|_\cH^2
  \end{align*}
  is strongly convex, so has a unique minimiser
  $(\hat\beta_i)_{i \in \cA_n}$.
  Measurability of $\hat\beta$ follows since it is the limit of
  a sequence of iterates produced by
  gradient descent started from the origin with
  step sizes chosen by backtracking line search
  \citep[Proposition~12.6.1]{lange2013optimization}.

  For the final part, let $f_1, f_2 \in \argmin_{f \in \cH} \ell_{n,\gamma}(f)$
  with $P_n(f_1) = P_n(f_2) = 0$.
  Define $g:[0,1] \rightarrow \R$ by
  $g(a) \vcentcolon= \ell_{n,\gamma}\bigl((1-a) f_1 + a f_2\bigr)$,
  which is convex by Lemma~\ref{lem:derivatives_ln}.
  Since $g(0) = g(1)$, we have $g(a) = g(0)$ for all
  $a \in [0, 1]$. Thus, it follows that
  $0 = g''(a) = D^2 \ell_n\bigl(a f_1 + (1-a) f_2\bigr)(f_2 -
  f_1)^{\otimes 2} + \gamma \|f_1 - f_2\|_\cH^2$,
  and hence $f_1 = f_2$ by Lemma~\ref{lem:derivatives_ln} again.
\end{proof}

\subsection{Proofs for Section~\ref{sec:analysis}}

We now present the proofs of our main theoretical results. First, we
define a projected estimator $\tilde f_{n,\gamma}$
and derive the basic inequality~\eqref{eq:basic_inequality}.

\subsubsection{Projection and the basic inequality}

For a fixed $b \geq 1$ to be determined, let
\begin{align*}
  \tilde f_{n,\gamma}
  &\vcentcolon=
  f_0 - P_n(f_0) 1_\cX
  + \bigl( \hat f_{n,\gamma} - f_0 + P_n(f_0) 1_\cX \bigr)
  \Biggl\{
    1 \land
    \frac{1} {b \sqrt{H_\gamma}
    \bigl\|\hat f_{n,\gamma} - f_0 + P_n(f_0) 1_\cX \bigr\|_{\cH_\gamma}}
  \Biggr\}.
\end{align*}
Then $\bigl\|\tilde f_{n,\gamma} - f_0 + P_n(f_0) 1_\cX\bigr\|_{\cH_\gamma}^2
\leq 1 / (b^2 H_\gamma)$ and hence
$\bigl\|\tilde f_{n,\gamma} - f_0 + P_n(f_0) 1_\cX\bigr\|_{\infty}
\leq 1/b$ by Lemma~\ref{lem:infty_bounds}.
Now $\tilde f_{n,\gamma}$ lies on the line segment between
$f_0 - P_n(f_0) 1_\cX$ and $\hat f_{n,\gamma}$,
so $P_n(\tilde f_{n,\gamma}) = 0$ by Proposition~\ref{prop:representation}.
Since $\ell_{n,\gamma}$ is convex by Lemma~\ref{lem:derivatives_ln},
by definition of $\hat f_{n,\gamma}$,
\begin{align*}
  \ell_{n, \gamma}\bigl(\tilde f_{n, \gamma}\bigr)
  &\leq
  \ell_{n, \gamma}\bigl(f_0 - P_n(f_0) 1_\cX\bigr)
  \lor
  \ell_{n, \gamma}\bigl(\hat f_{n, \gamma}\bigr)
  \leq
  \ell_{n, \gamma}\bigl(f_0 - P_n(f_0) 1_\cX\bigr).
\end{align*}
Since $\ell_n$ is invariant under addition of constants, we deduce that
\begin{align*}
  0
  &\geq
  \ell_{n}\bigl( \tilde f_{n, \gamma} \bigr)
  - \ell_{n}(f_0)
  + \gamma \bigl\| \tilde f_{n,\gamma} \bigr\|_\cH^2
  - \gamma \bigl\|f_0 - P_n(f_0) 1_\cX\bigr\|_\cH^2.
\end{align*}

For $f_1, f_2 \in \cH$, we have
$\|f_1\|_\cH^2 - \|f_2\|_\cH^2 = \|f_1 - f_2\|_\cH^2
+ 2 \langle f_1 - f_2, f_2 \rangle_\cH$.
By Taylor's theorem, there is a
function $\check f_{n,\gamma}$ on the segment between
$f_0 - P_n(f_0) 1_\cX$ and $\tilde f_{n, \gamma}$ such that
\begin{align}
  \nonumber
  0
  &\geq
  D \ell_n(f_0) \bigl( \tilde f_{n,\gamma} - f_0 \bigr)
  + \frac{1}{2}
  D^2 \ell_n(f_0) \bigl( \tilde f_{n,\gamma} - f_0 \bigr)^{\otimes 2}
  + \frac{1}{6}
  D^3 \ell_n\bigl(\check f_{n,\gamma}\bigr)
  \bigl( \tilde f_{n,\gamma} - f_0 \bigr)^{\otimes 3} \\
  \label{eq:basic_inequality}
  &\quad+
  \gamma \bigl\| \tilde f_{n,\gamma} - f_0 + P_n(f_0) 1_\cX\bigr\|_\cH^2
  + 2 \gamma \bigl\langle \tilde f_{n,\gamma} - f_0 + P_n(f_0) 1_\cX,
  f_0 - P_n(f_0) 1_\cX \bigr\rangle_\cH.
\end{align}

\subsubsection{Convergence of \texorpdfstring{$S_n$}{Sn},
\texorpdfstring{$D S_n$}{DSn} and \texorpdfstring{$D^2 S_n$}{D2Sn}}

Lemmas~\ref{lem:bv_cs} and~\ref{lem:operator_norm_bounds} are
auxiliary results which are applied repeatedly in subsequent proofs.
In Lemmas~\ref{lem:convergence_Sn},~\ref{lem:convergence_DSn}
and~\ref{lem:convergence_D2Sn},
we show convergence of $S_n$, $D S_n$ and $D^2 S_n$
respectively.
\begin{lemma}[A Cauchy--Schwarz inequality for integrals]%
  \label{lem:bv_cs}
  Let $h: [0, 1] \to \R$ be a function of bounded variation.
  Given a countable index set $\mathcal R$, suppose that
  $g_r : [0, 1] \to \R$ for $r \in \mathcal R$
  satisfy $\int_0^1 |g_r(t)| \, |\mathrm{d}{h(t)|} < \infty$.
  Then
  \begin{align*}
    \sum_{r \in \mathcal R}
    \biggl(
      \int_0^1 g_r(t) \diffi h(t)
    \biggr)^2
    &\leq
    \biggl(
      \int_0^1 |\mathrm d h(t)|
    \biggr)^2\sup_{t \in [0, 1]}
    \sum_{r \in \mathcal R}
    g_r^2(t).
  \end{align*}
\end{lemma}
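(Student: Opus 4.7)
The plan is to apply Cauchy--Schwarz twice against the total-variation measure $|\diff h|$ and then interchange sum and integral via Tonelli's theorem. Since $h$ is of bounded variation, the measure $|\diff h|$ is a finite positive Borel measure on $[0,1]$ with total mass $V \vcentcolon= \int_0^1 |\diff h(t)| < \infty$, which is what makes the argument go through.

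First, I would note the pointwise-in-$r$ bound
\[
  \left| \int_0^1 g_r(t) \diff h(t) \right|
  \leq \int_0^1 |g_r(t)| \, |\diff h(t)|,
\]
which is a standard property of integration against a signed measure of bounded variation (decompose $h = h_+ - h_-$ into its Jordan components, apply the triangle inequality, and recombine). Next, for each fixed $r$ I would apply Cauchy--Schwarz to the functions $|g_r|$ and $1$ against the positive finite measure $|\diff h|$, yielding
\[
  \left( \int_0^1 |g_r(t)| \, |\diff h(t)| \right)^2
  \leq V \int_0^1 g_r(t)^2 \, |\diff h(t)|.
\]

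Squaring the first display, summing over $r \in \mathcal R$, and invoking Tonelli's theorem (which is justified because the integrands $g_r^2$ are non-negative) to exchange the sum with the $|\diff h|$-integral, I obtain
\[
  \sum_{r \in \mathcal R} \left( \int_0^1 g_r(t) \diff h(t) \right)^2
  \leq V \int_0^1 \sum_{r \in \mathcal R} g_r(t)^2 \, |\diff h(t)|
  \leq V^2 \sup_{t \in [0,1]} \sum_{r \in \mathcal R} g_r(t)^2,
\]
which is the claimed inequality. There is essentially no serious obstacle: the only points requiring any care are the finiteness of $V$ (immediate from bounded variation) and the Tonelli interchange (immediate from non-negativity after squaring, with no integrability hypothesis needed beyond what is assumed). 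In particular, the stated hypothesis $\int_0^1 |g_r(t)| \, |\diff h(t)| < \infty$ is only used to make the individual integrals $\int_0^1 g_r(t) \diff h(t)$ well-defined, and is not otherwise needed for the chain of inequalities.
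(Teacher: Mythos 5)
Your proposal is correct and is essentially the paper's argument: bound $\bigl|\int g_r\,\diff h\bigr|$ by $\int|g_r|\,|\diff h|$, apply Cauchy--Schwarz against the finite measure $|\diff h|$, then swap sum and integral (the paper cites Fubini where you invoke Tonelli, but the step is the same) and pass to the supremum. Your write-up just makes the intermediate steps more explicit than the paper's one-line chain.
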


\begin{proof}[Proof of Lemma~\ref{lem:bv_cs}]
  By the Cauchy--Schwarz inequality and Fubini's theorem,
  \begin{align*}
    \sum_{r \in \mathcal R}
    \biggl(
      \int_0^1 g_r(t) \diffi h(t)
    \biggr)^2
    &\leq
    \sum_{r \in \mathcal R}
    \int_0^1 g_r^2(t) \, |\mathrm d h(t)|
    \cdot
    \int_0^1 |\mathrm d h(t)| \notag \\
    &\leq
    \biggl(
      \int_0^1 |\mathrm d h(t)|
    \biggr)^2\sup_{t \in [0, 1]}
    \sum_{r \in \mathcal R}
    g_r^2(t).
    \qedhere
  \end{align*}
\end{proof}

\begin{lemma}[Operator norm bounds]%
  \label{lem:operator_norm_bounds}
  Let $G: \cH \to \R$ be a linear operator and $\gamma \geq 0$.
  Define $\|G\|_{\cH, \op}
  \vcentcolon= \sup_{f \in \mathcal{H}:\|f\|_\cH = 1} |G(f)|$
  and $\|G\|_{\cH_\gamma, \op}
  \vcentcolon= \sup_{f \in \mathcal{H}:\|f\|_{\cH_\gamma} = 1} |G(f)|$.
  Then
  \begin{align*}
    \|G\|_{\cH, \op}^2
    \leq
    \sum_{r \in \mathcal{R}}
    \nu_r G(e_r)^2, \qquad
    \|G\|_{\cH_\gamma,\op}^2
    \leq \sum_{r \in \mathcal{R}} a_r G(e_r)^2.
  \end{align*}
\end{lemma}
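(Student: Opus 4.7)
The result is essentially an abstract statement about the norm of a bounded linear functional on a Hilbert space, so my plan is to use the appropriate orthonormal bases combined with the Cauchy--Schwarz inequality.

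First, the paper has already recorded that $(\sqrt{\nu_r} e_r)_{r \in \cR}$ is an orthonormal basis of $\cH$ (via Corollary~10.8.9 of \citet{samworth2025modern}). Consequently, any $f \in \cH$ has a unique expansion $f = \sum_{r \in \cR} \alpha_r \sqrt{\nu_r} e_r$, with the series converging in $\|\cdot\|_\cH$, and $\|f\|_\cH^2 = \sum_{r \in \cR} \alpha_r^2$. Applying the linearity and continuity of $G$ (treating $G$ as bounded, since otherwise both sides are $+\infty$ when interpreted in the usual way), I obtain $G(f) = \sum_r \alpha_r \sqrt{\nu_r} G(e_r)$. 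Cauchy--Schwarz in $\ell_2(\cR)$ then yields
\begin{equation*}
  G(f)^2 \;\leq\; \Bigl(\sum_{r \in \cR} \alpha_r^2\Bigr) \Bigl(\sum_{r \in \cR} \nu_r G(e_r)^2\Bigr) \;=\; \|f\|_\cH^2 \sum_{r \in \cR} \nu_r G(e_r)^2,
\end{equation*}
and taking the supremum over $f$ with $\|f\|_\cH = 1$ establishes the first inequality.

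For the second inequality, I first verify that $(\sqrt{a_r} e_r)_{r \in \cR}$ is an orthonormal basis of $\cH_\gamma$. Orthonormality follows directly from the formula $\langle e_r, e_s\rangle_{\cH_\gamma} = \langle e_r, e_s\rangle_{L_2} + \gamma \langle e_r, e_s\rangle_{\cH}$ combined with $\|e_r\|_\cH^2 = 1/\nu_r$ and $a_r = 1/(1 + \gamma/\nu_r)$, giving $\langle \sqrt{a_r} e_r, \sqrt{a_s} e_s\rangle_{\cH_\gamma} = \mathbbm{1}\{r=s\}$. Completeness for $\gamma > 0$ follows because $\cH_\gamma$ coincides with $\cH$ as a set of functions, the expansion $f = \sum_r \langle f, e_r\rangle_{L_2} e_r$ converges in $\cH$ by Mercer, and $\|\cdot\|_{\cH_\gamma}^2 = \|\cdot\|_{L_2}^2 + \gamma \|\cdot\|_\cH^2$, where $L_2$-convergence is dominated by $\cH$-convergence through Lemma~\ref{lem:infty_bounds}. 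For $\gamma = 0$ the statement reduces to the $L_2$ case, for which the same Cauchy--Schwarz argument applies formally. With the orthonormal basis in hand, the same Cauchy--Schwarz argument as above, now using the expansion $f = \sum_r \beta_r \sqrt{a_r} e_r$ with $\|f\|_{\cH_\gamma}^2 = \sum_r \beta_r^2$, yields the second bound.

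The only conceptual subtlety is completeness of $(\sqrt{a_r} e_r)_{r \in \cR}$ in $\cH_\gamma$, but this is immediate from the chain of norm relationships. There is no real obstacle: the proof is a two-line application of Cauchy--Schwarz to the correct orthonormal expansion, and both stated bounds are in fact equalities whenever $G$ is bounded (by Riesz representation), though only the inequality is needed.
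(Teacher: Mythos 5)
Your proof is correct and takes essentially the same route as the paper: expand $f$ in terms of the eigenfunctions, use linearity of $G$ to pull $G$ through the series, and apply Cauchy--Schwarz after inserting the weights $1/\sqrt{\nu_r}\cdot\sqrt{\nu_r}$ (resp.\ $1/\sqrt{a_r}\cdot\sqrt{a_r}$). The paper phrases this in terms of the coefficients $\langle f, e_r\rangle_{L_2}$ directly, using the identities $\|f\|_\cH^2 = \sum_r \langle f, e_r\rangle_{L_2}^2/\nu_r$ and $\|f\|_{\cH_\gamma}^2 = \sum_r \langle f, e_r\rangle_{L_2}^2/a_r$, rather than explicitly naming $(\sqrt{\nu_r}e_r)_r$ and $(\sqrt{a_r}e_r)_r$ as orthonormal bases and reparametrising by $\alpha$ and $\beta$; this is a cosmetic difference only. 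Your extra remarks (that $G$ must be bounded to interchange $G$ with the series, and that both inequalities are in fact equalities for bounded $G$ by Riesz representation) are correct observations that the paper leaves implicit.
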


\begin{proof}[Proof of Lemma~\ref{lem:operator_norm_bounds}]
  By linearity of $G$ and Cauchy--Schwarz,
  for any $f \in \cH$,
  \begin{equation*}
    G(f)^2
    =
    \biggl(
      \sum_{r \in \mathcal{R}} \langle f, e_r \rangle_{L_2} G(e_r)
    \biggr)^2
    =
    \biggl(
      \sum_{r \in \mathcal{R}}
      \frac{1}{\sqrt{\nu_r}}
      \langle f, e_r \rangle_{L_2}
      \cdot \sqrt{\nu_r} G(e_r)
    \biggr)^2
    \leq
    \|f\|_\cH^2
    \sum_{r \in \mathcal{R}}
    \nu_r
    G(e_r)^2.
  \end{equation*}
  Similarly,
  \begin{equation*}
    G(f)^2
    =
    \biggl(
      \sum_{r \in \mathcal{R}}
      \frac{1}{\sqrt{a_r}}
      \langle f, e_r \rangle_{L_2}
      \cdot \sqrt{a_r} G(e_r)
    \biggr)^2
    \leq
    \|f\|_{\cH_\gamma}^2
    \sum_{r \in \mathcal{R}}
    a_r
    G(e_r)^2.
    \qedhere
  \end{equation*}
\end{proof}

\begin{lemma}[Convergence of $S_n$]%
  \label{lem:convergence_Sn}
  There exists a universal constant $C > 0$ such that
  for all $s \geq 1$ and $f \in \cB(\cX)$, with probability at least
  $1 - e^{-s^2}$,
  \begin{align*}
    \sup_{t \in [0, 1]}
    \bigl|S_n(f, t) - S(f, t) \bigr|
    &\leq
    \frac{C s e^{\|f\|_\infty}}{\sqrt n}.
  \end{align*}
\end{lemma}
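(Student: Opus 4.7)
I would recast the quantity as the supremum of an empirical process, namely
\[
  S_n(f, t) - S(f, t) = (P_n - P)g_t, \qquad g_t(x, \tau) \vcentcolon= e^{f(x)} \mathbbm{1}_{\{\tau \geq t\}},
\]
where $P$ is the joint law of $(X, T)$ and $P_n$ its empirical counterpart based on $(X_i, T_i)_{i \in [n]}$. Setting $\mathcal{F} \vcentcolon= \{g_t : t \in [0, 1]\}$, the goal reduces to a uniform bound on $\sup_{g \in \mathcal{F}} |(P_n - P) g|$. The plan is to combine a maximal-inequality bound on the expected supremum with McDiarmid's bounded differences inequality for concentration.

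The class $\mathcal{F}$ has the constant envelope $e^{\|f\|_\infty}$, and the subclass $\{\mathbbm{1}_{\{\tau \geq t\}} : t \in [0, 1]\}$ consists of indicators of left-closed half-lines in $(0, \infty)$, so it is a VC-subgraph class of VC-dimension at most~$2$. Multiplication by the fixed non-negative function $(x, \tau) \mapsto e^{f(x)}$ preserves the VC-subgraph property. By a Gin\'e--Zinn symmetrisation step followed by a Dudley chaining bound exploiting the polynomial $L_2(P_n)$-covering numbers of a uniformly bounded VC-subgraph class, there exists a universal $C_1 > 0$ with
\[
  \E \Bigl[ \sup_{t \in [0, 1]} \bigl| S_n(f, t) - S(f, t) \bigr| \Bigr]
  \leq \frac{C_1 e^{\|f\|_\infty}}{\sqrt n}.
\]
Equivalently, one may invoke the symmetrisation toolkit for $U$-processes in Appendix~\ref{sec:u_statistics}, specialised to degree one, to reduce to a Rademacher complexity over $\mathcal{F}$ that is bounded in the same way.

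For concentration, changing a single pair $(X_i, T_i)$ alters $\sup_{t \in [0, 1]} |S_n(f, t) - S(f, t)|$ by at most $2 e^{\|f\|_\infty}/n$, so McDiarmid's inequality yields, for every $s \geq 1$, with probability at least $1 - e^{-s^2}$,
\[
  \sup_{t \in [0, 1]} \bigl| S_n(f, t) - S(f, t) \bigr|
  \leq \E\Bigl[\sup_{t \in [0, 1]} \bigl| S_n(f, t) - S(f, t) \bigr|\Bigr]
  + \frac{C_2 s e^{\|f\|_\infty}}{\sqrt n}.
\]
Combining with the previous display and using $s \geq 1$ to absorb $C_1$ into a single universal constant gives the claim. (Measurability of the supremum is not an issue since $t \mapsto S_n(f, t) - S(f, t)$ is a difference of monotone functions and therefore determined by its values on a countable dense subset of $[0, 1]$.)

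\textbf{Main obstacle.} The only non-routine step is the $O(e^{\|f\|_\infty}/\sqrt n)$ bound on the expected supremum. A purely elementary route -- a union bound over the at most $n + 1$ distinct levels in $\{T_i\}_{i \in [n]} \cup \{0, 1\}$ where $S_n(f, \cdot)$ changes value, combined with Hoeffding -- would yield the right rate only up to a $\sqrt{\log n}$ factor and would force $s \gtrsim \sqrt{\log n}$. The VC-subgraph chaining step is precisely what removes this logarithm and delivers the stated conclusion uniformly over $s \geq 1$.
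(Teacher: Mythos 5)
Your argument is correct, but it takes a genuinely different route from the paper's. You treat $S_n(f,t)-S(f,t)$ directly as a first-order empirical process indexed by $t$, bound its expected supremum via the VC-subgraph property of $\{(x,\tau)\mapsto e^{f(x)}\mathbbm{1}_{\{\tau\ge t\}}: t\in[0,1]\}$ together with symmetrisation and Dudley chaining, and then obtain the $e^{-s^2}$ tail from the bounded differences inequality; the constants work out since each data point influences the supremum by at most $2e^{\|f\|_\infty}/n$, so McDiarmid with deviation $\asymp s e^{\|f\|_\infty}/\sqrt{n}$ gives exactly the claimed probability, and $s\ge 1$ absorbs the mean term. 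The paper instead squares the deviation, splits $\sup_t\bigl(S_n(f,t)-S(f,t)\bigr)^2$ into a diagonal term (bounded deterministically by $4e^{2\|f\|_\infty}/n$) and a degenerate second-order $U$-process, and controls the latter by an explicit covering construction in $t$ combined with the decoupling/chaining Orlicz-norm bound of Lemma~\ref{lem:orliczUStat}. For this particular lemma your approach is more standard and lighter; the paper's squared/$U$-process template has the advantage that it carries over verbatim to Lemmas~\ref{lem:convergence_DSn} and~\ref{lem:convergence_D2Sn}, where the quantities involve sums like $\sum_r a_r(\cdot)^2$ and are genuinely second-order $U$-processes, so a scalar empirical-process bound of your type would not suffice there without Hilbert-space-valued concentration tools. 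Two minor points: your aside that the appendix $U$-process toolkit "specialised to degree one" would do the same job is not quite right as stated, since those lemmas are formulated for degenerate second-order kernels, but it is inessential to your proof; and the measurability remark is better justified by noting that both $t\mapsto S_n(f,t)$ and $t\mapsto S(f,t)$ are left-continuous, so the supremum equals the supremum over $\{0\}\cup(\Q\cap(0,1])$ — a difference of monotone functions alone need not be determined by its values on a countable dense set.
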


\begin{proof}[Proof of Lemma~\ref{lem:convergence_Sn}]
  For $i \in [n]$ and $t \in [0, 1]$, define
  \begin{align*}
    \xi_i(t) \vcentcolon= R_i(t) e^{f(X_i)}
    - \int_\cX q(x, t) e^{f(x)} \diffi P_X(x),
  \end{align*}
  noting that $|\xi_i(t)| \leq 2 e^{\|f\|_\infty}$. It follows that
  with $\mathcal{I} := \{(i,j):i,j \in [n],i<j\}$,
  \begin{align*}
    \sup_{t \in [0, 1]}
    \bigl( S_n(f, t) - S(f, t) \bigr)^2
    &=
    \sup_{t \in [0, 1]}
    \frac{1}{n^2}
    \sum_{i=1}^n
    \sum_{j=1}^n
    \xi_i(t)
    \xi_j(t) \\
    &\leq
    \sup_{t \in [0, 1]}
    \frac{1}{n^2}
    \sum_{i=1}^n
    \xi_i(t)^2
    + \sup_{t \in [0, 1]}
    \frac{2}{n^2}
    \sum_{(i,j) \in \mathcal{I}}
    \xi_i(t)
    \xi_j(t) \\
    &\leq
    \frac{4 e^{2 \|f\|_\infty}}{n}
    + \sup_{t \in [0, 1]}
    \frac{2}{n(n-1)}
    \biggl|
    \sum_{(i,j) \in \mathcal{I}}
    \xi_i(t)
    \xi_j(t)
    \biggr|.
  \end{align*}
  The second term in the above display is a degenerate $U$-process
  \citep[Chapter~5]{de1999decoupling} as
  $\E \bigl\{ \xi_{i}(t) \xi_{j}(t) \mid X_i, T_i, I_i \bigr\}
  = \xi_{i}(t) \, \E \bigl\{ \xi_{j}(t) \bigr\} = 0$ for $(i,j) \in
  \mathcal{I}$ and $t \in [0, 1]$.
  We now let $\mathcal{Q}$ denote the set of finitely-supported
  probability measures on $(\mathcal{X} \times [0,1] \times \{0,1\})^2$,
  and for $Q \in \mathcal{Q}$, seek to bound the $L_2(Q)$-covering
  number (see Definition~\ref{def:covering}) of the class
  \begin{align*}
    \mathcal G
    &=
    \biggl\{
      \big((X_i, T_i, I_i), (X_j, T_j, I_j)\big)
      \mapsto \xi_{i}(t) \xi_{j}(t) :
      t \in [0, 1]
    \biggr\}.
  \end{align*}
  For $0 \leq t < t' \leq 1$ and $i, j \in [n]$,
  \begin{align*}
    |\xi_{i}(t)|
    | \xi_{j}(t) - \xi_{j}(t')|
    &\leq
    2 e^{2 \|f\|_\infty}
    \bigl\{
      \mathbbm 1_{\{t \leq T_j < t'\}}
      + \P(t \leq T < t')
    \bigr\}.
  \end{align*}
  As such, we deduce that
  \begin{align*}
    |\xi_{i}(t) \xi_{j}(t)
    - \xi_{i}(t') \xi_{j}(t')|
    &\leq
    |\xi_{i}(t)| | \xi_{j}(t) - \xi_{j}(t')|
    + |\xi_{j}(t')| | \xi_{i}(t) - \xi_{i}(t')| \\
    &\leq
    2 e^{2\|f\|_\infty}
    \bigl\{
      \mathbbm 1_{\{t \leq T_i < t'\}}
      + \mathbbm 1_{\{t \leq T_j < t'\}}
      + 2 \P(t \leq T < t')
    \bigr\}.
  \end{align*}
  It follows that
  \begin{align*}
    &\E_Q
    \Bigl[
      \bigl(
        \xi_{i}(t) \xi_{j}(t) - \xi_{i}(t') \xi_{j}(t')
      \bigr)^2
    \Bigr] \\
    &\quad\leq
    12 e^{4 \|f\|_\infty}
    \bigl\{
      \P_Q(t \leq T_i < t')
      + \P_Q(t \leq T_j < t')
      + 4 \P(t \leq T < t')^2
    \bigr\}.
  \end{align*}
  Given $\varepsilon \in (0,1)$, define $\mathcal{T}_1 :=
  \{t_1,\ldots,t_{\lceil 1/\varepsilon^2 \rceil}\}$, where
  \[
    t_m := \inf\{t \in [0,1]: \P_Q(T_i \leq t) \geq m\varepsilon^2\}
  \]
  for $m \in [\lceil 1/\varepsilon^2 \rceil - 1]$ and $t_{\lceil
  1/\varepsilon^2 \rceil} := 1$, so that given any $t \in [0,1]$,
  there exists $m \in [\lceil 1/\varepsilon^2 \rceil]$ such that
  $\P_Q(t \leq T_i < t_m) \leq \varepsilon^2$.
  Let $\cT_2$ and $\cT_3$ satisfy the same properties but for
  $T_j$ under $Q$ and $T$ under $P_T$ respectively.
  Then the set $\cT = \cT_1 \cup \cT_2 \cup \cT_3$ is a
  $\sqrt{72} e^{2 \|f\|_\infty} \varepsilon$-cover
  of $\mathcal G$ under $L_2(Q)$, of cardinality at most $6/\varepsilon^2$.
  Defining $\varepsilon' \vcentcolon = 3\varepsilon/ \sqrt{2} \in
  (0,3/\sqrt{2})$ and $M \vcentcolon = 4 e^{2 \|f\|_\infty}$,
  \begin{align*}
    N\big(\varepsilon' M,\mathcal G, \|\cdot\|_{Q,2}\big)
    \leq \frac{27}{\varepsilon^{\prime 2}}.
  \end{align*}
  Hence, since $\log(1+x) \leq 1 + \log x$ for $x \geq 1$,
  \begin{align*}
    J
    &\vcentcolon=
    \sup_{Q \in \mathcal{Q}}
    \int_0^{1}
    \log
    \bigl(
      1 +
      N\bigl(\varepsilon M, \mathcal{G}, \|\cdot\|_{Q,2}\bigr)
    \bigr)
    \diffi \varepsilon
    \leq
    1 + \int_0^{1}
    \log(27 / \varepsilon^2)
    \diffi \varepsilon = 1 + \log(27) + 2
    \leq 7.
  \end{align*}
  It follows by Lemma~\ref{lem:orliczUStat} that there exists a
  universal constant $C_1 > 0$ such that
  for all $s \geq 1$,
  \begin{align*}
    \P \biggl(
      \sup_{t \in [0, 1]}
      \frac{2}{n(n-1)}
      \biggl|
      \sum_{(i,j) \in \mathcal{I}}
      \xi_{i}(t) \xi_{j}(t)
      \biggr|
      > \frac{C_1 s e^{2 \|f\|_\infty}}{n}
    \biggr)
    \leq e^{-s}.
  \end{align*}
  By combining this with the bound \eqref{eq:diagonal_bound} on the
  diagonal elements, there exists a universal constant $C_2 > 0$ such that
  for all $s \geq 1$,
  with probability at least $1 - e^{-s^2}$,
  \begin{align*}
    \sup_{t \in [0, 1]}
    \big( S_n(f, t) - S(f, t) \big)^2
    &\leq
    \frac{C_2 s^2 e^{2 \|f\|_\infty}}{n}.
    \qedhere
  \end{align*}
\end{proof}

\begin{lemma}[Convergence of $D S_n$]%
  \label{lem:convergence_DSn}
  Let $f \in \cB(\cX)$.
  There exists a universal constant $C > 0$ such that
  for all $s \geq 1$, with probability at least $1 - e^{-s^2}$,
  \begin{align*}
    \sup_{t \in [0, 1]}
    \sum_{r \in \mathcal{R}} a_r
    \big(
      D S_n(f, t)(e_r) - D S(f, t)(e_r)
    \big)^2
    \leq
    \frac{C s^2 e^{2 \|f\|_\infty} H_\gamma}{n}.
  \end{align*}
\end{lemma}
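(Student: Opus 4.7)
The plan is to mirror the proof of Lemma~\ref{lem:convergence_Sn}, absorbing the weights $a_r$ into the ``regularised'' kernel
\[
  k_\gamma(x, y) \vcentcolon= \sum_{r \in \cR} a_r e_r(x) e_r(y),
\]
which is positive definite (since $a_r > 0$) and satisfies two key estimates: the Cauchy--Schwarz pointwise bound $|k_\gamma(x,y)| \leq \sqrt{k_\gamma(x,x) k_\gamma(y,y)} \leq H_\gamma$, and the integral identity $\sum_{r \in \cR} a_r = \int_\cX k_\gamma(x,x) \diff P_X(x) \leq H_\gamma$, obtained by integrating the pointwise definition of $H_\gamma$ against $P_X$ and using $\|e_r\|_{L_2} = 1$. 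Setting
\[
  \eta_{i,r}(t) \vcentcolon= R_i(t) e^{f(X_i)} e_r(X_i) - \int_\cX q(x,t) e^{f(x)} e_r(x) \diff P_X(x),
\]
so that $\E \eta_{i,r}(t) = 0$ by Lemma~\ref{lem:derivatives_S}, the target quantity becomes $n^{-2} \sum_{i, j = 1}^n \zeta_{ij}(t)$ where $\zeta_{ij}(t) \vcentcolon= \sum_{r \in \cR} a_r \eta_{i,r}(t) \eta_{j,r}(t)$.

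First I would isolate the diagonal contribution $n^{-2} \sum_i \zeta_{ii}(t)$. Using $(a-b)^2 \leq 2a^2 + 2b^2$, the empirical part contributes $\sum_r a_r R_i(t)^2 e^{2f(X_i)} e_r(X_i)^2 \leq e^{2\|f\|_\infty} H_\gamma$ directly from the definition of $H_\gamma$, while the deterministic part satisfies $\sum_r a_r \langle q(\cdot,t) e^{f(\cdot)}, e_r\rangle_{L_2}^2 \leq e^{2\|f\|_\infty} \sum_{r \in \cR} a_r \leq e^{2\|f\|_\infty} H_\gamma$ by Cauchy--Schwarz and the integral bound above. This yields a deterministic diagonal contribution of order $e^{2\|f\|_\infty} H_\gamma / n$.

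The off-diagonal term $\frac{2}{n(n-1)} \sum_{i < j} \zeta_{ij}(t)$ is a degenerate $U$-process, since $\E[\eta_{j,r}(t) \mid X_i, T_i, I_i] = 0$. I would expand $\zeta_{ij}(t)$ via $k_\gamma$ into four pieces: an $R_i(t) R_j(t) e^{f(X_i) + f(X_j)} k_\gamma(X_i, X_j)$ term, two symmetric cross terms of the form $R_i(t) e^{f(X_i)} \int q(y,t) e^{f(y)} k_\gamma(X_i, y) \diff P_X(y)$, and a fully integrated term. Each piece is bounded pointwise by $C e^{2\|f\|_\infty} H_\gamma$ using $|k_\gamma(x,y)| \leq H_\gamma$, yielding an envelope of order $e^{2\|f\|_\infty} H_\gamma$ for the $U$-process class $\mathcal G \vcentcolon= \{\zeta_{ij}(\cdot, t) : t \in [0,1]\}$. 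Since $\zeta_{ij}(t)$ depends on $t$ only through $R_i(t)$, $R_j(t)$ and $q(\cdot, t)$, the same expansion yields the modulus-of-continuity bound
\[
  |\zeta_{ij}(t) - \zeta_{ij}(t')| \leq C e^{2\|f\|_\infty} H_\gamma \bigl\{\mathbbm 1_{\{t \leq T_i < t'\}} + \mathbbm 1_{\{t \leq T_j < t'\}} + \P(t \leq T < t')\bigr\}
\]
for $0 \leq t < t' \leq 1$, where the last term absorbs the variation in the $q(\cdot, t)$ integrals via the crude estimate $|q(y,t) - q(y,t')| \leq 1$ combined with $\int_\cX |q(y,t) - q(y,t')| \diff P_X(y) \leq \P(t \leq T < t')$. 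The quantile-based $\varepsilon$-net construction used in the proof of Lemma~\ref{lem:convergence_Sn} then carries over verbatim, giving a uniform entropy integral of $O(1)$.

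Applying the $U$-process concentration inequality of Lemma~\ref{lem:orliczUStat} yields, for all $s \geq 1$, an off-diagonal bound of order $s e^{2\|f\|_\infty} H_\gamma / n$ with probability at least $1 - e^{-s}$; replacing $s$ by $s^2$ produces the desired $1 - e^{-s^2}$ tail at the cost of an extra factor of $s$, and combining with the deterministic diagonal bound completes the proof. The main subtlety I anticipate is tracking the correct $H_\gamma$-dependence (rather than the cruder $K/\gamma$): this hinges on using the two properties of $k_\gamma$ in concert—its pointwise bound for the $R_i R_j$ and single-integral terms, and its integrated bound together with Cauchy--Schwarz for the fully integrated diagonal term—so that all contributions scale with the same $H_\gamma$.
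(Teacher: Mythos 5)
Your proposal is correct and follows essentially the same route as the paper's proof: the same decomposition into a deterministic diagonal term (bounded via $\sum_{r \in \cR} a_r e_r(x)^2 \leq H_\gamma$ and $\sum_{r \in \cR} a_r \leq H_\gamma$) plus a degenerate second-order $U$-process, the same Cauchy--Schwarz control of the increments leading to the indicator-plus-$\P(t \leq T < t')$ modulus bound, the same quantile-based covers with an $O(1)$ entropy integral, and the same application of Lemma~\ref{lem:orliczUStat} with the substitution $s \mapsto s^2$. Packaging the sums over $r$ into the kernel $k_\gamma(x,y) = \sum_{r \in \cR} a_r e_r(x) e_r(y)$ is only a notational repackaging of the paper's argument, not a different method.
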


\begin{proof}[Proof of Lemma~\ref{lem:convergence_DSn}]

  For $i \in [n]$, $r \in \mathcal{R}$ and $t \in [0,1]$, define
  \begin{align*}
    \xi_{i r}(t) \vcentcolon= R_i(t) e^{f(X_i)} e_r(X_i)
    - \int_\cX q(x, t) e^{f(x)} e_r(x) \diffi P_X(x),
  \end{align*}
  noting that by
  the Cauchy--Schwarz inequality and Fubini's theorem,
  \begin{align*}
    \sum_{r \in \mathcal R}
    a_r
    \xi_{i r}(t)^2
    &\leq
    2 \sum_{r \in \mathcal R}
    a_r
    R_i(t) e^{2 f(X_i)} e_r(X_i)^2
    + 2 \sum_{r \in \mathcal R}
    a_r
    \biggl(
      \int_\cX q(x, t) e^{f(x)} e_r(x) \diffi P_X(x)
    \biggr)^2 \\
    &\leq
    4
    e^{2 \|f\|_\infty}
    H_\gamma.
  \end{align*}
  Thus with $\mathcal{I} := \{(i,j):i,j \in [n],i<j\}$,
  \begin{align}
    \nonumber
    &\sup_{t \in [0, 1]}
    \sum_{r \in \mathcal{R}}
    a_r
    \bigl(
      D S_n(f, t)(e_r) - D S(f, t)(e_r)
    \bigr)^2
    = \sup_{t \in [0, 1]}
    \frac{1}{n^2}
    \sum_{i=1}^n \sum_{j=1}^n
    \sum_{r \in \mathcal{R}}
    a_r \xi_{i r}(t) \xi_{j r}(t) \\
    \nonumber
    &\quad\leq
    \sup_{t \in [0, 1]}
    \frac{1}{n^2}
    \sum_{i=1}^n
    \sum_{r \in \mathcal{R}}
    a_r \xi_{i r}(t)^2
    + \sup_{t \in [0, 1]}
    \frac{2}{n^2}
    \sum_{(i,j) \in \mathcal{I}}
    \sum_{r \in \mathcal{R}}
    a_r \xi_{i r}(t) \xi_{j r}(t) \\
    \label{eq:diagonal_bound}
    &\quad\leq
    \frac{4 e^{2 \|f\|_\infty} H_\gamma}{n}
    + \sup_{t \in [0, 1]}
    \biggl|
    \frac{2}{n(n-1)}
    \sum_{(i,j) \in \mathcal{I}}
    \sum_{r \in \mathcal{R}}
    a_r \xi_{i r}(t) \xi_{j r}(t)
    \biggr|.
  \end{align}
  The second term in \eqref{eq:diagonal_bound}
  is a degenerate second-order $U$-process
  \citep[Chapter~5]{de1999decoupling}, since by
  the dominated convergence theorem, for each
  $(i, j) \in \mathcal I$ and $t \in [0,1]$,
  \begin{align*}
    \E \biggl(
      \sum_{r \in \mathcal{R}}
      a_r \xi_{i r}(t) \xi_{j r}(t)
      \Bigm|
      X_i,T_i,I_i
    \biggr)
    &=
    \sum_{r \in \mathcal{R}}
    a_r
    \xi_{i r}(t)
    \, \E \bigl\{
      \xi_{j r}(t)
    \bigr\}
    = 0.
  \end{align*}
  We now let $\mathcal{Q}$ denote the set of finitely-supported
  probability measures on $(\mathcal{X} \times [0,1] \times \{0,1\})^2$,
  and for $Q \in \mathcal{Q}$, seek to bound the $L_2(Q)$-covering
  number of the function class
  \begin{align*}
    \mathcal G
    &=
    \biggl\{
      \big((X_i, T_i, I_i), (X_j, T_j, I_j)\big)
      \mapsto
      \sum_{r \in \mathcal{R}}
      a_r \xi_{i r}(t) \xi_{j r}(t) :
      t \in [0, 1]
    \biggr\}.
  \end{align*}
  For $0 \leq t < t' \leq 1$ and $i, j \in [n]$,
  \begin{align*}
    \sum_{r \in \mathcal R} &a_r
    |\xi_{i r}(t)|
    | \xi_{j r}(t) - \xi_{j r}(t')| \\
    &\leq
    e^{2\|f\|_\infty}
    \sum_{r \in \mathcal R} a_r
    \biggl(
      |e_r(X_i)| + \int_\cX |e_r(x)| \diffi P_X(x)
    \biggr)
    |e_r(X_j)|
    |R_j(t) - R_j(t')| \\
    &\quad+
    e^{2 \|f\|_\infty}
    \sum_{r \in \mathcal R} a_r
    \biggl(
      |e_r(X_i)| + \int_\cX |e_r(x)| \diffi P_X(x)
    \biggr)
    \int_\cX
    |e_r(x)|
    | q(x, t) - q(x, t')|
    \diffi P_X(x) \\
    &\leq
    2 e^{2\|f\|_\infty}
    H_\gamma
    \bigl\{
      \mathbbm 1_{\{t \leq T_j < t'\}}
      + \P(t \leq T < t')
    \bigr\}.
  \end{align*}
  As such, we deduce that
  \begin{align*}
    \biggl|
    \sum_{r \in \mathcal R} a_r
    \xi_{i r}(t) \xi_{j r}(t)
    &- \sum_{r \in \mathcal R} a_r
    \xi_{i r}(t') \xi_{j r}(t')
    \biggr| \\
    &\quad\leq
    \sum_{r \in \mathcal R} a_r
    |\xi_{i r}(t)|
    | \xi_{j r}(t) - \xi_{j r}(t')|
    + \sum_{r \in \mathcal R} a_r
    |\xi_{j r}(t')|
    | \xi_{i r}(t) - \xi_{i r}(t')| \\
    &\quad\leq
    2 e^{2\|f\|_\infty}
    H_\gamma
    \bigl\{
      \mathbbm 1_{\{t \leq T_i < t'\}}
      + \mathbbm 1_{\{t \leq T_j < t'\}}
      + 2 \P(t \leq T < t')
    \bigr\}.
  \end{align*}
  It follows that
  \begin{align*}
    &\E_Q
    \biggl[
      \biggl(
        \sum_{r \in \mathcal{R}}
        a_r
        \bigl\{\xi_{i r}(t) \xi_{j r}(t) - \xi_{i r}(t') \xi_{j r}(t')\bigr\}
      \biggr)^2
    \biggr] \\
    &\quad\leq
    12 e^{4 \|f\|_\infty}
    H_\gamma^2
    \bigl\{
      \P_Q(t \leq T_i < t')
      + \P_Q(t \leq T_j < t')
      + 4 \P(t \leq T < t')^2
    \bigr\}.
  \end{align*}
  Given $\varepsilon \in (0,1)$, define $\mathcal{T}_1 :=
  \{t_1,\ldots,t_{\lceil 1/\varepsilon^2 \rceil}\}$, where
  \[
    t_m := \inf\{t \in [0,1]: \P_Q(T_i \leq t) \geq m\varepsilon^2\}
  \]
  for $m \in [\lceil 1/\varepsilon^2 \rceil - 1]$ and $t_{\lceil
  1/\varepsilon^2 \rceil} := 1$, so that given any $t \in [0,1]$,
  there exists $m \in [\lceil 1/\varepsilon^2 \rceil]$ such that
  $\P_Q(t \leq T_i < t_m) \leq \varepsilon^2$.
  Let $\cT_2$ and $\cT_3$ satisfy the same properties but for
  $T_j$ under $Q$ and $T$ under $P_T$ respectively.
  Then the set $\cT = \cT_1 \cup \cT_2 \cup \cT_3$ is a
  $\sqrt{72} e^{2 \|f\|_\infty} H_\gamma \varepsilon$-cover
  of $\mathcal G$ under $L_2(Q)$, of cardinality at most $6/\varepsilon^2$.
  Defining $\varepsilon' \vcentcolon = 3\varepsilon/ \sqrt{2} \in
  (0,3/\sqrt{2})$ and $M \vcentcolon = 4 e^{2 \|f\|_\infty} H_\gamma$,
  we deduce that
  \begin{align*}
    N\big(\varepsilon' M,\mathcal G, \|\cdot\|_{Q,2}\big)
    \leq \frac{27}{\varepsilon^{\prime 2}}.
  \end{align*}
  Hence, since $\log(1+x) \leq 1 + \log x$ for $x \geq 1$,
  \begin{align*}
    J
    &\vcentcolon=
    \sup_{Q \in \mathcal{Q}}
    \int_0^{1}
    \log
    \bigl(
      1 +
      N\bigl(\varepsilon M, \mathcal{G}, \|\cdot\|_{Q,2}\bigr)
    \bigr)
    \diffi \varepsilon
    \leq
    1 + \int_0^{1}
    \log(27 / \varepsilon^2)
    \diffi \varepsilon = 1 + \log(27) + 2
    \leq 7.
  \end{align*}
  It follows by Lemma~\ref{lem:orliczUStat} that there exists a
  universal constant $C_1 > 0$ such that
  for all $s \geq 1$,
  \begin{align*}
    \P \biggl(
      \sup_{t \in [0, 1]}
      \frac{2}{n(n-1)}
      \sum_{(i,j) \in \mathcal{I}}
      \sum_{r \in \mathcal{R}}
      a_r \xi_{i r}(t) \xi_{j r}(t)
      > \frac{C_1 s^2 e^{2 \|f\|_\infty} H_\gamma}{n}
    \biggr)
    \leq e^{-s^2}.
  \end{align*}
  By combining this with the bound \eqref{eq:diagonal_bound} on the
  diagonal elements, there exists a universal constant $C_2 > 0$ such that
  \begin{align*}
    \sup_{t \in [0, 1]}
    \sum_{r \in \mathcal{R}} a_r
    \Big(
      D S_n(f, t)(e_r) - D S(f, t)(e_r)
    \Big)^2
    \leq
    \frac{C_2 s^2 e^{2 \|f\|_\infty} H_\gamma}{n}
  \end{align*}
  with probability at least $1 - e^{-s^2}$.
\end{proof}

\begin{lemma}[Convergence of $D^2 S_n$]%
  \label{lem:convergence_D2Sn}
  Let $f \in \cB(\cX)$.
  There is a universal constant $C > 0$ such that
  for all $s \geq 1$, with probability at least $1 - e^{-s^2}$,
  \begin{align*}
    \sup_{t \in [0, 1]}
    \sum_{r \in \mathcal{R}} \sum_{r' \in \mathcal{R}}
    a_r a_{r'}
    \big( D^2 S_n(f, t)(e_r, e_{r'}) - D^2 S(f, t)(e_r, e_{r'}) \big)^2
    &\leq
    \frac{C s^2 e^{2 \|f\|_\infty} H_\gamma^2}{n}.
  \end{align*}

\end{lemma}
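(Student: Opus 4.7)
The argument closely parallels the proof of Lemma~\ref{lem:convergence_DSn}, but with a double summation over $(r,r') \in \mathcal{R}^2$ in place of the single summation over $r \in \mathcal{R}$. For $i \in [n]$, $r,r' \in \mathcal{R}$ and $t \in [0,1]$, I would set
\begin{equation*}
\xi_{i,r,r'}(t) \vcentcolon= R_i(t) e^{f(X_i)} e_r(X_i) e_{r'}(X_i) - \int_\cX q(x,t) e^{f(x)} e_r(x) e_{r'}(x) \diff P_X(x),
\end{equation*}
so that by Lemmas~\ref{lem:derivatives_Sn} and~\ref{lem:derivatives_S} the quantity inside the supremum equals $n^{-2} \sum_{i,j=1}^n \sum_{r,r'} a_r a_{r'} \xi_{i,r,r'}(t) \xi_{j,r,r'}(t)$, which splits into a diagonal term (over $i=j$) and twice a degenerate second-order $U$-process (over $i < j$), just as in Lemmas~\ref{lem:convergence_Sn} and~\ref{lem:convergence_DSn}.

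The main technical step is the uniform envelope bound $\sup_{t \in [0,1]} \sum_{r,r'} a_r a_{r'} \xi_{i,r,r'}(t)^2 \leq 4 e^{2\|f\|_\infty} H_\gamma^2$. The data-dependent part factorises as $R_i(t) e^{2f(X_i)} \bigl(\sum_r a_r e_r(X_i)^2\bigr)^2 \leq e^{2\|f\|_\infty} H_\gamma^2$. For the integral part, the key rearrangement is
\begin{equation*}
\sum_{r,r'} a_r a_{r'} \biggl(\int_\cX q(x,t) e^{f(x)} e_r(x) e_{r'}(x) \diff P_X(x)\biggr)^2 = \iint q(x,t) q(y,t) e^{f(x) + f(y)} K(x,y)^2 \diff P_X(x) \diff P_X(y),
\end{equation*}
where $K(x,y) \vcentcolon= \sum_r a_r e_r(x) e_r(y)$. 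Cauchy--Schwarz in $\ell_2$ yields $K(x,y)^2 \leq K(x,x) K(y,y) \leq H_\gamma^2$, so this term is at most $e^{2\|f\|_\infty} H_\gamma^2$. The envelope bound delivers a diagonal contribution of $4 e^{2\|f\|_\infty} H_\gamma^2 / n$.

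For the off-diagonal $U$-process, with summand $g_t(\cdot,\cdot) \vcentcolon= \sum_{r,r'} a_r a_{r'} \xi_{i,r,r'}(t)\xi_{j,r,r'}(t)$, degeneracy follows from $\E \xi_{j,r,r'}(t) = 0$. Applying Cauchy--Schwarz in the measure $a_r a_{r'}$ together with the Step~2 envelope to the increments $g_t - g_{t'}$ gives
\begin{equation*}
|g_t - g_{t'}| \leq 2 e^{2\|f\|_\infty} H_\gamma^2 \bigl\{\mathbbm{1}_{\{t \leq T_i < t'\}} + \mathbbm{1}_{\{t \leq T_j < t'\}} + 2\P(t \leq T < t')\bigr\},
\end{equation*}
exactly analogous to Lemma~\ref{lem:convergence_DSn} but with $H_\gamma^2$ in place of $H_\gamma$. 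The quantile-grid construction over $T_i$, $T_j$ and $T$ then furnishes an $L_2(Q)$-cover of $\mathcal{G} \vcentcolon= \{g_t : t \in [0,1]\}$ of size $O(\varepsilon^{-2})$ at scale proportional to $\varepsilon \cdot e^{2\|f\|_\infty} H_\gamma^2$, with a uniformly bounded Dudley-type integral. Lemma~\ref{lem:orliczUStat} then yields the claimed concentration with rate $s^2 e^{2\|f\|_\infty} H_\gamma^2 / n$.

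The principal obstacle is the envelope in the second step: a naive application of Bessel's inequality to the integral part produces a factor of $\sum_r a_r$, which is not controlled by $H_\gamma^2$ in general. The clean route is to pass through the kernel-like representation via $K(x,y)$, where the pointwise estimate $|K(x,y)| \leq H_\gamma$ follows immediately from Cauchy--Schwarz in $\ell_2(\mathcal{R}; a)$. Once this step is in hand, the rest of the proof is a mechanical adaptation of Lemma~\ref{lem:convergence_DSn} with $H_\gamma$ replaced by $H_\gamma^2$ throughout.
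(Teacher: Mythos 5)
Your proposal matches the paper's proof essentially step for step: the same definition of $\xi_{i r r'}(t)$, the same diagonal/degenerate $U$-process split, the same envelope bound $4 e^{2\|f\|_\infty} H_\gamma^2$ obtained by expanding the integral term into a double integral against $\sum_r a_r e_r(x) e_r(y)$ and applying Cauchy--Schwarz (the paper uses exactly this rather than Bessel), the same increment bound and quantile-grid cover of size $O(\varepsilon^{-2})$, and the same appeal to Lemma~\ref{lem:orliczUStat}. It is correct and takes essentially the same approach as the paper.
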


\begin{proof}[Proof of Lemma~\ref{lem:convergence_D2Sn}]
  For $i \in [n]$ and $r, r' \in \mathcal R$, define
  \begin{align*}
    \xi_{i r r'}(t)
    &\vcentcolon=
    R_i(t) e^{f(X_i)} e_r(X_i) e_{r'}(X_i)
    - \int_\cX q(x, t) e^{f(x)} e_r(x) e_{r'}(x) \diffi P_X(x).
  \end{align*}
  By Lemma~\ref{lem:bv_cs},
  for every $t \in [0,1]$ and $i \in [n]$,
  \begin{align}
    \nonumber
    \sum_{r \in \mathcal R}
    \sum_{r' \in \mathcal R}
    a_r a_{r'}
    \xi_{i r r'}(t)^2
    &\leq
    2 \sum_{r \in \mathcal R}
    \sum_{r' \in \mathcal R}
    a_r a_{r'}
    R_i(t) e^{2 f(X_i)}
    e_r(X_i)^2
    e_{r'}(X_i)^2 \\
    \nonumber
    &\hspace{1.5cm}+
    2 \sum_{r \in \mathcal R}
    \sum_{r' \in \mathcal R}
    a_r a_{r'}
    \biggl(
      \int_\cX q(x, t) e^{f(x)} e_r(x) e_{r'}(x) \diffi P_X(x)
    \biggr)^2 \\
    \label{eq:xi_irr_l2}
    &\leq
    4
    e^{2 \|f\|_\infty}
    H_\gamma^2
    \vcentcolon= M.
  \end{align}
  Therefore with $\mathcal I \vcentcolon= \{(i,j): i,j \in [n], i < j\}$,
  by Lemmas~\ref{lem:derivatives_Sn} and~\ref{lem:derivatives_S},
  \begin{align}
    \nonumber
    &\sup_{t \in [0, 1]}
    \sum_{r \in \mathcal{R}} \sum_{r' \in \mathcal{R}}
    a_r a_{r'}
    \big( D^2 S_n(f, t)(e_r, e_{r'}) - D^2 S(f, t)(e_r, e_{r'}) \big)^2 \\
    \nonumber
    &\quad=
    \sup_{t \in [0, 1]}
    \frac{1}{n^2}
    \sum_{i=1}^n
    \sum_{j=1}^n
    \sum_{r \in \mathcal{R}} \sum_{r' \in \mathcal{R}}
    a_r a_{r'}
    \xi_{i r r'}(t)
    \xi_{j r r'}(t) \\
    \nonumber
    &\quad\leq
    \sup_{t \in [0, 1]}
    \frac{1}{n^2}
    \sum_{i=1}^n
    \sum_{r \in \mathcal{R}} \sum_{r' \in \mathcal{R}}
    a_r a_{r'}
    \xi_{i r r'}(t)^2
    + \sup_{t \in [0, 1]}
    \frac{2}{n^2}
    \sum_{(i, j) \in \mathcal I}
    \sum_{r \in \mathcal{R}} \sum_{r' \in \mathcal{R}}
    a_r a_{r'}
    \xi_{i r r'}(t)
    \xi_{j r r'}(t) \\
    &\quad\leq
    \label{eq:D2_second_derivative_term_U_stat}
    \frac{4 e^{2 \|f\|_\infty} H_\gamma^2}{n}
    + \sup_{t \in [0, 1]}
    \biggl|
    \frac{2}{n (n-1)}
    \sum_{(i, j) \in \mathcal I}
    \sum_{r \in \mathcal{R}} \sum_{r' \in \mathcal{R}}
    a_r a_{r'}
    \xi_{i r r'}(t)
    \xi_{j r r'}(t)
    \biggr|.
  \end{align}
  The second term above is a degenerate second-order $U$-process,
  since by \eqref{eq:xi_irr_l2}, the Cauchy--Schwarz inequality
  and the dominated convergence theorem,
  $(i, j) \in \mathcal I$ and $t \in [0,1]$,
  \begin{align*}
    \E \biggl(
      \sum_{r \in \mathcal{R}}
      \sum_{r' \in \mathcal{R}}
      a_r a_{r'} \xi_{i r r'}(t) \xi_{j r r'}(t)
      \Bigm|
      X_i,T_i,I_i
    \biggr)
    &=
    \sum_{r \in \mathcal{R}}
    \sum_{r' \in \mathcal{R}}
    a_r a_{r'}
    \xi_{i r r'}(t)
    \, \E \bigl\{
      \xi_{j r r'}(t)
    \bigr\}
    = 0.
  \end{align*}
  We now let $\mathcal{Q}$ denote the set of finitely-supported
  probability measures on $(\mathcal{X} \times [0,1] \times \{0,1\})^2$,
  and for $Q \in \mathcal{Q}$, seek to bound the $L_2(Q)$-covering
  number of the function class
  \begin{align*}
    \mathcal G
    &\vcentcolon=
    \biggl\{
      \big((X_i, T_i, I_i), (X_j, T_j, I_j)\big)
      \mapsto
      \sum_{r \in \mathcal{R}}
      \sum_{r' \in \mathcal{R}}
      a_r a_{r'} \xi_{i r r'}(t) \xi_{j r r'}(t) :
      t \in [0, 1]
    \biggr\}.
  \end{align*}
  For $0 \leq t < t' \leq 1$ and $i, j \in [n]$,
  by the Cauchy--Schwarz inequality,
  \begin{align*}
    &\sum_{r \in \mathcal R}
    \sum_{r' \in \mathcal R}
    a_r a_{r'}
    |\xi_{i r r'}(t)|
    | \xi_{j r r'}(t) - \xi_{j r r'}(t')| \\
    &\quad\leq
    e^{2\|f\|_\infty}
    \sum_{r \in \mathcal R}
    \sum_{r' \in \mathcal R}
    a_r a_{r'}
    \biggl(
      |e_r(X_i) e_{r'}(X_i)| + \int_\cX |e_r(x) e_{r'}(x)| \diffi P_X(x)
    \biggr) \\
    &\qquad\times
    \biggl(
      |e_r(X_j) e_{r'}(X_j)|
      |R_j(t) - R_j(t')|
      + \int_\cX
      |e_r(x) e_{r'}(x)|
      |q(x, t) - q(x, t')|
      \diffi P_X(x)
    \biggr) \\
    &\quad\leq
    2 e^{2\|f\|_\infty}
    H_\gamma^2
    \bigl\{
      \mathbbm 1_{\{t \leq T_j < t'\}}
      + \P(t \leq T < t')
    \bigr\}.
  \end{align*}
  As such, we deduce that
  \begin{align*}
    \biggl|
    \sum_{r \in \mathcal R}
    &\sum_{r' \in \mathcal R}
    a_r a_{r'}
    \xi_{i r r'}(t) \xi_{j r r'}(t)
    - \sum_{r \in \mathcal R}
    \sum_{r' \in \mathcal R}
    a_r a_{r'}
    \xi_{i r r'}(t') \xi_{j r r'}(t')
    \biggr| \\
    &\quad\leq
    \sum_{r \in \mathcal R}
    \sum_{r' \in \mathcal R}
    a_r a_{r'}
    |\xi_{i r r'}(t)|
    | \xi_{j r r'}(t) - \xi_{j r r'}(t')|
    + \sum_{r \in \mathcal R}
    \sum_{r' \in \mathcal R}
    a_r a_{r'}
    |\xi_{j r r'}(t')|
    | \xi_{i r r'}(t) - \xi_{i r r'}(t')| \\
    &\quad\leq
    2 e^{2\|f\|_\infty}
    H_\gamma^2
    \bigl\{
      \mathbbm 1_{\{t \leq T_i < t'\}}
      + \mathbbm 1_{\{t \leq T_j < t'\}}
      + 2 \P(t \leq T < t')
    \bigr\}.
  \end{align*}
  It follows that
  \begin{align*}
    \E_Q
    \biggl\{
      \biggl(&
        \sum_{r \in \mathcal{R}}
        \sum_{r' \in \mathcal{R}}
        a_r a_{r'}
        \bigl\{\xi_{i r r'}(t) \xi_{j r r'}(t)
        - \xi_{i r r'}(t') \xi_{j r r'}(t')\bigr\}
      \biggr)^2
    \biggr\} \\
    &\quad\leq
    12 e^{4 \|f\|_\infty}
    H_\gamma^4
    \bigl\{
      \P_Q(t \leq T_i < t')
      + \P_Q(t \leq T_j < t')
      + 4 \P(t \leq T < t')^2
    \bigr\}.
  \end{align*}
  For $\varepsilon \in (0,1)$ and $m \in \bigl[\lceil 1/\varepsilon^2
  \rceil - 1\bigr]$, define
  \[
    t_m := \inf\bigl\{t \in [0,1]: \P_Q(T_i \leq t) \geq m\varepsilon^2\bigr\}
  \]
  and set $t_{\lceil 1/\varepsilon^2 \rceil} := 1$ and $\mathcal{T}_1 :=
  \bigl\{t_1,\ldots,t_{\lceil 1/\varepsilon^2 \rceil}\bigr\}$. Thus,
  given any $t \in [0,1]$,
  there exists $m \in \bigl[\lceil 1/\varepsilon^2 \rceil\bigr]$ such that
  $\P_Q(t \leq T_i < t_m) \leq \varepsilon^2$.
  Let $\cT_2$ and $\cT_3$ satisfy the same properties but for
  $T_j$ under $Q$ and $T$ under $P_T$ respectively.
  Then the set $\cT \vcentcolon= \cT_1 \cup \cT_2 \cup \cT_3$ is a
  $\sqrt{72} e^{2 \|f\|_\infty} H_\gamma^2 \varepsilon$-cover
  of $\mathcal G$ under $L_2(Q)$, of cardinality at most $6/\varepsilon^2$.
  Defining $\varepsilon' \vcentcolon = 3\varepsilon/ \sqrt{2} \in
  (0,3/\sqrt{2})$ and with $M$ as in~\eqref{eq:xi_irr_l2},
  we deduce that
  \begin{align*}
    N\big(\varepsilon' M,\mathcal G, \|\cdot\|_{Q,2}\big)
    \leq \frac{27}{\varepsilon^{\prime 2}}.
  \end{align*}
  Hence
  \begin{align*}
    J
    &\vcentcolon=
    \sup_{Q \in \mathcal{Q}}
    \int_0^{1}
    \log N\bigl(1 + \varepsilon M, \mathcal{G}, \|\cdot\|_{Q,2}\bigr)
    \diffi \varepsilon
    \leq
    1 + \int_0^{1}
    \log(27 / \varepsilon^2)
    \diffi \varepsilon = 1 + \log(27) + 2
    \leq 7.
  \end{align*}
  By Lemma~\ref{lem:orliczUStat}, there exists a
  universal constant $C_1 > 0$ such that
  for all $s \geq 1$,
  \[
    \P \biggl(
      \sup_{t \in [0, 1]}
      \biggl|
      \frac{2}{n(n-1)}
      \sum_{(i,j) \in \mathcal{I}}
      \sum_{r \in \mathcal{R}}
      \sum_{r' \in \mathcal{R}}
      a_r a_{r'}
      \xi_{i r r'}(t) \xi_{j r r'}(t)
      \biggr|
      \geq \frac{C_1 s^2 e^{2 \|f\|_\infty} H_\gamma^2}{n}
    \biggr)
    \leq e^{-s^2}.
  \]
  By combining this with the bound
  \eqref{eq:D2_second_derivative_term_U_stat} on the
  diagonal elements, there exists a universal constant $C_2 > 0$ such that
  for all $s \geq 1$,
  with probability at least $1 - e^{-s^2}$,
  \begin{align*}
    \sup_{t \in [0, 1]}
    \sum_{r \in \mathcal{R}} \sum_{r' \in \mathcal{R}}
    a_r a_{r'}
    \big( D^2 S_n(f, t)(e_r, e_{r'}) - D^2 S(f, t)(e_r, e_{r'}) \big)^2
    &\leq
    \frac{C_2 s^2 e^{2 \|f\|_\infty} H_\gamma^2}{n}.
    \qedhere
  \end{align*}
\end{proof}

\subsubsection{Bounding first-order terms from the basic inequality}

By the Cauchy--Schwarz inequality,
the inner product term from \eqref{eq:basic_inequality}
can be bounded using
\begin{align*}
  &\bigl|
  \bigl\langle \tilde f_{n,\gamma} - f_0 + P_n(f_0) 1_\cX,
  f_0 - P_n(f_0) 1_\cX \bigr\rangle_\cH
  \bigr|
  \leq
  \bigl\|f_0 - P_n(f_0) 1_\cX\bigr\|_\cH
  \bigl\| \tilde f_{n,\gamma} - f_0 + P_n(f_0) 1_\cX \bigr\|_\cH.
\end{align*}
For the term in \eqref{eq:basic_inequality} involving
$D \ell_n$, the definition of the operator norm gives
\begin{align*}
  \bigl| D \ell_n(f_0) \bigl( \tilde f_{n,\gamma} - f_0 \bigr) \bigr|
  &\leq
  \|D \ell_n(f_0)\|_{\cH_\gamma,\op}
  \, \bigl\| \tilde f_{n,\gamma} - f_0 \bigr\|_{\cH_\gamma}.
\end{align*}
We now present a high-probability bound for
$\|D \ell_n(f_0)\|_{\cH_\gamma,\op}$.
\begin{proposition}[Convergence of $D \ell_n$]
  \label{prop:convergence_Dln}
  There exists a universal constant $C > 0$ such that for all $s \geq 1$,
  with probability at least $1 - e^{-s^2}$,
  \begin{align*}
    \|D \ell_n(f_0)\|_{\cH_\gamma,\op}
    \leq
    \frac{C s e^{\|f_0\|_\infty}}{q_1}
    \sqrt{\frac{H_\gamma}{n}}.
  \end{align*}
\end{proposition}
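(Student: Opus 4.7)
The plan is to use Lemma~\ref{lem:operator_norm_bounds}, which reduces the task to bounding $\sum_{r \in \cR} a_r \bigl(D\ell_n(f_0)(e_r)\bigr)^2$. My plan is to decompose $D\ell_n(f_0) = \Psi + R_n$, where $\Psi$ has a clean i.i.d.\ structure and $R_n$ is an approximation remainder, and then control each piece separately using the triangle inequality in $\ell_2(a) \vcentcolon= \bigl\{(\alpha_r)_{r \in \cR}: \sum_r a_r \alpha_r^2 < \infty\bigr\}$, equipped with its natural inner product. Explicitly, I would set
\begin{align*}
  \Psi(f_1) &\vcentcolon= \int_0^1 \frac{D S(f_0,t)(f_1)}{S(f_0,t)} \diff N(t) - \frac{1}{n}\sum_{i=1}^n f_1(X_i) N_i(1),\\
  R_n(f_1) &\vcentcolon= \int_0^1 \biggl(\frac{D S_n(f_0,t)(f_1)}{S_n(f_0,t)} - \frac{D S(f_0,t)(f_1)}{S(f_0,t)}\biggr)\diff N(t).
\end{align*}

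For $\Psi$, the point is that $\Psi(f_1) = \frac{1}{n}\sum_{i=1}^n \xi_i(f_1)$ with $\xi_i(f_1) \vcentcolon= N_i(1)\bigl\{D S(f_0, T_i)(f_1)/S(f_0, T_i) - f_1(X_i)\bigr\}$ i.i.d., and $\E \xi_i(f_1) = 0$ by the Fisher consistency identity $D\ell_\star(f_0)(f_1) = 0$ (Lemma~\ref{lem:derivatives_lstar}). Viewing $(\xi_i(e_r))_{r \in \cR}$ as a random element of $\ell_2(a)$, I would bound its norm pointwise using $S(f_0,t) \geq q_1$, Parseval's bound $\sum_r \bigl(D S(f_0,t)(e_r)\bigr)^2 \leq \|q(\cdot,t)e^{f_0(\cdot)}\|_{L_2}^2 \leq e^{2\|f_0\|_\infty}$, and the definition of $H_\gamma$, to obtain $\|\xi_i\|_{\ell_2(a)}^2 \leq C e^{2\|f_0\|_\infty} H_\gamma/q_1^2$ almost surely. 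A Hoeffding-type inequality for i.i.d.\ bounded random elements of a Hilbert space (Pinelis) then gives $\|\Psi\|_{\cH_\gamma,\op}^2 \leq \sum_r a_r \Psi(e_r)^2 \leq C s^2 e^{2\|f_0\|_\infty}H_\gamma/(n q_1^2)$ with probability at least $1 - e^{-s^2}$.

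For the remainder, I would use the algebraic identity $\frac{a_n}{b_n} - \frac{a}{b} = \frac{a_n - a}{b_n} + \frac{a(b-b_n)}{b_n b}$ to split $R_n(e_r) = R_n^{(1)}(e_r) + R_n^{(2)}(e_r)$, and apply Lemma~\ref{lem:bv_cs} with $h = N$ (so that $\int |\diff h| \leq N(1) \leq 1$) to obtain
\begin{align*}
  \sum_r a_r R_n^{(1)}(e_r)^2 &\leq \sup_{t\in[0,1]}\frac{1}{S_n(f_0,t)^2}\sum_r a_r\bigl(DS_n(f_0,t)(e_r) - DS(f_0,t)(e_r)\bigr)^2,\\
  \sum_r a_r R_n^{(2)}(e_r)^2 &\leq \sup_{t\in[0,1]}\frac{\bigl(S_n(f_0,t)-S(f_0,t)\bigr)^2}{S_n(f_0,t)^2 S(f_0,t)^2}\sum_r a_r\bigl(DS(f_0,t)(e_r)\bigr)^2.
\end{align*}
On the intersection of the good events of Lemmas~\ref{lem:convergence_Sn} and~\ref{lem:convergence_DSn}, the former gives $\inf_t S_n(f_0,t) \geq q_1/2$ for $s$ in the stated range, and the latter controls $\sup_t \sum_r a_r(DS_n - DS)(e_r)^2$ by $Cs^2 e^{2\|f_0\|_\infty}H_\gamma/n$; a union bound and the triangle inequality in $\ell_2(a)$ then deliver the claimed bound. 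The main obstacle is the first step: translating the i.i.d.\ structure of $\Psi$ into a clean Hilbert-space concentration inequality with the correct explicit dependence on $q_1$, $\|f_0\|_\infty$ and $\sqrt{H_\gamma/n}$. A secondary technicality is that the $R_n^{(2)}$ contribution naively scales as $s e^{2\|f_0\|_\infty}/(q_1^2\sqrt n)$ with no $\sqrt{H_\gamma}$, and to match the target rate one has to invoke $H_\gamma \geq 1/\|1_\cX\|_{\cH_\gamma}^2 \geq c > 0$, coming from the assumption $1_\cX \in \cH$, to absorb this term.
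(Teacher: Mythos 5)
Your decomposition $D\ell_n(f_0) = \Psi + R_n$ is essentially the paper's, repackaged: your $\Psi$ is the sum of the paper's terms \eqref{eq:convergence_Dln_2} and \eqref{eq:convergence_Dln_3} (with the $\int_0^1 DS(f_0,t)(e_r)\lambda_0(t)\diff t$ cancelling), and your $R_n$ is the paper's \eqref{eq:convergence_Dln_1}. The interesting difference is your treatment of $\Psi$: writing $\Psi = \frac{1}{n}\sum_i\xi_i$ with $\xi_i \in \ell_2(a)$ i.i.d.\ and centred, and applying a Pinelis/Hoeffding concentration inequality for bounded Hilbert-space-valued averages, is indeed cleaner than the paper's route of splitting into a diagonal term (handled trivially) and a degenerate second-order $U$-statistic (handled via Lemma~\ref{lem:orliczUStat}). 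Both give the same $\sqrt{H_\gamma/n}$ rate; your version avoids the $U$-process chaining for this piece, at the cost of importing Pinelis's inequality (the paper's toolkit already has bounded differences, which would also serve). Treatment of $R_n$ is identical in both.

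There is, however, a genuine gap in the $H_\gamma$ bookkeeping. You bound $\sum_r a_r\bigl(DS(f_0,t)(e_r)\bigr)^2$ via Parseval, getting $e^{2\|f_0\|_\infty}$ with \emph{no} $H_\gamma$ factor, and then note you must invoke $H_\gamma \geq 1/\|1_\cX\|_{\cH_\gamma}^2 \geq c > 0$ to recover the target rate. This last step fails: since $\|1_\cX\|_{\cH_\gamma}^2 = 1 + \gamma\|1_\cX\|_\cH^2$, your lower bound on $H_\gamma$ tends to $0$ as $\gamma\to\infty$, and indeed $H_\gamma$ itself can vanish (see the paper's computation $2H_\gamma \geq 1 \land (K/\gamma)$ in the proof of Theorem~\ref{thm:rate}). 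So there is no uniform $c > 0$, and the resulting constant would depend on $\gamma$ and $\|1_\cX\|_\cH$, contradicting the stated universality. The fix is simple and avoids the workaround entirely: replace Parseval with the Cauchy--Schwarz bound the paper uses in \eqref{eq:DS_over_S}, namely $\sum_r a_r\bigl(DS(f_0,t)(e_r)\bigr)^2 \leq H_\gamma S(f_0,t)^2$, and similarly $\sum_r a_r\bigl(DS_n(f_0,t)(e_r)\bigr)^2 \leq H_\gamma S_n(f_0,t)^2$ as in \eqref{eq:DSn_over_Sn}. This makes the $H_\gamma$ factor appear directly in both the $\|\xi_i\|_{\ell_2(a)}^2 \leq 4H_\gamma$ bound and in $\sum_r a_r R_n^{(2)}(e_r)^2$, so the final constant is genuinely universal, matching the statement.
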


\begin{proof}[Proof of Proposition~\ref{prop:convergence_Dln}]
  By Lemmas~\ref{lem:operator_norm_bounds} and~\ref{lem:derivatives_ln},
  \begin{align}
    \nonumber
    \|D \ell_n(f_0)\|_{\cH_\gamma,\op}^2
    &\leq
    \sum_{r \in \mathcal{R}} a_r
    D \ell_n(f_0)(e_r)^2 \\
    \nonumber
    &=
    \sum_{r \in \mathcal{R}} a_r
    \biggl\{
      \int_{0}^{1} \frac{D S_n(f_0, t)(e_r)}{S_n(f_0, t)} \diffi {N(t)}
      - \frac{1}{n} \sum_{i=1}^n e_r(X_i) N_i(1)
    \biggr\}^2 \\
    \label{eq:convergence_Dln_1}
    &\leq
    3 \sum_{r \in \mathcal{R}} a_r
    \biggl\{
      \int_{0}^{1}
      \biggl(
        \frac{D S_n(f_0, t)(e_r)}{S_n(f_0, t)}
        - \frac{D S(f_0, t)(e_r)}{S(f_0, t)}
      \biggr)
      \diffi {N(t)}
    \biggr\}^2 \\
    \label{eq:convergence_Dln_2}
    &\quad+
    3 \sum_{r \in \mathcal{R}} a_r
    \biggl\{
      \int_{0}^{1}
      \frac{D S(f_0, t)(e_r)}{S(f_0, t)}
      \bigl(\mathrm d N(t) - S(f_0, t) \lambda_0(t) \mathrm d t \bigr)
    \biggr\}^2 \\
    \label{eq:convergence_Dln_3}
    &\quad+
    3 \sum_{r \in \mathcal{R}} a_r
    \biggl\{
      \frac{1}{n} \sum_{i=1}^n e_r(X_i) N_i(1)
      - \int_0^1 D S(f_0, t)(e_r) \lambda_0(t) \diffi t
    \biggr\}^2.
  \end{align}
  We begin with \eqref{eq:convergence_Dln_1}.
  For all $t \in [0, 1]$, by Jensen's inequality
  and as $\int_\cX f_0(x) \diffi P_X(x) = 0$,
  \begin{align}
    \label{eq:lower_bound_S0}
    S(f_0, t) = \int_\cX q(x, t) e^{f_0(x)} \diffi P_X(x)
    \geq q_1 \int_\cX e^{f_0(x)} \diffi P_X(x)
    \geq q_1 e^{\int_\cX f_0(x) \diffi P_X(x)}
    = q_1.
  \end{align}
  Moreover, by Lemma~\ref{lem:derivatives_Sn} and the Cauchy--Schwarz
  inequality,
  \begin{align}
    \nonumber
    &\sum_{r \in \mathcal{R}} a_r D S_n(f_0, t)(e_r)^2
    =
    \frac{1}{n^2}
    \sum_{i=1}^n
    \sum_{j=1}^n
    R_i(t)
    R_j(t)
    e^{f_0(X_i)}
    e^{f_0(X_j)}
    \sum_{r \in \mathcal{R}}
    a_r e_r(X_i) e_r(X_j)
    \\
    \nonumber
    &\quad\leq
    \frac{1}{n^2}
    \sum_{i=1}^n
    \sum_{j=1}^n
    R_i(t)
    R_j(t)
    e^{f_0(X_i)}
    e^{f_0(X_j)}
    \biggl\{
      \sum_{r \in \mathcal{R}} a_r e_r(X_i)^2
    \biggr\}^{1/2}
    \biggl\{
      \sum_{r \in \mathcal{R}} a_r e_r(X_j)^2
    \biggr\}^{1/2} \\
    \label{eq:DSn_over_Sn}
    &\quad\leq
    H_\gamma
    S_n(f_0, t)^2.
  \end{align}
  Therefore, by Lemma~\ref{lem:bv_cs}, \eqref{eq:lower_bound_S0},
  \eqref{eq:DSn_over_Sn} and Lemmas~\ref{lem:convergence_Sn}
  and~\ref{lem:convergence_DSn}, there exists a universal constant
  $C_1 > 0$ such that for all $s \geq 1$, with probability at least
  $1 - e^{-s^2}$,
  \begin{align}
    \nonumber
    \sum_{r \in \mathcal{R}} & a_r
    \biggl\{
      \int_{0}^{1}
      \biggl(
        \frac{D S_n(f_0, t)(e_r)}{S_n(f_0, t)}
        - \frac{D S(f_0, t)(e_r)}{S(f_0, t)}
      \biggr)
      \diffi {N(t)}
    \biggr\}^2 \\
    \nonumber
    &\leq
    2 \sum_{r \in \mathcal{R}} a_r
    \biggl(
      \int_{0}^{1}
      \frac{D S_n(f_0, t)(e_r) - D S(f_0, t)(e_r)}{S(f_0, t)}
      \diffi {N(t)}
    \biggr)^2 \\
    \nonumber
    &\quad+
    2 \sum_{r \in \mathcal{R}} a_r
    \biggl(
      \int_{0}^{1}
      \frac{D S_n(f_0, t)(e_r) \big(S(f_0, t) - S_n(f_0, t)\big)}
      {S(f_0, t) S_n(f_0, t)}
      \diffi {N(t)}
    \biggr)^2 \\
    \nonumber
    &\leq
    \frac{2}{q_1^2}
    \sup_{t \in [0, 1]}
    \sum_{r \in \mathcal{R}} a_r
    \bigl(
      D S_n(f_0, t)(e_r) - D S(f_0, t)(e_r)
    \bigr)^2
    + \frac{2 H_\gamma}{q_1^2}
    \sup_{t \in [0, 1]}
    \bigl(
      S_n(f_0, t) - S(f_0, t)
    \bigr)^2 \\
    \label{eq:convergence_Dln_1_bound}
    &\leq
    \frac{C_1 s^2 e^{2 \|f_0\|_\infty} H_\gamma}{n q_1^2}.
  \end{align}

  We now control the term in \eqref{eq:convergence_Dln_2}.
  By Lemma~\ref{lem:derivatives_S},
  the dominated convergence theorem
  and the Cauchy--Schwarz inequality,
  for each $t \in [0, 1]$,
  \begin{align}
    \nonumber
    &\sum_{r \in \mathcal{R}} a_r D S(f_0, t)(e_r)^2
    =
    \int_\cX
    \int_\cX
    q(x, t)
    q(x', t)
    e^{f_0(x)}
    e^{f_0(x')}
    \sum_{r \in \mathcal{R}}
    a_r e_r(x) e_r(x')
    \diffi P_X(x)
    \diffi P_X(x') \\
    \nonumber
    &\quad\leq
    \int_\cX
    \int_\cX
    q(x, t)
    q(x', t)
    e^{f_0(x)}
    e^{f_0(x')}
    \biggl\{
      \sum_{r \in \mathcal{R}} a_r e_r(x)^2
    \biggr\}^{1/2}
    \biggl\{
      \sum_{r \in \mathcal{R}} a_r e_r(x')^2
    \biggr\}^{1/2}
    \diffi P_X(x)
    \diffi P_X(x') \\
    \label{eq:DS_over_S}
    &\quad\leq
    H_\gamma
    S(f_0, t)^2.
  \end{align}
  For $i \in [n]$ and $r \in \mathcal R$, define
  \begin{align}
    \label{eq:xi_ir_def}
    \xi_{i r}
    &\vcentcolon=
    \int_{0}^{1}
    \frac{D S(f_0, t)(e_r)}{S(f_0, t)}
    \big(
      \mathrm d N_i(t) - S(f_0, t) \lambda_0(t)\,\mathrm d t
    \big).
  \end{align}
  Note that $\xi_{i r}$ is $(X_i, T_i, I_i)$-measurable,
  and $\E(\xi_{i r}) = 0$ because
  the integrand in~\eqref{eq:xi_ir_def} is non-random and bounded, and
  $1 \geq \E\{N(t)\} = \int_0^t \lambda_0(s) S(f_0, s) \diffi s$
  for all $t \in [0, 1]$ by \citet[Theorem~1.3.1]{fleming2013counting}.
  Thus, by Lemma~\ref{lem:bv_cs} and \eqref{eq:DS_over_S},
  \begin{align}
    \label{eq:xi_ir_bound}
    \sum_{r \in \mathcal R}
    a_r \xi_{i r}^2
    &\leq
    \sup_{t \in [0, 1]}
    \sum_{r \in \mathcal R}
    a_r
    \frac{D S(f_0, t)(e_r, e_{r'})^2}{S(f_0, t)^2}
    \biggl(
      N_i(1)
      + \int_{0}^{1} S(f_0, t) \lambda_0(t) \diffi t
    \biggr)^2
    \leq
    4 H_\gamma.
  \end{align}
  Therefore by Cauchy--Schwarz,
  with $\mathcal{I} := \{(i,j):i,j \in [n],i<j\}$,
  \begin{align*}
    \sum_{r \in \mathcal R}
    a_r
    \biggl( \frac{1}{n} \sum_{i=1}^n \xi_{i r} \biggr)^2 &\leq
    \frac{1}{n^2}
    \sum_{i=1}^n
    \sum_{r \in \mathcal R}
    a_r
    \xi_{i r}^{2}
    + \frac{2}{n^2}
    \sum_{(i, j) \in \mathcal I}
    \sum_{r \in \mathcal R}
    a_r
    \xi_{i r}
    \xi_{j r} \\
    &\leq
    \frac{4 H_\gamma}{n}
    + \biggl|
    \frac{2}{n (n-1)}
    \sum_{(i, j) \in \mathcal I}
    \sum_{r \in \mathcal R}
    a_r \xi_{i r} \xi_{j r}
    \biggr|.
  \end{align*}
  The last term above is a degenerate second-order $U$-statistic
  so by Cauchy--Schwarz, \eqref{eq:xi_ir_bound}
  and Lemma~\ref{lem:orliczUStat},
  there exists a universal constant $C_2 > 0$ such that for all
  $s \geq 1$,
  \begin{align*}
    \biggl|
    \frac{2}{n (n-1)}
    \sum_{(i, j) \in \mathcal I}
    \sum_{r \in \mathcal R}
    a_r
    \xi_{i r}
    \xi_{j r}
    \biggr|
    \leq
    \frac{C_2 s^2 H_\gamma}{n}
  \end{align*}
  with probability at least $1 - e^{-s^2}$.
  As such, there is a universal constant $C_3 > 0$ such that
  \begin{align}
    \label{eq:convergence_Dln_2_bound}
    \sum_{r \in \mathcal{R}} a_r
    \biggl\{
      \int_{0}^{1}
      \frac{D S(f_0, t)(e_r)}{S(f_0, t)}
      \bigl(\mathrm d N(t) - S(f_0, t) \lambda_0(t) \mathrm d t \bigr)
    \biggr\}^2
    \leq
    \frac{C_3 s^2 H_\gamma}{n}
  \end{align}
  with probability at least $1 - e^{-s^2}$.

  Finally we bound the term in \eqref{eq:convergence_Dln_3}.
  For $i \in [n]$ and $r \in \mathcal{R}$, let
  \begin{align*}
    \xi'_{i r}
    &\vcentcolon=
    e_r(X_i) N_i(1)
    - \int_{0}^{1} D S(f_0, t)(e_r) \lambda_0(t) \diffi t,
  \end{align*}
  so that by Lemma~\ref{lem:bv_cs},
  \begin{align*}
    \sum_{r \in \mathcal R} a_r
    \xi_{i r}^{\prime 2}
    &\leq
    2 \sum_{r \in \mathcal R} a_r
    e_r(X_i)^2 N_i(1)^2
    + 2 \sum_{r \in \mathcal R} a_r
    \biggl(
      \int_{0}^{1}
      D S(f_0, t)(e_r)
      \lambda_0(t) \diffi t
    \biggr)^2 \\
    &\leq
    2 H_\gamma
    + 2
    \biggl(
      \int_{0}^{1}
      S(f_0, t)
      \lambda_0(t) \diffi t
    \biggr)^2
    \sup_{t \in [0, 1]}
    \sum_{r \in \mathcal R} a_r
    \frac{D S(f_0, t)(e_r)^2}{S(f_0, t)^2}
    \leq 4 H_\gamma.
  \end{align*}
  Since $\E(\xi'_{i r}) = 0$
  for each $i \in [n]$, we can apply
  Lemma~\ref{lem:orliczUStat} to deduce that
  there exists a universal constant $C_4 > 0$ such that
  \begin{align}
    \label{eq:convergence_Dln_3_bound}
    \sum_{r \in \mathcal{R}} a_r
    \biggl(
      \frac{1}{n} \sum_{i=1}^n
      \xi'_{i r}
    \biggr)^2
    &\leq
    \frac{1}{n^2}
    \sum_{i=1}^n
    \sum_{r \in \mathcal{R}} a_r
    \xi_{i r}^{\prime 2}
    +
    \biggl|
    \frac{2}{n(n-1)}
    \sum_{(i, j) \in \mathcal{I}}
    \sum_{r \in \mathcal{R}} a_r
    \xi'_{i r}
    \xi'_{j r}
    \biggr|
    \leq
    \frac{C_4 s^2 H_\gamma}{n}
  \end{align}
  with probability at least $1 - e^{-s^2}$.

  Combining the previous results in
  \eqref{eq:convergence_Dln_1_bound},
  \eqref{eq:convergence_Dln_2_bound}
  and \eqref{eq:convergence_Dln_3_bound},
  there exists a universal constant $C > 0$ such that for
  all $s \geq 1$,
  \begin{align*}
    \|D \ell_n(f_0)\|_{\cH_\gamma,\op}^2
    \leq
    \frac{C s^2 H_\gamma e^{2\|f_0\|_\infty}}{n q_1^2}
  \end{align*}
  with probability at least $1 - e^{-s^2}$.
\end{proof}

\subsubsection{Bounding second-order terms from the basic inequality}

We establish a lower bound for the term in \eqref{eq:basic_inequality}
involving $D^2 \ell_n$. Observe that
\begin{align*}
  D^2 \ell_{n}(f_0)
  \bigl( \tilde f_{n,\gamma} - f_0 \bigr)^{\otimes 2}
  &\geq
  D^2 \ell_{\star}(f_0)
  \bigl( \tilde f_{n,\gamma} - f_0 \bigr)^{\otimes 2}
  - \bigl\|D^2 \ell_{n}(f_0) - D^2 \ell_{\star}(f_0)\bigr\|_{\cH_\gamma,\op}
  \, \bigl\| \tilde f_{n,\gamma} - f_0 \bigr\|_{\cH_\gamma}^2.
\end{align*}

Lemma~\ref{lem:strong_convexity} already provides a
lower bound based on restricted strong convexity
for $D^2 \ell_\star(f_0)$;
we now show convergence of
$D^2 \ell_{n}(f_0)$ to $D^2 \ell_{\star}(f_0)$
in Proposition~\ref{prop:convergence_D2ln}.

\begin{proposition}[Convergence of $D^2 \ell_n$]
  \label{prop:convergence_D2ln}
  There exists a universal constant $C > 0$ such that for all $s \geq 1$,
  with probability at least $1 - e^{-s^2}$,
  \begin{align*}
    \bigl\|D^2 \ell_{n}(f_0)
    - D^2 \ell_{\star}(f_0)\bigr\|_{\cH_\gamma,\op}
    \leq
    \frac{C s e^{2 \|f_0\|_\infty}}{q_1^2}
    \sqrt{\frac{H_\gamma^2}{n}}.
  \end{align*}
\end{proposition}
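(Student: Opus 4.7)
Arguing as in Lemma~\ref{lem:operator_norm_bounds} but applying Cauchy--Schwarz twice, once in each index, yields the bilinear-form analogue
\[
\bigl\|D^2 \ell_n(f_0) - D^2 \ell_\star(f_0)\bigr\|_{\cH_\gamma, \op}^2
\leq \sum_{r, r' \in \cR} a_r a_{r'}
\bigl(D^2 \ell_n(f_0)(e_r, e_{r'}) - D^2 \ell_\star(f_0)(e_r, e_{r'})\bigr)^2.
\]
Using the expressions for $D^2 \ell_n$ and $D^2 \ell_\star$ from Lemmas~\ref{lem:derivatives_ln} and~\ref{lem:derivatives_lstar}, I would then decompose each summand as the sum of (i) an \emph{integrand error} $\int_0^1 \bigl(A_n - A\bigr)(t, r, r') \diff N(t)$ and (ii) a \emph{measure error} $\int_0^1 A(t, r, r') \bigl(\diff N(t) - S(f_0, t) \lambda_0(t) \diff t\bigr)$, where $A(t, r, r') \vcentcolon= D^2 S(f_0, t)(e_r, e_{r'}) / S(f_0, t) - D S(f_0, t)(e_r) \, D S(f_0, t)(e_{r'}) / S(f_0, t)^2$ and $A_n$ is defined analogously via $S_n, D S_n$ and $D^2 S_n$. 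This split is analogous to the one used in the proof of Proposition~\ref{prop:convergence_Dln}.

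For (i), after applying Lemma~\ref{lem:bv_cs} with $\int_0^1 |\diff N(t)| \leq 1$, the task reduces to bounding $\sup_{t \in [0, 1]} \sum_{r, r'} a_r a_{r'} (A_n - A)^2$. I would telescope $A_n - A$ using the identities
\[
\frac{D^2 S_n}{S_n} - \frac{D^2 S}{S}
= \frac{D^2 S_n - D^2 S}{S} + \frac{D^2 S_n}{S_n} \cdot \frac{S - S_n}{S},
\]
\[
\frac{D S_n \, D S_n'}{S_n^2} - \frac{D S \, D S'}{S^2}
= \biggl(\frac{D S_n}{S_n} - \frac{D S}{S}\biggr) \frac{D S_n'}{S_n}
+ \frac{D S}{S} \biggl(\frac{D S_n'}{S_n} - \frac{D S'}{S}\biggr),
\]
together with the further split $D S_n/S_n - D S/S = (D S_n - D S)/S + (D S_n/S_n)(S - S_n)/S$. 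These choices ensure that each occurrence of $S_n$ in a denominator is paired with $D S_n$ or $D^2 S_n$, so that the weighted Cauchy--Schwarz bounds $\sum_r a_r (D S_n/S_n)^2 \leq H_\gamma$ from~\eqref{eq:DSn_over_Sn} and its two-index analogue $\sum_{r, r'} a_r a_{r'} (D^2 S_n/S_n)^2 \leq H_\gamma^2$ may be applied without ever requiring a uniform lower bound on $S_n$. Combining these with the deterministic lower bound $S(f_0, t) \geq q_1$ from~\eqref{eq:lower_bound_S0} and with Lemmas~\ref{lem:convergence_Sn},~\ref{lem:convergence_DSn} and~\ref{lem:convergence_D2Sn}, the contribution from (i) is controlled at the desired rate $C s^2 e^{4\|f_0\|_\infty} H_\gamma^2 / (n q_1^4)$, up to a union bound over three events of probability at least $1 - e^{-s^2}$ each.

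For (ii), the integrand $A(t, r, r')$ is non-random, so the quantity equals $(1/n) \sum_{i=1}^n \xi_{i r r'}$ with $\xi_{i r r'} \vcentcolon= \int_0^1 A(t, r, r') \bigl(\diff N_i(t) - S(f_0, t) \lambda_0(t) \diff t\bigr)$ iid and mean zero. Lemma~\ref{lem:bv_cs} together with the weighted Cauchy--Schwarz bound $\sum_{r, r'} a_r a_{r'} A(t, r, r')^2 \leq C H_\gamma^2 / q_1^2$, obtained via the arguments just described, yields $\sum_{r, r'} a_r a_{r'} \xi_{i r r'}^2 \leq C H_\gamma^2 / q_1^2$ deterministically. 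Expanding $\sum_{r, r'} a_r a_{r'} \bigl(\frac{1}{n} \sum_i \xi_{i r r'}\bigr)^2$ separates a diagonal contribution, bounded at rate $H_\gamma^2 / (n q_1^2)$, from a real-valued degenerate second-order $U$-statistic $\frac{2}{n(n-1)} \sum_{i < j} \sum_{r, r'} a_r a_{r'} \xi_{i r r'} \xi_{j r r'}$, to which Lemma~\ref{lem:orliczUStat} applies with a singleton function class. A final union bound over (i) and (ii), followed by taking square roots and absorbing constants, delivers the stated bound.

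The principal obstacle will be the algebraic bookkeeping in step (i): the quadratic-in-$D S$ term produces several cross-terms upon telescoping, and each must be bounded in the $a_r a_{r'}$-weighted $\ell_2^2$ norm \emph{without} accumulating extra factors of $H_\gamma$ beyond the $H_\gamma^2$ permitted by the statement, while simultaneously arranging the decomposition so that every $1/S_n$ is absorbed into a ratio $D S_n/S_n$ or $D^2 S_n/S_n$, thereby avoiding any uncontrolled behaviour when $S_n$ is small.
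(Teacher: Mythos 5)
Your proposal is correct and follows essentially the same route as the paper's proof: the bilinear operator-norm bound via double Cauchy--Schwarz, the decomposition into an integrand error (your~(i)) and a measure error (your~(ii)), telescoping the integrand error so that every $1/S_n$ or $1/S_n^2$ is absorbed into ratios $D S_n/S_n$ or $D^2 S_n/S_n$ (avoiding any lower bound on $S_n$), the supremum-over-$t$ bounds from Lemmas~\ref{lem:convergence_Sn}, \ref{lem:convergence_DSn} and~\ref{lem:convergence_D2Sn}, and a degenerate $U$-statistic argument via Lemma~\ref{lem:orliczUStat} for the measure error. The only difference is a cosmetic one in the exact telescoping identity for the ratio-of-quadratics term (your four-piece split versus the paper's three-piece split using $S^2 - S_n^2 = (S-S_n)(S+S_n)$); both yield the same $H_\gamma^2$ accounting and the same rate.
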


\begin{proof}[Proof of Proposition~\ref{prop:convergence_D2ln}]

  Since the second-order Gateaux derivative is a bilinear operator,
  we first state a general result allowing us to bound
  the bilinear $\|\cdot\|_{\cH_\gamma,\op}$-operator norm
  in terms of actions on the eigenfunctions.
  To this end, let $G: \cH \times \cH \to \R$ be bilinear.
  We have by the Cauchy--Schwarz inequality that
  \begin{align}
    \nonumber
    \|G\|_{\cH_\gamma,\op}^2
    &\vcentcolon=
    \sup_{f, g \in \mathcal{H}:\|f\|_{\cH_\gamma} = 1, \|g\|_{\cH_\gamma}=1}
    G(f, g)^2 \\
    \nonumber
    &\phantom{\vcentcolon}=
    \sup_{f, g \in \mathcal{H}:\|f\|_{\cH_\gamma} = 1, \|g\|_{\cH_\gamma}=1}
    \biggl(
      \sum_{r \in \mathcal{R}}
      \sum_{r' \in \mathcal{R}}
      \langle f, e_r \rangle_{L_2}
      \langle g, e_{r'} \rangle_{L_2}
      G(e_r, e_{r'})
    \biggr)^2 \\
    \nonumber
    &\phantom{\vcentcolon}\leq
    \sup_{f, g \in \mathcal{H}:\|f\|_{\cH_\gamma} = 1, \|g\|_{\cH_\gamma}=1}
    \biggl(
      \sum_{r \in \mathcal{R}}
      \frac{\langle f, e_r \rangle_{L_2}^2}{a_r}
    \biggr)
    \biggl(
      \sum_{r' \in \mathcal{R}}
      \frac{\langle g, e_{r'} \rangle_{L_2}^2}{a_{r'}}
    \biggr)\sum_{r \in \mathcal{R}}
    \sum_{r' \in \mathcal{R}}
    a_r a_{r'}
    G(e_r, e_{r'})^2 \\
    \label{eq:bilinear_operator_norm}
    &\phantom{\vcentcolon}=
    \sum_{r \in \mathcal{R}}
    \sum_{r' \in \mathcal{R}}
    a_r a_{r'}
    G(e_r, e_{r'})^2.
  \end{align}
  We see from this that
  \begin{align}
    \label{eq:D2_operator_bound}
    \big\|
    D^2 \ell_{n}(f_0) - D^2 \ell_\star(f_0)
    \big\|_{\cH_\gamma,\op}^2
    &\leq
    \sum_{r \in \mathcal{R}} \sum_{r' \in \mathcal{R}}
    a_r a_{r'}
    \big(
      D^2 \ell_{n}(f_0)(e_r, e_{r'}) - D^2 \ell_\star(f_0)(e_r, e_{r'})
    \big)^2.
  \end{align}
  Now, for any $r,r' \in \mathcal{R}$, by
  Lemmas~\ref{lem:derivatives_ln} and~\ref{lem:derivatives_lstar},
  \begin{align}
    \nonumber
    &D^2 \ell_n(f_0)(e_r, e_{r'})
    - D^2 \ell_\star(f_0)(e_r, e_{r'}) \\
    \label{eq:D2_second_derivative_term}
    &=
    \int_{0}^{1}
    \biggl(
      \frac{D^2 S_n(f_0, t)(e_r, e_{r'})}{S_n(f_0, t)}
      - \frac{D^2 S(f_0, t)(e_r, e_{r'})}{S(f_0, t)}
    \biggr)
    \diffi {N(t)} \\
    \label{eq:D2_first_derivative_term}
    &\hspace{0.7cm}-
    \int_{0}^{1}
    \biggl(
      \frac{D S_n(f_0, t)(e_r)}{S_n(f_0, t)}
      \frac{D S_n(f_0, t)(e_{r'})}{S_n(f_0, t)}
      - \frac{D S(f_0, t)(e_r)}{S(f_0, t)}
      \frac{D S(f_0, t)(e_{r'})}{S(f_0, t)}
    \biggr)
    \diffi N(t)
    \\
    \label{eq:D2_integrator_term}
    &\hspace{0.7cm}+
    \int_{0}^{1}
    \biggl(
      \frac{D^2 S(f_0, t)(e_r, e_{r'})}{S(f_0, t)}
      - \frac{D S(f_0, t)(e_r)}{S(f_0, t)}
      \frac{D S(f_0, t)(e_{r'})}{S(f_0, t)}
    \biggr)
    \big(
      \mathrm d N(t) - S(f_0, t) \lambda_0(t)\,\mathrm d t
    \big).
  \end{align}
  For \eqref{eq:D2_second_derivative_term}, observe that
  \begin{align*}
    &\frac{D^2 S_n(f_0, t)(e_r, e_{r'})}{S_n(f_0, t)}
    - \frac{D^2 S(f_0, t)(e_r, e_{r'})}{S(f_0, t)} \\
    &\quad=
    \frac{D^2 S_n(f_0, t)(e_r, e_{r'})\big(S(f_0, t) - S_n(f_0, t)\big)}
    {S(f_0, t) S_n(f_0, t)} + \frac{D^2 S_n(f_0, t)(e_r, e_{r'}) -
    D^2 S(f_0, t)(e_r, e_{r'})}{S(f_0, t)}.
  \end{align*}
  Now, by the Cauchy--Schwarz inequality, for $t \in [0, 1]$,
  \begin{align}
    \nonumber
    &\sum_{r \in \mathcal R}
    \sum_{r' \in \mathcal R}
    a_r a_{r'}
    D^2 S_n(f_0, t)(e_r, e_{r'})^2 \\
    \nonumber
    &\quad=
    \frac{1}{n^2}
    \sum_{i=1}^n
    \sum_{j=1}^n
    R_i(t) R_j(t)
    e^{f_0(X_i)}
    e^{f_0(X_j)}
    \biggl(
      \sum_{r \in \mathcal R} a_r e_r(X_i) e_r(X_j)
    \biggr)^2 \\
    \nonumber
    &\quad\leq
    \frac{1}{n^2}
    \sum_{i=1}^n
    \sum_{j=1}^n
    R_i(t) R_j(t)
    e^{f_0(X_i)}
    e^{f_0(X_j)}
    \biggl(
      \sum_{r \in \mathcal R}
      a_r e_r(X_i)^2
    \biggr)
    \biggl(
      \sum_{r \in \mathcal R}
      a_r e_r(X_j)^2
    \biggr) \\
    \label{eq:second_deriv_Hgamma_bound}
    &\quad\leq
    H_\gamma^2
    S_n(f_0, t)^2.
  \end{align}
  Hence, since $\int_0^1 \mathrm d N(t) = N(1) \leq 1$,
  it follows from
  Lemma~\ref{lem:bv_cs}, \eqref{eq:second_deriv_Hgamma_bound} and
  \eqref{eq:lower_bound_S0} that
  \begin{align}
    \nonumber
    &\sum_{r \in \mathcal{R}} \sum_{r' \in \mathcal{R}}
    a_r a_{r'}
    \biggl\{
      \int_{0}^{1}
      \biggl(
        \frac{D^2 S_n(f_0, t)(e_r, e_{r'})}{S_n(f_0, t)}
        - \frac{D^2 S(f_0, t)(e_r, e_{r'})}{S(f_0, t)}
      \biggr)
      \diffi {N(t)}
    \biggr\}^2 \\
    \nonumber
    &\quad \leq \sum_{r \in \mathcal{R}} \sum_{r' \in \mathcal{R}}
    2 a_r a_{r'}
    \biggl(
      \int_{0}^{1}
      \frac{D^2 S_n(f_0, t)(e_r, e_{r'})\big(S(f_0, t) - S_n(f_0, t)\big)}
      {S(f_0, t) S_n(f_0, t)}
      \diffi {N(t)}
    \biggr)^2 \\
    \nonumber
    &\qquad + \sum_{r \in \mathcal{R}} \sum_{r' \in \mathcal{R}}
    2 a_r a_{r'}
    \biggl(
      \int_{0}^{1}
      \frac{D^2 S_n(f_0, t)(e_r, e_{r'}) - D^2 S(f_0, t)(e_r,
      e_{r'})}{S(f_0, t)}
      \diffi {N(t)}
    \biggr)^2 \\
    \nonumber
    &\quad\leq
    \sup_{t \in [0, 1]}
    \sum_{r \in \mathcal{R}} \sum_{r' \in \mathcal{R}}
    2 a_r a_{r'}
    \biggl(
      \frac{D^2 S_n(f_0, t)(e_r, e_{r'})\big(S(f_0, t) - S_n(f_0, t)\big)}
      {S(f_0, t) S_n(f_0, t)}
    \biggr)^2 \\
    \nonumber
    &\qquad+
    \sup_{t \in [0, 1]}
    \sum_{r \in \mathcal{R}} \sum_{r' \in \mathcal{R}}
    2 a_r a_{r'}
    \biggl(
      \frac{D^2 S_n(f_0, t)(e_r, e_{r'}) - D^2 S(f_0, t)(e_r,
      e_{r'})}{S(f_0, t)}
    \biggr)^2 \\
    \label{eq:second_derivative_term_1}
    &\quad\leq
    \frac{2 H_\gamma^2}{q_1^2}
    \sup_{t \in [0, 1]}
    \big(S_n(f_0, t) - S(f_0, t)\big)^2 \\
    \label{eq:second_derivative_term_2}
    &\qquad+
    \frac{2}{q_1^2}
    \sup_{t \in [0, 1]}
    \sum_{r \in \mathcal{R}} \sum_{r' \in \mathcal{R}}
    a_r a_{r'}
    \big( D^2 S_n(f_0, t)(e_r, e_{r'}) - D^2 S(f_0, t)(e_r, e_{r'}) \big)^2.
  \end{align}
  By Lemma~\ref{lem:convergence_Sn} applied to
  \eqref{eq:second_derivative_term_1}
  and Lemma~\ref{lem:convergence_D2Sn} applied to
  \eqref{eq:second_derivative_term_2},
  there exists a universal constant
  $C_2 > 0$ such that for all $s \geq 1$,
  with probability at least $1 - e^{-s^2}$,
  \begin{align}
    \label{eq:D2_second_derivative_term_bound}
    \sum_{r \in \mathcal{R}}\sum_{r' \in \mathcal{R}}
    a_r a_{r'}
    \biggl\{
      \int_{0}^{1}
      \!
      \biggl(
        \frac{D^2 S_n(f_0, t)(e_r, e_{r'})}{S_n(f_0, t)}
        - \frac{D^2 S(f_0, t)(e_r, e_{r'})}{S(f_0, t)}
      \biggr)
    \diffi {N(t)}\biggr\}^2
    \leq \frac{C_2 s^2 e^{2 \|f_0\|_\infty}
    H_\gamma^2}{n q_1^2}.
  \end{align}
  This concludes the analysis for
  \eqref{eq:D2_second_derivative_term},
  so we now turn to \eqref{eq:D2_first_derivative_term}.
  For $t \in [0, 1]$ and $r, r' \in \mathcal R$,
  \begin{align*}
    \frac{D S_n(f_0, t)(e_r)}{S_n(f_0, t)}&
    \frac{D S_n(f_0, t)(e_{r'})}{S_n(f_0, t)}
    - \frac{D S(f_0, t)(e_r)}{S(f_0, t)}
    \frac{D S(f_0, t)(e_{r'})}{S(f_0, t)} \\
    &=
    \frac{D S_n(f_0, t)(e_r) D S_n(f_0, t)(e_{r'})
      \big(S(f_0, t) - S_n(f_0, t)\big)
      \big(S(f_0, t) + S_n(f_0, t)\big)
    }{S(f_0, t)^2 S_n(f_0, t)^2} \\
    &\hspace{2cm}
    +\frac{D S_n(f_0, t)(e_r)
    \big(D S_n(f_0, t)(e_{r'}) - D S(f_0, t)(e_{r'})\big)}{S(f_0, t)^2} \\
    &\hspace{2cm}
    +\frac{D S(f_0, t)(e_{r'})
    \big(D S_n(f_0, t)(e_r) - D S(f_0, t)(e_r)\big)}{S(f_0, t)^2}.
  \end{align*}
  Therefore by Lemma~\ref{lem:bv_cs}, since $\int_0^1 \mathrm d N(t) \leq 1$,
  \begin{align*}
    &\sum_{r \in \mathcal{R}} \sum_{r' \in \mathcal{R}}
    a_r a_{r'}
    \biggl\{
      \int_{0}^{1}
      \biggl(
        \frac{D S_n(f_0, t)(e_r)}{S_n(f_0, t)}
        \frac{D S_n(f_0, t)(e_{r'})}{S_n(f_0, t)}
        - \frac{D S(f_0, t)(e_r)}{S(f_0, t)}
        \frac{D S(f_0, t)(e_{r'})}{S(f_0, t)}
      \biggr)
      \diffi N(t)
    \biggr\}^2 \\
    &\quad\leq
    \sup_{t \in [0, 1]}
    \sum_{r \in \mathcal{R}} \sum_{r' \in \mathcal{R}}
    a_r a_{r'}
    \biggl(
      \frac{D S_n(f_0, t)(e_r)}{S_n(f_0, t)}
      \frac{D S_n(f_0, t)(e_{r'})}{S_n(f_0, t)}
      - \frac{D S(f_0, t)(e_r)}{S(f_0, t)}
      \frac{D S(f_0, t)(e_{r'})}{S(f_0, t)}
    \biggr)^2 \\
    &\quad\leq
    3 \sup_{t \in [0, 1]}
    \sum_{r \in \mathcal{R}} \sum_{r' \in \mathcal{R}}
    a_r a_{r'}
    \biggl(
      \frac{D S_n(f_0, t)(e_r) D S_n(f_0, t)(e_{r'})
      }{S(f_0, t)^2 S_n(f_0, t)^2}
    \biggr)^2 \\
    &\qquad\qquad\qquad\times
    \big(S(f_0, t) - S_n(f_0, t)\big)^2
    \big(S(f_0, t) + S_n(f_0, t)\big)^2 \\
    &\qquad+
    3 \sup_{t \in [0, 1]}
    \sum_{r \in \mathcal{R}} \sum_{r' \in \mathcal{R}}
    a_r a_{r'}
    \biggl(
      \frac{D S_n(f_0, t)(e_r)
      \big(D S_n(f_0, t)(e_{r'}) - D S(f_0, t)(e_{r'})\big)}{S(f_0, t)^2}
    \biggr)^2 \\
    &\qquad+
    3 \sup_{t \in [0, 1]}
    \sum_{r \in \mathcal{R}} \sum_{r' \in \mathcal{R}}
    a_r a_{r'}
    \biggl(
      \frac{D S(f_0, t)(e_{r'})
      \big(D S_n(f_0, t)(e_r) - D S(f_0, t)(e_r)\big)}{S(f_0, t)^2}
    \biggr)^2 \\
    &\quad\leq
    6 \sup_{t \in [0, 1]}
    \frac{S(f_0, t)^2 + S_n(f_0, t)^2}{S(f_0, t)^4}
    \big(S_n(f_0, t) - S(f_0, t)\big)^2
    \biggl(
      \sum_{r \in \mathcal{R}}
      a_r
      \frac{D S_n(f_0, t)(e_r)^2}{S_n(f_0, t)^2}
    \biggr)^2
    \\
    &\qquad+
    6 \sup_{t \in [0, 1]}
    \frac{1}{S(f_0, t)^4}
    \sum_{r \in \mathcal{R}}
    a_r
    D S_n(f_0, t)(e_r)^2
    \sum_{r' \in \mathcal{R}}
    a_{r'}
    \big(D S_n(f_0, t)(e_{r'}) - D S(f_0, t)(e_{r'})\big)^2.
  \end{align*}
  Applying \eqref{eq:lower_bound_S0}, \eqref{eq:DSn_over_Sn}
  and \eqref{eq:DS_over_S},
  and noting that $|S(f_0, t)| \leq e^{\|f_0\|_\infty}$
  and $|S_n(f_0, t)| \leq e^{\|f_0\|_\infty}$,
  \begin{align*}
    \sum_{r \in \mathcal{R}} \sum_{r' \in \mathcal{R}}
    a_r a_{r'}
    \biggl\{
      \int_{0}^{1}
      \biggl(
        &\frac{D S_n(f_0, t)(e_r)}{S_n(f_0, t)}
        \frac{D S_n(f_0, t)(e_{r'})}{S_n(f_0, t)}
        - \frac{D S(f_0, t)(e_r)}{S(f_0, t)}
        \frac{D S(f_0, t)(e_{r'})}{S(f_0, t)}
      \biggr)
      \diffi N(t)
    \biggr\}^2 \\
    &\leq
    \frac{12 e^{2 \|f_0\|_\infty} H_\gamma^2}{q_1^4}
    \sup_{t \in [0, 1]}
    \big(S_n(f_0, t) - S(f_0, t)\big)^2 \\
    &\hspace{2cm}+
    \frac{6 e^{2 \|f_0\|_\infty} H_\gamma}{q_1^4}
    \sup_{t \in [0, 1]}
    \sum_{r \in \mathcal{R}}
    a_{r}
    \big(D S_n(f_0, t)(e_{r}) - D S(f_0, t)(e_{r})\big)^2.
  \end{align*}
  By Lemma~\ref{lem:convergence_Sn} and
  Lemma~\ref{lem:convergence_DSn},
  there is a universal constant $C_3 > 0$ such that for all $s \geq 1$,
  \begin{align}
    \nonumber
    \sum_{r \in \mathcal{R}} \sum_{r' \in \mathcal{R}}
    a_r a_{r'}
    \biggl\{
      \int_{0}^{1}
      \biggl(
        \frac{D S_n(f_0, t)(e_r)}{S_n(f_0, t)}
        \frac{D S_n(f_0, t)(e_{r'})}{S_n(f_0, t)}
        &- \frac{D S(f_0, t)(e_r)}{S(f_0, t)}
        \frac{D S(f_0, t)(e_{r'})}{S(f_0, t)}
      \biggr)
      \diffi N(t)
    \biggr\}^2 \\
    \label{eq:D2_first_derivative_term_bound}
    &\quad\leq
    \frac{C_3 s^2 e^{4 \|f_0\|_\infty} H_\gamma^2}{n q_1^4}
  \end{align}
  with probability at least $1 - e^{-s^2}$.
  We finally consider \eqref{eq:D2_integrator_term}.
  For $i \in [n]$ and $r, r' \in \mathcal R$, define
  \begin{align}
    \label{eq:xi_irr_def}
    \xi'_{i r r'}
    &\vcentcolon=
    \int_{0}^{1}
    \biggl(
      \frac{D^2 S(f_0, t)(e_r, e_{r'})}{S(f_0, t)}
      - \frac{D S(f_0, t)(e_r)}{S(f_0, t)}
      \frac{D S(f_0, t)(e_{r'})}{S(f_0, t)}
    \biggr)
    \big(
      \mathrm d N_i(t) - S(f_0, t) \lambda_0(t)\,\mathrm d t
    \big).
  \end{align}
  By
  Lemma~\ref{lem:derivatives_S} and
  Cauchy--Schwarz, for $t \in [0,1]$,
  \begin{align*}
    &\sum_{r \in \mathcal R}
    \sum_{r' \in \mathcal R}
    a_r a_{r'}
    D^2 S(f_0, t)(e_r, e_{r'})^2 \\
    &\quad=
    \sum_{r \in \mathcal R}
    \sum_{r' \in \mathcal R}
    a_r a_{r'}
    \left(
      \int_{\cX}
      q(x, t)
      e^{f_0(x)}
      e_r(x) e_{r'}(x)
      \diffi P_X(x)
    \right)^2 \\
    &\quad=
    \int_{\cX}
    \int_{\cX}
    q(x, t)
    q(x', t)
    e^{f_0(x)}
    e^{f_0(x')}
    \sum_{r \in \mathcal R}
    a_r
    e_r(x)
    e_r(x')
    \sum_{r' \in \mathcal R}
    a_{r'}
    e_{r'}(x)
    e_{r'}(x')
    \diffi P_X(x)
    \diffi P_X(x') \\
    &\quad\leq
    H_\gamma^2
    \biggl(
      \int_{\cX}
      q(x, t)
      e^{f_0(x)}
      \diffi P_X(x)
    \biggr)^2
    = H_\gamma^2
    S(f_0, t)^2.
  \end{align*}
  Note that $\xi'_{i r r'}$ is $(X_i, T_i, I_i)$-measurable,
  and $\E(\xi'_{i r r'}) = 0$ because
  the integrand in~\eqref{eq:xi_irr_def} is non-random and bounded, and
  $\E\{N(t)\} = \int_0^t \lambda_0(s) S(f_0, s) \diffi s \leq 1$
  for all $t \in [0, 1]$.
  Thus, by Lemma~\ref{lem:bv_cs} and \eqref{eq:DS_over_S},
  since $|S(f_0, t)| \leq e^{\|f_0\|_\infty}$,
  \begin{align}
    \nonumber
    \sum_{r \in \mathcal R}
    \sum_{r' \in \mathcal R}
    a_r a_{r'}
    \xi_{i r r'}^{\prime 2}
    &\leq
    \biggl(
      N_i(1)
      + \int_{0}^{1} S(f_0, t) \lambda_0(t) \diffi t
    \biggr)^2 \\
    \nonumber
    &\ \ \times
    \sup_{t \in [0, 1]}
    \sum_{r \in \mathcal R}
    \sum_{r' \in \mathcal R}
    a_r a_{r'}
    \biggl(
      \frac{D^2 S(f_0, t)(e_r, e_{r'})}{S(f_0, t)}
      - \frac{D S(f_0, t)(e_r)}{S(f_0, t)}
      \frac{D S(f_0, t)(e_{r'})}{S(f_0, t)}
    \biggr)^2 \\
    \label{eq:xi_irr_bound}
    &\leq
    16 H_\gamma^2.
  \end{align}
  Therefore by Cauchy--Schwarz,
  with $\mathcal{I} := \{(i,j):i,j \in [n],i<j\}$,
  \begin{align*}
    \sum_{r \in \mathcal R}
    \sum_{r' \in \mathcal R}
    a_r a_{r'}
    \biggl( \frac{1}{n} \sum_{i=1}^n \xi'_{i r r'} \biggr)^2 &\leq
    \frac{1}{n^2}
    \sum_{i=1}^n
    \sum_{r \in \mathcal R}
    \sum_{r' \in \mathcal R}
    a_r a_{r'}
    \xi_{i r r'}^{\prime 2}
    + \frac{2}{n^2}
    \sum_{(i, j) \in \mathcal I}
    \sum_{r \in \mathcal R}
    \sum_{r' \in \mathcal R}
    a_r a_{r'}
    \xi'_{i r r'}
    \xi'_{j r r'} \\
    &\leq
    \frac{16 H_\gamma^2}{n}
    +
    \biggl|
    \frac{2}{n (n-1)}
    \sum_{(i, j) \in \mathcal I}
    \sum_{r \in \mathcal R}
    \sum_{r' \in \mathcal R}
    a_r a_{r'}
    \xi'_{i r r'}
    \xi'_{j r r'}
    \biggr|.
  \end{align*}
  The last term above is a degenerate second-order $U$-statistic
  so by Cauchy--Schwarz, \eqref{eq:xi_irr_bound}
  and Lemma~\ref{lem:orliczUStat},
  there exists a universal constant $C_4 > 0$ such that for all
  $s \geq 1$,
  \begin{align*}
    \biggl|
    \frac{2}{n (n-1)}
    \sum_{(i, j) \in \mathcal I}
    \sum_{r \in \mathcal R}
    \sum_{r' \in \mathcal R}
    a_r a_{r'}
    \xi'_{i r r'}
    \xi'_{j r r'}
    \biggr|
    \leq
    \frac{C_4 s^2 H_\gamma^2}{n}
  \end{align*}
  with probability at least $1 - e^{-s^2}$.
  Thus, there is a universal constant $C_5 > 0$ such that
  \begin{align}
    \nonumber
    \sum_{r \in \mathcal R}
    \sum_{r' \in \mathcal R}
    a_r a_{r'}
    \biggl\{
      \int_{0}^{1}
      \biggl(
        \frac{D^2 S(f_0, t)(e_r, e_{r'})}{S(f_0, t)}
        &- \frac{D S(f_0, t)(e_r)}{S(f_0, t)}
        \frac{D S(f_0, t)(e_{r'})}{S(f_0, t)}
      \biggr) \\
      \nonumber
      &\hspace{2cm}\times
      \big(
        \mathrm d N(t) - S(f_0, t) \lambda_0(t)\,\mathrm d t
      \big)
    \biggr\}^2 \\
    \label{eq:D2_integrator_term_bound}
    &\leq
    \frac{C_5 s^2 H_\gamma^2}{n}
  \end{align}
  with probability at least $1 - e^{-s^2}$.
  Combining \eqref{eq:D2_operator_bound},
  \eqref{eq:D2_second_derivative_term},
  \eqref{eq:D2_first_derivative_term},
  \eqref{eq:D2_integrator_term},
  \eqref{eq:D2_second_derivative_term_bound},
  \eqref{eq:D2_first_derivative_term_bound}
  and \eqref{eq:D2_integrator_term_bound},
  it follows that there exists a universal constant $C > 0$ such that
  \begin{align*}
    \|D^2 \ell_{n}(f_0)
    - D^2 \ell_{\star}(f_0)\|_{\cH_\gamma,\op}^2
    \leq
    \frac{C^2 s^2 H_\gamma^2 e^{4\|f_0\|_\infty}}
    {n q_1^4}
  \end{align*}
  with probability at least $1 - e^{-s^2}$.
\end{proof}

\subsubsection{Bounding third-order terms from the basic inequality}

We bound the term in \eqref{eq:basic_inequality} involving $D^3 \ell_n$.
For $f, f_1, f_2 \in \cB(\cX)$, define
\begin{align*}
  V_n(f)(f_1, f_2)
  &\vcentcolon=
  \int_{0}^{1}
  \frac{D^2 S_n(f, t)(f_1, f_2)}{S_n(f, t)}
  \diffi {N(t)}, \\
  V(f)(f_1, f_2)
  &\vcentcolon=
  \int_{0}^{1}
  \frac{D^2 S(f, t)(f_1, f_2)}{S(f, t)}
  S(f_0, t)
  \lambda_0(t) \diffi t.
\end{align*}
In Lemma~\ref{lem:convergence_D3ln} we provide an upper bound
for $D^3 \ell_n$ in terms of $V_n$.
Lemma~\ref{lem:convergence_Vn} then establishes convergence of
$V_n$ to $V$, while
Lemma~\ref{lem:bound_V} presents an upper bound for $V$.

\begin{lemma}[Convergence of $D^3 \ell_n$]%
  \label{lem:convergence_D3ln}
  Let $f, f_1 \in \cB(\cX)$. Then
  \begin{align*}
    \big|D^3 \ell_{n}(f)(f_1)^{\otimes 3}\big|
    &\leq
    6 \|f_1\|_\infty e^{2\|f - f_0\|_\infty}
    V_n(f_0)(f_1, f_1).
  \end{align*}
\end{lemma}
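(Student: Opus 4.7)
The plan is to start from the explicit formula for $D^3 \ell_n(f)(f_1, f_2, f_3)$ derived in \eqref{eq:third_derivative_ln}, specialise to $f_1 = f_2 = f_3$, and then recognise the resulting integrand as a third central moment that can be controlled by a second moment. Specifically, setting all three directions equal in~\eqref{eq:third_derivative_ln} yields
\begin{equation*}
  D^3 \ell_n(f)(f_1)^{\otimes 3}
  = \int_0^1 \biggl(
    \frac{D^3 S_n(f,t)(f_1)^{\otimes 3}}{S_n(f,t)}
    - 3 \frac{D^2 S_n(f,t)(f_1)^{\otimes 2}\, D S_n(f,t)(f_1)}{S_n(f,t)^2}
    + 2 \frac{D S_n(f,t)(f_1)^3}{S_n(f,t)^3}
  \biggr) \diff N(t).
\end{equation*}

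Next, I would observe that by Lemma~\ref{lem:derivatives_Sn}, for each $t \in [0,1]$ with $S_n(f,t) > 0$, the weights $w_i(t) \vcentcolon= R_i(t) e^{f(X_i)} / (n S_n(f,t))$ form a probability distribution on $[n]$, under which the quantities $D^m S_n(f,t)(f_1)^{\otimes m} / S_n(f,t)$ equal $\sum_{i=1}^n w_i(t) f_1(X_i)^m$, the $m$-th moments of $f_1$. Writing $\mu_t \vcentcolon= \sum_{i=1}^n w_i(t) f_1(X_i)$, the integrand then simplifies to the third central moment $\sum_{i=1}^n w_i(t)\bigl(f_1(X_i) - \mu_t\bigr)^3$.

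Applying the triangle inequality and the bound $|f_1(X_i)| \leq \|f_1\|_\infty$ gives
\begin{equation*}
  \biggl|\sum_{i=1}^n w_i(t)\bigl(f_1(X_i) - \mu_t\bigr)^3\biggr|
  \leq 6 \|f_1\|_\infty \sum_{i=1}^n w_i(t) f_1(X_i)^2
  = 6 \|f_1\|_\infty \frac{D^2 S_n(f,t)(f_1,f_1)}{S_n(f,t)},
\end{equation*}
using the termwise bounds $|\mu_t|^3 \leq \|f_1\|_\infty \mu_t^2 \leq \|f_1\|_\infty \sum_i w_i(t) f_1(X_i)^2$ for the cubic term and analogous reductions for the other two terms. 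It remains to swap the base point of the ratio from $f$ to $f_0$. Using the pointwise sandwich $e^{-\|f - f_0\|_\infty} e^{f_0(X_i)} \leq e^{f(X_i)} \leq e^{\|f - f_0\|_\infty} e^{f_0(X_i)}$ in the numerator and denominator separately,
\begin{equation*}
  \frac{D^2 S_n(f,t)(f_1,f_1)}{S_n(f,t)}
  \leq e^{2\|f - f_0\|_\infty}
  \frac{D^2 S_n(f_0,t)(f_1,f_1)}{S_n(f_0,t)}.
\end{equation*}
Integrating against $\mathrm{d} N(t)$ (a nonnegative measure) yields the claimed bound.

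There is no substantial obstacle here; the only mildly delicate point is the algebraic observation that the integrand is exactly a third central moment under the random weights $w_i(t)$, which immediately unlocks a clean bound by $2 \|f_1\|_\infty$ times the variance, and hence by $6 \|f_1\|_\infty$ times the second moment. The factor $e^{2\|f-f_0\|_\infty}$ then comes for free from the elementary ratio comparison.
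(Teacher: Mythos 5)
Your proposal is correct and follows essentially the same route as the paper: starting from the explicit formula~\eqref{eq:third_derivative_ln}, bounding the integrand pointwise by $6\|f_1\|_\infty\,D^2 S_n(f,t)(f_1,f_1)/S_n(f,t)$ via the termwise estimates $|D^3 S_n| \leq \|f_1\|_\infty D^2 S_n$, $|D S_n| \leq \|f_1\|_\infty S_n$ and the Cauchy--Schwarz bound $D S_n^2 \leq S_n\, D^2 S_n$, and then swapping the base point from $f$ to $f_0$ using the exponential sandwich, which gives the factor $e^{2\|f-f_0\|_\infty}$. The reframing of the integrand as a third central moment under the empirical weights $w_i(t)$ is a pleasant observation, but it does not change the substance: the three bounds you apply in that framework ($|m_3|\le\|f_1\|_\infty m_2$, $|m_1|\,m_2\le\|f_1\|_\infty m_2$, $|m_1|^3\le\|f_1\|_\infty m_1^2 \le \|f_1\|_\infty m_2$) are precisely the paper's bounds translated into the moment notation, and they yield the same constant $6$.
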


\begin{proof}[Proof of Lemma~\ref{lem:convergence_D3ln}]
  By Lemma~\ref{lem:derivatives_Sn} and Cauchy--Schwarz,
  \begin{align*}
    |D^3 S_n(f, t)(f_1, f_1, f_1)|
    &\leq
    \frac{1}{n} \sum_{i=1}^n
    \bigl|
    R_i(t) f_1(X_i)^3 e^{f(X_i)}
    \bigr|
    \leq
    \|f_1\|_\infty
    D^2 S_n(f, t)(f_1, f_1), \\
    |D S_n(f, t)(f_1)|
    &\leq
    \frac{1}{n} \sum_{i=1}^n
    \bigl|
    R_i(t) f_1(X_i) e^{f(X_i)}
    \bigr|
    \leq
    \|f_1\|_\infty
    S_n(f, t), \\
    D S_n(f, t)(f_1)^2
    &=
    \biggl(
      \frac{1}{n} \sum_{i=1}^n
      R_i(t) f_1(X_i) e^{f(X_i)}
    \biggr)^2 \\
    &\leq
    \biggl(
      \frac{1}{n} \sum_{i=1}^n
      R_i(t) e^{f(X_i)}
    \biggr)
    \biggl(
      \frac{1}{n} \sum_{i=1}^n
      R_i(t) f_1(X_i)^2 e^{f(X_i)}
    \biggr) \\
    &= S_n(f, t) D^2 S_n(f, t)(f_1, f_1).
  \end{align*}
  Further,
  \begin{align*}
    S_n(f, t)
    &=
    \frac{1}{n} \sum_{i=1}^n
    R_i(t) e^{f_0(X_i)}
    e^{f(X_i) - f_0(X_i)}
    \geq
    e^{-\|f - f_0\|_\infty}
    S_n(f_0, t)
  \end{align*}
  and
  \begin{align*}
    D^2 S_n(f, t)(f_1, f_1)
    &=
    \frac{1}{n} \sum_{i=1}^n
    R_i(t) f_1(X_i)^2 e^{f_0(X_i)}
    e^{f(X_i) - f_0(X_i)}
    \leq
    e^{\|f - f_0\|_\infty}
    D^2 S_n(f_0, t)(f_1, f_1).
  \end{align*}
  Applying these bounds to \eqref{eq:third_derivative_ln}
  and using Lemma~\ref{lem:derivatives_ln}, we obtain
  \begin{align*}
    \big|D^3 \ell_{n}(f)(f_1, f_1, f_1)\big|
    &\leq
    6 \|f_1\|_\infty
    \int_{0}^{1}
    \frac{D^2 S_n(f, t)(f_1, f_1)}{S_n(f, t)}
    \diffi {N(t)} \\
    &\leq
    6 \|f_1\|_\infty e^{2\|f - f_0\|_\infty}
    \int_{0}^{1}
    \frac{D^2 S_n(f_0, t)(f_1, f_1)}{S_n(f_0, t)}
    \diffi {N(t)}. \qedhere
  \end{align*}
\end{proof}

\begin{lemma}[Convergence of $V_n$]%
  \label{lem:convergence_Vn}
  Consider $V_n(f_0)$ and $V(f_0)$ as bilinear functionals
  from $\cH \times \cH$ to $\R$.
  There is a universal constant $C > 0$ such that for all $s \geq 1$,
  with probability at least $1 - e^{-s^2}$,
  \begin{align*}
    \|V_n(f_0) - V(f_0)\|_{\cH_\gamma,\op}
    \leq
    \frac{C s e^{2 \|f_0\|_\infty}}{q_1^2}
    \sqrt{\frac{H_\gamma^2}{n}}.
  \end{align*}
\end{lemma}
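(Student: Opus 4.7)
The plan is to mirror closely the structure used in the proof of Proposition~\ref{prop:convergence_D2ln}, since the difference $V_n(f_0) - V(f_0)$ decomposes into the same two kinds of stochastic fluctuation. As a first step, I would apply the bilinear operator-norm bound \eqref{eq:bilinear_operator_norm} to obtain
\begin{equation*}
  \|V_n(f_0) - V(f_0)\|_{\cH_\gamma,\op}^2
  \leq
  \sum_{r \in \mathcal{R}}\sum_{r' \in \mathcal{R}}
  a_r a_{r'}
  \bigl(V_n(f_0)(e_r, e_{r'}) - V(f_0)(e_r, e_{r'})\bigr)^2,
\end{equation*}
so that it suffices to control the right-hand side in probability.

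Next, for each $r, r' \in \mathcal R$ I would decompose
\begin{align*}
  V_n(f_0)(e_r, e_{r'}) - V(f_0)(e_r, e_{r'})
  &=
  \int_{0}^{1}
  \biggl(
    \frac{D^2 S_n(f_0, t)(e_r, e_{r'})}{S_n(f_0, t)}
    - \frac{D^2 S(f_0, t)(e_r, e_{r'})}{S(f_0, t)}
  \biggr)
  \diff N(t) \\
  &\quad+
  \int_{0}^{1}
  \frac{D^2 S(f_0, t)(e_r, e_{r'})}{S(f_0, t)}
  \bigl(
    \mathrm d N(t) - S(f_0, t) \lambda_0(t)\,\mathrm d t
  \bigr).
\end{align*}
The first summand is precisely the quantity \eqref{eq:D2_second_derivative_term} that appears in the proof of Proposition~\ref{prop:convergence_D2ln}; summing its square against the weights $a_r a_{r'}$ was already bounded in \eqref{eq:D2_second_derivative_term_bound} using Lemma~\ref{lem:convergence_Sn} together with Lemma~\ref{lem:convergence_D2Sn}, yielding a contribution of order $s^2 e^{2\|f_0\|_\infty} H_\gamma^2/(n q_1^2)$ with probability at least $1 - e^{-s^2}$.

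For the second summand, I would proceed exactly as in the treatment of \eqref{eq:D2_integrator_term}, setting for $i \in [n]$
\begin{equation*}
  \eta_{i r r'}
  \vcentcolon=
  \int_{0}^{1}
  \frac{D^2 S(f_0, t)(e_r, e_{r'})}{S(f_0, t)}
  \bigl(
    \mathrm d N_i(t) - S(f_0, t) \lambda_0(t)\,\mathrm d t
  \bigr).
\end{equation*}
These are mean-zero $(X_i, T_i, I_i)$-measurable variables. Using Lemma~\ref{lem:bv_cs} together with the bound $\sum_{r,r'} a_r a_{r'} (D^2 S(f_0,t)(e_r,e_{r'}))^2/S(f_0,t)^2 \leq H_\gamma^2$ obtained exactly as in the derivation of \eqref{eq:xi_irr_bound} (but omitting the $DS\cdot DS$ cross term), I get $\sum_{r,r'} a_r a_{r'} \eta_{i r r'}^2 \leq 4 H_\gamma^2$. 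Expanding the square then produces diagonal terms of order $H_\gamma^2/n$ and a degenerate second-order $U$-process whose supremum (in this case, the $U$-process is finite since there is no $t$ to vary over index-wise once we have summed over $r, r'$) can be controlled by Lemma~\ref{lem:orliczUStat} to order $s^2 H_\gamma^2/n$ with probability at least $1 - e^{-s^2}$.

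Combining these two contributions yields the stated bound of order $(s e^{2\|f_0\|_\infty}/q_1^2)\sqrt{H_\gamma^2/n}$. The main technical delicacy, which is nonetheless handled by Proposition~\ref{prop:convergence_D2ln}'s existing machinery, is the uniform-in-$t$ control of the ratio $D^2 S_n/S_n - D^2 S/S$ in the $\|\cdot\|_{\cH_\gamma,\op}$ norm; this requires both the lower bound $S(f_0,t) \geq q_1$ from \eqref{eq:lower_bound_S0} and the bilinear-form version of the Cauchy--Schwarz argument \eqref{eq:second_deriv_Hgamma_bound}, but all the ingredients are already in place.
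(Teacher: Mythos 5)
Your proof follows exactly the same route as the paper: apply the bilinear operator-norm bound \eqref{eq:bilinear_operator_norm}, split $V_n(f_0)(e_r,e_{r'}) - V(f_0)(e_r,e_{r'})$ into the ratio-difference term already bounded in \eqref{eq:D2_second_derivative_term_bound} and a martingale-type term controlled by a simpler version of the $\xi'_{irr'}$ argument used for \eqref{eq:D2_integrator_term}. This matches the paper's proof, which indeed notes that the second term requires only ``a slightly simpler argument'' (your $\eta_{irr'}$ omits the $DS\cdot DS$ cross term, just as intended).
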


\begin{proof}[Proof of Lemma~\ref{lem:convergence_Vn}]
  By~\eqref{eq:bilinear_operator_norm},
  \begin{align}
    \nonumber
    \|V_n(f_0) &- V(f_0)\|_{\cH_\gamma,\op}^2
    \leq
    \sum_{r \in \mathcal{R}}
    \sum_{r' \in \mathcal{R}}
    a_r a_{r'}
    \bigl( V_n(f_0)(e_r, e_{r'}) - V(f_0)(e_r, e_{r'}) \bigr)^2 \\
    \nonumber
    &=
    \sum_{r \in \mathcal{R}}
    \sum_{r' \in \mathcal{R}}
    a_r a_{r'}
    \biggl(
      \int_{0}^{1}
      \frac{D^2 S_n(f_0, t)(e_r, e_{r'})}{S_n(f_0, t)}
      \diffi {N(t)}
      - \int_{0}^{1}
      D^2 S(f_0, t)(e_r, e_{r'})
      \lambda_0(t) \diffi t
    \biggr)^2 \\
    \label{eq:part_D2l_1}
    &\leq
    2\sum_{r \in \mathcal{R}}
    \sum_{r' \in \mathcal{R}}
    a_r a_{r'}
    \biggl\{
      \int_{0}^{1}
      \biggl(
        \frac{D^2 S_n(f_0, t)(e_r, e_{r'})}{S_n(f_0, t)}
        - \frac{D^2 S(f_0, t)(e_r, e_{r'})}{S(f_0, t)}
      \biggr)
      \diffi N(t)
    \biggr\}^2 \\
    \label{eq:part_D2l_2}
    &\qquad +
    2\sum_{r \in \mathcal{R}}
    \sum_{r' \in \mathcal{R}}
    a_r a_{r'}
    \biggl\{
      \int_{0}^{1}
      \frac{D^2 S(f_0, t)(e_r, e_{r'})}{S(f_0, t)}
      \big( \mathrm d N(t) - S(f_0, t) \lambda_0(t)\,\mathrm d t \big)
    \biggr\}^2.
  \end{align}
  A probabilistic bound for~\eqref{eq:part_D2l_1} is given
  in~\eqref{eq:D2_second_derivative_term_bound},
  while \eqref{eq:part_D2l_2} is controlled by a slightly simpler
  argument than that used to control~\eqref{eq:D2_integrator_term}.
  We therefore recover the same probabilistic bound as that
  given in Proposition~\ref{prop:convergence_D2ln}.
\end{proof}

\begin{lemma}[Upper bound for $V$]%
  \label{lem:bound_V}
  For $f_1 \in L_2(P_X)$, we have
  \begin{align*}
    V(f_0)(f_1, f_1)
    &\leq
    e^{\|f_0\|_\infty}
    \Lambda
    \|f_1\|_{L_2}^2.
  \end{align*}
\end{lemma}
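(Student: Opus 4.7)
The plan is to argue directly from the definition, using the fact that setting $f = f_0$ causes the ratio $S(f_0, t) / S(f, t)$ to collapse to $1$. Specifically, by definition,
\[
  V(f_0)(f_1, f_1)
  = \int_0^1 D^2 S(f_0, t)(f_1, f_1) \, \lambda_0(t) \diff t,
\]
so the whole problem reduces to bounding $D^2 S(f_0, t)(f_1, f_1)$ uniformly in $t \in [0,1]$.

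First I would apply Lemma~\ref{lem:derivatives_S} (with $m = 2$ and $f = f_0$) to write
\[
  D^2 S(f_0, t)(f_1, f_1)
  = \int_\cX q(x, t) \, e^{f_0(x)} f_1(x)^2 \diff P_X(x).
\]
Then, since $0 \leq q(x,t) \leq 1$ for all $x \in \cX$ and $t \in [0,1]$, and $e^{f_0(x)} \leq e^{\|f_0\|_\infty}$, this integrand is bounded above by $e^{\|f_0\|_\infty} f_1(x)^2$, giving
\[
  D^2 S(f_0, t)(f_1, f_1) \leq e^{\|f_0\|_\infty} \|f_1\|_{L_2}^2.
\]

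Finally, I would integrate this pointwise-in-$t$ bound against $\lambda_0(t) \diff t$ on $[0,1]$ and recognise $\int_0^1 \lambda_0(t) \diff t = \Lambda$ (as introduced in Section~\ref{sec:data}) to conclude
\[
  V(f_0)(f_1, f_1)
  \leq e^{\|f_0\|_\infty} \|f_1\|_{L_2}^2 \int_0^1 \lambda_0(t) \diff t
  = e^{\|f_0\|_\infty} \Lambda \|f_1\|_{L_2}^2.
\]
There is no real obstacle: the lemma is an elementary consequence of the explicit form of $D^2 S$ and the uniform bounds $q \leq 1$, $\lambda_0 \in L_1([0,1])$ already established earlier in the paper. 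The only point to be mindful of is that the extraction of $S(f_0, t)$ from numerator and denominator is what forces $f = f_0$ and removes any dependence on the more delicate lower bound $S(f_0, t) \geq q_1$ used elsewhere.
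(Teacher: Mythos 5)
Your proof is correct and follows essentially the same route as the paper: both note that taking $f = f_0$ in the definition of $V$ cancels the factor $S(f_0,t)/S(f,t)$, invoke the explicit form of $D^2 S(f_0,t)(f_1,f_1)$ from Lemma~\ref{lem:derivatives_S}, bound $q(x,t) \leq 1$ and $e^{f_0(x)} \leq e^{\|f_0\|_\infty}$, and integrate against $\lambda_0$ to produce the factor $\Lambda$. The only cosmetic difference is that the paper writes the double integral directly rather than spelling out the intermediate cancellation.
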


\begin{proof}[Proof of Lemma~\ref{lem:bound_V}]
  We have
  \begin{align*}
    V(f_0)(f_1, f_1)
    &=
    \int_{0}^{1}
    \int_\cX
    q(x, t)
    e^{f_0(x)}
    f_1(x)^2
    \diffi P_X(x)
    \lambda_0(t) \diffi t \\
    &\leq
    e^{\|f_0\|_\infty}
    \int_{0}^{1} \lambda_0(t) \diffi t
    \int_\cX f_1^2 \diffi P_X
    =
    e^{\|f_0\|_\infty}
    \Lambda
    \|f_1\|_{L_2}^2.
    \qedhere
  \end{align*}
\end{proof}

\subsubsection{Convergence of \texorpdfstring{$P_n$}{Pn}}

In Lemma~\ref{lem:convergence_Pn} we bound the difference
between $P_n(f_0)$ and $P_X(f_0)$,
while in Lemma~\ref{lem:uniform_convergence_Pn}
we give a bound for the supremum
of the empirical process $P_n - P_X$ over $\cH_\gamma$-balls in $\cH$.

\begin{lemma}[Convergence of $P_n$]%
  \label{lem:convergence_Pn}
  For all $s \geq 1$, we have
  $|P_n(f_0)| \leq 2 s \|f_0\|_\infty / \sqrt n$
  with probability at least $1 - e^{-s^2}$.
\end{lemma}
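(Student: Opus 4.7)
The plan is to apply Hoeffding's inequality directly, since $P_n(f_0) = \frac{1}{n}\sum_{i=1}^n f_0(X_i)$ is a sample mean of independent and identically distributed random variables. By the identifiability constraint imposed on $f_0$ in Section~\ref{sec:data}, we have $P_X(f_0) = 0$, so $\E\{f_0(X_i)\} = 0$ for each $i$. Moreover, since $f_0 \in \cB(\cX)$, each summand is bounded: $|f_0(X_i)| \leq \|f_0\|_\infty$ almost surely, so $f_0(X_i)$ takes values in the interval $[-\|f_0\|_\infty, \|f_0\|_\infty]$ of length $2\|f_0\|_\infty$.

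By Hoeffding's inequality applied to the centred, bounded random variables $f_0(X_1), \ldots, f_0(X_n)$, we obtain for any $t > 0$ that
\begin{equation*}
  \P\bigl(|P_n(f_0)| \geq t\bigr)
  \leq
  2 \exp\!\left(-\frac{n t^2}{2 \|f_0\|_\infty^2}\right).
\end{equation*}
Setting $t = 2 s \|f_0\|_\infty / \sqrt{n}$ yields a right-hand side of $2 e^{-2 s^2}$, which is bounded above by $e^{-s^2}$ whenever $s^2 \geq \log 2$; this condition holds for all $s \geq 1$. The conclusion of the lemma follows immediately.

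There is essentially no obstacle here: the result is a textbook Hoeffding bound, and the only mild observation needed is the constant-factor check $2 e^{-2 s^2} \leq e^{-s^2}$ in the tail regime $s \geq 1$, which costs nothing.
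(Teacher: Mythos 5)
Your proof is correct and takes exactly the route the paper uses: the paper's proof is the one-line remark that the result follows from Hoeffding's inequality together with $P_X(f_0)=0$, and you have simply supplied the standard details (bounded, centred summands, Hoeffding tail bound, and the numerical check $2e^{-2s^2}\leq e^{-s^2}$ for $s\geq 1$).
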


\begin{proof}[Proof of Lemma~\ref{lem:convergence_Pn}]
  This follows by Hoeffding's inequality, noting that $P_X(f_0) = 0$.
\end{proof}

\begin{lemma}[Uniform convergence of $P_n$ on $\cH_\gamma$-balls]%
  \label{lem:uniform_convergence_Pn}
  Let $M > 0$. Then for all $s \geq 1$,
  \begin{align*}
    \P \biggl(
      \sup_{f \in \cH: \|f\|_{\cH_\gamma} \leq M}
      \bigl|
      P_n(f) - P_X(f)
      \bigr|
      > 4 M s \sqrt{\frac{H_\gamma}{n}}
    \biggr)
    &\leq e^{-s^2}.
  \end{align*}
\end{lemma}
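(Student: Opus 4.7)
The plan is to reduce the supremum over the $\mathcal{H}_\gamma$-ball to the $\ell_2$-norm of a single random vector in $\ell_2(\mathcal{R})$, and then apply McDiarmid's bounded-differences inequality.

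First, I would note that every $f \in \mathcal{H}$ admits the $L_2$-expansion $f = \sum_{r \in \mathcal{R}} \langle f,e_r \rangle_{L_2} e_r$, and hence
\begin{equation*}
  P_n(f) - P_X(f) = \sum_{r \in \mathcal{R}} \langle f,e_r\rangle_{L_2} Z_r, \qquad Z_r \vcentcolon= P_n(e_r) - P_X(e_r).
\end{equation*}
A Cauchy--Schwarz step, paired as $\langle f,e_r\rangle_{L_2}/\sqrt{a_r} \cdot \sqrt{a_r} Z_r$, yields $|P_n(f)-P_X(f)|^2 \leq \|f\|_{\cH_\gamma}^2 \sum_{r \in \mathcal{R}} a_r Z_r^2$, so on the ball $\{\|f\|_{\cH_\gamma} \leq M\}$ the supremum is at most $M W$, where $W \vcentcolon= \bigl(\sum_r a_r Z_r^2\bigr)^{1/2}$. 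It then suffices to show that $W \leq 4s\sqrt{H_\gamma/n}$ with probability at least $1 - e^{-s^2}$.

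Next, I would rewrite $W$ as an $\ell_2(\mathcal{R})$-norm of an empirical mean: define the iid random vectors
\begin{equation*}
  \zeta_i \vcentcolon= \bigl(\sqrt{a_r}\bigl(e_r(X_i) - P_X(e_r)\bigr)\bigr)_{r \in \mathcal{R}} \in \ell_2(\mathcal{R}),
\end{equation*}
so that $W = \|\tfrac{1}{n}\sum_{i=1}^n \zeta_i\|_{\ell_2}$ and $\mathbb{E}\zeta_i = 0$. Using $(a-b)^2 \leq 2a^2+2b^2$ together with the two bounds $\sum_r a_r e_r(x)^2 \leq H_\gamma$ and $\sum_r a_r P_X(e_r)^2 \leq P_X\bigl(\sum_r a_r e_r^2\bigr) \leq H_\gamma$ (the latter by Jensen's inequality $P_X(e_r)^2 \leq P_X(e_r^2)$), I get the almost-sure bound $\|\zeta_i\|_{\ell_2}^2 \leq 4H_\gamma$, and the sharper variance bound $\mathbb{E}\|\zeta_i\|_{\ell_2}^2 = \sum_r a_r \bigl(1 - P_X(e_r)^2\bigr) \leq H_\gamma$. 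Therefore by Jensen, $\mathbb{E}W \leq (\mathbb{E}W^2)^{1/2} = (n^{-1}\mathbb{E}\|\zeta_1\|_{\ell_2}^2)^{1/2} \leq \sqrt{H_\gamma/n}$.

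For concentration I would apply McDiarmid's inequality to $W$, viewed as a function of $(X_1,\ldots,X_n)$. Replacing a single $X_i$ changes $\sum_i \zeta_i$ in $\ell_2$-norm by at most $2\|\zeta_i\|_{\ell_2} \leq 4\sqrt{H_\gamma}$, so by the reverse triangle inequality $W$ has bounded differences of at most $4\sqrt{H_\gamma}/n$. McDiarmid then gives
\begin{equation*}
  \mathbb{P}\bigl(W - \mathbb{E}W > t\bigr) \leq \exp\!\left(-\tfrac{2t^2}{n \cdot (4\sqrt{H_\gamma}/n)^2}\right) = \exp\!\left(-\tfrac{n t^2}{8H_\gamma}\right),
\end{equation*}
and taking $t = s\sqrt{8H_\gamma/n} = 2\sqrt{2}\,s\sqrt{H_\gamma/n}$ yields $W \leq (1 + 2\sqrt{2}\,s)\sqrt{H_\gamma/n}$ with probability at least $1 - e^{-s^2}$. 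Since $1 + 2\sqrt{2} < 4$ and $s \geq 1$, this is bounded above by $4s\sqrt{H_\gamma/n}$, which combined with the initial reduction gives the claimed bound.

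The main subtlety is not a technical obstacle but a bookkeeping one: it is important to use the variance bound $\mathbb{E}\|\zeta_i\|_{\ell_2}^2 \leq H_\gamma$ rather than the cruder $4H_\gamma$ (so that the mean term is $\sqrt{H_\gamma/n}$ rather than $2\sqrt{H_\gamma/n}$), since this is exactly what lets the final numerical constant be as small as $1 + 2\sqrt{2} < 4$. The alternative route would be a second-order $U$-statistic argument parallel to the previous lemmas, invoking Lemma~\ref{lem:orliczUStat} on the off-diagonal term, but the McDiarmid argument above is both tighter and more direct.
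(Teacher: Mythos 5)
Your proof is correct, and it takes a genuinely different route from the paper's. Both arguments hinge on the same weighted Cauchy--Schwarz pairing $\langle f,e_r\rangle_{L_2}/\sqrt{a_r}\cdot\sqrt{a_r}(\cdot)$ and both conclude with the bounded-differences inequality, but you apply Cauchy--Schwarz \emph{pathwise}, before taking any expectations. This collapses the entire supremum over the $\cH_\gamma$-ball into a single measurable random variable $W = \bigl(\sum_r a_r Z_r^2\bigr)^{1/2}$, whereas the paper first invokes the symmetrisation inequality of \citet[Lemma~2.3.1]{van1996weak} (which forces the paragraph on pointwise measurability via the countable dense subclass $\cH_{\gamma,M,\Q}$) and only then applies Cauchy--Schwarz to the Rademacher sum. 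Your reduction sidesteps that measurability preamble entirely, since the inclusion $\bigl\{\sup_f|P_nf - P_Xf| > c\bigr\} \subseteq \{MW > c\}$ holds for every sample point with $W$ manifestly measurable. It also saves a factor of 2 on the mean: the direct variance computation gives $\E W \le \sqrt{H_\gamma/n}$, versus the $2M\sqrt{H_\gamma/n}$ that symmetrisation produces. That saving is precisely what lets your constant close, because your bounded-differences constant is slightly loose: you bound $\|\zeta_i - \zeta_i'\|_{\ell_2}$ by $\|\zeta_i\|_{\ell_2} + \|\zeta_i'\|_{\ell_2} \le 4\sqrt{H_\gamma}$, but a direct application of $(a-b)^2 \le 2a^2 + 2b^2$ to $\zeta_i - \zeta_i' = \bigl(\sqrt{a_r}(e_r(X_i)-e_r(X_i'))\bigr)_r$ gives $\|\zeta_i-\zeta_i'\|_{\ell_2} \le 2\sqrt{H_\gamma}$, which would yield the tighter tail $\exp(-nt^2/(2H_\gamma))$ and an overall constant of $1+\sqrt{2}$ rather than $1+2\sqrt{2}$. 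As written the argument still lands under $4$ because $1+2\sqrt 2 s \le (1+2\sqrt 2)s < 4s$ for $s\geq 1$, so this is a cosmetic rather than substantive issue.
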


\begin{proof}[Proof of Lemma~\ref{lem:uniform_convergence_Pn}]
  Recall that $f \in \cH$ with $\|f\|_{\cH_\gamma} \leq M$ if and only if
  $f(\cdot) = \sum_{r \in \mathcal R} \alpha_r e_r(\cdot)$
  with $\sum_{r \in \mathcal R} \alpha_r^2 / a_r \leq M^2$,
  and define the set
  \begin{align*}
    \cH_{\gamma,M,\Q} \vcentcolon=
    \biggl\{
      f: \cX \to \R:
      f(\cdot) = \sum_{r \in \mathcal R} \tilde\alpha_r e_r(\cdot),
      (\tilde\alpha_r)_{r \in \mathcal R} \in \Q^{\mathcal R},
      \sum_{r \in \mathcal R} \frac{\tilde\alpha_r^2}{a_r} \leq M^2
    \biggr\}.
  \end{align*}
  Suppose that
  $f(\cdot) = \sum_{r \in \mathcal R} \alpha_r e_r(\cdot) \in \cH_{\gamma,M}$
  and let $N \in \N$. For each $r \in \mathcal R$, let
  $\tilde \alpha_{r,N} \in \Q$ satisfy
  $|\tilde \alpha_{r,N}| \leq \alpha_r$ and
  $|\tilde \alpha_{r,N} - \alpha_r| \leq \nu_r/N$,
  and define
  $\tilde f_N(\cdot) \vcentcolon= \sum_{r \in \mathcal R}
  \tilde \alpha_{r,N} e_r(\cdot) \in \cH_{\gamma,M,\Q}$.
  Then for each $x \in \cX$, by Cauchy--Schwarz and Mercer's theorem,
  \begin{align*}
    \bigl|
    \tilde f_N(x) - f(x)
    \bigr|
    &\leq
    \sum_{r \in \mathcal R}
    \bigl| \tilde \alpha_{r,N} - \alpha_r \bigr|
    |e_r(x)|
    \leq
    \frac{1}{N}
    \sum_{r \in \mathcal R}
    \nu_r
    |e_r(x)| \\
    &\leq
    \frac{1}{N}
    \biggl(
      \sum_{r \in \mathcal R}
      \nu_r
    \biggr)^{1/2}
    \biggl(
      \sum_{r \in \mathcal R}
      \nu_r
      e_r(x)^2
    \biggr)^{1/2}
    \leq
    \frac{\sqrt K}{N}
    \biggl(
      \sum_{r \in \mathcal R}
      \nu_r
    \biggr)^{1/2} \rightarrow 0
  \end{align*}
  as $N \rightarrow \infty$. Thus $\cH_{\gamma,M,\Q}$ is countable
  and dense in $\cH_{\gamma,M}$ under
  the topology of pointwise convergence; it follows that
  $\cH_{\gamma,M}$ is pointwise measurable in the
  terminology of \citet[Definition~2.3.3]{van1996weak}.
  Let $\varepsilon_1, \ldots, \varepsilon_n$ be independent Rademacher
  variables (taking values $\pm 1$ with equal probability) that are
  independent of the data.
  By a symmetrisation inequality \citep[Lemma~2.3.1]{van1996weak},
  Jensen's inequality, Cauchy--Schwarz, since
  $\varepsilon_1, \ldots, \varepsilon_n$ are conditionally independent
  given the data, and as $\varepsilon_i^2 = 1$ and
  $\E(\varepsilon_i) = 0$ for each $i \in [n]$,
  \begin{align*}
    \E \biggl(
      \sup_{f \in \cH_{\gamma,M}}
      \bigl|
      P_n(f) &- P_X(f)
      \bigr|
    \biggr)
    \leq
    2 \, \E \biggl(
      \sup_{f \in \cH_{\gamma,M}}
      \biggl|
      \frac{1}{n}
      \sum_{i=1}^n
      \varepsilon_i f(X_i)
      \biggr|
    \biggr) \\
    &=
    2 \, \E \biggl(
      \sup_{(\alpha_r)_{r \in \mathcal R}: \,
      \sum_{r \in \mathcal R} \alpha_r^2 / a_r \leq M^2}
      \frac{1}{n}
      \sum_{i=1}^n
      \varepsilon_i
      \sum_{r \in \mathcal R} \alpha_r e_r(X_i)
    \biggr) \\
    &\leq
    2 \biggl[ \E \biggl\{
        \sup_{(\alpha_r)_{r \in \mathcal R}: \,
        \sum_{r \in \mathcal R} \alpha_r^2 / a_r \leq M^2}
        \biggl(
          \sum_{r \in \mathcal R}
          \alpha_r
          \frac{1}{n}
          \sum_{i=1}^n
          \varepsilon_i
          e_r(X_i)
        \biggr)^2
      \biggr\}
    \biggr]^{1/2} \\
    &\leq
    2 \biggl[ \E \biggl\{
        \sup_{(\alpha_r)_{r \in \mathcal R}: \,
        \sum_{r \in \mathcal R} \alpha_r^2 / a_r \leq M^2}
        \sum_{r \in \mathcal R}
        a_r
        \biggl(
          \frac{1}{n}
          \sum_{i=1}^n
          \varepsilon_i
          e_r(X_i)
        \biggr)^2
        \sum_{r' \in \mathcal R}
        \frac{\alpha_{r'}^2}{a_{r'}}
      \biggr\}
    \biggr]^{1/2} \\
    &\leq
    2 M \biggl\{
      \sum_{r \in \mathcal R}
      \frac{a_r}{n^2}
      \sum_{i=1}^n
      \sum_{j=1}^n
      \E \bigl(
        \varepsilon_i
        \varepsilon_j
        e_r(X_i)
        e_r(X_j)
      \bigr)
    \biggr\}^{1/2} \\
    &=
    2 M \biggl\{
      \frac{1}{n^2}
      \sum_{i=1}^n
      \E \biggl(
        \sum_{r \in \mathcal R}
        a_r
        e_r(X_i)^2
      \biggr)
    \biggr\}^{1/2}
    \leq
    2 M \sqrt{\frac{H_\gamma}{n}}.
  \end{align*}
  For each $f \in \cH_{\gamma,M}$,
  Lemma~\ref{lem:infty_bounds} gives
  $\|f\|_\infty \leq \sqrt {H_\gamma} \|f\|_{\cH_\gamma}
  \leq M \sqrt{H_\gamma}$,
  so by the bounded differences inequality
  \citep[][Theorem~6.2]{boucheron2013concentration},
  for $s \geq 1$,
  \begin{align*}
    \P \biggl(
      \sup_{f \in \cH_{\gamma,M}}
      \bigl|
      P_n(f) - P_X(f)
      \bigr|
      > 4 M s \sqrt{\frac{H_\gamma}{n}}
    \biggr)
    &\leq e^{-s^2}.
    \qedhere
  \end{align*}
\end{proof}

\subsubsection{Proof of Theorem~\ref{thm:rate}}

We prove our convergence result
for $\hat f_{n,\gamma}$, bounding each term in the
basic inequality \eqref{eq:basic_inequality}.

\begin{proof}[Proof of Theorem~\ref{thm:rate}]
  We write $C^0_1, C^0_2, \ldots$ for positive quantities
  depending only on
  $\|f_0\|_\cH$, $\Lambda$, $q_1$,
  $\|1_\cX\|_\cH$ and $K$,
  noting that this allows us to absorb increasing functions of
  $\|f_0\|_\infty$ by Lemma~\ref{lem:infty_bounds}.
  We begin by deriving a lower bound for $H_\gamma$.
  If $\gamma \geq \nu_1$, then
  \[
    H_\gamma
    = \sup_{x \in \cX} \sum_{r \in \mathcal R} \frac{e_r(x)^2}{1 +
    \gamma / \nu_r} \geq \frac{1}{2\gamma} \sup_{x \in \cX} \sum_{r
    \in \mathcal R} \nu_r e_r(x)^2 = \frac{1}{2\gamma} \sup_{x \in \cX} k(x, x)
    = \frac{K}{2 \gamma},
  \]
  while if $\gamma < \nu_1$, then
  \[
    H_\gamma > \frac{1}{2} \sup_{x \in \cX} e_1(x)^2
    \geq \frac{\|e_1\|_{L_2}^2}{2} = \frac{1}{2}.
  \]
  Therefore, $2 H_\gamma \geq 1 \land (K / \gamma)$, so
  $1 \leq 2 H_\gamma + 2 \gamma H_\gamma / K$.
  Taking $(C_0)^2 \geq 4/K$, we have for
  $\gamma H_\gamma \leq 1 / (C_0)^2$ that
  $1 \leq 2 H_\gamma + 2 / \{(C_0)^2 K\} \leq 2 H_\gamma + 1/2$.
  Thus, $H_\gamma \geq 1/4$.

  We now use the previous results to control each term
  in~\eqref{eq:basic_inequality}.
  By Proposition~\ref{prop:convergence_Dln},
  there exists $C^0_1 > 0$ such that for all $s \geq 1$,
  with probability at least $1 - e^{-s^2}$,
  \begin{align*}
    \bigl|
    D \ell_n(f_0) \bigl( \tilde f_{n,\gamma} - f_0 \bigr)
    \bigr|
    &\leq
    C^0_1 s \sqrt{\frac{H_\gamma}{n}}
    \bigl\| \tilde f_{n,\gamma} - f_0 \bigr\|_{\cH_\gamma}.
  \end{align*}
  By Cauchy--Schwarz and
  Lemma~\ref{lem:convergence_Pn},
  for each $s \geq 1$, with probability at least $1 - e^{-s^2}$,
  \begin{align*}
    &\bigl|
    \gamma
    \bigl\langle \tilde f_{n,\gamma} - f_0 + P_n(f_0) 1_\cX,
    f_0 - P_n(f_0) 1_\cX \bigr\rangle_\cH
    \bigr| \\
    &\quad\leq
    \gamma \|f_0 - P_n(f_0) 1_\cX\|_\cH
    \bigl\| \tilde f_{n,\gamma} - f_0 + P_n(f_0) 1_\cX \bigr\|_\cH \\
    &\quad\leq
    \gamma
    \bigl(
      \|f_0\|_\cH + \|f_0\|_\infty \|1_\cX\|_\cH
    \bigr)
    \bigl(
      \bigl\| \tilde f_{n,\gamma} - f_0 \bigr\|_\cH
      + |P_n(f_0)| \| 1_\cX \|_\cH
    \bigr) \\
    &\quad\leq
    \gamma
    \bigl(
      \|f_0\|_\cH + \|f_0\|_\infty \|1_\cX\|_\cH
    \bigr)
    \biggl\{
      \bigl\| \tilde f_{n,\gamma} - f_0 \bigr\|_\cH
      + \frac{2 s \|f_0\|_\infty \| 1_\cX \|_\cH }{\sqrt n}
    \biggr\}.
  \end{align*}
  Since $P_X(f_0) = 0$, by Lemma~\ref{lem:strong_convexity} and
  Proposition~\ref{prop:convergence_D2ln},
  there exists $C^0_2 > 0$ such that for each $s \geq 1$,
  with probability at least $1 - e^{-s^2}$,
  \begin{align*}
    D^2 \ell_{n}(f_0)
    \bigl( \tilde f_{n,\gamma} - f_0 \bigr)^{\otimes 2}
    &\geq
    D^2 \ell_{\star}(f_0)
    \bigl( \tilde f_{n,\gamma} - f_0 \bigr)^{\otimes 2}
    - \bigl\|D^2 \ell_{n}(f_0) - D^2 \ell_{\star}(f_0)\bigr\|_{\cH_\gamma,\op}
    \, \bigl\| \tilde f_{n,\gamma} - f_0 \bigr\|_{\cH_\gamma}^2 \\
    &\geq
    \frac{1}{C^0_2}
    \bigl\|
    \tilde f_{n,\gamma} - P_X\bigl(\tilde f_{n,\gamma}\bigr) 1_\cX - f_0
    \bigr\|_{L_2}^2
    - C^0_2 s
    \sqrt{\frac{H_\gamma^2}{n}}
    \bigl\| \tilde f_{n,\gamma} - f_0 \bigr\|_{\cH_\gamma}^2.
  \end{align*}
  Note that $P_n(\tilde f_{n,\gamma}) = 0$ and $P_X(f_0) = 0$,
  and recall that
  $\bigl\|\tilde f_{n,\gamma} - f_0 + P_n(f_0) 1_\cX\bigr\|_{\cH_\gamma}
  \leq M$, where $M^2 \vcentcolon= 1 / (b^2 H_\gamma) \leq 1 / H_\gamma$.
  Therefore, with
  $\cH_{\gamma,M} \vcentcolon= \{f \in \cH: \|f\|_{\cH_\gamma} \leq M\}$,
  by Lemmas~\ref{lem:convergence_Pn}
  and~\ref{lem:uniform_convergence_Pn}, there exists $C^0_3 > 0$
  such that for $s \geq 1$, with probability at least $1 - e^{-s^2}$,
  \begin{align*}
    \bigl|P_X\bigl(\tilde f_{n,\gamma}\bigr)\bigr|
    &=
    \bigl|
    (P_X - P_n)\bigl(\tilde f_{n,\gamma} - f_0 + P_n(f_0) 1_\cX\bigr)
    - P_n(f_0)
    \bigr| \\
    &\leq
    \bigl| P_n(f_0) \bigr|
    + \sup_{f \in \cH_{\gamma,M}}
    \bigl| P_n(f) - P_X(f) \bigr|
    \leq
    \frac{C_3^0 s}{\sqrt n}.
  \end{align*}
  Thus for $s \geq 1$, with probability at least $1 - e^{-s^2}$,
  \begin{align*}
    \bigl\| \tilde f_{n,\gamma} - f_0 \bigr\|_{L_2}^2
    &=
    \bigl\|
    \tilde f_{n,\gamma} - P_X\bigl(\tilde f_{n,\gamma}\bigr) 1_\cX - f_0
    \bigr\|_{L_2}^2
    + P_X\bigl(\tilde f_{n,\gamma}\bigr)^2 \\
    &\quad+
    2 P_X\bigl(\tilde f_{n,\gamma}\bigr)
    \bigl\langle
    \tilde f_{n,\gamma} - P_X\bigl(\tilde f_{n,\gamma}\bigr) 1_\cX - f_0,
    1_\cX
    \bigr\rangle_{L_2} \\
    &=
    \bigl\|
    \tilde f_{n,\gamma} - P_X\bigl(\tilde f_{n,\gamma}\bigr) 1_\cX - f_0
    \bigr\|_{L_2}^2
    + P_X\bigl(\tilde f_{n,\gamma}\bigr)^2 \\
    &\leq
    \bigl\|
    \tilde f_{n,\gamma} - P_X\bigl(\tilde f_{n,\gamma}\bigr) 1_\cX - f_0
    \bigr\|_{L_2}^2
    + \frac{(C_3^0)^2 s^2}{n}.
  \end{align*}
  Thus there is $C^0_4 > 0$ such that for
  $1 \leq s \leq \sqrt{n} / (C^0_4 H_\gamma)$,
  with probability at least $1 - e^{-s^2}$,
  \begin{align*}
    D^2 \ell_{n}(f_0)
    \bigl( \tilde f_{n,\gamma} - f_0 \bigr)^{\otimes 2}
    &\geq
    \frac{1}{C^0_4}
    \bigl\| \tilde f_{n,\gamma} - f_0 \bigr\|_{L_2}^2
    - \frac{\gamma}{4}
    \bigl\| \tilde f_{n,\gamma} - f_0 \bigr\|_{\cH}^2
    - \frac{C^0_4 s^2}{n}.
  \end{align*}
  For $f_1, f_2 \in \cH$ we have
  $\|f_1 + f_2\|_\cH^2 \geq \|f_1\|_\cH^2/2 - \|f_2\|_\cH^2$
  by the triangle inequality, so
  by Lemma~\ref{lem:convergence_Pn},
  with probability at least $1 - e^{-s^2}$,
  \begin{align*}
    \gamma \bigl\| \tilde f_{n,\gamma} - f_0 + P_n(f_0) 1_\cX\bigr\|_\cH^2
    &\geq
    \frac{\gamma}{2}
    \bigl\| \tilde f_{n,\gamma} - f_0\bigr\|_\cH^2
    - \gamma P_n(f_0)^2 \|1_\cX\|_\cH^2 \\
    &\geq
    \frac{\gamma}{2}
    \bigl\| \tilde f_{n,\gamma} - f_0\bigr\|_\cH^2
    - \frac{4 \gamma s^2 \|f_0\|_\infty^2 \|1_\cX\|_\cH^2}{n}.
  \end{align*}
  By Lemmas~\ref{lem:convergence_D3ln},~\ref{lem:convergence_Vn}
  and~\ref{lem:bound_V}, there exists a universal constant $C > 0$
  such that for all $f, f_1 \in \cH$ and $s \geq 1$,
  with probability at least $1 - e^{-s^2}$,
  \begin{align}
    \label{eq:third-derivative}
    \big|D^3 \ell_{n}(f)(f_1)^{\otimes 3}\big|
    &\leq
    6 \|f_1\|_\infty e^{2\|f - f_0\|_\infty}
    \bigl\{
      V(f_0)(f_1, f_1)
      + \|V_n(f_0) - V(f_0)\|_{\cH_\gamma,\op}
      \|f_1\|_{\cH_\gamma}^2
    \bigr\} \\
    &\leq
    6
    e^{\|f_0\|_\infty}
    e^{2\|f - f_0\|_\infty}
    \Lambda
    \|f_1\|_\infty
    \|f_1\|_{L_2}^2
    + \frac{C s
      e^{2 \|f_0\|_\infty}
      e^{2\|f - f_0\|_\infty}
    }{q_1^2}
    \sqrt{\frac{H_\gamma^2}{n}}
    \|f_1\|_\infty
    \|f_1\|_{\cH_\gamma}^2. \notag
  \end{align}
  As $\ell_n$ is invariant under addition of constants, we have
  \begin{align*}
    \bigl|
    D^3 \ell_n\bigl(\check f_{n,\gamma}\bigr)
    \bigl( \tilde f_{n,\gamma} - f_0 \bigr)^{\otimes 3}
    \bigr|
    &=
    \bigl|
    D^3 \ell_n\bigl(\check f_{n,\gamma} + P_n(f_0) 1_\cX\bigr)
    \bigl( \tilde f_{n,\gamma} - f_0 + P_n(f_0) 1_\cX \bigr)^{\otimes 3}
    \bigr|.
  \end{align*}
  Since $\check f_{n,\gamma}$ is on the segment between
  $f_0 - P_n(f_0) 1_\cX$ and $\tilde f_{n, \gamma}$, we have
  $\|\check f_{n,\gamma} - f_0 + P_n(f_0) 1_\cX\|_{\infty} \leq 1/b$.
  Thus, by~\eqref{eq:third-derivative},
  and Lemma~\ref{lem:convergence_Pn}, there exists $C^0_5 > 0$
  such that for $1 \leq s \leq \sqrt{n} / (C^0_5 H_\gamma)$,
  with probability at least $1 - e^{-s^2}$,
  \begin{align*}
    \bigl|
    D^3 \ell_n\bigl(\check f_{n,\gamma}\bigr)
    \bigl( \tilde f_{n,\gamma} - f_0 \bigr)^{\otimes 3}
    \bigr|
    &\leq
    \frac{C^0_5}{b}
    \bigl\| \tilde f_{n,\gamma} - f_0 + P_n(f_0) 1_\cX \bigr\|_{L_2}^2
    + \frac{C^0_5}{b} \gamma
    \bigl\| \tilde f_{n,\gamma} - f_0 + P_n(f_0) 1_\cX \bigr\|_{\cH}^2 \\
    &\leq
    \frac{2 C^0_5}{b}
    \bigl\| \tilde f_{n,\gamma} - f_0 \bigr\|_{\cH_\gamma}^2
    + \frac{2 C^0_5}{b}
    \bigl( 1 + \|1_\cX\|_{\cH}^2 \gamma \bigr)
    P_n(f_0)^2 \\
    &\leq
    \frac{2 C^0_5}{b}
    \bigl\| \tilde f_{n,\gamma} - f_0 \bigr\|_{\cH_\gamma}^2
    + \frac{16 C^0_5 s^2 \|f_0\|_\infty^2 }{b n}
    \bigl( 1 + \|1_\cX\|_{\cH}^2 \gamma \bigr).
  \end{align*}
  Therefore, by substituting these bounds into
  \eqref{eq:basic_inequality}, there exists $C^0_6 > 0$
  such that for $1 \leq s \leq \sqrt{n} / (C^0_6 H_\gamma)$,
  with probability at least $1 - e^{-s^2}$,
  \begin{align*}
    &\frac{1}{C^0_6}
    \bigl\| \tilde f_{n,\gamma} - f_0 \bigr\|_{L_2}^2
    + \frac{\gamma}{4}
    \bigl\| \tilde f_{n,\gamma} - f_0 \bigr\|_{\cH}^2
    - \frac{C^0_6}{b}
    \bigl\| \tilde f_{n,\gamma} - f_0 \bigr\|_{\cH_\gamma}^2 \\
    &\quad\leq
    C^0_6 s \sqrt{\frac{H_\gamma}{n}}
    \bigl\| \tilde f_{n,\gamma} - f_0 \bigr\|_{\cH_\gamma}
    + C^0_6 \gamma \bigl\| \tilde f_{n,\gamma} - f_0 \bigr\|_\cH
    + \frac{C^0_6 s^2}{n}
    + C^0_6 \gamma \frac{s}{\sqrt n}.
  \end{align*}
  We may assume that $C_6^0 \geq 4$;
  take $b \vcentcolon= 2 (C^0_6)^2$ so that,
  with probability at least $1 - e^{-s^2}$,
  \begin{align*}
    \frac{1}{2 C^0_6}
    \bigl\| \tilde f_{n,\gamma} - f_0 \bigr\|_{\cH_\gamma}^2
    &\leq
    C^0_6
    \biggl( s \sqrt{\frac{H_\gamma}{n}} + \sqrt\gamma \biggr)
    \bigl\| \tilde f_{n,\gamma} - f_0 \bigr\|_{\cH_\gamma}
    + C^0_6 \biggl(\frac{s}{\sqrt n} + \gamma\biggr)^2.
  \end{align*}
  Now if $x^2 \leq a x + c$ for some $x, a, c \geq 0$, then $x \leq a
  + \sqrt{c}$. It follows that there exists $C^0_7 > 0$
  such that for $1 \leq s \leq \sqrt{n} / (C^0_7 H_\gamma)$,
  with probability at least $1 - e^{-s^2}$,
  \begin{align*}
    \bigl\|
    \tilde f_{n,\gamma} - f_0
    \bigr\|_{\cH_\gamma}
    &\leq
    C^0_7
    \biggl( s \sqrt{\frac{H_\gamma}{n}} + \sqrt{\gamma} \biggr).
  \end{align*}
  Finally, we show that $\hat f_{n,\gamma} = \tilde f_{n,\gamma}$
  with high probability.
  By definition of $\tilde f_{n,\gamma}$, we have
  $\hat f_{n,\gamma} = \tilde f_{n,\gamma}$ whenever
  $b \sqrt{H_\gamma} \,
  \bigl\| \tilde f_{n,\gamma} - f_0 + P_n(f_0) 1_\cX \bigr\|_{\cH_\gamma} < 1$.
  By definition of $b$, this holds if both
  $4 (C^0_6)^2 \sqrt{H_\gamma} \,
  \bigl\| \tilde f_{n,\gamma} - f_0 \bigr\|_{\cH_\gamma} < 1$
  and $4 (C^0_6)^2 \sqrt{H_\gamma} |P_n(f_0)| \| 1_\cX \|_{\cH_\gamma} < 1$.
  The first of these holds with probability at least $1 - e^{-s^2}$ if
  \begin{align}
    \label{eq:second_final_display}
    4 (C^0_6)^2 C^0_7
    \biggl(
      \frac{s H_\gamma}{\sqrt{n}} + \sqrt{\gamma H_\gamma}
    \biggr) < 1,
  \end{align}
  while the second holds with probability at least $1 - e^{-s^2}$ if
  \begin{align}
    \label{eq:final_display}
    8 (C^0_6)^2 \|f_0\|_\infty (1 + \|1_\cX\|_\cH)
    s \sqrt{\frac{H_\gamma}{n}} < 1.
  \end{align}
  For sufficiently large $C^0_8$,
  \eqref{eq:second_final_display}
  and~\eqref{eq:final_display} both hold simultaneously
  with probability at least $1 - e^{-s^2}$
  when $\gamma H_\gamma \leq 1/C^0_8$ and $1 \leq s \leq \sqrt n /
  (C^0_8 H_\gamma)$.
\end{proof}

\subsubsection{Proofs of Lemmas~\ref{lem:polynomial_Hgamma}
and~\ref{lem:sobolev_Hgamma}}

We prove our bounds on $H_\gamma$
for polynomial and shifted first-order Sobolev kernels
respectively.

\begin{proof}[Proof of Lemma~\ref{lem:polynomial_Hgamma}]
  Let $\{q_r(\cdot)\}_{r \in [R]}$ be an orthonormal
  basis under $L_2(P_X)$ for the space $\mathrm{poly}_p(\mathcal{X})$
  of $d$-variate
  polynomials on $\cX$ of degree at most $p$, where $R \leq \binom{d + p}{p}$.
  For $\alpha = (\alpha_1,\ldots,\alpha_d)^\T \in \N_0^d$ with
  $|\alpha| := \sum_{j=1}^d \alpha_j \leq p$, and for $r \in [R]$,
  let $b_{\alpha,r} \in \R$ be such that
  $\prod_{j=1}^d x_j^{\alpha_j} = \sum_{r=1}^R b_{\alpha,r} q_r(x)$
  for all $x \in \cX$.
  Let $\bB = (B_{r r'})_{r,r' \in [R]} \in \R^{R \times R}$
  be the matrix with entries
  \begin{align*}
    B_{r r'}
    \vcentcolon=
    \sum_{s=0}^p \binom{p}{s} a^{p-s}
    \sum_{\alpha:|\alpha| = s}
    \binom{s}{\alpha_1,\ldots,\alpha_d}
    b_{\alpha,r}
    b_{\alpha,r'},
  \end{align*}
  for $r,r' \in [R]$. Then $\bB$ is a non-negative linear
  combination of positive semi-definite rank-one matrices, so is
  positive semi-definite. It therefore has
  an eigendecomposition $\bB = \bU^\T \bD \bU$
  where $\bU \in \R^{R \times R}$ is orthogonal and
  $\bD \in \R^{R \times R}$ is diagonal
  with non-negative diagonal entries given by $(\nu_r)_{r \in [R]}$.
  Define $e : \cX \to \R^d$ by $e(x) \vcentcolon= \bU q(x)$,
  where $q(x) \vcentcolon= \bigl(q_1(x), \ldots, q_R(x)\bigr)^\T$.
  By the binomial and multinomial theorems,
  \begin{align*}
    k^\poly_{p,a}(x, y)
    &=
    \sum_{s=0}^p \binom{p}{s} a^{p-s}
    \sum_{\alpha:|\alpha| = s}
    \binom{s}{\alpha_1,\ldots,\alpha_d}
    \biggl(\prod_{j=1}^d x_j^{\alpha_j}\biggr)
    \biggl(\prod_{j=1}^d y_j^{\alpha_j}\biggr) \\
    &=
    \sum_{s=0}^p \binom{p}{s} a^{p-s}
    \sum_{\alpha:|\alpha| = s}
    \binom{s}{\alpha_1,\ldots,\alpha_d}
    \sum_{r=1}^R b_{\alpha,r} q_r(x)
    \sum_{r'=1}^R b_{\alpha,r'} q_{r'}(y) \\
    &=
    \sum_{r=1}^R
    \sum_{r'=1}^R
    q_r(x)
    B_{r r'}
    q_{r'}(y)
    =
    q(x)^\T \bB q(y)
    =
    e(x)^\T \bD e(y)
    =
    \sum_{r = 1}^R
    \nu_r
    e_r(x) e_r(y).
  \end{align*}
  Further, $\{e_r(\cdot)\}_{r \in [R]}$ forms an $L_2(P_X)$-orthonormal
  basis of $\mathrm{poly}_p(\mathcal{X})$ as
  \begin{align*}
    \int_\cX e(x) e(x)^\T \diffi P_X(x)
    &= \int_\cX \bU q(x) q(x)^\T \bU^\T \diffi P_X(x)
    = {\bI}_R.
  \end{align*}
  Note that
  \[
    \max_{r \in [R]} \sup_{x \in \cX} |e_r(x)|
    \leq \sup_{x \in \cX} \|e(x)\|_2
    = \sup_{x \in \cX} \|\bU q(x)\|_2
    = \sup_{x \in \cX} \|q(x)\|_2,
  \]
  which depends only on $p$, $d$ and $P_X$,
  and is finite because each $q_r(\cdot)$ is a polynomial
  and $\cX$ is bounded.
  Define $\cR := \{r \in [R] : \nu_r > 0\}$, and endow it with
  the ordering induced by the reverse order of $(\nu_r)_{r \in
  \cR}$, so that the representation $k^\poly_{p,a}(x,y) = \sum_{r \in \cR}
  \nu_r e_r(x) e_r(y)$ of the kernel $k^\poly_{p,a}$ aligns with that in
  Mercer's theorem \citep[Theorem~10.8.8]{samworth2026modern}.
  Since $a_r = 1/(1 + \gamma / \nu_r) \leq 1$ for $r \in \cR$,
  there exists $C(p, d, P_X) > 0$ such that
  \begin{align*}
    H_\gamma
    &\leq
    R \, \max_{r \in [R]} \sup_{x \in \cX} e_r(x)^2
    \leq C(p, d, P_X).
    \qedhere
  \end{align*}
\end{proof}

\begin{proof}[Proof of Lemma~\ref{lem:sobolev_Hgamma}]%
  We argue similarly to \citet[Example~6.22]{samworth2026modern}.
  Suppose that $e_\nu : [0, 1] \to \R$ is a unit $L_2(P_X)$-norm eigenfunction
  of $\cT_{k^\Sob_{1,a}}$ with eigenvalue $\nu > 0$. For $x \in [0, 1]$,
  \begin{align}
    \nonumber
    \nu e_\nu(x) &= \cT_{k^\Sob_{1,a}}(e_\nu)(x)
    =
    \int_\cX k^\Sob_{1,a}(x, y) e_\nu(y) \diffi P_X(y)
    = \int_0^1 \bigl(a + (x \land y)\bigr) e_\nu(y) \diffi y \\
    \label{eq:sobolev_eig}
    &=
    a \int_0^1 e_\nu(y) \diffi y
    + \int_0^x y e_\nu(y) \diffi y
    + x \int_x^1 e_\nu(y) \diffi y.
  \end{align}
  Differentiating twice with respect to $x$, we obtain that
  for all $x \in [0, 1]$,
  \begin{align}
    \label{eq:sobolev_first_deriv}
    \int_x^1 e_\nu(y) \diffi y
    &= \nu e_\nu'(x), \\
    \label{eq:sobolev_second_deriv}
    - e_\nu(x)
    &= \nu e_\nu''(x).
  \end{align}
  By \eqref{eq:sobolev_second_deriv},
  there exist $A_\nu, B_\nu \in \R$ with
  $e_\nu(x) = A_\nu \sin(x/\sqrt \nu) + B_\nu \cos(x/\sqrt \nu)$
  for all $x \in [0, 1]$.
  Substituting into \eqref{eq:sobolev_first_deriv} and setting
  $x = 1$, we have
  \begin{align}
    \label{eq:sobolev_first_eq}
    A_\nu \cos(1/\sqrt \nu) = B_\nu \sin(1/\sqrt \nu).
  \end{align}
  Returning to \eqref{eq:sobolev_eig}, with $x = 0$ we obtain
  $a \bigl\{A_\nu \bigl(1 - \cos(1/\sqrt \nu)\bigr)
  + B_\nu \sin(1/\sqrt \nu)\bigr\} = \sqrt \nu B_\nu$.
  Combining this with \eqref{eq:sobolev_first_eq},
  we obtain
  \begin{align}
    \label{eq:sobolev_second_eq}
    a A_\nu = \sqrt \nu B_\nu.
  \end{align}
  Now
  \begin{align*}
    4 &= 4\int_0^1 e_\nu(x)^2 \diffi x \\
    &= A_\nu^2 \{2 - \sqrt \nu \sin(2/\sqrt \nu)\}
    + \sqrt \nu A_\nu B_\nu \{2 - 2 \cos(2/\sqrt \nu)\}
    + B_\nu^2 \{2 + \sqrt \nu \sin(2/\sqrt \nu)\}.
  \end{align*}
  Substituting \eqref{eq:sobolev_second_eq} into \eqref{eq:sobolev_first_eq},
  we see that $\tan(1/\sqrt \nu) = \sqrt \nu/a$,
  interpreted as $\tan(1/\sqrt{\nu}) = \infty$ if $a = 0$.
  Thus, by the half-angle identities,
  $\sin(2/\sqrt \nu) = 2 \sqrt \nu a / (a^2 + \nu)$
  and $\cos(2/\sqrt \nu) = (a^2 - \nu) / (a^2 + \nu)$. Thus,
  since $e_\nu$ is defined only up to sign, we may take
  \begin{align*}
    A_\nu = \sqrt{\frac{2 \nu}{a^2 + \nu + a \nu}} \quad\text{and}\quad
    B_\nu = \sqrt{\frac{2 a^2}{a^2 + \nu + a \nu}}.
  \end{align*}
  Further, for every $r \in \N$, there exists
  $\nu_r \in \bigl[(r - 1/2)^{-2} \pi^{-2}, (r - 1)^{-2} \pi^{-2}\bigr)$
  such that $\tan(1/\sqrt{\nu_r}) = \sqrt{\nu_r}/a$,
  and these are the only positive values of $\nu$ to solve this equation.
  Hence, the eigenvalue--eigenfunction pairs for $\cT_{k^\Sob_{1,a}}$
  are, for $r \in \N$,
  \begin{align*}
    \nu_r
    &\in
    \biggl[
      \frac{1}{(r - 1/2)^2 \pi^2},
      \frac{1}{(r - 1)^2 \pi^2}
    \biggr), \\
    e_r(\cdot)
    &=
    \sqrt{\frac{2 \nu_r}{a^2 + \nu_r + a \nu_r}}
    \sin\biggl(\frac{\cdot}{\sqrt{\nu_r}}\biggr)
    + \sqrt{\frac{2 a^2}{a^2 + \nu_r + a \nu_r}}
    \cos\biggl(\frac{\cdot}{\sqrt{\nu_r}}\biggr).
  \end{align*}
  By Cauchy--Schwarz,
  $\|e_r\|_\infty^2 \leq 2 (\nu_r + a^2) / (a^2 + \nu_r + a \nu_r) \leq 2$.
  For $r \geq 2$, we have
  $a_r = 1/(1 + \gamma / \nu_r) \leq 1 / \{1 + \gamma (r-1)^2 \pi^2\}$,
  while $a_1 \leq 1$. Therefore,
  \begin{align*}
    H_\gamma
    &\leq
    \sum_{r=1}^\infty
    \frac{2}{1 + \gamma / \nu_r}
    \leq
    2 + \sum_{r=1}^\infty
    \frac{2}{1 + \gamma r^2 \pi^2 }
    \leq
    2 + \int_0^\infty
    \frac{2}{1 + \gamma x^2 \pi^2 }
    \diffi x
    =
    2 + \frac{1}{\sqrt{\gamma}}.
    \qedhere
  \end{align*}
\end{proof}

\subsubsection{Uniform convergence of \texorpdfstring{$S_n$}{Sn}
and \texorpdfstring{$D S_n$}{DSn}}

In Lemmas~\ref{lem:uniform_convergence_Sn}
and~\ref{lem:uniform_convergence_DSn} we present uniform convergence
results for $S_n(f, t)$ and $D S_n(f, t)$ respectively,
uniformly over $f$ lying on line segments. These results
may be used for bounding terms appearing in mean-value forms of
Taylor's theorem; the previous Lemmas~\ref{lem:convergence_Sn}
and~\ref{lem:convergence_DSn}, while similar, are unsuited for this task.

\begin{lemma}[Uniform convergence of $S_n$]%
  \label{lem:uniform_convergence_Sn}
  Let $f_1, f_2 \in \cB(\cX)$
  and for $a \in [0, 1]$ define $\tilde f_a \vcentcolon= a f_1 + (1-a) f_2$.
  There exists a universal constant $C > 0$ such that
  for all $s \geq 1$, with probability at least $1 - e^{-s^2}$,
  \begin{align*}
    \sup_{a \in [0, 1]}
    \sup_{t \in [0, 1]}
    \bigl|S_n(\tilde f_a, t) - S(\tilde f_a, t) \bigr|
    &\leq
    \frac{C s e^{2(\|f_1\|_\infty \lor \|f_2\|_\infty)}}{\sqrt n}.
  \end{align*}
\end{lemma}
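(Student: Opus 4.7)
The plan is to mirror the proof of Lemma~\ref{lem:convergence_Sn}, now treating the pair $(a,t) \in [0,1]^2$ as the supremum parameter. Set $M := \|f_1\|_\infty \lor \|f_2\|_\infty$ and for $a,t \in [0,1]$ and $i \in [n]$, define the centred summand
\[
  \xi_i(a, t) \vcentcolon= R_i(t) e^{\tilde f_a(X_i)} - \int_\cX q(x, t) e^{\tilde f_a(x)} \diff P_X(x),
\]
so that $\E \xi_i(a, t) = 0$ and $|\xi_i(a, t)| \leq 2 e^M$. Expanding the square and separating diagonal and off-diagonal contributions as in Lemma~\ref{lem:convergence_Sn}, I would reduce the problem to bounding a degenerate second-order $U$-process:
\[
  \sup_{a, t} \bigl(S_n(\tilde f_a, t) - S(\tilde f_a, t)\bigr)^2
  \leq \frac{4 e^{2M}}{n}
  + \sup_{a, t} \biggl| \frac{2}{n(n-1)} \sum_{(i,j) \in \mathcal I} \xi_i(a, t)\xi_j(a, t) \biggr|.
\]

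The key step is then to bound the $L_2(Q)$-covering number of the function class $\mathcal G$ on $(\cX \times [0,1] \times \{0,1\})^2$ indexed by $(a, t) \in [0, 1]^2$, so that Lemma~\ref{lem:orliczUStat} can be applied. The variation in $t$ with $a$ fixed is controlled exactly as in Lemma~\ref{lem:convergence_Sn}, using a step-function cover of cardinality $O(1/\varepsilon^2)$. The new ingredient is the variation in $a$: since $\tilde f_a(x) - \tilde f_{a'}(x) = (a-a')(f_1(x) - f_2(x))$, the mean value theorem yields
\[
  |e^{\tilde f_a(x)} - e^{\tilde f_{a'}(x)}| \leq 2 M e^M |a - a'|,
\]
so that $|\xi_i(a,t)\xi_j(a,t) - \xi_i(a', t)\xi_j(a', t)| \leq 16 M e^{2M} |a-a'|$, uniformly over $t$. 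An $\varepsilon$-grid in $a$ of cardinality $O(Me^{2M}/\varepsilon)$ therefore suffices to $L_2(Q)$-cover $\mathcal G$ at scale $\varepsilon$ in the $a$-direction. Taking Cartesian products, one finds $N(\varepsilon' \cdot 4e^{2M}, \mathcal G, \|\cdot\|_{Q, 2}) \lesssim Me^{2M}/\varepsilon^{\prime 3}$, which integrates to an entropy bound of the form $J \lesssim 1 + M$.

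Feeding this into Lemma~\ref{lem:orliczUStat} (with envelope $4 e^{2M}$) gives, for all $s \geq 1$, a bound of the form $C_1 s^2 (1+M) e^{2M}/n$ on the $U$-process supremum with probability at least $1 - e^{-s^2}$. Combining with the diagonal term, squaring and using the elementary inequality $M e^M \leq e^{2M}$ for all $M \geq 0$ to absorb the polynomial-in-$M$ factor, I then obtain
\[
  \sup_{a \in [0,1]} \sup_{t \in [0,1]} \bigl|S_n(\tilde f_a, t) - S(\tilde f_a, t)\bigr|
  \leq \frac{C s e^{2M}}{\sqrt n}
\]
with probability at least $1 - e^{-s^2}$, as required. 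The main technical obstacle is precisely this last step: verifying that the Lipschitz constant $O(Me^{2M})$ introduced by the additional $a$-direction only contributes additively to $\log N(\varepsilon, \mathcal G, \|\cdot\|_{Q, 2})$ in $M$, so that it can be absorbed into the exponential envelope at the cost of upgrading $e^M$ to $e^{2M}$ (compared with the fixed-$f$ result of Lemma~\ref{lem:convergence_Sn}).
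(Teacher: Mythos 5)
Your proposal is correct and follows essentially the same route as the paper: expand the squared supremum into a diagonal term plus a degenerate second-order $U$-process; cover the index set $(a,t)\in[0,1]^2$ by combining the $O(1/\varepsilon^2)$ step-function cover in $t$ from Lemma~\ref{lem:convergence_Sn} with a uniform $O(1/\varepsilon)$ grid in $a$, justified by the Lipschitz bound in $a$; then apply Lemma~\ref{lem:orliczUStat}.

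The one material difference is your choice of envelope. You take the tight envelope $F=4e^{2M}$ with $M=\|f_1\|_\infty\vee\|f_2\|_\infty$, which leaves the Lipschitz constant $O(Me^{2M})$ only partially cancelled in the normalised covering number, so your entropy integral $J$ retains a polynomial-in-$M$ factor that you then absorb at the end via $\sqrt{1+M}\,e^M\le e^{2M}$. The paper instead inflates the envelope to $4e^{3(\|f_1\|_\infty\vee\|f_2\|_\infty)}$, chosen precisely so that the $e^{3M}$ Lipschitz constant in $a$ cancels against it, yielding a \emph{universal} bound $J\le C$; the extra $e^{M/2}$ from the larger envelope is then cheaper than your $\sqrt{1+M}$ factor and is discharged by $3M/2\le 2M$. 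Both work and give the same stated conclusion, but the paper's variant keeps $J$ constant-order throughout, which is the pattern used in the companion Lemmas~\ref{lem:convergence_Sn}--\ref{lem:convergence_D2Sn} and saves you an ad hoc absorption step. A small arithmetic remark: in your claim $N(\varepsilon'\cdot 4e^{2M},\mathcal G,\|\cdot\|_{Q,2})\lesssim Me^{2M}/\varepsilon'^{3}$, the $e^{2M}$ factor actually cancels against the envelope and should drop out, leaving $\lesssim M/\varepsilon'^{3}$; this is harmless since it only makes your $J$ bound looser.
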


\begin{proof}[Proof of Lemma~\ref{lem:uniform_convergence_Sn}]
  For $i \in [n]$, $a \in [0, 1]$ and $t \in [0, 1]$, define
  \begin{align*}
    \xi_{i a t}
    &\vcentcolon=
    R_i(t) e^{\tilde f_a(X_i)}
    - \int_\cX q(x, t) e^{\tilde f_a(x)} \diffi P_X(x),
  \end{align*}
  noting that $|\xi_{i a t}| \leq 2 e^{\|f_1\|_\infty \lor \|f_2\|_\infty}$.
  It follows that with $\mathcal{I} := \{(i,j):i,j \in [n],i<j\}$,
  \begin{align}
    \nonumber
    &\sup_{a \in [0, 1]}
    \sup_{t \in [0, 1]}
    \bigl(S_n(\tilde f_a, t) - S(\tilde f_a, t) \bigr)^2
    =
    \sup_{a \in [0, 1]}
    \sup_{t \in [0, 1]}
    \frac{1}{n^2}
    \sum_{i=1}^n
    \sum_{j=1}^n
    \xi_{i a t}
    \xi_{j a t} \\
    \nonumber
    &\qquad\leq
    \sup_{a \in [0, 1]}
    \sup_{t \in [0, 1]}
    \frac{1}{n^2}
    \sum_{i=1}^n
    \xi_{i a t}^2
    + \sup_{a \in [0, 1]}
    \sup_{t \in [0, 1]}
    \frac{2}{n^2}
    \sum_{(i,j) \in \mathcal{I}}
    \xi_{i a t}
    \xi_{j a t} \\
    \label{eq:diagonal_bound_Sn}
    &\qquad\leq
    \frac{4 e^{2 (\|f_1\|_\infty \lor \|f_2\|_\infty)}}{n}
    + \sup_{a \in [0, 1]}
    \sup_{t \in [0, 1]}
    \frac{2}{n(n-1)}
    \biggl|
    \sum_{(i,j) \in \mathcal{I}}
    \xi_{i a t}
    \xi_{j a t}
    \biggr|.
  \end{align}
  The second term in the above display is a degenerate $U$-process
  \citep[Chapter~5]{de1999decoupling} as
  $\E \bigl\{ \xi_{i a t} \xi_{j a t} \mid X_i, T_i, I_i \bigr\}
  = \xi_{i a t} \, \E (\xi_{j a t}) = 0$ for $(i,j) \in
  \mathcal{I}$, $a \in [0, 1]$ and $t \in [0, 1]$.
  We now let $\mathcal{Q}$ denote the set of finitely-supported
  probability measures on $(\mathcal{X} \times [0,1] \times \{0,1\})^2$,
  and for $Q \in \mathcal{Q}$, seek to bound the $L_2(Q)$-covering
  number of the function class
  \begin{align*}
    \mathcal G
    &=
    \biggl\{
      \big((X_i, T_i, I_i), (X_j, T_j, I_j)\big)
      \mapsto \xi_{i a t} \xi_{j a t} :
      a \in [0, 1], t \in [0, 1]
    \biggr\}.
  \end{align*}
  For $0 \leq a < a' \leq 1$, $0 \leq t < t' \leq 1$ and $i \in [n]$,
  \begin{align*}
    |\xi_{i a t} - \xi_{i a' t'}|
    &\leq
    \bigl|
    R_i(t) e^{\tilde f_a(X_i)}
    - R_i(t') e^{\tilde f_{a'}(X_i)}
    \bigr|
    + \int_\cX
    \Bigl|
    q(x, t) e^{\tilde f_a(x)} \diffi P_X(x)
    - q(x, t') e^{\tilde f_{a'}(x)}
    \Bigr|
    \diffi P_X(x) \\
    &\leq
    e^{\|f_1\|_\infty \lor \|f_2\|_\infty}
    \bigl(
      \mathbbm 1_{\{t \leq T_i < t'\}}
      + \P(t \leq T < t')
    \bigr)
    + 4 (a' - a)
    e^{2(\|f_1\|_\infty \lor \|f_2\|_\infty)},
  \end{align*}
  so
  \begin{align*}
    &|\xi_{i a t} \xi_{j a t} - \xi_{i a' t'} \xi_{j a' t'}|
    \leq
    |\xi_{i a t}| | \xi_{j a t} - \xi_{j a' t'}|
    + |\xi_{j a t'}| | \xi_{i a t} - \xi_{i a' t'}| \\
    &\quad\leq
    2 e^{2(\|f_1\|_\infty \lor \|f_2\|_\infty)}
    \bigl(
      \mathbbm 1_{\{t \leq T_i < t'\}}
      + \mathbbm 1_{\{t \leq T_j < t'\}}
      + 2 \P(t \leq T < t')
    \bigr)
    + 16 (a' - a)
    e^{3(\|f_1\|_\infty \lor \|f_2\|_\infty)} \\
    &\quad\leq
    16 e^{3(\|f_1\|_\infty \lor \|f_2\|_\infty)}
    \bigl(
      \mathbbm 1_{\{t \leq T_i < t'\}}
      + \mathbbm 1_{\{t \leq T_j < t'\}}
      + \P(t \leq T < t')
      + a' - a
    \bigr).
  \end{align*}
  It follows that there exists a universal constant $C_1 \geq 2$
  such that
  \begin{align*}
    &\E_Q
    \Bigl\{
      \bigl(
        \xi_{i a t} \xi_{j a t} - \xi_{i a' t'} \xi_{j a' t'}
      \bigr)^2
    \Bigr\} \\
    &\quad\leq
    C_1^2 e^{6(\|f_1\|_\infty \lor \|f_2\|_\infty)}
    \bigl\{
      \P_Q(t \leq T_i < t')
      + \P_Q(t \leq T_j < t')
      + \P(t \leq T < t')^2
      + (a'-a)^2
    \bigr\}.
  \end{align*}
  Given $\varepsilon \in (0,1)$, define $\mathcal{T}_1 :=
  \{t_1,\ldots,t_{\lceil 1/\varepsilon^2 \rceil}\}$, where
  \begin{align*}
    t_m
    &\vcentcolon=
    \inf\{t \in [0,1]: \P_Q(T_i \leq t) \geq m\varepsilon^2\}
  \end{align*}
  for $m \in [\lceil 1/\varepsilon^2 \rceil - 1]$ and $t_{\lceil
  1/\varepsilon^2 \rceil} := 1$, so that given any $t \in [0,1]$,
  there exists $m \in [\lceil 1/\varepsilon^2 \rceil]$ such that
  $\P_Q(t \leq T_i < t_m) \leq \varepsilon^2$.
  Let $\cT_2$ and $\cT_3$ satisfy the same properties but for
  $T_j$ under $Q$ and $T$ under $P_T$ respectively,
  and define $\cT \vcentcolon= \cT_1 \cup \cT_2 \cup \cT_3$.
  Likewise, let $\cA \vcentcolon=
  \{0, \varepsilon, 2 \varepsilon, \ldots,
  (\lceil 1/\varepsilon \rceil - 1) \varepsilon, 1\}$, so that given
  any $a \in [0, 1]$, there exists $a' \in \cA$ such that
  $|a' - a| \leq \varepsilon$.
  Then the set $\cT \times \cA$ induces a
  $\bigl(2 C_1 e^{3(\|f_1\|_\infty \lor \|f_2\|_\infty)}
  \varepsilon\bigr)$-cover
  of $\mathcal G$ under $L_2(Q)$,
  of cardinality at most
  $3\lceil 1/\varepsilon^2 \rceil \times \lceil 1/\varepsilon\rceil
  \leq 12 / \varepsilon^3$.
  Defining $M \vcentcolon = 4 e^{3(\|f_1\|_\infty \lor \|f_2\|_\infty)}$,
  there therefore exists a universal constant $C_2 > 0$ such that
  for all $\varepsilon \in (0, 1)$,
  \begin{align*}
    N\big(\varepsilon M,\mathcal G, \|\cdot\|_{Q,2}\big)
    \leq \frac{C_2}{\varepsilon^3}.
  \end{align*}
  As $\log(1+x) \leq 1 + \log x$ for $x \geq 1$,
  and $\int_0^1 \log(1/x) \diffi x = 1$,
  for a universal constant $C_3 > 0$,
  \begin{align*}
    J
    &\vcentcolon=
    \sup_{Q \in \mathcal{Q}}
    \int_0^{1}
    \log
    \bigl(
      1 +
      N\bigl(\varepsilon M, \mathcal{G}, \|\cdot\|_{Q,2}\bigr)
    \bigr)
    \diffi \varepsilon
    \leq
    1 + \int_0^{1}
    \log\bigl(C_2 / \varepsilon^3\bigr)
    \diffi \varepsilon
    \leq C_3.
  \end{align*}
  It follows by Lemma~\ref{lem:orliczUStat} that there exists a
  universal constant $C_4 > 0$ such that
  for all $s \geq 1$,
  \begin{align*}
    \P \biggl(
      \sup_{a \in [0, 1]}
      \sup_{t \in [0, 1]}
      \frac{2}{n(n-1)}
      \biggl|
      \sum_{(i,j) \in \mathcal{I}}
      \xi_{i}(t) \xi_{j}(t)
      \biggr|
      > \frac{C_4 s^2 e^{3(\|f_1\|_\infty \lor \|f_2\|_\infty)}}{n}
    \biggr)
    \leq e^{-s^2}.
  \end{align*}
  We deduce from~\eqref{eq:diagonal_bound_Sn} that there exists a
  universal constant $C_5 > 0$ such that
  for all $s \geq 1$,
  with probability at least $1 - e^{-s^2}$,
  \begin{align*}
    \sup_{a \in [0, 1]}
    \sup_{t \in [0, 1]}
    \big( S_n(f, t) - S(f, t) \big)^2
    &\leq
    \frac{C_5 s^2 e^{3(\|f_1\|_\infty \lor \|f_2\|_\infty)}}{n},
  \end{align*}
  as required.
\end{proof}

\begin{lemma}[Uniform convergence of $D S_n$]%
  \label{lem:uniform_convergence_DSn}
  Let $f, f_1, f_2 \in \cB(\cX)$
  and for $a \in [0, 1]$ define $\tilde f_a \vcentcolon= a f_1 + (1-a) f_2$.
  Then there is a universal constant $C > 0$ such that for all $s \geq 1$,
  with probability at least $1 - e^{-s^2}$,
  \begin{align*}
    \sup_{a \in [0, 1]}
    \sup_{t \in [0, 1]}
    \bigl| D S_n(\tilde f_a, t)(f) - D S(\tilde f_a, t)(f) \bigr|
    &\leq C e^{2(\|f_1\|_\infty \lor \|f_2\|_\infty)}
    \biggl(
      \frac{s \|f\|_{L_2}}{\sqrt n} + \frac{s^2 \|f\|_\infty}{n}
    \biggr).
  \end{align*}
\end{lemma}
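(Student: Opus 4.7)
For $i \in [n]$ and $(a,t) \in [0,1]^2$, I would define
\begin{equation*}
  \xi_{iat} \vcentcolon= R_i(t) e^{\tilde f_a(X_i)} f(X_i) - \int_\cX q(x,t) e^{\tilde f_a(x)} f(x) \diff P_X(x),
\end{equation*}
so that the quantity to control is $\sup_{a,t \in [0,1]}\bigl|n^{-1}\sum_{i=1}^n \xi_{iat}\bigr|$. Writing $M_\infty \vcentcolon= e^{\|f_1\|_\infty \vee \|f_2\|_\infty}\|f\|_\infty$ and $M_2 \vcentcolon= e^{\|f_1\|_\infty \vee \|f_2\|_\infty}\|f\|_{L_2}$, the two critical pointwise bounds are the envelope bound $|\xi_{iat}| \leq 2 M_\infty$ and the variance bound $\E[\xi_{iat}^2] \leq M_2^2$. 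These are precisely the ingredients that produce the $s M_2 / \sqrt n$ and $s^2 M_\infty / n$ terms in the stated inequality, exactly as in a scalar Bernstein bound.

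The core of the argument is then to obtain this Bernstein-type bound uniformly in $(a,t)\in [0,1]^2$. My plan is to apply Talagrand's concentration inequality for the supremum of an empirical process to the class
\begin{equation*}
  \mathcal G \vcentcolon= \bigl\{(x,\tau,\iota) \mapsto \mathbbm{1}_{\{\tau \geq t\}} e^{\tilde f_a(x)} f(x) : (a,t) \in [0,1]^2\bigr\},
\end{equation*}
which has envelope $M_\infty$ and, after centering, variance bounded by $M_2^2$. A covering argument almost identical to the one in the proof of Lemma~\ref{lem:uniform_convergence_Sn} --- combining Lipschitz dependence on $a$ (with modulus of order $\|f_1 - f_2\|_\infty e^{\|f_1\|_\infty \vee \|f_2\|_\infty}\|f\|_\infty$) with a cardinality-$\lceil 1/\varepsilon^2 \rceil$ partition of $[0,1]$ adapted to the marginal distribution of $T$ under $Q$ --- gives $\log N(\varepsilon M_\infty, \mathcal G, \|\cdot\|_{L_2(Q)}) \lesssim \log(1/\varepsilon)$ uniformly over finitely supported probability measures $Q$. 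Dudley's entropy integral after symmetrization then bounds $\E\bigl[\sup_{a,t}|n^{-1}\sum_i \xi_{iat}|\bigr]$ by a constant multiple of $M_2/\sqrt n$ up to lower-order terms, and Talagrand's inequality upgrades this to the claimed high-probability bound.

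The principal obstacle is recovering the $\|f\|_{L_2}$ scaling (rather than the cruder $\|f\|_\infty$) in the leading variance term uniformly over $(a,t)$. The squared-$U$-statistic template used in Lemmas~\ref{lem:convergence_Sn}, \ref{lem:convergence_DSn} and~\ref{lem:uniform_convergence_Sn} bounds the diagonal purely through the envelope and so would yield only $s \|f\|_\infty/\sqrt n$; exploiting the second-moment improvement therefore forces us out of the pure $U$-statistic framework and requires a Bernstein/Talagrand concentration tool for the first-order empirical process indexed by $(a,t)$. If one instead wishes to remain within the paper's existing toolkit, the alternative is a two-stage argument: pointwise Bernstein at each point of an $n^{-2}$-net of $[0,1]^2$ (whose log-cardinality $\lesssim \log n$ is absorbed into $s^2$), combined with a crude envelope control of the increment across each cell, which recovers the same bound at the cost of a less elegant proof.
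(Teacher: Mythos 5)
Your overall strategy is the same as the paper's: the identical centred process $\xi_{iat}$, the envelope $F = e^{f_1 \lor f_2}|f|$ with second moment of order $e^{2(\|f_1\|_\infty \lor \|f_2\|_\infty)}\|f\|_{L_2}^2$, a variance-sensitive maximal inequality driven by uniform covering numbers (the paper invokes \citet[Theorem~5.2]{chernozhukov2014gaussian}) to obtain the expectation bound of order $\|f\|_{L_2}/\sqrt n + \|f\|_\infty/n$, and then a Bernstein/Talagrand-type concentration inequality for suprema of bounded empirical processes (\citet[Theorem~12.2]{boucheron2013concentration}, applied with $s' \asymp s\sqrt n\,\|f\|_{L_2}/\|f\|_\infty + s^2$) to convert this into the stated tail bound. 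You also correctly diagnose why the squared-$U$-process template of the neighbouring lemmas cannot recover the $\|f\|_{L_2}$ scaling.

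The step that does not work as written is the covering bound. Partitioning $[0,1]$ by quantiles of the marginal law of $T$ under $Q$ gives covers of your class at the \emph{absolute} radius $\varepsilon M_\infty$, but the variance-sensitive bound you want (CCK Theorem~5.2, or equivalently a local Dudley bound after symmetrisation) requires the uniform entropy integral at radius $\varepsilon\|F\|_{Q,2}$, and $\|F\|_{Q,2}$ can be arbitrarily small relative to $M_\infty$ as $Q$ ranges over finitely supported measures (take $Q$ concentrated where $|f|$ is tiny). Rescaling your bound to the envelope radius introduces a factor $\log\bigl(M_\infty/\|F\|_{Q,2}\bigr)$ that is not uniformly bounded in $Q$; working instead at the empirical measure, the same issue surfaces as an extra factor of order $\sqrt{\log(\|f\|_\infty/\|f\|_{L_2})}$ multiplying the leading term, which the statement does not allow. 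The repair — and what the paper does — is to partition $[0,1]$ by quantiles of the $f^2$-weighted measure $t \mapsto \E_Q\bigl\{f(X)^2\mathbbm 1_{\{T \leq t\}}\bigr\}$, so that each cell satisfies $\E_Q\bigl\{f(X)^2\mathbbm 1_{\{t \leq T < t'\}}\bigr\} \leq \varepsilon^2\|f\|_{Q,2}^2$ and the resulting cover is at radius proportional to $\varepsilon\|f\|_{Q,2}$, hence to $\varepsilon\|F\|_{Q,2}$ up to an $e^{\|f_1\|_\infty \lor \|f_2\|_\infty}$ factor; the uniform entropy integral is then bounded by a constant depending only on $\|f_1\|_\infty \lor \|f_2\|_\infty$. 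Your fallback plan has a related defect: an $n^{-2}$ value grid in $t$ cannot control the within-cell increment of $n^{-1}\sum_i R_i(t)e^{\tilde f_a(X_i)}f(X_i)$ when $T$ has atoms, so there too one needs a quantile-type grid (and a split of $f$ into its positive and negative parts to exploit monotonicity in $t$).
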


\begin{proof}[Proof of Lemma~\ref{lem:uniform_convergence_DSn}]
  Define the class of functions
  \begin{align*}
    \mathcal G
    \vcentcolon=
    \biggl\{
      (X_i, T_i, I_i)
      \mapsto
      R_i(t) e^{\tilde f_a(X_i)} f(X_i) :
      a \in [0, 1], t \in [0, 1]
    \biggr\},
  \end{align*}
  with associated envelope function
  $F(X, T, I) \vcentcolon= e^{f_1(X) \lor f_2(X)} |f(X)|$.
  By Lemmas~\ref{lem:derivatives_Sn}
  and~\ref{lem:derivatives_S},
  \begin{align*}
    &\sup_{a \in [0, 1]}
    \sup_{t \in [0, 1]}
    \bigl|
    D S_n(\tilde f_a, t)(f) - D S(\tilde f_a, t)(f)
    \bigr| \\
    &\quad=
    \sup_{a \in [0, 1]}
    \sup_{t \in [0, 1]}
    \biggl|
    \frac{1}{n} \sum_{i=1}^n
    R_i(t) e^{\tilde f_a(X_i)} f(X_i)
    - \int_\cX q(x, t) e^{\tilde f_a(x)} f(x) \diffi P_X(x)
    \biggr| \\
    &\quad=
    \sup_{g \in \cG}
    \biggl|
    \frac{1}{n} \sum_{i=1}^n
    \Bigl\{
      g(X_i, T_i, I_i) - \E \bigl( g(X_i, T_i, I_i) \bigr)
    \Bigr\}
    \biggr|.
  \end{align*}
  Define
  $\sigma^2 \vcentcolon= \E\bigl(F(X, T, I)^2\bigr)
  \leq e^{2 (\|f_1\|_\infty \lor \|f_2\|_\infty)} \|f\|_{L_2}^2$ and
  $M \vcentcolon= \|F\|_\infty
  \leq e^{\|f_1\|_\infty \lor \|f_2\|_\infty} \|f\|_\infty$.
  Let $\cQ$ be the set of discrete measures on
  $\cX \times [0, 1] \times \{0, 1\}$
  and for $Q \in \cQ$
  write $\|\cdot\|_{Q,2}$ for the $L_2(Q)$-norm.
  Define the uniform entropy integral
  \begin{align*}
    J
    &\vcentcolon=
    \sup_{Q \in \mathcal{Q}}
    \int_0^{1}
    \sqrt{1 + \log
    N\bigl(\varepsilon \|F\|_{Q, 2}, \mathcal{G}, \|\cdot\|_{Q,2}\bigr)}
    \diffi \varepsilon.
  \end{align*}
  By \citet[Theorem~5.2]{chernozhukov2014gaussian},
  there is a universal constant $C_1 > 0$ such that
  \begin{align}
    \nonumber
    &\E \biggl[
      \sup_{g \in \cG}
      \biggl|
      \frac{1}{n} \sum_{i=1}^n
      \Bigl\{
        g(X_i, T_i, I_i) - \E \bigl( g(X_i, T_i, I_i) \bigr)
      \Bigr\}
      \biggr|
    \biggr]
    \leq
    \frac{C_1 J \sigma}{\sqrt n} + \frac{C_1 J^2 M}{n} \\
    \label{eq:chernozhukov_bernstein}
    &\qquad\leq
    \frac{C_1 J e^{\|f_1\|_\infty \lor \|f_2\|_\infty} \|f\|_{L_2}}{\sqrt n}
    + \frac{C_1 J^2 e^{\|f_1\|_\infty \lor \|f_2\|_\infty} \|f\|_\infty}{n}.
  \end{align}
  To compute the entropy integral, note that for any $i \in [n]$,
  $0 \leq a < a' \leq 1$ and $0 \leq t < t' \leq 1$,
  \begin{align*}
    &\bigl|
    R_i(t') e^{\tilde f_{a'}(X_i)} f(X_i)
    - R_i(t) e^{\tilde f_a(X_i)} f(X_i)
    \bigr| \\
    &\quad\leq
    e^{\tilde f_{a'}(X_i)}
    |f(X_i)|
    \bigl| R_i(t') - R_i(t) \bigr|
    + R_i(t) |f(X_i)|
    \bigl| e^{\tilde f_{a'}(X_i)} - e^{\tilde f_a(X_i)} \bigr| \\
    &\quad\leq
    e^{\|f_1\|_\infty \lor \|f_2\|_\infty}
    |f(X_i)|
    \mathbbm 1_{\{t \leq T_i < t'\}}
    + e^{\|f_1\|_\infty \lor \|f_2\|_\infty}
    |f(X_i)|
    \bigl| \tilde f_{a'}(X_i) - \tilde f_a(X_i) \bigr| \\
    &\quad\leq
    2 e^{2(\|f_1\|_\infty \lor \|f_2\|_\infty)} |f(X_i)|
    \bigl( \mathbbm 1_{\{t \leq T_i < t'\}} + a'-a \bigr).
  \end{align*}
  Therefore for each $Q \in \cQ$,
  \begin{align*}
    &\E_Q \Bigl\{
      \bigl(
        R_i(t') e^{\tilde f_{a'}(X_i)} f(X_i)
        - R_i(t) e^{\tilde f_a(X_i)} f(X_i)
      \bigr)^2
    \Bigr\} \\
    &\quad\leq
    8 e^{4(\|f_1\|_\infty \lor \|f_2\|_\infty)}
    \Bigl\{
      \E_Q \bigl(
        f(X_i)^2
        \mathbbm 1_{\{t \leq T_i < t'\}}
      \bigr)
      + (a' - a)^2 \|f\|_{Q,2}^2
    \Bigr\}.
  \end{align*}
  Given $\varepsilon \in (0,1)$, define
  $\mathcal{T} := \{t_1,\ldots,t_{\lceil 1/\varepsilon^2 \rceil}\}$,
  where
  \begin{align*}
    t_m := \inf\Bigl\{t \in [0,1]:
      \E_Q \bigl(
        f(X_i)^2
        \mathbbm 1_{\{T_i \leq t\}}
      \bigr)
      \geq m \varepsilon^2 \|f\|_{Q,2}^2
    \Bigr\}
  \end{align*}
  for $m \in [\lceil 1/\varepsilon^2 \rceil - 1]$ and $t_{\lceil
  1/\varepsilon^2 \rceil} := 1$, so that given any $t \in [0,1]$,
  there exists $m \in [\lceil 1/\varepsilon^2 \rceil]$ such that
  $\E_Q \bigl( f(X_i)^2 \mathbbm 1_{\{t \leq T_i < t_m\}} \bigr)
  \leq \varepsilon^2 \|f\|_{Q,2}^2$.
  Likewise, let $\cA \vcentcolon=
  \{0, \varepsilon, 2 \varepsilon, \ldots,
  (\lceil 1/\varepsilon \rceil - 1) \varepsilon, 1\}$, so that given
  any $a \in [0, 1]$, there exists $a' \in \cA$ such that
  $|a' - a| \leq \varepsilon/2$.
  Then $\cA \times \cT$ induces a
  $\bigl(4 e^{2(\|f_1\|_\infty \lor \|f_1\|_\infty)}
  \varepsilon \|f\|_{Q,2}\bigr)$-cover
  of $\mathcal G$ under $L_2(Q)$, of cardinality at most
  $\lceil 1/\varepsilon^2 \rceil \times \lceil 1/\varepsilon\rceil
  \leq 4 / \varepsilon^3$.
  Since
  $\|F\|_{Q,2} \geq e^{-(\|f_1\|_\infty \lor \|f_2\|_\infty)} \|f\|_{Q,2}$,
  we have for $\varepsilon \in (0,1)$ that
  \begin{align*}
    N\Big(
      \varepsilon \|F\|_{Q, 2},
      \mathcal G, \|\cdot\|_{Q,2}
    \Big)
    &\leq
    N\Big(
      \varepsilon
      e^{-(\|f_1\|_\infty \lor \|f_2\|_\infty)}
      \|f\|_{Q, 2},
      \mathcal G, \|\cdot\|_{Q,2}
    \Big)
    \leq
    \frac{256 e^{9(\|f_1\|_\infty \lor \|f_2\|_\infty)}}{\varepsilon^{3}}.
  \end{align*}
  Since $\int_0^1 \sqrt{\log(1/\varepsilon)} \diffi \varepsilon
  = \sqrt{\pi}/2$, it follows that
  \begin{align*}
    J
    &\leq
    \int_0^{1}
    \sqrt{1 + \log \bigl(256 e^{9 (\|f_1\|_\infty \lor \|f_2\|_\infty)}
    / \varepsilon^{3}\bigr)}
    \diffi \varepsilon \\
    &\leq
    \sqrt{1 + \log 256}
    + 3 \sqrt{\|f_1\|_\infty \lor \|f_2\|_\infty}
    + \sqrt{3 \pi} / 2 \\
    &\leq
    5 + 3 \sqrt{\|f_1\|_\infty \lor \|f_2\|_\infty}
    \leq
    6 e^{(\|f_1\|_\infty \lor \|f_2\|_\infty)/2}.
  \end{align*}
  Therefore, by \eqref{eq:chernozhukov_bernstein},
  there exists a universal constant $C_2 > 0$ with
  \begin{align}
    \label{eq:Dl_local_maximal}
    \E \biggl(
      \sup_{a \in [0, 1]}
      \sup_{t \in [0, 1]}
      \bigl|
      D S_n(\tilde f_a, t)(f) - D S(\tilde f_a, t)(f)
      \bigr|
    \biggr)
    &\leq
    C_2 e^{2 (\|f_1\|_\infty \lor \|f_2\|_\infty)}
    \biggl(
      \frac{\|f\|_{L_2}}{\sqrt n} + \frac{\|f\|_\infty}{n}
    \biggr).
  \end{align}
  We now apply a Bernstein-type inequality for empirical processes
  \citep[Chapter~12]{boucheron2013concentration}
  to control the tail behaviour of the supremum.
  Noting that the result is trivial if $\|f\|_\infty = 0$, assume
  that $\|f\|_\infty > 0$.
  For $i \in [n]$, $a \in [0, 1]$ and $t \in [0, 1]$, write
  \begin{align*}
    X_{i a t}
    &\vcentcolon=
    \frac{1}{2 e^{(\|f_1\|_\infty \lor \|f_2\|_\infty)} \|f\|_\infty}
    \biggl(
      R_i(t) e^{\tilde f_a(X_i)} f(X_i)
      - \int_\cX q(x, t) e^{\tilde f_a(x)} f(x) \diffi P_X(x)
    \biggr),
  \end{align*}
  so that $(X_{i a t})_{a \in [0, 1], t \in [0, 1]}$ are independent
  and identically distributed, with
  $\E(X_{i a t}) = 0$ and $|X_{i a t}| \leq 1$.
  Further,
  \begin{align*}
    \sup_{a \in [0, 1]}
    \sup_{t \in [0, 1]}
    \bigl| D S_n(\tilde f_a, t)(f) - D S(\tilde f_a, t)(f) \bigr|
    &=
    \frac{2 e^{(\|f_1\|_\infty \lor \|f_2\|_\infty)} \|f\|_\infty}{n}
    \sup_{a \in [0, 1]}
    \sup_{t \in [0, 1]}
    \biggl|
    \sum_{i=1}^n
    X_{i a t}
    \biggr|,
  \end{align*}
  and also
  \begin{align*}
    \E \biggl(
      \sup_{a \in [0, 1]}
      \sup_{t \in [0, 1]}
      \sum_{i=1}^n
      X_{i a t}^2
    \biggr)
    \leq
    \frac{1}
    {4 \|f\|_\infty^2}
    \biggl(
      2 \sum_{i=1}^n \E \bigl( f(X_i)^2 \bigr)
      + 2 n \int_\cX f(x)^2 \diffi P_X(x)
    \biggr) =
    \frac{n \|f\|_{L_2}^2} {\|f\|_\infty^2}.
  \end{align*}
  Therefore by \citet[Theorem~12.2]{boucheron2013concentration},
  for all $s' \geq 0$,
  \begin{align*}
    \P \biggl(
      \sup_{a \in [0, 1]}
      \sup_{t \in [0, 1]}
      \biggl| \sum_{i=1}^n X_{i a t} \biggr|
      \geq
      \E \biggl\{
        \sup_{a \in [0, 1]}
        \sup_{t \in [0, 1]}
        \biggl| \sum_{i=1}^n X_{i a t} \biggr|
      \biggr\}
      + s'
    \biggr)
    &\leq
    2\exp \biggl(
      - \frac{s'^2}{8 n \|f\|_{L_2}^2 / \|f\|_\infty^2 + 2 s'}
    \biggr).
  \end{align*}
  In combination with \eqref{eq:Dl_local_maximal}, setting
  $s' \vcentcolon=
  3s \sqrt n \|f\|_{L_2} / \|f\|_\infty + 2s^2$
  with $s \geq 1$, we deduce that there exists a universal constant
  $C > 0$ such that with probability at least $1 - e^{-s^2}$,
  \begin{align*}
    \sup_{a \in [0, 1]}
    \sup_{t \in [0, 1]}
    \bigl| D S_n(\tilde f_a, t)(f) - D S(\tilde f_a, t)(f) \bigr|
    &\leq C e^{2(\|f_1\|_\infty \lor \|f_2\|_\infty)}
    \biggl(
      \frac{s \|f\|_{L_2}}{\sqrt n} + \frac{s^2 \|f\|_\infty}{n}
    \biggr).
    \qedhere
  \end{align*}
\end{proof}

\subsubsection{Local convergence of \texorpdfstring{$\ell_n$}{ln}}

We now apply the uniform bounds given in
Lemmas~\ref{lem:uniform_convergence_Sn}
and~\ref{lem:uniform_convergence_DSn} to deduce a local convergence result
for $D \ell_n$ in Lemma~\ref{lem:local_convergence_Dln}.
Lemma~\ref{lem:lipschitz_ln_lstar} demonstrates that
$\ell_n$ and $\ell_\star$ are Lipschitz with respect to
the supremum norm.
Finally, Lemma~\ref{lem:local_convergence_ln} provides a bound on
$\ell_n(f)$ which is local in the sense that
the error reduces when $f$ is close to $f_0$ under the $L_2(P_X)$-norm.
This property is essential when used in combination with
the $L_2(P_X)$-strong convexity result for $\ell_\star$ from
Lemma~\ref{lem:strong_convexity},
and allows us to derive our main theorems on cross-validation.

\begin{lemma}[Local convergence of $D \ell_n$]%
  \label{lem:local_convergence_Dln}
  Let $f, f_1, f_2 \in \cB(\cX)$,
  and for $a \in [0, 1]$ define $\tilde f_a \vcentcolon= a f_1 + (1-a) f_2$.
  Then there is a universal constant $C > 0$ such that for all $s \geq 1$,
  with probability at least $1 - e^{-s^2}$,
  \begin{align*}
    \sup_{a \in [0, 1]}
    \bigl| D \ell_n(\tilde f_a)(f) - D \ell_\star(\tilde f_a)(f) \bigr|
    &\leq
    \frac{
      C e^{4\|f_0\|_\infty}
    e^{6(\|f_1 - f_0\|_\infty \lor \|f_2 - f_0\|_\infty)}}{q_1^2}
    \biggl(
      \frac{s \|f\|_{L_2}}{\sqrt n}
      + \frac{s^2 \|f\|_\infty}{n}
    \biggr).
  \end{align*}
\end{lemma}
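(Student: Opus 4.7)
The plan is to decompose the difference into three manageable pieces, each amenable to tools already developed in the excerpt, and then combine them by a union bound.

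First, by Lemmas~\ref{lem:derivatives_ln} and~\ref{lem:derivatives_lstar}, I would write
\begin{align*}
  D\ell_n(\tilde f_a)(f) - D\ell_\star(\tilde f_a)(f)
  = T_1(a) + T_2(a) + T_3,
\end{align*}
where
\begin{align*}
  T_1(a) &\vcentcolon= \int_0^1
  \biggl\{
    \frac{DS_n(\tilde f_a, t)(f)}{S_n(\tilde f_a, t)}
    - \frac{DS(\tilde f_a, t)(f)}{S(\tilde f_a, t)}
  \biggr\} \diff N(t), \\
  T_2(a) &\vcentcolon= \int_0^1
  \frac{DS(\tilde f_a, t)(f)}{S(\tilde f_a, t)}
  \bigl( \diff N(t) - S(f_0, t)\lambda_0(t)\diff t \bigr), \\
  T_3 &\vcentcolon= -\frac{1}{n}\sum_{i=1}^n f(X_i) N_i(1)
  + \int_0^1 DS(f_0, t)(f)\lambda_0(t)\diff t.
\end{align*}
Here $T_3$ is deterministic in $a$, while $T_1$ and $T_2$ must be controlled uniformly in $a \in [0,1]$.

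For $T_3$, the counting process identity $\E\{f(X) N(1)\} = \int_0^1 DS(f_0, t)(f)\lambda_0(t)\diff t$ makes $T_3$ a centred i.i.d.\ sum bounded by $\|f\|_\infty$ with variance at most $\|f\|_{L_2}^2$, so Bernstein's inequality gives $|T_3| \leq C\bigl(s\|f\|_{L_2}/\sqrt n + s^2\|f\|_\infty/n\bigr)$ with probability at least $1 - e^{-s^2}$. For $T_1(a)$, I would use the algebraic identity
\[
  \frac{DS_n(\tilde f_a, t)(f)}{S_n(\tilde f_a, t)}
  - \frac{DS(\tilde f_a, t)(f)}{S(\tilde f_a, t)}
  = \frac{(DS_n - DS)(\tilde f_a, t)(f)}{S(\tilde f_a, t)}
  + \frac{DS_n(\tilde f_a, t)(f)\bigl(S(\tilde f_a, t) - S_n(\tilde f_a, t)\bigr)}
  {S_n(\tilde f_a, t) S(\tilde f_a, t)},
\]
then bound $S(\tilde f_a, t) \geq q_1 e^{-\|\tilde f_a - f_0\|_\infty}$ by a Jensen argument (cf.~\eqref{eq:lower_bound_S0}), use $|DS_n(\tilde f_a, t)(f)| \leq \|f\|_\infty S_n(\tilde f_a, t)$, and use Lemma~\ref{lem:uniform_convergence_Sn} to guarantee $S_n \geq S/2$ uniformly on a high-probability event. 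Combined with $\int_0^1 |\mathrm d N(t)| \leq 1$, this reduces $T_1$ to supremum bounds on $S_n - S$ and $DS_n - DS$, yielding the desired rate through Lemmas~\ref{lem:uniform_convergence_Sn} and~\ref{lem:uniform_convergence_DSn}. The two factors of $1/S$ are responsible for the $1/q_1^2$ factor in the stated bound.

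The hard part is the uniform control of $T_2(a)$. I would write
\[
  T_2(a) = \frac{1}{n}\sum_{i=1}^n
  \bigl\{Z_i(a) - \E Z_i(a)\bigr\}, \qquad
  Z_i(a) \vcentcolon= (1 - I_i) \frac{DS(\tilde f_a, T_i)(f)}{S(\tilde f_a, T_i)}
  \mathbbm 1_{\{T_i \leq 1\}},
\]
noting $\E Z_i(a) = \int_0^1 \frac{DS(\tilde f_a, t)(f)}{S(\tilde f_a, t)} S(f_0, t)\lambda_0(t)\diff t$. The envelope is $|Z_i(a)| \leq \|f\|_\infty$, and the Cauchy--Schwarz bound $(DS)^2 \leq S \cdot D^2 S(f,f)$ together with the compensator identity give $\E\{Z_i(a)^2\} \leq e^{\|f_0 - \tilde f_a\|_\infty} e^{\|\tilde f_a\|_\infty} \Lambda \|f\|_{L_2}^2$. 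Lipschitz continuity in~$a$ of the integrand—differentiating the ratio $DS(\tilde f_a,t)(f)/S(\tilde f_a,t)$ in $a$ and bounding the result—yields an $L_2(Q)$-covering number of order $\varepsilon^{-1}$ for the class $\{Z_i(\cdot,a): a \in [0,1]\}$, so the uniform entropy integral is a constant times $e^{O(\|f_1\|_\infty \lor \|f_2\|_\infty)}$. An application of the Bernstein-type empirical process inequality \citet[Theorem~5.2]{chernozhukov2014gaussian}, combined with Talagrand's concentration inequality as in the proof of Lemma~\ref{lem:uniform_convergence_DSn}, then gives $\sup_a |T_2(a)| \leq C \bigl(s\|f\|_{L_2}/\sqrt n + s^2\|f\|_\infty/n\bigr)$ with probability at least $1 - e^{-s^2}$, modulo the exponential prefactor. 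Collecting the three contributions via a union bound and bounding each exponential factor by $e^{4\|f_0\|_\infty} e^{6(\|f_1 - f_0\|_\infty \lor \|f_2 - f_0\|_\infty)}$ completes the proof.
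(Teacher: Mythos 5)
Your decomposition matches the paper's in all but one algebraic choice, and that choice is where a genuine gap appears. Your $T_2$ and $T_3$ are handled essentially as in the paper: for $T_3$, the compensator identity and Bernstein's inequality give exactly $C(s\|f\|_{L_2}/\sqrt n + s^2\|f\|_\infty/n)$; for $T_2$, the covering-number argument and \citet[Theorem~5.2]{chernozhukov2014gaussian} give the expectation, and the paper actually uses the bounded differences inequality rather than the Bennett-type inequality you invoke — but since the envelope of that process is already $\|f\|_{L_2}$-scaled, either concentration tool delivers a tail of order $s\|f\|_{L_2}/\sqrt n$, so this is a harmless variation.

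The problem is in your cross term of $T_1$. You split
\[
\frac{DS_n}{S_n} - \frac{DS}{S}
= \frac{DS_n - DS}{S}
+ \frac{DS_n\,(S - S_n)}{S_n\, S},
\]
keeping the \emph{empirical} $DS_n$ in the second summand, and then propose to bound $|DS_n(\tilde f_a,t)(f)| \leq \|f\|_\infty S_n(\tilde f_a,t)$. That cancellation kills $S_n$ in the denominator and leaves $\|f\|_\infty\,|S - S_n|/S$. Since $|S - S_n| = O_\P(s/\sqrt n)$ by Lemma~\ref{lem:uniform_convergence_Sn} and $S$ is bounded below, this second summand is of order $s\|f\|_\infty/\sqrt n$. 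That is strictly larger than what the lemma claims: take $\|f\|_{L_2}=1$, $\|f\|_\infty = \sqrt n$; then $s\|f\|_\infty/\sqrt n = s$ while the target $s\|f\|_{L_2}/\sqrt n + s^2\|f\|_\infty/n = (s+s^2)/\sqrt n \to 0$. So your $T_1$ bound does not establish the stated rate. The paper avoids this by using the symmetric split
\[
\frac{DS_n}{S_n} - \frac{DS}{S}
= \frac{DS_n - DS}{S_n}
+ \frac{DS\,(S - S_n)}{S_n\, S},
\]
keeping the \emph{population} $DS$ in the cross term, and then bounds $|DS(\tilde f_a,t)(f)| \leq e^{\|f_0\|_\infty} e^{\|f_1-f_0\|_\infty \lor \|f_2-f_0\|_\infty}\|f\|_{L_2}$ directly via the Cauchy--Schwarz inequality in $L_2(P_X)$. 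With that the cross term becomes $O(s\|f\|_{L_2}/\sqrt n)$ as needed. Your version can be repaired either by switching to the paper's decomposition, or by further splitting $DS_n = DS + (DS_n - DS)$ inside your cross term and bounding the two resulting pieces separately (the $(DS_n-DS)(S-S_n)$ piece is then of higher order). One smaller omission: using Lemma~\ref{lem:uniform_convergence_Sn} to keep $S_n$ bounded below restricts $s$ to $s \lesssim \sqrt n$, and the statement must hold for all $s \geq 1$; the paper removes this restriction at the end via the crude Lipschitz bound $|D\ell_n(\cdot)(f) - D\ell_\star(\cdot)(f)| \leq 4\|f\|_\infty$ from Lemma~\ref{lem:lipschitz_ln_lstar}, and your proposal should include that step.
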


\begin{proof}[Proof of Lemma~\ref{lem:local_convergence_Dln}]
  By Lemmas~\ref{lem:derivatives_ln} and~\ref{lem:derivatives_lstar},
  we have
  \begin{align}
    \nonumber
    D \ell_n(\tilde f_a)(f) - D \ell_\star(\tilde f_a)(f) &=
    \int_{0}^{1}
    \frac{D S_n(\tilde f_a, t)(f)}{S_n(\tilde f_a, t)}
    \diffi {N(t)}
    -
    \frac{1}{n}
    \sum_{i=1}^n
    f(X_i)
    N_i(1) \\
    \nonumber
    &\qquad-
    \int_{0}^{1}
    \frac{D S(\tilde f_a, t)(f)}{S(\tilde f_a, t)}
    S(f_0, t) \lambda_0(t) \diffi t
    + \int_{0}^{1}
    D S(f_0, t)(f)
    \lambda_0(t)
    \diffi t \\
    \label{eq:local_convergence_Dln_1}
    &=
    \int_{0}^{1}
    \frac{D S_n(\tilde f_a, t)(f) - D S(\tilde f_a, t)(f)}{S_n(\tilde f_a, t)}
    \diffi {N(t)} \\
    \label{eq:local_convergence_Dln_2}
    &\qquad+
    \int_{0}^{1}
    \frac{D S(\tilde f_a, t)(f)}{S(\tilde f_a, t)}
    \frac{S(\tilde f_a, t) - S_n(\tilde f_a, t)}{S_n(\tilde f_a, t)}
    \diffi {N(t)} \\
    \label{eq:local_convergence_Dln_3}
    &\qquad+
    \int_{0}^{1}
    \frac{D S(\tilde f_a, t)(f)}{S(\tilde f_a, t)}
    \bigl(\mathrm d N(t) - S(f_0, t) \lambda_0(t) \, \mathrm d t\bigr) \\
    \label{eq:local_convergence_Dln_4}
    &\qquad+
    \int_{0}^{1}
    D S(f_0, t)(f)
    \lambda_0(t)
    \diffi t
    - \frac{1}{n}
    \sum_{i=1}^n
    f(X_i)
    N_i(1).
  \end{align}
  We begin by bounding the term in \eqref{eq:local_convergence_Dln_1}.
  By Lemmas~\ref{lem:uniform_convergence_Sn}
  and~\ref{lem:uniform_convergence_DSn},
  and since
  $S(\tilde f_a, t) = \int_\cX q(x, t) e^{\tilde f_a(x)} \diffi P_X(x)
  \geq e^{-(\|f_1 - f_0\|_\infty \lor \|f_2 - f_0\|_\infty)} q_1$,
  there exists a universal constant $C_1 > 0$ such that whenever
  $1 \leq s \leq e^{-2 \|f_0\|_\infty}
  e^{-3 (\|f_1 - f_0\|_\infty \lor \|f_2 - f_0\|_\infty)}
  q_1 \sqrt n / C_1$, with probability at least $1 - e^{-s^2}$,
  \begin{align*}
    &\sup_{a \in [0, 1]}
    \biggl|
    \int_{0}^{1}
    \frac{D S_n(\tilde f_a, t)(f) - D S(\tilde f_a, t)(f)}{S_n(\tilde f_a, t)}
    \diffi {N(t)}
    \biggl| \\
    &\quad\leq
    \frac{\sup_{a \in [0, 1]} \sup_{t \in [0, 1]}
    \bigl| D S_n(\tilde f_a, t)(f) - D S(\tilde f_a, t)(f) \bigr|}
    {\inf_{a \in [0, 1]} \inf_{t \in [0, 1]} S(\tilde f_a, t)
      - \sup_{a \in [0, 1]} \sup_{t \in [0, 1]} |S_n(\tilde f_a, t) -
    S(\tilde f_a, t)|} \\
    &\quad\leq
    \frac{(C_1/2) e^{2 \|f_0\|_\infty}
    e^{2(\|f_1 - f_0\|_\infty \lor \|f_2 - f_0\|_\infty)}} {
      e^{-(\|f_1 - f_0\|_\infty \lor \|f_2 - f_0\|_\infty)} q_1
      - (C_1 / 2) s
      e^{2\|f_0\|_\infty}
      e^{2(\|f_1 - f_0\|_\infty \lor \|f_2 - f_0\|_\infty)}
    / \sqrt n } \\
    &\qquad\times
    \biggl(
      \frac{s \|f\|_{L_2}}{\sqrt n} + \frac{s^2 \|f\|_\infty}{n}
    \biggr) \\
    &\quad\leq
    C_1 e^{2 \|f_0\|_\infty}
    e^{3(\|f_1 - f_0\|_\infty \lor \|f_2 - f_0\|_\infty)}
    \biggl(
      \frac{s \|f\|_{L_2}}{\sqrt n q_1} + \frac{s^2 \|f\|_\infty}{n q_1}
    \biggr).
  \end{align*}
  Next we consider \eqref{eq:local_convergence_Dln_2}.
  By Lemma~\ref{lem:derivatives_S} and Cauchy--Schwarz, we have
  \begin{align*}
    \bigl|
    D S(\tilde f_a, t)(f)
    \bigr|
    &\leq
    \int_\cX q(x, t) e^{\tilde f_a(x)} |f(x)| \diffi P_X(x)
    \leq
    e^{\|f_0\|_\infty}
    e^{\|f_1 - f_0\|_\infty \lor \|f_2 - f_0\|_\infty}
    \|f\|_{L_2}.
  \end{align*}
  Thus, by Lemmas~\ref{lem:uniform_convergence_Sn}
  and~\ref{lem:uniform_convergence_DSn},
  there exists a universal constant $C_2 > 0$ such that whenever
  $1 \leq s \leq e^{-2 \|f_0\|_\infty}
  e^{-3 (\|f_1 - f_0\|_\infty \lor \|f_2 - f_0\|_\infty)}
  q_1 \sqrt n / C_2$, with probability at least $1 - e^{-s^2}$,
  \begin{align*}
    &\sup_{a \in [0, 1]}
    \biggl|
    \int_{0}^{1}
    \frac{D S(\tilde f_a, t)(f)}{S(\tilde f_a, t)}
    \frac{S(\tilde f_a, t) - S_n(\tilde f_a, t)}{S_n(\tilde f_a, t)}
    \diffi {N(t)}
    \biggr| \\
    &\quad\leq
    \frac{\sup_{a \in [0, 1]} \sup_{t \in [0, 1]} |D S(\tilde f_a, t)(f)|}
    {\inf_{a \in [0, 1]} \inf_{t \in [0, 1]} S(\tilde f_a, t)} \\
    &\qquad\times
    \frac{\sup_{a \in [0, 1]} \sup_{t \in [0, 1]}
    \bigl| S_n(\tilde f_a, t) - S(\tilde f_a, t) \bigr|}
    {\inf_{a \in [0, 1]} \inf_{t \in [0, 1]} S(\tilde f_a, t)
      - \sup_{a \in [0, 1]} \sup_{t \in [0, 1]} |S_n(\tilde f_a, t) -
    S(\tilde f_a, t)|} \\
    &\quad\leq
    C_2 e^{3\|f_0\|_\infty}
    e^{5(\|f_1 - f_0\|_\infty \lor \|f_2 - f_0\|_\infty)}
    \frac{s \|f\|_{L_2}}{\sqrt n q_1}.
  \end{align*}

  We now turn to \eqref{eq:local_convergence_Dln_3}.
  Define the function class
  \begin{align*}
    \mathcal G
    &=
    \biggl\{
      (X_i, T_i, I_i)
      \mapsto
      \int_{0}^{1}
      \frac{D S(\tilde f_a, t)(f)}{S(\tilde f_a, t)}
      \bigl(\mathrm d N_i(t) - S(f_0, t) \lambda_0(t) \, \mathrm d t\bigr)
      : a \in [0, 1]
    \biggr\},
  \end{align*}
  with associated envelope function
  $F \vcentcolon= 8 e^{2\|f_0\|_\infty}
  e^{5(\|f_1 - f_0\|_\infty \lor \|f_2 - f_0\|_\infty)} \|f\|_{L_2} / q_1^2$.
  For $a, a' \in [0, 1]$, by Lemma~\ref{lem:derivatives_S}
  and Cauchy--Schwarz,
  \begin{align}
    \nonumber
    \sup_{t \in [0, 1]}
    \bigl|
    D S(\tilde f_a, t)(f)
    - D S(\tilde f_{a'}, t)(f)
    \bigr|
    &=
    \sup_{t \in [0, 1]}
    \biggl|
    \int_\cX
    q(x, t)
    \bigl(
      e^{\tilde f_a(x)}
      - e^{\tilde f_{a'}(x)}
    \bigr)
    f(x)
    \diffi P_X(x)
    \biggr| \\
    \nonumber
    &\leq
    \sup_{x \in \cX}
    \bigl|
    e^{\tilde f_a(x)}
    - e^{\tilde f_{a'}(x)}
    \bigr|
    \|f\|_{L_2} \\
    \nonumber
    &\leq
    e^{\|f_0\|_\infty}
    e^{\|f_1 - f_0\|_\infty \lor \|f_2 - f_0\|_\infty}
    \bigl\| \tilde f_a - \tilde f_{a'} \bigr\|_\infty
    \|f\|_{L_2} \\
    \nonumber
    &\leq
    e^{\|f_0\|_\infty}
    e^{\|f_1 - f_0\|_\infty \lor \|f_2 - f_0\|_\infty}
    |a - a'|
    \| f_1 - f_2 \|_\infty
    \|f\|_{L_2} \\
    \label{eq:DS_lipschitz}
    &\leq
    2 e^{\|f_0\|_\infty}
    e^{2(\|f_1 - f_0\|_\infty \lor \|f_2 - f_0\|_\infty)}
    |a - a'|
    \|f\|_{L_2}.
  \end{align}
  By a similar argument, we have
  \begin{align}
    \label{eq:S_lipschitz}
    \sup_{t \in [0, 1]}
    \bigl|
    S(\tilde f_a, t)
    - S(\tilde f_{a'}, t)
    \bigr|
    &\leq
    2 e^{\|f_0\|_\infty}
    e^{2(\|f_1 - f_0\|_\infty \lor \|f_2 - f_0\|_\infty)}
    |a - a'|.
  \end{align}
  Hence, by \eqref{eq:DS_lipschitz} and \eqref{eq:S_lipschitz},
  since $N_i(1) \leq 1$
  for each $i \in [n]$
  and $\int_0^1 S(f_0, t) \lambda_0(t) \diffi t \leq 1$,
  \begin{align*}
    &\biggl|
    \int_{0}^{1}
    \biggl(
      \frac{D S(\tilde f_a, t)(f)}{S(\tilde f_a, t)}
      - \frac{D S(\tilde f_{a'}, t)(f)}{S(\tilde f_{a'}, t)}
    \biggr)
    \bigl(\mathrm d N_i(t) - S(f_0, t) \lambda_0(t) \, \mathrm d t\bigr)
    \biggl| \\
    &\quad\leq
    \frac{2 \sup_{t \in [0, 1]}
    \bigl|D S(\tilde f_a, t)(f) - D S(\tilde f_{a'}, t)(f)\bigr|}
    {\inf_{a \in [0, 1]} \inf_{t \in [0, 1]} S(\tilde f_a, t)} \\
    &\qquad+
    \frac{2 \sup_{a \in [0, 1]} \sup_{t \in [0, 1]} |D S(\tilde f_{a}, t)(f)|
    \sup_{t \in [0, 1]} \bigl|S(\tilde f_a, t) - S(\tilde f_{a'}, t)\bigr|}
    {\inf_{a \in [0, 1]} \inf_{t \in [0, 1]} S(\tilde f_a, t)^2} \\
    &\quad\leq
    \frac{8}{q_1^2}
    e^{2\|f_0\|_\infty}
    e^{5(\|f_1 - f_0\|_\infty \lor \|f_2 - f_0\|_\infty)}
    |a - a'|
    \|f\|_{L_2}
    = F |a - a'|.
  \end{align*}
  Thus by setting
  $\cA \vcentcolon= \{0, \varepsilon, 2 \varepsilon, \ldots,
  (\lceil 1/\varepsilon \rceil - 1) \varepsilon, 1\}$, given
  any $a \in [0, 1]$, there exists $a' \in \cA$ such that
  $|a' - a| \leq \varepsilon$. It follows that
  $N\bigl(\varepsilon F, \cG, \|\cdot\|_\infty\bigr) \leq 2/\varepsilon$,
  so as $\int_0^1 \sqrt{\log(1/\varepsilon)} \diffi \varepsilon
  = \sqrt{\pi}/2$,
  \begin{align*}
    J
    &\vcentcolon=
    \int_0^{1}
    \sqrt{1 + \log
    N\bigl(\varepsilon F, \mathcal{G}, \|\cdot\|_{\infty}\bigr)}
    \diffi \varepsilon
    \leq
    \int_0^{1}
    \sqrt{1 + \log (2/\varepsilon)}
    \diffi \varepsilon
    \leq \sqrt{1 + \log 2} + \frac{\sqrt \pi}{2}
    \leq 3.
  \end{align*}
  Therefore, by \citet[Theorem~5.2]{chernozhukov2014gaussian},
  there exists a universal constant $C_3 > 0$ such that
  \begin{align*}
    \E \biggl\{
      \sup_{a \in [0, 1]}
      \biggl|
      \int_{0}^{1}
      \frac{D S(\tilde f_a, t)(f)}{S(\tilde f_a, t)}
      \bigl(\mathrm d N(t) &- S(f_0, t) \lambda_0(t) \, \mathrm d t\bigr)
      \biggr|
    \biggr\} \\
    &\leq
    C_3 e^{2\|f_0\|_\infty}
    e^{5(\|f_1 - f_0\|_\infty \lor \|f_2 - f_0\|_\infty)}
    \frac{\|f\|_{L_2}}{\sqrt n q_1^2}.
  \end{align*}
  It follows by the bounded differences inequality
  \citep[][Theorem~6.2]{boucheron2013concentration} that
  there exists a universal constant $C_4 > 0$ such that
  for all $s \geq 1$, with probability at least $1 - e^{-s^2}$,
  \begin{align*}
    \sup_{a \in [0, 1]}
    \biggl|
    \int_{0}^{1}
    \frac{D S(\tilde f_a, t)(f)}{S(\tilde f_a, t)}
    \bigl(\mathrm d N(t) - S(f_0, t) \lambda_0(t) \, \mathrm d t\bigr)
    \biggr|
    &\leq
    C_4
    e^{2\|f_0\|_\infty}
    e^{5(\|f_1 - f_0\|_\infty \lor \|f_2 - f_0\|_\infty)}
    \frac{s \|f\|_{L_2}}{\sqrt n q_1^2}.
  \end{align*}

  Finally, we consider \eqref{eq:local_convergence_Dln_4}.
  By Fubini's theorem and Lemma~\ref{lem:derivatives_S}, for
  each $i \in [n]$,
  \begin{align*}
    \E\bigl( f(X_i) N_i(1) \bigr)
    &=
    \E\biggl(
      f(X_i)
      \int_0^1
      R_i(t) e^{f_0(X_i)}
      \lambda_0(t)
      \diffi t
    \biggr) \\
    &=
    \int_0^1
    \int_\cX
    q(x, t)
    e^{f_0(x)}
    f(x)
    \diffi P_X(x)
    \lambda_0(t)
    \diffi t
    =
    \int_{0}^{1}
    D S(f_0, t)(f)
    \lambda_0(t)
    \diffi t.
  \end{align*}
  Also, $|f(X_i) N_i(1)| \leq \|f\|_\infty$
  and $\E\bigl(f(X_i)^2 N_i(1)^2\bigr) \leq \|f\|_{L_2}^2$.
  Hence, by Bernstein's inequality,
  \begin{align*}
    \P \biggl(
      \biggl|
      \int_{0}^{1}
      D S(f_0, t)(f)
      \lambda_0(t)
      \diffi t
      - \frac{1}{n}
      \sum_{i=1}^n
      f(X_i)
      N_i(1)
      \biggr|
      > s'
    \biggr)
    \leq 2 \exp \biggl(
      - \frac{n s^{\prime 2}}{2 \|f\|_{L_2}^2 + \frac{2}{3} \|f\|_\infty s'}
    \biggr)
  \end{align*}
  for $s' > 0$. With
  $s' \vcentcolon= 2 s \|f\|_{L_2} / \sqrt n + 2 s^2 \|f\|_\infty / n$
  for $s \geq 1$ and a universal constant $C_5 > 0$,
  \begin{align*}
    \biggl|
    \int_{0}^{1}
    D S(f_0, t)(f)
    \lambda_0(t)
    \diffi t
    - \frac{1}{n}
    \sum_{i=1}^n
    f(X_i)
    N_i(1)
    \biggr|
    \leq
    C_5 \biggl(
      \frac{s \|f\|_{L_2}}{\sqrt n} + \frac{s^2 \|f\|_\infty}{n}
    \biggr),
  \end{align*}
  with probability at least $1 - e^{-s^2}$.

  We deduce that there exists a universal constant $C_6 > 0$
  such that for all $s$ with
  $1 \leq s \leq e^{-2 \|f_0\|_\infty}
  e^{-3 (\|f_1 - f_0\|_\infty \lor \|f_2 - f_0\|_\infty)}
  q_1 \sqrt n / C_2$, with probability at least $1 - e^{-s^2}$,
  \begin{align}
    \label{eq:first_derivative_new_small_s}
    \sup_{a \in [0, 1]}
    \bigl| D \ell_n(\tilde f_a)(f) - D \ell_\star(\tilde f_a)(f) \bigr|
    &\leq
    C_6
    e^{4\|f_0\|_\infty}
    e^{6(\|f_1 - f_0\|_\infty \lor \|f_2 - f_0\|_\infty)}
    \biggl(
      \frac{s \|f\|_{L_2}}{\sqrt n q_1^2}
      + \frac{s^2 \|f\|_\infty}{n q_1^2}
    \biggr).
  \end{align}
  Finally, we remove the restriction on $s$.
  By Lemma~\ref{lem:lipschitz_ln_lstar},
  \begin{align*}
    \sup_{a \in [0,1]} \bigl| D \ell_n(\tilde f_a)(f) - D
    \ell_\star(\tilde f_a)(f) \bigr|
    &\leq 4 \|f\|_\infty,
  \end{align*}
  so if $s > e^{-2\|f_0\|_\infty}
  e^{-3 (\|f_1 - f_0\|_\infty \lor \|f_2 - f_0\|_\infty)}
  q_1 \sqrt n / C_2$, then
  \eqref{eq:first_derivative_new_small_s} holds by
  increasing the universal constant $C_6$ if necessary.
\end{proof}

\begin{lemma}[Lipschitz property of $\ell_n$ and $\ell_\star$]%
  \label{lem:lipschitz_ln_lstar}
  Let $f_1, f_2 \in \cB(\cX)$. Then
  \begin{align*}
    | \ell_n(f_1) - \ell_n(f_2) |
    \leq
    2 \|f_1 - f_2\|_\infty
    \quad \text{and} \quad
    | \ell_\star(f_1) - \ell_\star(f_2) |
    \leq
    2 \|f_1 - f_2\|_\infty.
  \end{align*}
\end{lemma}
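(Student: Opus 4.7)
The plan is to handle both inequalities in parallel, since each involves (i) a logarithmic term and (ii) a linear term, and the two halves decouple neatly so that each contributes at most $\|f_1 - f_2\|_\infty$ to the bound.

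The key observation for the logarithmic parts is that for every $t \in [0,1]$, the definition of $S_n$ gives
\[
  e^{-\|f_1 - f_2\|_\infty} S_n(f_2, t)
  \leq S_n(f_1, t)
  \leq e^{\|f_1 - f_2\|_\infty} S_n(f_2, t),
\]
and taking logarithms yields $|\log S_n(f_1, t) - \log S_n(f_2, t)| \leq \|f_1 - f_2\|_\infty$; the same pointwise bound holds for $S$ by the corresponding two-sided inequality $e^{-\|f_1 - f_2\|_\infty} S(f_2, t) \leq S(f_1, t) \leq e^{\|f_1 - f_2\|_\infty} S(f_2, t)$. For $\ell_n$, I then integrate against $\diff N(t)$ to obtain a contribution of at most $\|f_1 - f_2\|_\infty \cdot N(1) \leq \|f_1 - f_2\|_\infty$, while for $\ell_\star$ I integrate against $S(f_0, t) \lambda_0(t) \diff t$ and use the identity $\int_0^1 S(f_0, t) \lambda_0(t) \diff t = \E\{N(1)\} \leq 1$ (see Lemma~\ref{lem:derivatives_S} and the preceding discussion).

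For the linear parts, the $\ell_n$ term contributes
\[
  \biggl|\frac{1}{n}\sum_{i=1}^n\bigl(f_1(X_i) - f_2(X_i)\bigr) N_i(1)\biggr|
  \leq \|f_1 - f_2\|_\infty,
\]
since $N_i(1) \in \{0,1\}$. For $\ell_\star$, Lemma~\ref{lem:derivatives_S} identifies $D S(f_0, t)(f_1 - f_2) = \int_\cX q(x,t) e^{f_0(x)} \{f_1(x) - f_2(x)\} \diff P_X(x)$, whose absolute value is at most $\|f_1 - f_2\|_\infty \cdot S(f_0, t)$; integrating against $\lambda_0(t) \diff t$ and again using $\int_0^1 S(f_0, t) \lambda_0(t) \diff t \leq 1$ gives a contribution of at most $\|f_1 - f_2\|_\infty$. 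Adding the two contributions in each case yields the claimed bound of $2 \|f_1 - f_2\|_\infty$.

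There is no real obstacle: the only mild subtlety is recognising that the measures $\diff N(t)$ and $S(f_0, t) \lambda_0(t) \diff t$ have total mass at most one, so the logarithmic Lipschitz estimate translates directly into the desired supremum-norm Lipschitz property. Everything else is a one-line consequence of $|e^a - e^b| \leq e^{a \lor b} |a - b|$ or, equivalently here, of the identity $\log S(f_1, t) - \log S(f_2, t) = \log\{S(f_1, t)/S(f_2, t)\}$ together with the two-sided exponential bound above.
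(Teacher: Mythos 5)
Your proof is correct, and it reaches the same constant $2\|f_1-f_2\|_\infty$ by a slightly different and more elementary device than the paper. The paper applies Taylor's theorem (mean value form) to $\ell_n$ and $\ell_\star$ along the segment between $f_1$ and $f_2$, then bounds the Gateaux derivative in the direction $f_1-f_2$ via Lemmas~\ref{lem:derivatives_ln}--\ref{lem:derivatives_S}, using the ratio bound $|D S_n(\tilde f,T_i)(f_1-f_2)|/S_n(\tilde f,T_i)\leq\|f_1-f_2\|_\infty$ and $\int_0^1 S(f_0,t)\lambda_0(t)\,\mathrm{d}t\leq\E\{N(1)\}\leq 1$. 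You instead bound the logarithmic contribution directly from the two-sided inequality $e^{-\|f_1-f_2\|_\infty}S_n(f_2,t)\leq S_n(f_1,t)\leq e^{\|f_1-f_2\|_\infty}S_n(f_2,t)$ (and its analogue for $S$), so that $|\log S_n(f_1,t)-\log S_n(f_2,t)|\leq\|f_1-f_2\|_\infty$ pointwise, and then integrate against the measures $\mathrm{d}N(t)$ and $S(f_0,t)\lambda_0(t)\,\mathrm{d}t$, both of total mass at most one; the linear terms are handled identically in both arguments. Your route avoids any appeal to differentiability of $\ell_n$ or $\ell_\star$ and to the mean value theorem (note the positivity of $S(f_2,t)\geq q_1 e^{-\|f_2\|_\infty}$ and of $S_n(f_2,T_i)$ at the jump points, which justifies taking logarithms, is immediate here), while the paper's MVT formulation is uniform with the style of its surrounding derivative-based estimates. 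Both are complete proofs; nothing is missing from yours.
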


\begin{proof}[Proof of Lemma~\ref{lem:lipschitz_ln_lstar}]
  By Taylor's theorem and Lemmas~\ref{lem:derivatives_ln}
  and~\ref{lem:derivatives_Sn}, there exists
  $\tilde f$ on the line segment
  between $f_1$ and $f_2$ such that
  \begin{align*}
    \bigl|
    \ell_n(f_1) - \ell_n(f_2)
    \bigr|
    &=
    \bigl|
    D \ell_n(\tilde f)(f_1 - f_2)
    \bigr| \\
    &=
    \biggl|
    \frac{1}{n}
    \sum_{i=1}^n
    \frac{D S_n(\tilde f, T_i)(f_1 - f_2)}{S_n(\tilde f, T_i)} N_i(1)
    - \frac{1}{n} \sum_{i=1}^n \bigl(f_1(X_i) - f_2(X_i)\bigr) N_i(1)
    \biggr| \\
    &\leq
    \frac{1}{n}
    \sum_{i=1}^n
    \frac{|D S_n(\tilde f, T_i)(f_1 - f_2)|}{S_n(\tilde f, T_i)}
    + \|f_1 - f_2\|_\infty \\
    &\leq
    \frac{1}{n}
    \sum_{i=1}^n
    \frac{\bigl|\sum_{j=1}^n R_j(T_i)
    e^{\tilde f(X_j)} \bigl(f_1(X_j) - f_2(X_j)\bigr)\bigr|}
    {\sum_{j=1}^n R_j(T_i) e^{\tilde f(X_j)}}
    + \|f_1 - f_2\|_\infty \\
    &\leq
    2 \|f_1 - f_2\|_\infty.
  \end{align*}
  Moreover, again by Taylor's theorem and applying
  Lemmas~\ref{lem:derivatives_lstar}
  and~\ref{lem:derivatives_S}, there exists $\check f$ on the
  line segment between $f_1$ and $f_2$ such that
  \begin{align*}
    \bigl|
    \ell_\star(f_1) - \ell_\star(f_2)
    \bigr|
    &=
    \bigl|
    D \ell_\star(\check f)(f_1 - f_2)
    \bigr| \\
    &=
    \biggl|
    \int_{0}^{1}
    D S(\check f, t)(f_1 - f_2)
    \frac{S(f_0, t)}{S(\check f, t)}
    \lambda_0(t) \diffi t
    - \int_{0}^{1}
    D S(f_0, t)(f_1 - f_2)
    \lambda_0(t)
    \diffi t
    \biggr| \\
    &\leq
    \biggl|
    \int_{0}^{1}
    \frac{\int_\cX q(x, t) e^{\check f(x)}
    \bigl(f_1(x) - f_2(x)\bigr) \diffi P_X(x)}
    {\int_\cX q(x, t) e^{\check f(x)} \diffi P_X(x)}
    S(f_0, t)
    \lambda_0(t) \diffi t
    \biggr| \\
    &\qquad \qquad+
    \biggl|
    \int_{0}^{1}
    \int_\cX q(x, t) e^{f_0(x)} \bigl(f_1(x) - f_2(x)\bigr) \diffi P_X(x)
    \lambda_0(t)
    \diffi t
    \biggr| \\
    &\leq
    2 \|f_1 - f_2\|_\infty
    \int_{0}^{1}
    S(f_0, t)
    \lambda_0(t) \diffi t
    \leq
    2 \|f_1 - f_2\|_\infty
    \E\bigl(N(1)\bigr) \leq
    2 \|f_1 - f_2\|_\infty,
  \end{align*}
  as required.
\end{proof}

\begin{lemma}[Convergence of $\ell_n$]%
  \label{lem:local_convergence_ln}
  Let $f \in \cB(\cX)$.
  There exists a universal constant $C > 0$
  such that for all $s \geq 1$, with
  probability at least $1 - e^{-s^2}$,
  \begin{align*}
    \bigl|
    \ell_n(f) - \ell_\star(f)
    - \ell_n(f_0) + \ell_\star(f_0)
    \bigr|
    &\leq
    \frac{C e^{4 \|f_0\|_\infty} e^{6 \|f - f_0\|_\infty}}{q_1^2}
    \biggl(
      \frac{s \|f - f_0\|_{L_2}}{\sqrt n}
      + \frac{s^2 \|f - f_0\|_\infty}{n}
    \biggr).
  \end{align*}
\end{lemma}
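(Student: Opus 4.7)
The plan is to reduce the statement to the uniform-in-path bound already established in Lemma~\ref{lem:local_convergence_Dln}, via the fundamental theorem of calculus along an interpolating path. Specifically, for $a \in [0, 1]$ define $\tilde f_a \vcentcolon= a f + (1 - a) f_0$, so that $\tilde f_0 = f_0$, $\tilde f_1 = f$, and the path lies within $\cB(\cX)$ with $\|\tilde f_a - f_0\|_\infty \leq \|f - f_0\|_\infty$ for every $a$. Using the closed-form expressions for the Gateaux derivatives from Lemmas~\ref{lem:derivatives_ln} and~\ref{lem:derivatives_lstar} together with the dominated convergence theorem, the real-valued maps $a \mapsto \ell_n(\tilde f_a)$ and $a \mapsto \ell_\star(\tilde f_a)$ are continuously differentiable on $[0, 1]$, with derivatives $D \ell_n(\tilde f_a)(f - f_0)$ and $D \ell_\star(\tilde f_a)(f - f_0)$, respectively.

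The fundamental theorem of calculus then yields
\[
  \ell_n(f) - \ell_\star(f) - \ell_n(f_0) + \ell_\star(f_0)
  = \int_0^1 \bigl\{D \ell_n(\tilde f_a)(f - f_0) - D \ell_\star(\tilde f_a)(f - f_0)\bigr\} \diff a,
\]
and bounding the integrand by its supremum over $a \in [0,1]$ gives
\[
  |\ell_n(f) - \ell_\star(f) - \ell_n(f_0) + \ell_\star(f_0)|
  \leq \sup_{a \in [0, 1]} \bigl|D \ell_n(\tilde f_a)(f - f_0) - D \ell_\star(\tilde f_a)(f - f_0)\bigr|.
\]

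The final step is to invoke Lemma~\ref{lem:local_convergence_Dln} with its $f_1, f_2$ set to $f, f_0$ (so that the path there agrees with our $\tilde f_a$) and its test function set to $f - f_0$. With these choices, $\|f_1 - f_0\|_\infty \lor \|f_2 - f_0\|_\infty = \|f - f_0\|_\infty$, and the $L_2$- and supremum-norm terms in the bound become $\|f - f_0\|_{L_2}$ and $\|f - f_0\|_\infty$ respectively, which matches the target inequality. There is no substantive obstacle: the only delicate point is the justification for differentiating $\ell_n(\tilde f_a)$ and $\ell_\star(\tilde f_a)$ with respect to $a$, which is routine from the explicit formulas for the first-order Gateaux derivatives already recorded. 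All of the probabilistic work is absorbed in the application of Lemma~\ref{lem:local_convergence_Dln}, so the high-probability event on which the bound holds is inherited directly from that result.
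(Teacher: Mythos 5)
Your proposal is correct and takes essentially the same approach as the paper: both reduce the claim to the supremum bound $\sup_{a \in [0,1]}\bigl|D\ell_n(\tilde f_a)(f-f_0) - D\ell_\star(\tilde f_a)(f-f_0)\bigr|$ and then invoke Lemma~\ref{lem:local_convergence_Dln} with the same substitution. The only difference is cosmetic: you pass to the supremum via the fundamental theorem of calculus (integrating the pathwise derivative and bounding the integrand), whereas the paper invokes the mean value theorem to produce a single (random) $\tilde a$ and then bounds by the supremum; each route requires the same mild regularity of $a \mapsto \ell_n(\tilde f_a)$ and $a \mapsto \ell_\star(\tilde f_a)$ already supplied by Lemmas~\ref{lem:derivatives_ln} and~\ref{lem:derivatives_lstar}.
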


\begin{proof}[Proof of Lemma~\ref{lem:local_convergence_ln}]
  By the mean value theorem applied to $\ell_n - \ell_\star$,
  there exists $\tilde a$ taking values in $[0, 1]$
  such that, defining $\tilde f_a \vcentcolon= a f_0 + (1 - a) f$,
  \begin{align*}
    \bigl|
    \ell_n(f) - \ell_\star(f)
    - \ell_n(f_0) + \ell_\star(f_0)
    \bigr|
    &=
    \bigl|
    D \ell_n(\tilde f_{\tilde a})(f - f_0)
    - D \ell_\star(\tilde f_{\tilde a})(f - f_0)
    \bigr| \\
    &\leq
    \sup_{a \in [0, 1]}
    \bigl|
    D \ell_n(\tilde f_{a})(f - f_0)
    - D \ell_\star(\tilde f_{a})(f - f_0)
    \bigr|.
  \end{align*}
  Hence, by Lemma~\ref{lem:local_convergence_Dln},
  there exists a universal constant $C > 0$
  such that for all $s \geq 1$,
  with probability at least $1 - e^{-s^2}$,
  \begin{align*}
    \bigl|
    \ell_n(f) - \ell_\star(f)
    - \ell_n(f_0) + \ell_\star(f_0)
    \bigr|
    &\leq
    \frac{C e^{4 \|f_0\|_\infty} e^{6 \|f - f_0\|_\infty}}{q_1^2}
    \biggl(
      \frac{s \|f - f_0\|_{L_2}}{\sqrt n}
      + \frac{s^2 \|f - f_0\|_\infty}{n}
    \biggr).
    \quad
    \qedhere
  \end{align*}
\end{proof}

\subsubsection{Proof of Theorem~\ref{thm:parameter_tuning}}

We use the result on local convergence of $\ell_n$ from
Lemma~\ref{lem:local_convergence_ln} to derive a general-purpose
cross-validation bound in Lemma~\ref{lem:general_cv}.
Next, Lemma~\ref{lem:lipschitz_fhat} shows that
$\gamma \mapsto \hat f_{n,\gamma}$ is Lipschitz with
respect to the supremum norm.
We then use these results to prove Theorem~\ref{thm:parameter_tuning}.

\begin{lemma}[General cross-validation]%
  \label{lem:general_cv}
  Let $\cA$ be a non-empty finite set and suppose that for each
  $a \in \cA$, $f_a \in \cB(\cX)$
  with $P_X(f_a) = 0$ and $\|f_a - f_0\|_\infty \leq M$.
  Define $\hat a \vcentcolon= \argmin_{a \in \cA} \ell_n(f_a)$.
  Then there exists a universal constant $C > 0$ such that
  for all $s \geq 1$, with probability at least $1 - e^{-s^2}$,
  \begin{align*}
    \|f_{\hat a} - f_0\|_{L_2}
    &\leq
    \frac{C e^{5 \|f_0\|_\infty} e^{8 M} ( 1 + 1/\Lambda)}{q_1^3}
    \biggl(
      \min_{a \in \cA}
      \|f_{a} - f_0\|_{L_2}
      + \frac{s + \sqrt {\log |\cA|}}{\sqrt n}
    \biggr).
    \qedhere
  \end{align*}
\end{lemma}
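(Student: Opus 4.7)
The plan is to combine a union-bound version of Lemma~\ref{lem:local_convergence_ln} over $\cA$ with the restricted strong convexity of $\ell_\star$ from Lemma~\ref{lem:strong_convexity}, in the style of a standard oracle inequality for $M$-estimators. Throughout, fix $a^\star \in \argmin_{a \in \cA} \|f_a - f_0\|_{L_2}$, and write $L \vcentcolon= \|f_{\hat a} - f_0\|_{L_2}$, $L^\star \vcentcolon= \|f_{a^\star} - f_0\|_{L_2}$ and $\eta \vcentcolon= (s + \sqrt{\log |\cA|})/\sqrt n$.

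First, I would apply Lemma~\ref{lem:local_convergence_ln} to each $f_a$ with $s$ replaced by $s + \sqrt{\log |\cA|}$, and take a union bound over $a \in \cA$. Since $(s + \sqrt{\log |\cA|})^2 \geq s^2 + \log |\cA|$, this yields an event $\Omega$ of probability at least $1 - e^{-s^2}$ on which, for every $a \in \cA$,
\[
\Delta_a \vcentcolon= \bigl| \ell_n(f_a) - \ell_\star(f_a) - \ell_n(f_0) + \ell_\star(f_0) \bigr|
\leq C_1 \bigl( \eta \|f_a - f_0\|_{L_2} + \eta^2 \bigr),
\]
where $C_1$ depends only on $\|f_0\|_\infty$, $M$ and $q_1$, taking the form $C e^{4\|f_0\|_\infty} e^{cM}/q_1^2$ for some constant $c$ after using $\|f_a - f_0\|_\infty \leq M$ to collapse the second bracketed term in Lemma~\ref{lem:local_convergence_ln}.

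Next, I would invoke $D\ell_\star(f_0) = 0$ (Lemma~\ref{lem:characterisation_f0}) via a second-order Taylor expansion of $\ell_\star$ around $f_0$. Since $P_X(f_a - f_0) = 0$ by hypothesis and $\|f_a - f_0\|_\infty \leq M$, Lemma~\ref{lem:strong_convexity} provides two-sided bounds
\[
c_1 \Lambda q_1 \|f_a - f_0\|_{L_2}^2 \leq \ell_\star(f_a) - \ell_\star(f_0) \leq c_2 \Lambda \|f_a - f_0\|_{L_2}^2
\]
for every $a \in \cA$, where $c_1, c_2 > 0$ depend only on $\|f_0\|_\infty$ and $M$. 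Combined with the defining inequality $\ell_n(f_{\hat a}) \leq \ell_n(f_{a^\star})$, which rearranges via $\Delta_{\hat a}$ and $\Delta_{a^\star}$ to the oracle form $\ell_\star(f_{\hat a}) - \ell_\star(f_0) \leq \ell_\star(f_{a^\star}) - \ell_\star(f_0) + \Delta_{\hat a} + \Delta_{a^\star}$, this yields on $\Omega$ the quadratic inequality
\[
c_1 \Lambda q_1 L^2 \leq c_2 \Lambda L^{\star 2} + C_1 \eta (L + L^\star) + 2 C_1 \eta^2.
\]

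Finally, I would absorb the cross term via $C_1 \eta L \leq \tfrac{1}{2} c_1 \Lambda q_1 L^2 + C_1^2 \eta^2/(2 c_1 \Lambda q_1)$, divide through by $c_1 \Lambda q_1/2$, and take square roots to obtain an estimate of the form $L \leq A L^\star + B \eta$. Using $q_1 \leq 1$ to absorb $1/\sqrt{q_1}$ into $1/q_1^3$ and the elementary estimate $\sqrt{1/\Lambda} \leq (1 + 1/\Lambda)/2$ to handle the $\sqrt{1/\Lambda}$ factors arising in both $A$ and $B$, both coefficients can be bounded by the target $C e^{5\|f_0\|_\infty} e^{8M}(1 + 1/\Lambda)/q_1^3$ for a suitable universal constant $C$. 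The main obstacle is the bookkeeping: one must verify that the combined exponential prefactors from Lemma~\ref{lem:local_convergence_ln} ($e^{4\|f_0\|_\infty} e^{O(M)}$) and Lemma~\ref{lem:strong_convexity} ($e^{\|f_0\|_\infty + 2M}$) produce the stated exponents $e^{5\|f_0\|_\infty} e^{8M}$, and that the $1/\Lambda^2$ factors that naturally appear in $B^2$ contribute at most a single power of $1/\Lambda$ to $B$ once the square root is taken.
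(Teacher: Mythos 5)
Your proposal is correct and takes essentially the same route as the paper: both decompose the defining inequality $\ell_n(f_{\hat a}) \leq \ell_n(f_{a^\star})$ into two empirical-process terms controlled via Lemma~\ref{lem:local_convergence_ln} with a union bound over $\mathcal{A}$, and a deterministic term controlled by Taylor's theorem together with the restricted strong convexity in Lemma~\ref{lem:strong_convexity}, arriving at the same quadratic inequality. The paper solves it via the elementary fact that $x^2 \leq ax + c$ implies $x \leq a + \sqrt{c}$ rather than your AM--GM absorption, but this is a trivial difference; the constant-tracking you describe is exactly the bookkeeping carried out in the paper.
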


\begin{proof}[Proof of Lemma~\ref{lem:general_cv}]
  Choose $a^\star \in \argmin_{a \in \cA} \|f_a - f_0\|_{L_2}$.
  By definition of $\hat a$, we have
  \begin{align}
    \nonumber
    0
    &\geq
    \ell_n(f_{\hat a}) - \ell_n(f_{a^\star}) \\
    \nonumber
    &=
    \bigl( \ell_n(f_{\hat a}) - \ell_\star(f_{\hat a})
    - \ell_n(f_0) + \ell_\star(f_0) \bigr)
    - \bigl( \ell_n(f_{a^\star}) - \ell_\star(f_{a^\star})
    - \ell_n(f_0) + \ell_\star(f_0) \bigr) \\
    \label{eq:cv_convex}
    &\quad+
    \bigl( \ell_\star(f_{\hat a}) - \ell_\star(f_0) \bigr)
    - \bigl( \ell_\star(f_{a^\star}) - \ell_\star(f_0) \bigr).
  \end{align}
  By Lemma~\ref{lem:local_convergence_ln} together with a union bound over
  $a \in \cA$, there exists a universal constant $C_1 > 0$
  such that for all $s \geq 1$, with probability at least
  $1 - e^{-s^2}$,
  \begin{align*}
    \bigl| \ell_n(f_{\hat a}) - \ell_\star(f_{\hat a})
    &- \ell_n(f_0) + \ell_\star(f_0) \bigr|
    + \bigl| \ell_n(f_{a^\star}) - \ell_\star(f_{a^\star})
    - \ell_n(f_0) + \ell_\star(f_0) \bigr| \\
    &\leq
    \frac{C_1 e^{4 \|f_0\|_\infty} e^{6 M}}{q_1^2}
    \biggl(
      \frac{s + \sqrt {\log |\cA|}}{\sqrt n}
      \|f_{\hat a} - f_0\|_{L_2}
      + \frac{\bigl(s^2 + \log |\cA|\bigr) M}{n}
    \biggr).
  \end{align*}
  For \eqref{eq:cv_convex}, we apply Taylor's theorem and
  Lemma~\ref{lem:strong_convexity} twice to obtain
  \begin{align*}
    \bigl( \ell_\star(f_{\hat a}) - \ell_\star(f_0) \bigr)
    &- \bigl( \ell_\star(f_{a^\star}) - \ell_\star(f_0) \bigr) \\
    &\geq
    e^{-\|f_0\|_\infty} e^{-2 M}
    \Lambda q_1
    \|f_{\hat a} - f_0\|_{L_2}^2
    - e^{\|f_0\|_\infty} e^{2 M}
    \Lambda
    \|f_{a^\star} - f_0\|_{L_2}^2.
  \end{align*}
  Therefore, simplifying, we have that with
  probability at least $1 - e^{-s^2}$,
  \begin{align*}
    \|f_{\hat a} - f_0\|_{L_2}^2
    &\leq
    \frac{C_1 e^{5 \|f_0\|_\infty} e^{8 M}}{q_1^3 \Lambda}
    \frac{s + \sqrt {\log |\cA|}}{\sqrt n}
    \|f_{\hat a} - f_0\|_{L_2} \\
    &\qquad+
    \frac{e^{2\|f_0\|_\infty} e^{4 M}}{q_1}
    \|f_{a^\star} - f_0\|_{L_2}^2
    + \frac{C_1 e^{5 \|f_0\|_\infty} e^{8 M}}{q_1^3 \Lambda}
    \frac{\bigl(s^2 + \log |\cA|\bigr) M}{n}.
  \end{align*}
  We now again use the fact that if $x^2 \leq a x + c$ for some $x,
  a, c \geq 0$, then $x \leq a + \sqrt{c}$. Thus, there exists a
  universal constant $C_2 > 0$ such that
  for all $s \geq 1$, with probability at least $1 - e^{-s^2}$,
  \begin{align*}
    \|f_{\hat a} - f_0\|_{L_2}
    &\leq
    \frac{C_2 e^{5 \|f_0\|_\infty} e^{8 M} ( 1 + 1/\Lambda)}{q_1^3}
    \biggl(
      \|f_{a^\star} - f_0\|_{L_2}
      + \frac{s + \sqrt {\log |\cA|}}{\sqrt n}
    \biggr).
    \qedhere
  \end{align*}
\end{proof}

\begin{lemma}[Lipschitz property of $\hat f_{n,\gamma}$]%
  \label{lem:lipschitz_fhat}
  Let $\gamma' \geq \gamma > 0$. Then
  \begin{align*}
    \bigl\|\hat f_{n,\gamma'} - \hat f_{n,\gamma}\bigr\|_{\infty}
    &\leq
    \frac{2 \sqrt K (\gamma' - \gamma) \sqrt{\log n}}{\gamma^{3/2}}.
  \end{align*}
\end{lemma}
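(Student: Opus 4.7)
The plan is to combine a short strong-convexity argument, which gives a linear-in-$(\gamma'-\gamma)$ contraction for the $\cH$-norm, with a uniform $\sqrt{(\log n)/\gamma}$ bound on $\|\hat f_{n,\gamma}\|_\cH$. Write $\hat f \vcentcolon= \hat f_{n,\gamma}$ and $\hat f' \vcentcolon= \hat f_{n,\gamma'}$. By Proposition~\ref{prop:representation}, both satisfy $P_n(\hat f) = P_n(\hat f') = 0$, so on the subspace $\{f \in \cH : P_n(f) = 0\}$ the penalty in~\eqref{eq:likelihood_penalised} collapses to $\gamma \|f\|_\cH^2$ and the first-order conditions read
\[
  D\ell_n(\hat f)(g) + 2\gamma \langle \hat f, g\rangle_\cH = 0, \qquad
  D\ell_n(\hat f')(g) + 2\gamma' \langle \hat f', g\rangle_\cH = 0
\]
for every $g \in \cH$ with $P_n(g) = 0$.

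Subtracting these two identities, setting $g \vcentcolon= \hat f' - \hat f$ (which still satisfies $P_n(g) = 0$), discarding the nonnegative term $[D\ell_n(\hat f') - D\ell_n(\hat f)](\hat f' - \hat f)$ obtained from convexity of $\ell_n$ (Lemma~\ref{lem:derivatives_ln}), and applying the Cauchy--Schwarz inequality, a short rearrangement yields
\[
  \|\hat f' - \hat f\|_\cH \leq \frac{(\gamma' - \gamma)\|\hat f'\|_\cH}{\gamma}.
\]
To bound $\|\hat f'\|_\cH$, I would compare $\hat f'$ with $0 \in \cH$: since $\ell_{n,\gamma'}(\hat f') \leq \ell_{n,\gamma'}(0) = \ell_n(0)$, we have $\gamma' \|\hat f'\|_\cH^2 \leq \ell_n(0) - \ell_n(\hat f')$. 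The pointwise inequality $S_n(f, T_i) \geq e^{f(X_i)}/n$ (valid since $R_i(T_i) = 1$) gives $\log S_n(f, T_i) - f(X_i) \geq -\log n$, and hence $\ell_n(f) \geq -N(1)\log n \geq -\log n$ uniformly in $f \in \cH$; combined with the trivial bound $\ell_n(0) \leq 0$ (from $S_n(0, t) \leq 1$), this produces $\|\hat f'\|_\cH \leq \sqrt{(\log n)/\gamma'} \leq \sqrt{(\log n)/\gamma}$ because $\gamma' \geq \gamma$.

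Combining these two inequalities gives $\|\hat f' - \hat f\|_\cH \leq (\gamma' - \gamma)\sqrt{\log n}/\gamma^{3/2}$, and Lemma~\ref{lem:infty_bounds} converts this into the supremum-norm bound through $\|\cdot\|_\infty \leq \sqrt K \|\cdot\|_\cH$, yielding the claim with the factor of $2$ in the statement providing slack. The only non-routine step is the uniform lower bound $\ell_n \geq -\log n$: this is precisely what produces the $\sqrt{\log n}$ factor rather than a stronger blow-up in $1/\gamma$ as $\gamma$ shrinks, and everything else is standard convex-analytic bookkeeping.
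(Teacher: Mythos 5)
Your proof is correct and reaches the same conclusion (in fact with a slightly sharper constant, since you obtain $\|\hat f' - \hat f\|_\cH \le (\gamma'-\gamma)\|\hat f'\|_\cH/\gamma$ rather than the paper's factor of $2$). The route is genuinely a little different: the paper compares $\ell_{n,\gamma'}(\hat f_{n,\gamma'}) \le \ell_{n,\gamma'}(\hat f_{n,\gamma})$, performs a second-order Taylor expansion of $\ell_{n,\gamma}$ around $\hat f_{n,\gamma}$ to exploit $D^2\ell_{n,\gamma} \succeq 2\gamma\|\cdot\|_\cH^2$ on the subspace $\{P_n(f)=0\}$, and separately derives the monotonicity $\|\hat f_{n,\gamma'}\|_\cH \le \|\hat f_{n,\gamma}\|_\cH$ from the two minimality inequalities before feeding everything through Cauchy--Schwarz. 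You instead subtract the two first-order stationarity conditions, discard the nonnegative monotone-gradient term coming from convexity of $\ell_n$, and rearrange; this bypasses the need to establish monotonicity of $\gamma \mapsto \|\hat f_{n,\gamma}\|_\cH$ and drops the extra factor of $2$. Both proofs then bound the remaining $\cH$-norm by the identical compare-with-zero argument $\gamma\|\hat f_{n,\gamma}\|_\cH^2 \le \ell_n(0) - \ell_n(\hat f_{n,\gamma}) \le \log n$, and finish with $\|\cdot\|_\infty \le \sqrt K\|\cdot\|_\cH$. The monotone-operator viewpoint you use buys a marginally tighter bound at no extra cost; the paper's Taylor-expansion version is perhaps more uniform with the style of the surrounding lemmas. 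One small point worth making explicit in a final write-up: the validity of the stationarity identity $D\ell_n(\hat f)(g) + 2\gamma\langle\hat f, g\rangle_\cH = 0$ for $P_n(g)=0$ relies on $P_n(\hat f_{n,\gamma}) = 0$, which you correctly cite from Proposition~\ref{prop:representation}, and on $\hat f_{n,\gamma}$ being an unconstrained minimiser over $\cH$, so the directional derivative vanishes in every direction (and in particular in directions with $P_n(g)=0$); you have this right but it is worth stating.
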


\begin{proof}[Proof of Lemma~\ref{lem:lipschitz_fhat}]
  Since the result is clear if $\gamma' = \gamma$ or if $\hat
  f_{n,\gamma'} = \hat f_{n,\gamma}$, we may assume that $\gamma' >
  \gamma$ and $\|\hat f_{n,\gamma'} - \hat f_{n,\gamma}\|_{\cH} > 0$.
  By definition of $\hat f_{n,\gamma'}$ and Taylor's theorem,
  there exists $\tilde f$ on the line segment between
  $\hat f_{n,\gamma}$ and $\hat f_{n,\gamma'}$ such that
  \begin{align*}
    0
    &\geq
    \ell_{n,\gamma'}(\hat f_{n,\gamma'})
    - \ell_{n,\gamma'}(\hat f_{n,\gamma}) \\
    &=
    \ell_{n,\gamma}(\hat f_{n,\gamma'})
    - \ell_{n,\gamma}(\hat f_{n,\gamma})
    + (\gamma' - \gamma) \bigl(
      \|\hat f_{n,\gamma'}\|_\cH^2
      - \|\hat f_{n,\gamma}\|_\cH^2
    \bigr) \\
    &=
    D \ell_{n,\gamma}(\hat f_{n,\gamma})
    (\hat f_{n,\gamma'} - \hat f_{n,\gamma})
    + \frac{1}{2} D^2 \ell_{n,\gamma}(\tilde f)
    (\hat f_{n,\gamma'} - \hat f_{n,\gamma})^{\otimes 2}
    + (\gamma' - \gamma) \bigl(
      \|\hat f_{n,\gamma'}\|_\cH^2
      - \|\hat f_{n,\gamma}\|_\cH^2
    \bigr).
  \end{align*}
  Now $D \ell_{n,\gamma}(\hat f_{n,\gamma}) = 0$
  and $D^2 \ell_{n,\gamma}(\tilde f)(f_1,f_1) \geq 2 \gamma \|f_1\|_\cH^2$
  for $f_1 \in \cH$. Further,
  $\ell_{n,\gamma}(\hat f_{n,\gamma}) \leq \ell_{n,\gamma}(\hat f_{n,\gamma'})$
  and $\ell_{n,\gamma'}(\hat f_{n,\gamma'}) \leq
  \ell_{n,\gamma'}(\hat f_{n,\gamma})$,
  so
  $(\gamma' - \gamma) \|\hat f_{n,\gamma'}\|_\cH^2
  \leq (\gamma' - \gamma) \|\hat f_{n,\gamma}\|_\cH^2$ and hence
  $\|\hat f_{n,\gamma'}\|_\cH \leq \|\hat f_{n,\gamma}\|_\cH$.
  Therefore,
  \begin{align*}
    \gamma
    \|\hat f_{n,\gamma'} - \hat f_{n,\gamma}\|_\cH^2
    &\leq
    (\gamma' - \gamma) \bigl(
      \|\hat f_{n,\gamma}\|_\cH^2
      - \|\hat f_{n,\gamma'}\|_\cH^2
    \bigr) \\
    &\leq
    (\gamma' - \gamma)
    \|\hat f_{n,\gamma'} - \hat f_{n,\gamma}\|_\cH
    \bigl(
      \|\hat f_{n,\gamma}\|_\cH
      + \|\hat f_{n,\gamma'}\|_\cH
    \bigr) \\
    &\leq
    2 (\gamma' - \gamma)
    \|\hat f_{n,\gamma'} - \hat f_{n,\gamma}\|_\cH
    \|\hat f_{n,\gamma}\|_\cH,
  \end{align*}
  so
  \begin{align*}
    \|\hat f_{n,\gamma'} - \hat f_{n,\gamma}\|_\cH
    &\leq
    2 \frac{\gamma' - \gamma}{\gamma}
    \|\hat f_{n,\gamma}\|_\cH.
  \end{align*}
  Next, $S_n(f, T_i) \geq e^{f(X_i)}/n$ for $i \in [n]$
  and $f \in \cH$, so
  \begin{align*}
    \ell_n(f)
    &=
    \frac{1}{n} \sum_{i=1}^n \log\bigl(S_n(f, T_i)\bigr) N_i(1)
    - \frac{1}{n} \sum_{i=1}^n f(X_i) N_i(1)
    \geq - \log n.
  \end{align*}
  Similarly, $S_n(0_\cX, T_i) \leq 1$ so $\ell_n(0_\cX) \leq 0$.
  By definition of $\hat f_{n,\gamma}$, we have
  \[
    -\log n + \gamma \|\hat f_{n,\gamma}\|_\cH^2 \leq \ell_{n}(\hat
    f_{n,\gamma}) + \gamma \|\hat f_{n,\gamma}\|_\cH^2 =
    \ell_{n,\gamma}(\hat f_{n,\gamma}) \leq \ell_{n,\gamma}(0_\cX) =
    \ell_n(0_\cX) \leq 0,
  \]
  so
  $\|\hat f_{n,\gamma}\|_\cH^2 \leq \gamma^{-1}\log n$. Thus,
  by Lemma~\ref{lem:infty_bounds},
  \begin{align*}
    \|\hat f_{n,\gamma'} - \hat f_{n,\gamma}\|_\infty
    &\leq
    \sqrt K \|\hat f_{n,\gamma'} - \hat f_{n,\gamma}\|_\cH
    \leq
    \frac{2 \sqrt K (\gamma' - \gamma) \sqrt{\log n}}{\gamma^{3/2}}.
    \qedhere
  \end{align*}
\end{proof}

\begin{proof}[Proof of Theorem~\ref{thm:parameter_tuning}]
  We write $C^0_1, C^0_2, \ldots$ for positive quantities depending only on
  $\|f_0\|_\cH$, $\Lambda$, $q_1$,
  $\|1_\cX\|_\cH$, $K$ and $\xi$.
  Let $\Gamma'$ be an $n^{-3\xi}$-cover
  of $(\Gamma, |\cdot|)$ of cardinality at most $|\Gamma| \land n^{4\xi}$.
  For each $\gamma \in \Gamma$, by Lemma~\ref{lem:lipschitz_fhat} and
  since $\gamma^- \geq n^{-\xi}$,
  there exists $\gamma' \in \Gamma'$ with
  \begin{align}
    \label{eq:parameter_tuning_lipschitz}
    \bigl\|\hat f_{n,\gamma'} - \hat f_{n,\gamma}\bigr\|_\infty
    &\leq
    \frac{2 \sqrt K n^{-3\xi} \sqrt{\log n}}{n^{-3\xi/2}}
    \leq
    \frac{2 \sqrt K}{n}.
  \end{align}
  Recall that $\gamma \mapsto H_\gamma$ is decreasing on $(0,\infty)$
  and $\gamma \mapsto \gamma H_\gamma$ is increasing on $(0,\infty)$.
  Therefore, by Lemma~\ref{lem:infty_bounds}, Theorem~\ref{thm:rate},
  and a union bound over $\gamma \in \Gamma'$, there exists $C^0_1 > 0$
  such that if $\gamma^+ H_{\gamma^+} \leq 1 / C^0_1$
  and $1 \leq s \leq \sqrt{n} / (C^0_1 H_{\gamma^-})$,
  then with probability at least $1 - e^{-s^2}$,
  \begin{align*}
    \max_{\gamma \in \Gamma'}
    \bigl\| \hat f_{n,\gamma} - f_0 \bigr\|_{\infty}
    \leq \max_{\gamma \in \Gamma'}
    \Bigl\{
      \sqrt{H_\gamma} \bigl\| \hat f_{n,\gamma} - f_0 \bigr\|_{\cH_\gamma}
    \Bigr\}
    &\leq C^0_1
    \max_{\gamma \in \Gamma'}\biggl\{
      \bigl(s + \sqrt{\log |\Gamma'|}\bigr) \frac{H_\gamma}{\sqrt n}
      + \sqrt{\gamma H_\gamma}
    \biggr\}\\
    &\leq C^0_1
    \biggl\{
      \bigl(s + \sqrt{\log |\Gamma'|}\bigr) \frac{H_{\gamma^-}}{\sqrt n}
      + \sqrt{\gamma^+ H_{\gamma^+}}
    \biggr\}.
  \end{align*}
  We deduce that there exists $C^0_2 > 0$ such that if
  $\gamma^+ H_{\gamma^+} \leq 1 / C^0_2$,
  $H_{\gamma^-} \sqrt{\log |\Gamma'|} / \sqrt n \leq 1/C^0_2$ and
  $1 \leq s \leq \sqrt{n} / (C^0_2 H_{\gamma^-})$,
  then with probability at least $1 - e^{-s^2}$,
  \begin{align}
    \label{eq:parameter_tuning_infty}
    \max_{\gamma \in \Gamma'}
    \bigl\| \hat f_{n,\gamma} - f_0 \bigr\|_{\cH_\gamma} \leq
    \frac{1}{\sqrt{H_{\gamma^-}}} \quad \text{and} \quad
    \max_{\gamma \in \Gamma'}
    \bigl\| \hat f_{n,\gamma} - f_0 \bigr\|_{\infty}
    \leq 1.
  \end{align}
  Moreover, for every $\gamma \in \Gamma'$, we have
  \begin{align*}
    \bigl\|P_X(\hat f_{n,\gamma})1_{\mathcal{X}}\bigr\|_{\infty}
    &= \bigl|P_X(\hat f_{n,\gamma})\bigr|
    = \bigl|P_X(\hat f_{n,\gamma} - f_0)\bigr|
    \leq \bigl\|\hat{f}_{n,\gamma} - f_0\bigr\|_\infty.
  \end{align*}
  Applying Lemma~\ref{lem:general_cv} to $\tilde \ell_n$ and
  $\bigl\{\hat f_{n,\gamma} - P_X(\hat f_{n,\gamma}) 1_\cX: \gamma \in
  \Gamma'\bigr\}$
  conditionally on the training data, on the event where
  \eqref{eq:parameter_tuning_infty} holds, there exists
  $C^0_3 > 0$ such that with probability at least $1 - e^{-s^2}$,
  \begin{align}
    \label{eq:cv_centred_bound}
    \bigl\|\hat f_{n,\hat\gamma} - P_X(\hat f_{n,\hat\gamma}) 1_\cX -
    f_0\bigr\|_{L_2}
    &\leq
    C^0_3
    \biggl(
      \min_{\gamma \in \Gamma'}
      \bigl\|\hat f_{n,\gamma} - P_X(\hat f_{n,\gamma}) 1_\cX - f_0\bigr\|_{L_2}
      + \frac{s + \sqrt {\log |\Gamma'|}}{\sqrt n}
    \biggr).
  \end{align}
  By Lemmas~\ref{lem:uniform_convergence_Pn}
  and~\ref{lem:convergence_Pn},
  \eqref{eq:parameter_tuning_infty} and a union bound,
  there exists $C^0_4 > 0$ such that, for $s \geq 1$,
  with probability at least $1 - e^{-s^2}$,
  \begin{align}
    \label{eq:cv_PX_bound}
    \max_{\gamma \in \Gamma'}
    \bigl|P_X\bigl(\hat f_{n,\gamma}\bigr)\bigr|
    &\leq
    \max_{\gamma \in \Gamma'}
    \bigl|(P_n-P_X)\bigl(\hat f_{n,\gamma} - f_0\bigr)\bigr|
    + |(P_n - P_X)(f_0)|
    \leq \frac{C^0_4 s}{\sqrt n}.
  \end{align}
  Finally, by \eqref{eq:parameter_tuning_lipschitz},
  \begin{align}
    \label{eq:cv_min_bound}
    \min_{\gamma \in \Gamma'}
    \bigl\|\hat f_{n,\gamma} - f_0\bigr\|_{L_2}
    &\leq
    \min_{\gamma \in \Gamma}
    \bigl\|\hat f_{n,\gamma} - f_0\bigr\|_{L_2}
    + \frac{2 \sqrt K}{n}.
  \end{align}
  The conclusion follows from~\eqref{eq:cv_centred_bound},
  \eqref{eq:cv_PX_bound} and
  \eqref{eq:cv_min_bound}.
\end{proof}

\subsubsection{Proof of Theorem~\ref{thm:model_selection}}

In Lemma~\ref{lem:covering_simplex} we give a bound on the
$\ell_1$-covering number of the standard simplex,
which is used in the proof of Theorem~\ref{thm:model_selection}.

\begin{lemma}[Covering numbers of the standard simplex]%
  \label{lem:covering_simplex}
  Let $d \in \N$ and define the simplex
  $\Delta_d \vcentcolon=
  \bigl\{x \in [0, 1]^{d} : \|x\|_1 \leq 1\bigr\}$.
  Then for $\varepsilon > 0$, we have
  $N(\varepsilon, \Delta_d, \|\cdot\|_1) \leq (1 + 1/\varepsilon)^d$.
\end{lemma}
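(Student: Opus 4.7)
The plan is to establish the bound via a volumetric packing argument, exploiting the one-sided geometry of $\Delta_d$ rather than just its containment in the $\ell_1$-ball. First I would use the standard inequality $N(\varepsilon, \Delta_d, \|\cdot\|_1) \leq M(\varepsilon, \Delta_d, \|\cdot\|_1)$, where $M(\varepsilon, \cdot, \cdot)$ denotes the packing number, and then the usual volumetric bound: for any maximal $\varepsilon$-packing $\{y_1, \ldots, y_M\} \subseteq \Delta_d$, the open $\ell_1$-balls of radius $\varepsilon/2$ around the $y_i$ are pairwise disjoint and contained in $\Delta_d + (\varepsilon/2) B_1^d$, where $B_1^d \vcentcolon= \{v \in \R^d : \|v\|_1 \leq 1\}$. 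This gives
\[
  M(\varepsilon, \Delta_d, \|\cdot\|_1) \leq \frac{\Vol(\Delta_d + (\varepsilon/2) B_1^d)}{\Vol((\varepsilon/2) B_1^d)},
\]
with $\Vol((\varepsilon/2) B_1^d) = \varepsilon^d/d!$.

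The core geometric step would be to bound the numerator by exploiting the positivity of $\Delta_d$. Given $x \in \Delta_d$ and $y \in (\varepsilon/2) B_1^d$, the point $z = x + y$ satisfies $\sum_i \max(z_i, 0) \leq \sum_i x_i + \sum_i \max(y_i, 0) \leq 1 + \varepsilon/2$ and $\sum_i \max(-z_i, 0) \leq \sum_i \max(-y_i, 0) \leq \varepsilon/2$. Therefore $\Delta_d + (\varepsilon/2) B_1^d$ is contained in the set $\bigl\{z \in \R^d : \sum_i (z_i)_+ \leq 1 + \varepsilon/2,\ \sum_i (z_i)_- \leq \varepsilon/2\bigr\}$. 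I would then partition this set by sign pattern: for each $S \subseteq [d]$, the piece where $z_i \geq 0$ for $i \notin S$ and $z_i \leq 0$ for $i \in S$ is a Cartesian product of two non-negative simplices, of sides $1 + \varepsilon/2$ in dimension $d - |S|$ and $\varepsilon/2$ in dimension $|S|$, contributing volume $\frac{(1+\varepsilon/2)^{d-|S|} (\varepsilon/2)^{|S|}}{(d-|S|)!\,|S|!}$.

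Summing over the $\binom{d}{k}$ subsets of size $k$, rearranging and applying the binomial theorem would yield a closed-form bound on $\Vol(\Delta_d + (\varepsilon/2) B_1^d)$. Dividing by $\varepsilon^d/d!$ then gives $M \leq (1 + 1/\varepsilon)^d$, and hence the same bound for $N$. The main obstacle I expect to encounter is sharpness: a naive application of the inclusion $\Delta_d \subseteq B_1^d$ gives only $(1 + 2/\varepsilon)^d$, so careful bookkeeping of the one-sided $\varepsilon/2$-extension (rather than a two-sided $\varepsilon$-extension) is essential to halve the relevant factor.

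If the volumetric argument proves too loose for the exact constant, a fallback plan is to construct an explicit cover by lattice points $\{\varepsilon k : k \in \Z_{\geq 0}^d,\ \sum_i k_i \leq \lfloor 1/\varepsilon \rfloor\}$, whose cardinality is $\binom{\lfloor 1/\varepsilon \rfloor + d}{d}$, and invoke the elementary combinatorial bound $\binom{n+d}{d} \leq (n+1)^d$, which follows from the inclusion of this simplex-lattice into the cube $\{0, 1, \ldots, n\}^d$. The non-trivial step in that route would be verifying an $\varepsilon$-cover property via a careful rounding scheme that preserves coordinate sums, which is where the intrinsic $\ell_1$-constraint $\sum_i x_i \leq 1$ of the simplex domain is exploited.
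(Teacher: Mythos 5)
Your strategy is close in spirit to the paper's --- a volume--packing argument with a one-sided extension of the simplex --- and both your inclusion $\Delta_d + (\varepsilon/2)B_1^d \subseteq \{z : \sum_i (z_i)_+ \leq 1 + \varepsilon/2,\ \sum_i (z_i)_- \leq \varepsilon/2\}$ and the per-orthant volume computation are correct. But the final ``apply the binomial theorem'' step fails. Summing over subsets of size $k$ and dividing by $\varepsilon^d/d!$ gives
\[
  \frac{d!}{\varepsilon^d}\sum_{k=0}^d \binom{d}{k}\,\frac{(1+\varepsilon/2)^{d-k}(\varepsilon/2)^k}{(d-k)!\,k!}
  = \frac{1}{\varepsilon^d}\sum_{k=0}^d \binom{d}{k}^2 (1+\varepsilon/2)^{d-k}(\varepsilon/2)^k,
\]
and the factor $\binom{d}{k}^2$ (not $\binom{d}{k}$) blocks the binomial identity. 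Since $\binom{d}{k}^2 \geq \binom{d}{k}$, this quantity is at least $(1+\varepsilon)^d/\varepsilon^d = (1+1/\varepsilon)^d$, with \emph{strict} inequality once $d\geq 2$: at $d=2$, $\varepsilon=1$ it evaluates to $11/2$, exceeding the target $(1+1/\varepsilon)^2=4$. The orthant bookkeeping does not close the sharpness gap you anticipated.

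The fallback is broken more fundamentally: the lattice $\{\varepsilon k : k \in \mathbb{Z}_{\geq 0}^d,\ \sum_i k_i \leq \lfloor 1/\varepsilon\rfloor\}$ is simply not an $\varepsilon$-cover of $\Delta_d$ under $\|\cdot\|_1$ once $d\geq 3$, so no rounding scheme can rescue it. Take $x = (\varepsilon/2)1_d$, which lies in $\Delta_d$ whenever $d\varepsilon\leq 2$; every coordinate of $x$ is equidistant from the two nearest lattice values, so $|\varepsilon/2-\varepsilon k_i|\geq\varepsilon/2$ for every $i$ and hence $\|x-\varepsilon k\|_1\geq d\varepsilon/2>\varepsilon$ whenever $d\geq 3$. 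Your cardinality estimate $\binom{\lfloor 1/\varepsilon\rfloor+d}{d}\leq (1+1/\varepsilon)^d$ is correct, but the set it counts does not cover. For what it is worth, the paper's one-sided shift $u \mapsto u + (\varepsilon/2)1_d$ deserves the same scrutiny: the $d$ new coordinate offsets contribute an extra $d\varepsilon/2$ to the $\ell_1$-norm (and indeed $\Vol_2(\Delta_2 + \tfrac{1}{2}B_1^2) = 5/2 > 2 = (1+1)^2/2!$), so this route also appears to require a more delicate argument. The weaker bound $(1+2/\varepsilon)^d$, which the naive inclusion $\Delta_d \subseteq B_1^d$ readily yields, is all that Theorem~\ref{thm:model_selection} actually requires.
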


\begin{proof}[Proof of Lemma~\ref{lem:covering_simplex}]
  The volume ($d$-dimensional Lebesgue measure)
  of $\Delta_d$ is $\Vol_d(\Delta_d) = 1 / d!$,
  while the volume of
  $B \vcentcolon= \{x \in \R^d: \|x\|_1 \leq 1\}$
  is $\Vol_d(B) = 2^d / d!$.
  Writing $N_\pack$ for a packing number, we have
  \begin{align*}
    \frac{\varepsilon^d}{d!}
    N_\pack(\varepsilon, \Delta_d, \|\cdot\|_1)
    &\leq
    \Vol_d\bigl(\Delta_d + \varepsilon B / 2\bigr).
  \end{align*}
  If $u \in \Delta_d + \varepsilon B / 2$, then
  $u + \varepsilon 1_d/ 2 \geq 0_d$ and
  $\|u + \varepsilon 1_d/2\|_1 \leq 1 + \varepsilon$,
  so
  $\Vol_d(\Delta_d + \varepsilon B / 2)
  \leq (1 + \varepsilon)^d \Vol_d(\Delta_d)$.
  As covering numbers are bounded by their corresponding
  packing numbers,
  \begin{align*}
    N(\varepsilon, \Delta_d, \|\cdot\|_1)
    &\leq
    N_\pack(\varepsilon, \Delta_d, \|\cdot\|_1)
    \leq
    \Bigl(\frac{1}{\varepsilon} + 1\Bigr)^d.
    \qedhere
  \end{align*}
\end{proof}

\begin{proof}[Proof of Theorem~\ref{thm:model_selection}]
  We write $C^0_1, C^0_2, \ldots$ for positive quantities depending only on
  $\|f_0\|_\cH$, $\Lambda$, $q_1$, $\|1_\cX\|_\cH$,
  $K$, $\xi$
  and $M$,
  recalling from Lemma~\ref{lem:infty_bounds}
  that $\|f_0\|_\infty^2 \leq K \|f_0\|_\cH^2$.
  For $\theta, \theta' \in \Theta$, define
  $\|\theta' - \theta\|_1 \vcentcolon=
  \sum_{m \in \cM} |\theta'_m - \theta_m|$
  and $\theta_0 \vcentcolon= 1 - \sum_{m \in \cM} \theta_m$.
  Let $\Gamma' \subseteq \Gamma$ be as defined in the proof of
  Theorem~\ref{thm:parameter_tuning}. By
  Lemma~\ref{lem:covering_simplex}, there exists a $(1/n)$-cover
  $\Theta'$ of $(\Theta,\|\cdot\|_1)$ with cardinality
  $|\Theta'| \leq |\Theta| \land (1 + n)^{|\cM|}$.
  By Lemma~\ref{lem:lipschitz_fhat}, for
  $\gamma, \gamma' \in \Gamma$ with $\gamma \leq \gamma'$
  and $\theta, \theta' \in \Theta$, we have
  \begin{align*}
    \bigl\|\check f_{n,\gamma',\theta'}
    - \check f_{n,\gamma,\theta}\bigr\|_{\infty}
    &=
    \biggl\|
    \theta_0' \hat f_{n,\gamma'}
    - \theta_0 \hat f_{n,\gamma}
    + \sum_{m \in \cM} (\theta_m' - \theta_m) \tilde f_m
    \biggl\|_\infty \\
    &\leq
    \theta_0'
    \bigl\| \hat f_{n,\gamma'} - \hat f_{n,\gamma} \bigl\|_\infty
    + |\theta_0' - \theta_0|
    \bigl\| \hat f_{n,\gamma} - f_0 \bigl\|_\infty
    + \sum_{m \in \cM} |\theta_m' - \theta_m|
    \bigl\| \tilde f_m - f_0 \bigl\|_\infty \\
    &\leq
    \frac{2 \sqrt K (\gamma' - \gamma) \sqrt{\log n}}{\gamma^{3/2}}
    + \|\theta' - \theta\|_1
    \Bigl(
      \bigl\| \hat f_{n,\gamma} - f_0 \bigl\|_\infty
      + \max_{m \in \cM}
      \bigl\| \tilde f_m - f_0 \bigl\|_\infty
    \Bigr).
  \end{align*}
  Recall that $\max_{m \in \cM} \|\tilde f_m\|_\infty \leq M$ almost surely
  and $\gamma^- \geq n^{-\xi}$.
  Further, by the proof of Theorem~\ref{thm:parameter_tuning},
  there exists $C^0_1 > 0$ such that
  if $\gamma^+ H_{\gamma^+} \leq 1 / C^0_1$,
  $H_{\gamma^-} \sqrt{(\log |\Gamma|) \land \log n} / \sqrt n \leq 1/C^0_1$ and
  $1 \leq s \leq \sqrt{n} / (C^0_1 H_{\gamma^-})$,
  then
  $\max_{\gamma \in \Gamma}
  \| \hat f_{n,\gamma} - f_0 \|_\infty \leq 1$ on an event with
  probability at least $1 - e^{-s^2}$.
  Thus, on this event,
  for any $(\gamma, \theta) \in \Gamma \times \Theta$,
  there exists $(\gamma', \theta') \in \Gamma' \times \Theta'$ with
  \begin{align}
    \label{eq:model_selection_lipschitz}
    \bigl\|\check f_{n,\gamma',\theta'}
    - \check f_{n,\gamma,\theta}\bigr\|_{\infty}
    &\leq
    \frac{2 \sqrt{K \log n}}{n^{3\xi/2}}
    + \frac{M + \|f_0\|_\infty + 1}{n}
    \leq \frac{2 \sqrt K + M + \|f_0\|_\infty + 1}{n}.
  \end{align}
  Applying Lemma~\ref{lem:general_cv} to $\tilde \ell_n$ and
  $\bigl\{\check f_{n,\gamma,\theta} - P_X(\check f_{n,\gamma,\theta}):
  (\gamma, \theta) \in \Gamma' \times \Theta'\bigr\}$
  conditionally on the training data, on the event where
  \eqref{eq:model_selection_lipschitz} holds, there exists
  $C^0_2 > 0$ such that with probability at least $1 - e^{-s^2}$,
  \begin{align*}
    \bigl\|\check f_{n,\check\gamma,\check\theta}
    &- P_X(\check f_{n,\check\gamma,\check\theta}) - f_0\bigr\|_{L_2} \\
    &\leq
    C^0_2
    \biggl(
      \min_{(\gamma,\theta) \in \Gamma' \times \Theta'}
      \bigl\|\check f_{n,\gamma,\theta}
      - P_X(\check f_{n,\gamma,\theta}) - f_0\bigr\|_{L_2}
      + \frac{s + \sqrt {\log |\Gamma' \times \Theta'|}}{\sqrt n}
    \biggr).
  \end{align*}
  By Hoeffding's inequality and a union bound, since
  \[
    P_X(\tilde f_m) = P_X\bigl(\hat f_m - P_n(\hat f_m) 1_\cX\bigr)
    = (P_X - P_n)(\hat f_m)
  \]
  and $\|\hat f_m\|_\infty \leq M$
  almost surely for each $m \in \cM$,
  and applying \eqref{eq:cv_PX_bound},
  there exists $C^0_3 > 0$
  such that with probability at least $1 - e^{-s^2}$,
  \begin{align*}
    \max_{(\gamma,\theta) \in \Gamma' \times \Theta'}
    \bigl|P_X\bigl(\check f_{n,\gamma,\theta}\bigr)\bigr|
    &\leq
    \max_{\gamma \in \Gamma'}
    \bigl|P_X\bigl(\hat f_{n,\gamma}\bigr)\bigr|
    + \max_{m \in \cM}
    \bigl|P_X\bigl(\tilde f_m\bigr)\bigr|
    \leq
    C^0_3
    \frac{s + \sqrt{\log |\cM|}}{\sqrt n}.
  \end{align*}
  Hence, there exists $C^0_4 > 0$ such that
  for $1 \leq s \leq \sqrt n / (C^0_4 H_{\gamma^-})$,
  with probability at least $1 - e^{-s^2}$,
  \begin{align*}
    \bigl\|\check f_{n,\check\gamma,\check\theta} - f_0\bigr\|_{L_2}
    &\leq
    C^0_4
    \biggl(
      \min_{(\gamma,\theta) \in \Gamma' \times \Theta'}
      \bigl\|\check f_{n,\gamma,\theta} - f_0\bigr\|_{L_2}
      + \frac{s + \sqrt{\log |\cM|}
      + \sqrt {\log |\Gamma' \times \Theta'|}}{\sqrt n}
    \biggr).
  \end{align*}
  Finally, by \eqref{eq:model_selection_lipschitz},
  with probability at least $1 - e^{-s^2}$,
  \begin{align*}
    \min_{(\gamma, \theta) \in \Gamma' \times \Theta'}
    \bigl\|\check f_{n,\gamma,\theta} - f_0\bigr\|_{L_2}
    &\leq
    \min_{(\gamma, \theta) \in \Gamma \times \Theta}
    \bigl\|\check f_{n,\gamma,\theta} - f_0\bigr\|_{L_2}
    + \frac{2 \sqrt K + M + \|f_0\|_\infty + 1}{\sqrt n}.
  \end{align*}
  The result follows as
  $\log |\Gamma' \times \Theta'|
  \leq \bigl\{\log |\Gamma| \land (4 \xi \log n)\bigr\}
  + \bigl\{\log |\Theta| \land (2 |\cM| \log n)\bigr\}
  \leq
  (\log |\Gamma| + \log |\Theta|) \land \{(4 \xi + 2|\cM|) \log n\}$.
\end{proof}

\subsection{Proofs for Section~\ref{sec:implementation}}

For the reader's convenience, we state a useful result for
computing the Hilbert space norm of a constant function in
Lemma~\ref{lem:paulsen}.

\begin{lemma}[Conditions for $1_\cX \in \cH$]%
  \label{lem:paulsen}
  Let $\cH$ be an RKHS on $\cX$
  with kernel $k$. Then $1_\cX \in \cH$
  if and only if there exists $s > 0$ such that
  $(x, y) \mapsto k(x, y) - s$ is a kernel on $\cX$,
  in which case $1/\|1_\cX\|_\cH^2 = \sup\{s > 0:k(\cdot,\cdot) - s
  \text{ is a kernel on } \mathcal{X}\}$.
\end{lemma}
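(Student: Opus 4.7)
The plan is to establish both directions of the equivalence via the orthogonal decomposition of $\mathcal{H}$ (for necessity) and Aronszajn's sum-of-kernels theorem (for sufficiency), and then to read off the formula for $1/\|1_\cX\|_\cH^2$ by combining the two bounds obtained.

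First I would prove the forward implication. Suppose $1_\cX \in \cH$ and set $s_\star \vcentcolon= 1/\|1_\cX\|_\cH^2 > 0$. Consider the one-dimensional closed subspace $\cH_0 \vcentcolon= \mathrm{span}(1_\cX) \subseteq \cH$, equipped with the restriction of $\langle \cdot, \cdot\rangle_\cH$. A direct computation shows that its reproducing kernel is the constant $k_0(x,y) = s_\star$: for any $c \in \R$ and $y \in \cX$, the choice $k_0(\cdot, y) = s_\star 1_\cX$ yields $\langle c 1_\cX, s_\star 1_\cX \rangle_\cH = c s_\star \|1_\cX\|_\cH^2 = c$, as required. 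Let $\pi_0$ denote the orthogonal projection onto $\cH_0$ in $\cH$. Then the orthogonal complement $\cH_0^\perp$ is itself an RKHS, and by the standard identification of reproducing kernels under orthogonal decompositions \citep[see e.g.][Theorem~5.6]{paulsen2016introduction}, its reproducing kernel is $k(x,y) - k_0(x,y) = k(x,y) - s_\star$. Since reproducing kernels are positive semi-definite, this proves that $k(\cdot,\cdot) - s_\star$ is a kernel on $\cX$, so the set in the statement is non-empty.

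For the reverse implication, suppose that $k_1(x,y) \vcentcolon= k(x,y) - s$ is a kernel on $\cX$ for some $s > 0$. The constant function $k_s(x,y) \vcentcolon= s$ is itself a kernel (its associated RKHS consists of constant functions, with $\|c 1_\cX\|_{\cH(k_s)}^2 = c^2/s$). By Aronszajn's sum theorem \citep[Theorem~5.1]{paulsen2016introduction}, since $k = k_s + k_1$, we have $\cH(k_s) \subseteq \cH(k) = \cH$ with
\[
  \|f\|_\cH^2 = \min\bigl\{\|f_s\|_{\cH(k_s)}^2 + \|f_1\|_{\cH(k_1)}^2 : f = f_s + f_1,\; f_s \in \cH(k_s),\; f_1 \in \cH(k_1)\bigr\}
\]
for every $f \in \cH$. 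In particular, $1_\cX \in \cH(k_s) \subseteq \cH$, which establishes the backward direction, and taking the trivial decomposition $1_\cX = 1_\cX + 0$ gives $\|1_\cX\|_\cH^2 \leq 1/s$, i.e.\ $s \leq 1/\|1_\cX\|_\cH^2$ for every admissible $s$. Combining this with the fact that $s = 1/\|1_\cX\|_\cH^2$ is itself admissible (proved in the first paragraph) yields the claimed formula $\sup\{s > 0 : k(\cdot,\cdot) - s \text{ is a kernel on } \cX\} = 1/\|1_\cX\|_\cH^2$, with the supremum attained.

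The only potentially non-routine step is the identification of $k - s_\star$ as the reproducing kernel of $\cH_0^\perp$, which requires checking both that $k_0$ as defined is genuinely the reproducing kernel of $\cH_0$ (straightforward, as above) and invoking the general fact that orthogonal decompositions of an RKHS correspond to decompositions of its kernel. Everything else is bookkeeping via Aronszajn's sum theorem.
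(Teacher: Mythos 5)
Your proof is correct. The paper simply cites \citet[Theorem~3.11]{paulsen2016introduction} (with $f = 1_\cX$ and $s = 1/c^2$ in Paulsen and Raghupathi's notation), whereas you give a self-contained argument: the forward direction via the orthogonal decomposition theorem for RKHSs applied to the one-dimensional closed subspace $\mathrm{span}(1_\cX)$ (so that $k - s_\star$ is identified as the reproducing kernel of the orthogonal complement, hence positive semi-definite), and the backward direction via Aronszajn's sum-of-kernels theorem. This is, in effect, a specialization to the rank-one case of the proof of the general theorem the paper cites. Your argument buys a transparent derivation of the extremal property — the two directions together show both that $s_\star = 1/\|1_\cX\|_\cH^2$ is admissible and that no larger $s$ can be, so the supremum is attained — whereas the paper's citation leaves that to the reader to unwind from Paulsen's more general statement. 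The one caveat is that the theorem numbering you cite for the orthogonal-decomposition fact may not match Paulsen's book exactly, but the fact itself is standard and correctly invoked.
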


\begin{proof}[Proof of Lemma~\ref{lem:paulsen}]
  See \citet[Theorem~3.11]{paulsen2016introduction},
  setting $f = 1_\cX$ and $s = 1/c^2$ in their notation.
\end{proof}

\begin{proof}[Proof of Proposition~\ref{prop:constant_norm}]
  We begin by showing equivalence of the two expressions for $\kappa(\bA)$.
  Let $v_1, \ldots, v_n \in \R^n$ form an orthonormal basis of
  eigenvectors of $\bA$, with associated eigenvalues
  $\lambda_1, \ldots, \lambda_n \geq 0$. For $\delta > 0$, the
  matrix $\bA + \delta \bI_n$ has the same eigenvectors, but now with
  eigenvalues
  $\lambda_1 + \delta, \ldots, \lambda_n + \delta > 0$, so we may
  write $(\bA + \delta \bI_n)^{-1}
  = \sum_{i=1}^n \frac{1}{\lambda_i + \delta} v_i v_i^\T$.
  Therefore
  $1_n^\T (\bA + \delta \bI_n)^{-1} 1_n
  = \sum_{i=1}^n \frac{1}{\lambda_i + \delta} (1_n^\T v_i)^2$.
  We now proceed by cases. Suppose first that there exists $v \in \R^n$
  with $\bA v = 0$ and $1_n^\T v = 1$. Then we can find $i_0 \in [n]$
  with $\lambda_{i_0} = 0$ and $1_n^\T v_{i_0} \neq 0$. Thus
  $\frac{1}{\lambda_{i_0} + \delta} (1_n^\T v_{i_0})^2 \to \infty$
  as $\delta \searrow 0$, so $1_n^\T (\bA + \delta \bI_n)^{-1} 1_n \to
  \infty$ as $\delta \searrow 0$.
  Otherwise, for all $i \in [n]$, either
  $\lambda_i > 0$ or $1_n^\T v_i = 0$. In this case,
  $\sum_{i=1}^n \frac{1}{\lambda_i + \delta} (1_n^\T v_i)^2
  \to \sum_{i=1}^n \lambda_i^+ (1_n^\T v_i)^2
  = 1_n^\T \sum_{i=1}^n \lambda_i^+ v_i v_i^\T 1_n^\T
  = 1_n^\T \bA^+ 1_n$
  as $\delta \searrow 0$,
  where $\lambda_i^+ \vcentcolon= 1/\lambda_i$ if $\lambda_i > 0$ and
  $\lambda_i^+ \vcentcolon= 0$ if $\lambda_i = 0$. This verifies the
  desired equivalence of the two expressions for $\kappa(\bA)$.

  Next, we show that
  \begin{align}
    \label{eq:pos_def}
    \kappa(\bA)
    = \sup \bigl\{
      s \geq 0:
      \bA - s 1_n 1_n^\T \succeq 0
    \bigr\}.
  \end{align}
  Suppose for now that $\bA$ is positive definite, so invertible.
  Then for $s > 0$, we have
  $\bA - s 1_n 1_n^\T \succeq 0$ if and only if
  $v^\T 1_n 1_n^\T v/v^\T \bA v \leq 1/s$ for all $v \in \R^n
  \setminus \{0\}$.
  Equivalently, defining $w \vcentcolon= \bA^{1/2}v$,
  this holds precisely when
  \[
    \frac{1}{s} \geq \sup_{w \in \R^n \setminus \{0\}}
    \frac{w^\T \bA^{-1/2} 1_n 1_n^\T \bA^{-1/2} w}{\|w\|_2^2}
    = 1_n^\T \bA^{-1} 1_n.
  \]
  It follows that the largest such $s$ is given by
  $1/1_n^\T \bA^{-1} 1_n$.
  We now consider the case where~$\bA$ is not invertible.
  Define $g: [0, \infty)^2 \to \R$ by
  $g(s, \delta) \vcentcolon=
  \lambda_{\min}(\bA + \delta \bI_n - s 1_n 1_n^\T)$,
  which is continuous in $(s, \delta)$ by Weyl's inequality
  \citep[e.g.,][Theorem~10.10.16]{samworth2026modern}.
  Further, $s \mapsto g(s,\delta)$ is decreasing for each $\delta
  \geq 0$ and $\delta \mapsto g(s,\delta)$ is increasing for each $s \geq 0$.
  For $\delta \geq 0$, define
  $s^\star(\delta) \vcentcolon= \sup\{s \geq 0: g(s, \delta) \geq 0\}$,
  noting that $s^\star$ is an increasing function on $[0,\infty)$,
  that $s^\star(\delta) = 1/1_n^\T (\bA + \delta \bI_n)^{-1} 1_n$ for
  $\delta > 0$, and $g\bigl(s^\star(\delta),\delta\bigr) \geq 0$ by
  continuity. Now take $\varepsilon > 0$, so that
  $g\bigl(s^\star(0)+\varepsilon, 0\bigr) < 0$. By continuity, there
  exists $\delta > 0$ such that
  $g\bigl(s^\star(0)+\varepsilon, \delta\bigr) < 0$, so
  $s^\star(\delta) < s^\star(0)+\varepsilon$.
  But since $\varepsilon > 0$ was arbitrary, we deduce that
  \[
    s^\star(0) = \lim_{\delta \searrow 0} s^\star(\delta)
    = \lim_{\delta \searrow 0} \frac{1}{1_n^\T (\bA + \delta \bI_n)^{-1}
    1_n} = \kappa(\bA).
  \]
  This establishes~\eqref{eq:pos_def}.

  Now $k(x, y) - s$ is a kernel if and only if
  $\bK(x_1, \ldots, x_n) - s 1_n 1_n^\T \succeq 0$
  for all $n \in \N$ and $x_1, \ldots, x_n \in \cX$,
  or equivalently if
  $s \leq \inf_{n \in \N}
  \inf_{x_1, \ldots, x_n \in \cX}
  \kappa\bigl(\bK(x_1, \ldots, x_n)\bigr)$.
  The result follows by Lemma~\ref{lem:paulsen}.
\end{proof}

\begin{proof}[Proof of Lemma~\ref{lem:constant_norm_gaussian}]
  Define $\tilde{k}^\Gauss_{\Sigma,a}:\mathcal{X} \times \mathcal{X}
  \rightarrow [0,\infty)$ by
  \begin{align*}
    \tilde k^\Gauss_{\Sigma,a}(x, y)
    &=
    a + \exp\bigl(-(\tilde x - \tilde y)^\T (\tilde x - \tilde y)\bigr),
  \end{align*}
  and set $\tilde x \vcentcolon= \Sigma^{1/2} x$
  and $\tilde y \vcentcolon= \Sigma^{1/2} y$.
  Since
  the constant map $(x, y) \mapsto a$ is a kernel and the standard Gaussian
  kernel is a kernel \citep[Example~6.7]{samworth2026modern}, and as
  addition preserves kernels
  \cite[Proposition~6.4]{samworth2026modern},
  we deduce that $k^\Gauss_{\Sigma,a}$ is a kernel on $\cX$.

  Now suppose that $a = 0$.
  Let $n \in \N$, let $\delta \in \bigl(0,1/(2n^2)\bigr)$ and,
  recalling that $\cX$ is unbounded, choose
  $x_1, \ldots, x_n \in \cX$ so that for each $i, j \in [n]$
  with $i \neq j$, we have
  $\|x_i - x_j\|_2^2 \geq \lambda_{\max}(\Sigma) \sqrt{\log(1/\delta)}$.
  Then
  $(x_i - x_j)^\T \Sigma^{-1} (x_i - x_j) \geq
  \lambda_{\min}(\Sigma^{-1})\|x_i - x_j\|_2^2 \geq \log(1/\delta)$,
  so $k^\Gauss_{\Sigma,0}(x_i, x_j) \leq \delta$.
  Note also that $k^\Gauss_{\Sigma,0}(x_i, x_i) = 1$ for each $i \in [n]$.
  Now define $\bK_n \vcentcolon= \bK(x_1, \ldots, x_n)$
  in the notation of Proposition~\ref{prop:constant_norm}, so that
  $\|\bI_n - \bK_n\|_{\mathrm{op}}
  \leq n \|\bI_n - \bK_n\|_{\infty}
  \leq n \delta < 1$.
  Hence $\bK_n$ is invertible, and since
  $\bI_n - \bK_n^{-1}
  = (\bI_n - \bK_n^{-1})(\bI_n - \bK_n) - (\bI_n - \bK_n)$, we have
  \begin{align*}
    \bigl\|
    \bI_n
    - \bK_n^{-1}
    \bigr\|_{\op}
    &\leq
    \frac{\bigl\|\bI_n - \bK_n\bigr\|_{\op}}
    {1 - \bigl\|\bI_n - \bK_n\bigr\|_{\op}}
    \leq
    \frac{n \delta}{1 - n \delta}
    \leq 2 n \delta.
  \end{align*}
  It follows that
  \begin{align*}
    \bigl|
    1_n^\T 1_n - 1_n^\T \bK_n^{-1} 1_n
    \bigr|
    \leq 2 n \delta \|1_n\|_2^2
    = 2 n^2 \delta < 1,
  \end{align*}
  so that
  \begin{align*}
    \kappa(\bK_n)
    &=
    \frac{1}{1_n^\T \bK_n^{-1} 1_n}
    \leq
    \frac{1}{1_n^\T 1_n - 1}
    =
    \frac{1}{n-1}
    \to 0
  \end{align*}
  as $n \to \infty$. We deduce from
  Proposition~\ref{prop:constant_norm} that $\kappa_{k^\Gauss_{\Sigma,0}} = 0$
  and so $1_\cX \not\in \cH$.
  Moreover, by Lemma~\ref{lem:paulsen},
  if $a > 0$ then $1_\cX \in \cH$ and $\|1_\cX\|_\cH^2 = 1/a$.
\end{proof}

\begin{proof}[Proof of Lemma~\ref{lem:constant_norm_poly}]
  By the binomial theorem,
  \begin{align*}
    k^\poly_{p,a}(x, y)
    = \sum_{r=0}^p \binom{p}{r} (x^\T y)^r a^{p-r}.
  \end{align*}
  Since $(x, y) \mapsto x^\T y$ is a kernel,
  and since kernels are preserved by non-negative scaling,
  sums and products
  \citep[Proposition~6.4(a) and (c)]{samworth2026modern}, we conclude that
  $k^\poly_{p, a}(\cdot,\cdot) - a^p$ is a kernel on $\cX$.
  By considering $x = y = 0_d$, we see that
  $k^\poly_{p,a}(\cdot, \cdot) - a^p - \delta$ is not a kernel
  if $\delta > 0$. We conclude by Lemma~\ref{lem:paulsen} that
  $1_\cX \in \cH$ if and only if
  $a > 0$, in which case $\|1_\cX\|_\cH^2 = 1/a^p$.
\end{proof}

\begin{proof}[Proof of Lemma~\ref{lem:constant_norm_sobolev}]
  The functions $k^\Sob_{1,a}$ and $k^\Sob_{2,a}$ are kernels by
  \citet[Examples~6.8 and~6.9 and
  Proposition~6.4(a)]{samworth2026modern}.
  By considering $x = y = 0$, we see that
  $k^\Sob_{1,a} - a - \delta$ and $k^\Sob_{2,a} - a - \delta$ are not kernels
  if $\delta > 0$. The conclusions now follow by Lemma~\ref{lem:paulsen}.
\end{proof}

\subsection{Inequalities for \texorpdfstring{$U$}{U}-processes}
\label{sec:u_statistics}

Let $(\cX,\cA)$ be a measurable space and $X_1, \ldots, X_n$ be independent
and identically distributed random variables taking values in a
measurable space $\cX$
where $n \geq 2$.
Let $\cF$ be a non-empty class of bounded, symmetric, Borel measurable functions
$f : \cX \times \cX \to \R$ that is separable with respect to the
topology of pointwise convergence. In other words, there exists a
countable subset $\mathcal{F}' \subseteq \mathcal{F}$ such that for
every $f \in \mathcal{F}$, there exists a sequence $(f_n)$ in
$\mathcal{F}'$ with $f_n(x) \rightarrow f(x)$ as $n \rightarrow
\infty$ for every $x \in \mathcal{X}$. Assume that $\E \bigl\{
f(X_1, x) \bigr\} = 0$ for all $x \in \cX$ and $f \in \cF$.
Suppose that there exists a function $F : \cX \times \cX \to \R$
with $f(x, y) \leq F(x, y)$ for all $x, y \in \cX$ and $f \in \cF$,
and further $\E \bigl\{ F(X_1, X_2)^2 \bigr\} < \infty$.
Define $\mathcal{I} := \{(i,j):i,j \in [n],i<j\}$ and
for $f \in \cF$, define the degenerate second-order $U$-process
and its decoupled version by
\begin{align*}
  U_n(f) &:= \frac{2}{n(n-1)}
  \sum_{(i,j) \in \mathcal{I}}
  f(X_i, X_j),
  &V_n(f) &:=
  \frac{2}{n(n-1)}
  \sum_{(i,j) \in \mathcal{I}}
  f(X_i, X_j'),
\end{align*}
where $X_1', \ldots, X_n'$ are independent copies of $X_1, \ldots,
X_n$. Now let $\varepsilon_1, \ldots, \varepsilon_n,
\varepsilon_1', \ldots, \varepsilon_n'$ denote independent and
identically distributed Rademacher variables that are independent of
$X_1, \ldots, X_n, X_1', \ldots, X_n'$.
Define randomised versions of $V_n$ with and without decoupling by
\begin{align*}
  W_n(f)
  &:=
  \frac{2}{n(n-1)}
  \sum_{(i,j) \in \mathcal{I}}
  \varepsilon_i \varepsilon_j
  f(X_i, X_j),
  &Y_n(f)
  &:=
  \frac{2}{n(n-1)}
  \sum_{(i,j) \in \mathcal{I}}
  \varepsilon_i \varepsilon_j'
  f(X_i, X_j').
\end{align*}
For a stochastic process $\bigl(Z(f)\bigr)_{f \in \mathcal{F}}$, we
write $\|Z\|_\cF := \sup_{f \in \cF} |Z(f)|$.

\begin{lemma}[Decoupling]%
  \label{lem:decoupling}
  Let $\phi: [0, \infty) \to [0, \infty)$ be convex and increasing. Then
  \begin{align*}
    \E \biggl\{
      \phi \biggl(
        \frac{1}{4}
        \|V_n\|_\cF
      \biggr)
    \biggr\}
    &\leq
    \E \bigl\{
      \phi \bigl(
        \|U_n\|_\cF
    \bigr)\bigr\}
    \leq
    \E \bigl\{
      \phi \bigl(
        8\|V_n\|_\cF
    \bigr)\bigr\}.
  \end{align*}
\end{lemma}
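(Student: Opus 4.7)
The result is a standard decoupling inequality for degenerate symmetric second-order $U$-processes in the spirit of \citet{de1999decoupling}, and the plan is to prove both inequalities by Bernoulli randomisation combined with Jensen's inequality for conditional expectations. Separability of $\cF$ under pointwise convergence ensures that all relevant suprema are measurable, so $\phi$ applied to $\|U_n\|_\cF$, $\|V_n\|_\cF$ and related processes is integrable.

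For the upper bound $\E\phi(\|U_n\|_\cF) \leq \E\phi(8\|V_n\|_\cF)$, I would introduce i.i.d.\ Bernoulli($1/2$) variables $\eta_1, \ldots, \eta_n$ independent of $(X_i, X_i')_{i \in [n]}$, and set $I := \{i: \eta_i = 1\}$ and $J := \{i: \eta_i = 0\}$. The key auxiliary statistic is the cross process
\[
T^\eta(f) := \frac{4}{n(n-1)} \sum_{i \in I, j \in J} f(X_i, X_j),
\]
which by the symmetry of $f$ can be rewritten as $T^\eta(f) = \frac{4}{n(n-1)} \sum_{(i,j) \in \mathcal{I}} \mathbbm{1}\{\eta_i \neq \eta_j\} f(X_i, X_j)$. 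Since $\P(\eta_i \neq \eta_j) = 1/2$, taking the conditional expectation over $\eta$ yields $\E_\eta[T^\eta(f) \mid X] = U_n(f)$, so that Jensen's inequality applied conditionally on $X$ to the convex, increasing function $\phi$ and the seminorm $\|\cdot\|_\cF$ gives $\E\phi(\|U_n\|_\cF) \leq \E\phi(\|T^\eta\|_\cF)$. Next, conditional on $\eta$, the collections $(X_i)_{i \in I}$ and $(X_j)_{j \in J}$ are independent as disjoint-index subsamples of an i.i.d.\ sequence, so a coupling argument shows that replacing $(X_j)_{j \in J}$ with its independent copy $(X_j')_{j \in J}$ preserves the joint distribution of the entire process $\{T^\eta(f)\}_{f \in \cF}$. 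A final randomisation step, using Jensen's inequality in the reverse direction together with $\P(\eta_i \neq \eta_j) = 1/2$ and the exchangeability $(X_i, X_i') \stackrel{d}{=} (X_i', X_i)$, then bounds the resulting quantity in $\phi$-moment by $8\|V_n\|_\cF$.

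The lower bound $\E\phi(\|V_n\|_\cF/4) \leq \E\phi(\|U_n\|_\cF)$ follows by the reverse construction. Using the same Bernoullis, define $Y_i := X_i$ if $\eta_i = 0$ and $Y_i := X_i'$ if $\eta_i = 1$, so that $(Y_i)_{i \in [n]}$ is i.i.d.\ with the same distribution as $(X_i)_{i \in [n]}$. Setting $\tilde U_n(f) := \frac{2}{n(n-1)} \sum_{(i,j) \in \mathcal{I}} f(Y_i, Y_j)$, which has the same distribution as $U_n(f)$, and decomposing the sum according to the four possible values of $(\eta_i, \eta_j)$, the conditional expectation $\E_\eta[\tilde U_n(f) \mid X, X']$ can be expressed as a combination of $U_n(f)$-like and $V_n(f)$-like terms; isolating the $V_n$-contribution and applying Jensen's inequality yields the claim with the explicit factor of $4$.

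The main obstacle is the careful bookkeeping of the constants $1/4$ and $8$, which emerge from the interplay of three sources: the conditional probability $\P(\eta_i \neq \eta_j) = 1/2$, the symmetry of $f$ (which ensures that the one-sided cross-sum $\sum_{i \in I, j \in J}$ matches its mirror $\sum_{i \in J, j \in I}$), and the additional doubling incurred when converting between an unordered sum over pairs and a sum indexed by $\mathcal{I}$. The coupling step for the upper bound is the most delicate: the substitution $X_j \mapsto X_j'$ on $j \in J$ is distributional only conditional on $\eta$, and relies crucially on the disjoint-support structure of $I$ and $J$ together with independence (rather than mere exchangeability) of the $X_i$.
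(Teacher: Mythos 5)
The paper's ``proof'' of this lemma is a single citation: it refers the reader to the proof of Theorem~3.1.1 in \citet{de1999decoupling}. Your proposal is therefore an attempt to reconstruct that classical argument rather than a comparison with anything the paper writes out, which is a reasonable exercise, and the opening moves (Bernoulli partition $I,J$, the identity $T^\eta(f) = \tfrac{4}{n(n-1)}\sum_{(i,j)\in\mathcal{I}}\mathbbm{1}\{\eta_i\neq\eta_j\}f(X_i,X_j)$, $\E_\eta T^\eta = U_n$, Jensen for the convex map $h\mapsto\phi(\|h\|_\cF)$, and the conditional coupling replacing $(X_j)_{j\in J}$ by $(X_j')_{j\in J}$) are all correct and are indeed the spine of the de~la~Pe\~na--Gin\'e proof.

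The two final steps, however, are not merely informal but contain genuine gaps. For the upper bound, after the coupling you have $\E\phi(\|U_n\|_\cF)\leq\E\phi(\|\tilde T^\eta\|_\cF)$ with $\tilde T^\eta(f)=\tfrac{4}{n(n-1)}\sum_{i\in I,j\in J}f(X_i,X_j')$, which is a random partial sum of the full decoupled array and is not $8V_n$ nor a convex average of scaled copies of $V_n$. The claimed ``Jensen's inequality in the reverse direction'' is not a valid argument: for convex $\phi$, Jensen only produces inequalities in one direction, and no appeal to $\P(\eta_i\neq\eta_j)=1/2$ or to exchangeability of $(X_i,X_i')$ reverses it. The actual classical step here is to expand $\eta_i(1-\eta_j)=\tfrac14(1+\epsilon_i)(1-\epsilon_j)$ with $\epsilon_i:=2\eta_i-1$ Rademacher, giving four terms, and then to control the cross (Rademacher-randomised) pieces by the decoupled sum via a separate symmetrisation argument; this is where most of the work in de~la~Pe\~na--Gin\'e's proof actually lies, and your sketch silently skips it.

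The lower bound has a structurally identical problem. Your decomposition gives, correctly, $\E_\eta\bigl[\tilde U_n(f)\mid X,X'\bigr]=\tfrac14\bigl\{U_n(f)+U_n'(f)+V_n(f)+V_n'(f)\bigr\}$, so Jensen yields $\E\phi\bigl(\tfrac14\|U_n+U_n'+V_n+V_n'\|_\cF\bigr)\leq\E\phi(\|U_n\|_\cF)$. But ``isolating the $V_n$-contribution'' is not a legitimate move: you cannot delete $U_n$, $U_n'$, $V_n'$ from inside the seminorm and retain an upper bound, as the seminorm is sub-additive but not monotone in summands. The correct lower-bound proof requires a different auxiliary randomisation (one that averages out the undesired terms before Jensen is applied, not after), and this is again exactly the nontrivial part of the cited theorem. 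In short: your skeleton is right, the constants $1/4$ and $8$ are plausible targets, but both the upper and lower bounds terminate in a step that does not hold as described and conceals the technical core of the decoupling argument.
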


\begin{proof}[Proof of Lemma~\ref{lem:decoupling}]
  See the proof of \citet[][Theorem~3.1.1]{de1999decoupling}.
\end{proof}

\begin{lemma}[Randomisation]%
  \label{lem:randomisation}
  Let $\phi: [0, \infty) \to [0, \infty)$ be convex and increasing. Then
  \begin{align*}
    \E \bigl\{
      \phi \bigl(
        \|V_n\|_\cF
      \bigr)
    \bigr\}
    &\leq
    \E \bigl\{
      \phi \bigl(
        4
        \|Y_n\|_\cF
      \bigr)
    \bigr\}.
  \end{align*}
\end{lemma}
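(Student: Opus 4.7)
The plan is to apply the standard symmetrisation inequality for sums of independent, mean-zero stochastic processes indexed by a pointwise-separable class $\cF$ (as in, e.g., \citealt[Lemma~2.3.1]{van1996weak}) in two successive conditional steps: once to inject Rademacher signs $\varepsilon_i$ attached to the $X_i$'s, and then again to inject Rademacher signs $\varepsilon_j'$ attached to the $X_j'$'s. Each application produces a multiplicative factor of $2$, yielding the overall factor of $4$.

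For the first step, I would condition on $X_1', \ldots, X_n'$ and decompose
\[
V_n(f) = \sum_{i=1}^n g_i(X_i; f), \qquad g_i(x; f) \vcentcolon= \frac{2}{n(n-1)} \sum_{j=i+1}^{n} f(x, X_j').
\]
Given $X_1', \ldots, X_n'$, the random variables $\bigl(g_i(X_i; f)\bigr)_{i \in [n]}$ are independent, and the hypothesis $\E\{f(X_1, x)\} = 0$ for all $x \in \cX$ yields $\E\{g_i(X_i; f) \mid X_1', \ldots, X_n'\} = 0$ uniformly in $f \in \cF$. Applying the symmetrisation inequality conditionally on $X_1', \ldots, X_n'$ and then taking unconditional expectations gives
\[
\E\bigl\{\phi(\|V_n\|_\cF)\bigr\} \leq \E\bigl\{\phi(2 \|V_n^\varepsilon\|_\cF)\bigr\},
\]
where $V_n^\varepsilon(f) \vcentcolon= \frac{2}{n(n-1)} \sum_{(i,j) \in \mathcal{I}} \varepsilon_i f(X_i, X_j')$.

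For the second step, I would rewrite $V_n^\varepsilon(f) = \sum_{j=2}^{n} h_j(X_j'; f)$, where $h_j(x'; f) \vcentcolon= \frac{2}{n(n-1)} \sum_{i=1}^{j-1} \varepsilon_i f(X_i, x')$, and condition on $(\varepsilon_1, \ldots, \varepsilon_n, X_1, \ldots, X_n)$. The $h_j(X_j'; f)$ are then conditionally independent across $j$, and the symmetry of $f$ combined with the mean-zero hypothesis yields $\E\{f(x, X_j')\} = 0$ for all $x \in \cX$, so these summands have conditional mean zero. Applying the symmetrisation inequality once more, with $\phi$ replaced by the convex increasing map $x \mapsto \phi(2x)$, yields
\[
\E\bigl\{\phi(2 \|V_n^\varepsilon\|_\cF)\bigr\} \leq \E\bigl\{\phi(4 \|Y_n\|_\cF)\bigr\},
\]
and chaining the two inequalities delivers the claim. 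I do not expect a substantial obstacle here: the only points requiring care are the verification of the conditional mean-zero and independence structure at each stage (both of which are immediate from the assumptions, including the symmetry of $f$), and measurability of the relevant suprema, which is handled by the pointwise separability of $\cF$ assumed at the outset.
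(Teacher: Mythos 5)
Your proposal is correct and follows essentially the same route as the paper: condition on $X_1',\ldots,X_n'$ to apply the symmetrisation/randomisation inequality for sums of independent mean-zero processes (picking up a factor $2$ and the signs $\varepsilon_i$), then condition on $(\varepsilon_1,\ldots,\varepsilon_n,X_1,\ldots,X_n)$ and apply it again with $\phi$ replaced by $x \mapsto \phi(2x)$ (picking up the second factor $2$ and the signs $\varepsilon_j'$), using the symmetry of $f$ and the degeneracy assumption for the conditional mean-zero structure and pointwise separability for measurability. The only cosmetic difference is the citation (the paper invokes \citet[Lemma~2.3.6]{van1996weak} rather than Lemma~2.3.1), and your write-up makes the row/column decompositions of $V_n$ more explicit than the paper does.
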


\begin{proof}[Proof of Lemma~\ref{lem:randomisation}]
  We apply twice the randomisation lemma for independent processes from
  \citet[Lemma~2.3.6]{van1996weak},
  noting that the outer expectations may be replaced with
  expectations by the separability condition on $\cF$; this also
  means that we do not need to insist that the random variables are
  coordinate projections on a product space. Hence
  \begin{align*}
    \E \bigl\{
      \phi \bigl(
        \|V_n\|_\cF
      \bigr)
    \bigr\}
    &=
    \E \biggl[
      \E \biggl\{
        \phi \biggl(
          \sup_{f \in \cF}
          \biggl|
          \frac{2}{n(n-1)}
          \sum_{(i,j) \in \mathcal{I}}
          f(X_i, X_j')
          \biggr|
        \biggr)
        \biggm|
        X_1', \ldots, X_n'
      \biggr\}
    \biggr] \\
    &\leq
    \E \biggl\{
      \phi \biggl(
        2 \sup_{f \in \cF}
        \biggl|
        \frac{2}{n(n-1)}
        \sum_{(i,j) \in \mathcal{I}}
        \varepsilon_i
        f(X_i, X_j')
        \biggr|
      \biggr)
    \biggr\} \\
    &=
    \E \biggl[
      \E \biggl\{
        \phi \biggl(
          2 \sup_{f \in \cF}
          \biggl|
          \frac{2}{n(n-1)}
          \sum_{(i,j) \in \mathcal{I}}
          \varepsilon_i
          f(X_i, X_j')
          \biggr|
        \biggr)
        \biggm|\varepsilon_1,\ldots,\varepsilon_n,X_1, \ldots, X_n
      \biggr\}
    \biggr] \\
    &\leq
    \E \bigl\{
      \phi \bigl(
        4
        \|Y_n\|_\cF
      \bigr)
    \bigr\}.
    \qedhere
  \end{align*}
\end{proof}

\begin{corollary}[Decoupling and randomisation]
  \label{cor:decoupling}
  We have
  \[
    \E \bigl\{
      \phi \bigl(
        \|U_n\|_\cF
      \bigr)
    \bigr\}
    \leq
    \E \bigl\{
      \phi \bigl(
        128 \|W_n\|_\cF
      \bigr)
    \bigr\}.
  \]
\end{corollary}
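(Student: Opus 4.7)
The plan is to chain three inequalities: decouple, randomise, then re-couple in the randomised setting. The key observation is that the hypotheses of Lemma~\ref{lem:decoupling} are preserved under augmentation of the sample by independent Rademacher signs, which lets us apply decoupling a second time in the opposite direction.

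First I would apply Lemma~\ref{lem:decoupling} directly to obtain $\E\bigl\{\phi(\|U_n\|_{\cF})\bigr\} \leq \E\bigl\{\phi(8\|V_n\|_{\cF})\bigr\}$. Next, since the function $x \mapsto \phi(8x)$ is again convex and increasing on $[0,\infty)$, Lemma~\ref{lem:randomisation} applied to this composed function yields
\[
  \E\bigl\{\phi(8\|V_n\|_{\cF})\bigr\}
  \leq \E\bigl\{\phi(32\|Y_n\|_{\cF})\bigr\}.
\]

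The main (and most interesting) step is the third one, which converts the decoupled randomised process $Y_n$ back to the coupled randomised process $W_n$. I would introduce the augmented sample $\tilde X_i \vcentcolon= (\varepsilon_i, X_i)$ with independent copies $\tilde X_i' \vcentcolon= (\varepsilon_i', X_i')$, taking values in the product measurable space $\{-1,+1\} \times \cX$, and the augmented function class
\[
  \tilde{\cF} \vcentcolon= \bigl\{\tilde f : f \in \cF\bigr\},
  \qquad
  \tilde f\bigl((\varepsilon,x),(\varepsilon',y)\bigr)
  \vcentcolon= \varepsilon\varepsilon' f(x,y).
\]
Each $\tilde f$ is symmetric because $f$ is, and it is degenerate because $\E(\varepsilon_1) = 0$ and $\varepsilon_1 \perp\!\!\!\perp X_1$ give
\[
  \E\bigl\{\tilde f(\tilde X_1, (\varepsilon,x))\bigr\}
  = \varepsilon \, \E(\varepsilon_1) \, \E\bigl\{f(X_1, x)\bigr\} = 0
\]
for every $(\varepsilon,x)$. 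The envelope $F$ still dominates $|\tilde f|$, and separability of $\tilde{\cF}$ follows from that of $\cF$. With these verifications in hand, Lemma~\ref{lem:decoupling} applied to $\tilde{\cF}$ and the iid sample $(\tilde X_i)_{i \in [n]}$ identifies the coupled and decoupled $U$-processes as $W_n$ and $Y_n$ respectively, giving $\E\bigl\{\phi(\|Y_n\|_{\cF}/4)\bigr\} \leq \E\bigl\{\phi(\|W_n\|_{\cF})\bigr\}$. Replacing $\phi$ by $x \mapsto \phi(128x)$, which is again convex and increasing, yields
\[
  \E\bigl\{\phi(32\|Y_n\|_{\cF})\bigr\}
  \leq \E\bigl\{\phi(128\|W_n\|_{\cF})\bigr\}.
\]
Chaining the three displays gives the stated bound, with constant $8 \cdot 4 \cdot 4 = 128$.

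The one step that deserves care is the verification that the augmented kernel class $\tilde{\cF}$ genuinely satisfies the hypotheses of Lemma~\ref{lem:decoupling}: symmetry, degeneracy, boundedness of the envelope and the separability condition. None of these is difficult, but all must be checked so that the second invocation of Lemma~\ref{lem:decoupling} is legitimate; this is where the proof's content really lies.
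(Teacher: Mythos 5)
Your proof is correct and takes essentially the same route as the paper: decouple ($U_n\le 8V_n$), randomise ($8V_n\le 32Y_n$), then recognise $Y_n$ and $W_n$ as the decoupled and coupled $U$-processes over the augmented Rademacher-signed kernel class and invoke the lower bound in Lemma~\ref{lem:decoupling} to pass from $Y_n$ back to $W_n$. The only cosmetic difference is that the paper bakes the factor $128$ into the definition of the augmented class $\cG$, whereas you keep $\tilde\cF$ unscaled and absorb the constant by composing $\phi$ with a dilation; you also correctly spell out the hypothesis checks (symmetry, degeneracy via $\E(\varepsilon_1)=0$ and $\varepsilon_1\perp X_1$, envelope, separability) that the paper leaves implicit.
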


\begin{proof}[Proof of Corollary~\ref{cor:decoupling}]
  Let
  \[
    \mathcal{G} := \bigl\{g:\bigl(\{0,1\} \times \mathcal{X}\bigr)
      \times \bigl(\{0,1\} \times \mathcal{X}\bigr) \rightarrow
      \mathbb{R}: g\bigl((\varepsilon,x),(\varepsilon',x')\bigr) =
    128\varepsilon \varepsilon'f(x,x') \text{ with } f \in \mathcal{F}\bigr\}.
  \]
  Then by Lemmas~\ref{lem:decoupling} and~\ref{lem:randomisation},
  \begin{align*}
    \E \bigl\{
      \phi \bigl(
        \|U_n\|_\cF
      \bigr)
    \bigr\} &\leq \E \bigl\{
      \phi \bigl(8
        \|V_n\|_\cF
      \bigr)
    \bigr\} \leq \E \bigl\{
      \phi \bigl(
        32\|Y_n\|_\cF
      \bigr)
    \bigr\} =
    \mathbb{E}\biggl\{\phi\biggl(\frac{1}{4}\|V_n\|_\mathcal{G}\biggr)\biggr\}
    \\
    &\leq \mathbb{E}\bigl\{\phi\bigl(\|U_n\|_\mathcal{G}\bigr)\bigr\}
    \leq
    \E \bigl\{
      \phi \bigl(
        128 \|W_n\|_\cF
      \bigr)
    \bigr\}.
    \qedhere
  \end{align*}
\end{proof}

\begin{lemma}[Rademacher chaos moment equivalence]%
  \label{lem:moment_equivalence}
  Let $\bX = (X_1, \ldots, X_n)$.
  For $k \in [1,\infty)$ and $f \in \cF$, define
  $\|W_n(f)\|_{k \mid \bX} \vcentcolon=
  \E \bigl\{ |W_n(f)|^k \mid \bX \bigr\}^{1/k}$. Then
  \begin{align*}
    \| W_n(f) \|_{k \mid \bX}
    &\leq
    \max(1, k - 1) \cdot \| W_n(f) \|_{2 \mid \bX}.
  \end{align*}
\end{lemma}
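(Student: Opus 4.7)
The plan is to observe that conditional on $\bX$, $W_n(f)$ is a decoupled (more precisely, off-diagonal) homogeneous Rademacher chaos of order~$2$: it is the quadratic form
\begin{equation*}
  W_n(f) = \sum_{(i,j) \in \mathcal I} c_{ij}(\bX)\,\varepsilon_i \varepsilon_j
  \qquad \text{where} \qquad c_{ij}(\bX) \vcentcolon= \frac{2 f(X_i, X_j)}{n(n-1)},
\end{equation*}
with no diagonal terms (no $\varepsilon_i^2$ contributions). The desired inequality is then a direct instance of the hypercontractivity moment comparison for such chaos.

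First, I would dispose of the easy range $k \in [1, 2]$ by Lyapunov's inequality applied to the conditional expectation given $\bX$, which yields $\|W_n(f)\|_{k \mid \bX} \leq \|W_n(f)\|_{2 \mid \bX}$; this matches the claim since $\max(1, k-1) = 1$ on $[1,2]$. For $k \geq 2$, I would invoke the Bonami--Beckner hypercontractive inequality for Rademacher chaos: for any homogeneous Rademacher chaos $Z$ of degree $d$, one has $\E(|Z|^k)^{1/k} \leq (k-1)^{d/2} \E(Z^2)^{1/2}$ (see, e.g., \citet[Theorem~3.2.10]{de1999decoupling}). Applying this conditionally on $\bX$ with $d = 2$ gives precisely $\|W_n(f)\|_{k \mid \bX} \leq (k - 1) \|W_n(f)\|_{2 \mid \bX}$, as required.

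The bulk of the content is therefore an appeal to a known hypercontractivity result; the only thing to verify is that the standard hypothesis (homogeneous chaos with no diagonal self-interactions) holds, which is immediate from the definition of $W_n(f)$ as a sum over $\mathcal I = \{(i,j) : i < j\}$. There is no real obstacle beyond choosing a convenient reference for the degree-$2$ Bonami--Beckner inequality; the combinatorial structure and measurability of the coefficients $c_{ij}(\bX)$ pose no difficulty because everything is finite-dimensional once we condition on $\bX$.
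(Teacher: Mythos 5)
Your proposal is correct and follows essentially the same route as the paper: Hölder/Lyapunov conditionally on $\bX$ for $k \in [1,2]$, and the hypercontractive moment comparison for degree-two Rademacher chaos from \citet{de1999decoupling} (the paper cites their Theorem~3.2.2 with $d=2$, $p=2$, $q=k$) applied conditionally on $\bX$ for $k > 2$. The only discrepancy is the theorem number you cite from that reference, which should be checked, but the mathematical content is identical.
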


\begin{proof}[Proof of Lemma~\ref{lem:moment_equivalence}]
  For $k \in [1,2]$, the result follows from H{\"o}lder's inequality.
  For $k > 2$, we can apply \citet[][Theorem~3.2.2]{de1999decoupling}
  conditionally on $\bX$
  with $d = 2$, $p = 2$ and $q = k$ to obtain the result.
\end{proof}

Define $\psi:[0,\infty) \rightarrow [0,\infty)$ by $\psi(x) := e^x -
1$, and for a random variable $A$, the Orlicz norms
$\|A\|_\psi := \inf\bigl\{c > 0 : \E \bigl( \psi(|A| / c)\bigr) \leq 1\bigr\}$
and
$\|A\|_{\psi \mid \bX} := \essinf\bigl\{c(\bX) \text{ measurable}: \E
  \bigl[ \psi\bigl(\frac{|A|}{c(\bX)}\bigr) \mid \bX \bigr] \leq 1
\text{ almost surely}\bigr\}$, where
$\essinf$ denotes the essential infimum.

\begin{lemma}[Exponential bound for Rademacher chaoses]%
  \label{lem:exponential_chaos}
  For every $f \in \mathcal{F}$,
  \begin{align*}
    \big\| W_n(f) \big\|_{\psi \mid \bX}
    \leq
    3 \|W_n(f)\|_{2 \mid \bX}.
  \end{align*}
\end{lemma}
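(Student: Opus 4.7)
The plan is to expand $\psi(x) = e^x - 1 = \sum_{k=1}^\infty x^k/k!$ as a power series, apply the conditional moment equivalence from Lemma~\ref{lem:moment_equivalence} to each moment, and then verify that the resulting numerical series is bounded above by $1$ when the normalising constant is $3\|W_n(f)\|_{2\mid \bX}$.

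More concretely, write $M \vcentcolon= \|W_n(f)\|_{2 \mid \bX}$ and fix $c > 0$. Since $W_n(f)$ is a bounded function of finitely many random variables, all conditional moments are finite, and by monotone convergence,
\[
  \E\bigl[\psi(|W_n(f)|/c) \bigm| \bX\bigr]
  = \sum_{k=1}^\infty \frac{\E\bigl[|W_n(f)|^k \bigm| \bX\bigr]}{k!\, c^k}
  = \sum_{k=1}^\infty \frac{\|W_n(f)\|_{k\mid \bX}^k}{k!\, c^k}.
\]
First I would invoke Lemma~\ref{lem:moment_equivalence} to replace each term in the series by $\max(1,k-1)^k M^k$, and then set $c \vcentcolon= 3M$, reducing the problem to showing the purely numerical inequality
\[
  S \vcentcolon= \sum_{k=1}^\infty \frac{\max(1,k-1)^k}{3^k\, k!} \leq 1.
\]
From this, the definition of $\|\cdot\|_{\psi \mid \bX}$ immediately yields $\|W_n(f)\|_{\psi\mid \bX} \leq 3M$.

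The key step, and the main obstacle, is therefore to certify the bound $S \leq 1$. I would do this in two pieces. For the first few terms, direct computation gives $\sum_{k=1}^{10} a_k \approx 0.63$, where $a_k$ denotes the $k$-th summand. For the tail, I would use Stirling's formula in the form $(k-1)^k/k! \leq (k-1)e^{k-1}/\bigl(k\sqrt{2\pi(k-1)}\bigr)$ to show that the ratio $a_{k+1}/a_k$ is bounded above by some $r < 1$ for all $k$ at least $10$, with limiting value $e/3 < 1$. Summing the resulting geometric tail gives a contribution of at most roughly $0.2$, and the combined bound is comfortably below~$1$. (If a cleaner uniform bound is preferred, one can instead verify via Stirling that $a_k \leq C (e/3)^k / \sqrt{k}$ for a manageable universal constant $C$, and sum directly.)

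Thus, up to the bookkeeping of the tail estimate, the proof reduces to combining Lemma~\ref{lem:moment_equivalence} with the Taylor expansion of $\psi$ and a convergent geometric-type sum; the conditional structure is preserved throughout because all inequalities used are almost sure statements given $\bX$.
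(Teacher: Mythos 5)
Your proposal takes essentially the same route as the paper's proof: both expand $\psi(x)=e^x-1$ as a power series, apply Lemma~\ref{lem:moment_equivalence} (valid here since $k \geq 1$ throughout) to bound each conditional moment, normalise by $c = 3\|W_n(f)\|_{2\mid\bX}$, and verify that the resulting purely numerical series is at most $1$ via a split into initial terms plus a Stirling-controlled tail. The paper handles $k\in\{1,2,3\}$ exactly and applies Stirling from $k=4$ onwards (bounding the tail by a geometric series with ratio $e/3$), while you propose computing the first ten terms numerically and then invoking a ratio test with limit $e/3 < 1$; both variants close the argument, and the choice between them is purely bookkeeping.
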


\begin{proof}[Proof of Lemma~\ref{lem:exponential_chaos}]
  By Stirling's formula, $k! > \sqrt{2 \pi k} (k/e)^k$ for $k \in \mathbb{N}$,
  so by the monotone convergence theorem
  and Lemma~\ref{lem:moment_equivalence}, with $a:=3$
  and noting that $(1-1/k)^k \leq 1/e$,
  \begin{align*}
    &\E \biggl\{
      \psi \biggl(
        \frac{|W_n(f)|}
        {a \|W_n(f)\|_{2 \mid \bX} }
      \biggr)
      \biggm| \bX
    \bigg\}
    =
    \sum_{k=1}^\infty
    \frac{1}{k!}
    \frac{\|W_n(f)\|_{k \mid \bX}^k}
    {a^k \|W_n(f)\|_{2 \mid \bX}^k} \\
    &\quad\leq
    \frac{1}{a}
    + \frac{1}{2 a^2}
    + \frac{4}{3 a^3}
    + \sum_{k=4}^\infty
    \frac{(k-1)^k}{a^k k!}
    \leq
    \frac{1}{a}
    + \frac{1}{2 a^2}
    + \frac{4}{3 a^3}
    + \sum_{k=4}^\infty
    \frac{1}{\sqrt{2 \pi k}}
    \left( \frac{e}{a} \right)^k
    (1-1/k)^k \\
    &\quad\leq
    \frac{1}{a}
    + \frac{1}{2 a^2}
    + \frac{4}{3 a^3}
    + \frac{1}{e \sqrt{8 \pi}}
    \sum_{k=4}^\infty
    \left( \frac{e}{a} \right)^k
    \leq
    \frac{1}{a}
    + \frac{1}{2 a^2}
    + \frac{4}{3 a^3}
    + \frac{1}{e \sqrt{8 \pi}}
    \frac{e^4}{a^3(a-e)}
    \leq 1.
    \qedhere
  \end{align*}
\end{proof}

\begin{definition}[Covering numbers]%
  \label{def:covering}
  Let $(\cT, d)$ be a non-empty pseudometric space.
  For $\varepsilon > 0$,
  we say that a non-empty finite set $\cC \subseteq \cT$
  is an \emph{$\varepsilon$-cover} of $(\cT, d)$ if
  $\sup_{t \in \cT} \min_{t' \in \cC} d(t, t') \leq \varepsilon$.
  We write $N(\varepsilon, \cT, d)$ for the minimal cardinality
  of such a cover, and set
  $N(\varepsilon, \cT, d) = \infty$ if no such cover exists.
  We say that $(\cT, d)$ is \emph{totally bounded} if
  $N(\varepsilon, \cT, d) < \infty$ for each $\varepsilon > 0$.

\end{definition}

\begin{lemma}[Chaining with Orlicz norms]%
  \label{lem:orlicz_chaining}
  Let $(\cT, d)$ be a non-empty, totally bounded pseudometric space with
  diameter $D \in (0,\infty)$
  and let $(X_t)_{t \in \cT}$ be a stochastic process.
  Suppose that $(X_t)_{t \in \cT}$ is separable in the sense that there exists
  a countable set $\mathcal T' \subseteq \mathcal T$ such that for all
  $t \in \cT$, there exists a sequence $(t_n)$ in $\cT'$ with
  $t_n \to t$ and $X_{t_n} \to X_t$ almost surely.
  Assume that $\|X_s - X_t\|_\psi \leq d(s, t)$ for all $s, t \in \cT$.
  Then for any $t_0 \in \cT$,
  \begin{align*}
    \Big\|
    \sup_{t \in \cT} |X_t|
    \Big\|_\psi
    \leq
    \| X_{t_0} \|_\psi
    + 8 e \int_0^{D/4}
    \log \bigl(1 + N(\varepsilon, \cT, d)\bigr)
    \diffi \varepsilon,
  \end{align*}
  where $N(\varepsilon, \cT, d)$ denotes the $\varepsilon$-covering
  number of $\cT$ with respect to $d$.
\end{lemma}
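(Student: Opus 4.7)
The plan is a generic chaining argument adapted to Orlicz norms. First, I would use the separability hypothesis to pass to a countable version of $\mathcal{T}$: pick a witnessing sequence so that $\sup_{t \in \mathcal{T}} |X_t|$ equals the supremum over a countable subset almost surely, and truncate to finite subsets, letting the truncation grow at the end via monotone convergence. The key analytical tool is the Orlicz maximal inequality for $\psi(x) = e^x - 1$: using that $\|Y\|_\psi \leq c$ gives the Markov tail bound $\P(|Y| > t) \leq 2e^{-t/c}$, a union bound together with optimisation of the scaling in the Markov step produces $\bigl\|\max_{i \leq N} |Y_i|\bigr\|_\psi \leq K \log(1+N) \max_i \|Y_i\|_\psi$ for a specific constant $K$ (see \citet[Lemma~2.2.2]{van1996weak}).

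Next, I would construct a nested chain of covers. For $k \geq 1$, let $\mathcal{T}_k$ be a minimal $\varepsilon_k$-cover of $\mathcal{T}$ with $\varepsilon_k \vcentcolon= D \cdot 2^{-k-1}$ (so $\varepsilon_1 = D/4$, $\varepsilon_2 = D/8$, and so on), and set $\mathcal{T}_0 \vcentcolon= \{t_0\}$. For each $t \in \mathcal{T}$, let $\pi_k(t) \in \mathcal{T}_k$ be a nearest point under $d$, so $d(\pi_k(t), t) \leq \varepsilon_k$ for $k \geq 1$ and $d(\pi_0(t), t) \leq D$. The chain telescopes as $X_t - X_{t_0} = \sum_{k \geq 1} (X_{\pi_k(t)} - X_{\pi_{k-1}(t)})$, finite after truncation once $\pi_k(t) = t$ for large $k$. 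Taking the supremum over $t$, then $\|\cdot\|_\psi$, and applying the triangle inequality bounds
\begin{equation*}
\Bigl\| \sup_{t \in \mathcal{T}} |X_t - X_{t_0}| \Bigr\|_\psi
\leq \sum_{k=1}^\infty \Bigl\| \max_{t} |X_{\pi_k(t)} - X_{\pi_{k-1}(t)}| \Bigr\|_\psi,
\end{equation*}
where the $k$-th inner maximum is over at most $|\mathcal{T}_k| |\mathcal{T}_{k-1}| \leq N(\varepsilon_k, \mathcal{T}, d)^2$ pairs, each with increment $\psi$-norm bounded by $d(\pi_k(t), \pi_{k-1}(t)) \leq \varepsilon_k + \varepsilon_{k-1}$ via the triangle inequality and the hypothesis $\|X_s - X_t\|_\psi \leq d(s,t)$. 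Applying the Orlicz maximal inequality termwise and using $\log(1 + N^2) \leq 2\log(1+N)$ gives a bound of the form $C \sum_{k \geq 1} \varepsilon_k \log\bigl(1 + N(\varepsilon_k, \mathcal{T}, d)\bigr)$. Since $\varepsilon \mapsto \log(1 + N(\varepsilon, \mathcal{T}, d))$ is non-increasing and $\varepsilon_{k-1} - \varepsilon_k = \varepsilon_k$, the Riemann-sum comparison $\varepsilon_k \log(1 + N(\varepsilon_k, \mathcal{T}, d)) \leq 2 \int_{\varepsilon_{k+1}}^{\varepsilon_k} \log(1 + N(\varepsilon, \mathcal{T}, d)) \, d\varepsilon$ telescopes the sum into $2 \int_0^{D/4} \log(1 + N(\varepsilon, \mathcal{T}, d)) \, d\varepsilon$. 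Combining with the triangle inequality $\|\sup_t |X_t|\|_\psi \leq \|X_{t_0}\|_\psi + \|\sup_t |X_t - X_{t_0}|\|_\psi$ yields the stated form of the bound.

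The main obstacle I expect is matching the precise constants $8e$ and $D/4$ simultaneously. Qualitatively, a Dudley-type integral bound of this exact shape follows from any reasonable chaining, but the numerical constant $8e$ demands careful choices: the chain spacing must be exactly $\varepsilon_k = D \cdot 2^{-k-1}$ so that the step widths $\varepsilon_{k-1} - \varepsilon_k = \varepsilon_k$ yield a factor-of-two integral majorisation; the Orlicz maximal inequality must be invoked in sharp form with an explicit $K$ traceable to $e$; and the triangle-inequality factor $\varepsilon_k + \varepsilon_{k-1} = 3\varepsilon_k$ combines with the $\log(1+N^2) \leq 2\log(1+N)$ loss to determine the overall prefactor. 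The $k = 1$ term, where $\mathcal{T}_0 = \{t_0\}$ forces the looser bound $\varepsilon_1 + D = 5D/4$ on the first increment, must be handled with slight care, either by absorbing it into the constant or by reparameterising the chain to avoid the loss. The secondary technical point---validity of the telescoping supremum in $\psi$-norm---is handled cleanly by the separability reduction combined with the quantitative increment bound $\|X_{\pi_k(t)} - X_t\|_\psi \leq \varepsilon_k \to 0$.
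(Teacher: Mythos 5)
Your overall strategy is the right one and matches the paper's in outline: reduce to finite subsets via separability and monotone convergence, chain over dyadic covers, and control each level's maximum with the Orlicz maximal inequality $\bigl\|\max_{j \in [N]}|Z_j|\bigr\|_\psi \leq e\log(1+N)\max_j\|Z_j\|_\psi$ for $\psi(x)=e^x-1$. However, as you yourself flag, the chain you build does not deliver the stated bound, and this is a genuine gap rather than a bookkeeping issue: the statement carries the explicit constant $8e$ and upper limit $D/4$, which are used downstream (they feed the constants $140$, $17920$ and $33104$ in the $U$-process lemmas). Your chain uses \emph{direct} projections $\pi_k(t), \pi_{k-1}(t)$ of the same point $t$, so at level $k$ the maximum ranges over up to $N_k N_{k-1}$ pairs (costing a factor $2$ through $\log(1+N^2)\leq 2\log(1+N)$) and each increment is only bounded by $d(\pi_k(t),\pi_{k-1}(t))\leq \varepsilon_k+\varepsilon_{k-1}=3\varepsilon_k$ (costing a further factor $3/2$ relative to a single radius). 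Tracking your own accounting, the generic levels contribute roughly $12e\int_0^{D/8}\log(1+N(\varepsilon,\cT,d))\diff\varepsilon$ and the first link (from $\cT_0=\{t_0\}$) another $\approx 5e\int_0^{D/4}\log(1+N(\varepsilon,\cT,d))\diff\varepsilon$, i.e.\ a bound of order $16e$--$17e$ times the entropy integral. ``Absorbing it into the constant'' is not available when the constant is fixed at $8e$, and the ``reparameterisation'' you gesture at is precisely the step you have not supplied.

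The missing idea is to make the chain \emph{nested} by composing the projections: working inside a finite set $\tilde\cT \ni t_0$, with $\tilde\cT_0=\{t_0\}$ and $\tilde\cT_k$ a $(2^{-k}D)$-cover of $\tilde\cT$ by points of $\tilde\cT$, one chains through $\pi_k\circ\pi_{k+1}\circ\cdots\circ\pi_K(t)$, terminating at a level $K$ with $2^{-K}D$ below the smallest positive distance in $\tilde\cT$ (points at pseudometric distance zero have a.s.\ equal $X$-values, which also repairs your ``$\pi_k(t)=t$ for large $k$'' step). Then the level-$k$ maximum ranges over at most $N_{k+1}$ increments, each of $\psi$-norm at most $2^{-k}D$ (the first link to $t_0$ is automatically covered since $\{t_0\}$ is a $D$-cover), so the maximal inequality gives $e\sum_{k\geq 0}2^{-k}D\log(1+N_{k+1})\leq 4e\int_0^{D/2}\log\bigl(1+N(\varepsilon,\tilde\cT,d)\bigr)\diff\varepsilon$, and the comparison $N(\varepsilon,\tilde\cT,d)\leq N(\varepsilon/2,\cT,d)$ for internal covers of a subset converts this to $8e\int_0^{D/4}\log\bigl(1+N(\varepsilon,\cT,d)\bigr)\diff\varepsilon$; this last substitution is also where the limit $D/4$ and the final factor of $2$ come from, neither of which your scheme produces.
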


\begin{proof}[Proof of Lemma~\ref{lem:orlicz_chaining}]
  Let $N \geq 2$.
  Following the proof of \citet[][Lemma~2.2.2]{van1996weak}
  with $c=1$ and $y \geq 1$,
  for random variables $Z_1, \ldots, Z_N$,
  we have
  \begin{align*}
    \E \biggl\{
      \psi \biggl(
        \frac{\max_{j \in [N]} |Z_j|}
        {y \max_{j \in [N]} \|Z_j\|_\psi}
      \biggr)
    \biggr\}
    \leq \frac{N}{\psi(y)} + \psi(1).
  \end{align*}
  Now if $Z$ is a non-negative random variable, $a > 0$ and $b \geq 1$, then
  $\psi\bigl(Z/(ab)\bigr) \leq \psi(Z/a)/b$
  by convexity of $\psi$, so
  $\|Z\|_\psi \leq
  \inf_{a > 0} a\max\bigl\{\E \bigl(\psi(Z/a) \bigr),1\bigr\}$.
  Hence, with $y = \psi^{-1}(N) = \log(1 + N) \geq 1$,
  \begin{align}
    \label{ex:maximal_inequality}
    \Bigl\| \max_{j \in [N]} |Z_j| \Bigr\|_\psi &\leq y \max_{j \in
    [N]} \|Z_j\|_\psi \max\biggl[
      \E \biggl\{
        \psi \biggl(
          \frac{\max_{j \in [N]} |Z_j|}
          {y \max_{j \in [N]} \|Z_j\|_\psi}
      \biggr) \biggr\}
    , 1 \biggr] \nonumber \\
    &\leq e \log (1 + N)
    \max_{j \in [N]} \|Z_j\|_\psi.
  \end{align}
  Let $\delta > 0$ and
  suppose that $t_1, \ldots, t_N, s_1, \ldots, s_N \in \cT$
  satisfy $\max_{j \in [N]} d(s_j, t_j) \leq \delta$.
  By~\eqref{ex:maximal_inequality},
  \begin{align*}
    \Bigl\|
    \max_{j \in [N]}
    \big|X_{t_j} - X_{s_j}\big|
    \Bigr\|_\psi
    &\leq
    e \delta \log (1 + N).
  \end{align*}
  Let $\tilde \cT$ be
  a finite subset of $\cT$ that has
  strictly positive diameter and contains $t_0$.
  Set $\tilde\cT_0 = \{t_0\}$, which is a $D$-cover of
  $(\tilde\cT, d)$,
  and for $k \in \mathbb{N}$ let
  $\tilde\cT_k$ be a $(2^{-k}D)$-cover of $(\tilde\cT, d)$
  with cardinality
  $N_k \vcentcolon= N\big(2^{-k}D, \tilde\cT, d\big)$.
  Now, for $k \in \mathbb{N}_0$, define $\pi_k:\tilde\cT \rightarrow
  \tilde\cT_k$
  by\footnote{Formally, we order
    $\tilde \cT_k$
    and in the definition of $\pi_k$ we take the
  smallest element of the argmin set.}
  $\pi_k(t) \vcentcolon= \argmin_{s \in \tilde \cT_k} d(s,t)$,
  so that $d\bigl(t, \pi_k(t)\bigr) \leq 2^{-k}D$
  for all $t \in \tilde\cT$.
  Choose $K \in \mathbb N$ sufficiently large that
  $2^{-K}D < \min\{d(s,t): s, t \in \tilde \cT, d(s,t) > 0\}$.
  Observing that if $d(s, t) = 0$ then $\|X_s - X_t\|_\psi = 0$, so
  $X_s = X_t$ almost surely, and from the fact that $\pi_K$ is
  surjective, we have
  \begin{align*}
    \max_{t \in \tilde \cT} |X_t - X_{t_0}|
    &= \max_{t \in \tilde\cT_K} |X_t - X_{t_0}|
    = \max_{t \in \tilde\cT} |X_{\pi_K(t)} - X_{t_0}|
  \end{align*}
  almost surely.
  Noting that
  $d\bigl(\pi_{k+1}(t), \pi_k \circ \pi_{k+1}(t)\bigr)
  \leq 2^{-k} D$, we have
  \begin{align*}
    \Bigl\|
    \sup_{t \in \tilde \cT}
    \big|
    X_{\pi_{k+1}(t)} - X_{\pi_k \circ \pi_{k+1}(t)}
    \big|
    \Bigr\|_\psi
    &\leq
    e \cdot 2^{-k} D
    \log (1 + N_{k+1}).
  \end{align*}
  Since
  $N(\varepsilon, \tilde\cT, d) \leq N(\varepsilon/2, \cT, d)$,
  \begin{align*}
    \Bigl\|
    \max_{t \in \tilde\cT}
    \big|X_t - X_{t_0}\big|
    \Bigr\|_\psi
    &\leq
    \sum_{k=0}^{K-1}
    \Bigl\|
    \max_{t \in \tilde\cT}
    \big|
    X_{\pi_{k+1} \circ \cdots \circ \pi_K(t)}
    - X_{\pi_k \circ \pi_{k+1} \circ \cdots \circ \pi_K(t)}
    \big|
    \Bigr\|_\psi \\
    &\leq
    e \sum_{k=0}^{K-1}
    2^{-k}D
    \log (1 + N_{k+1})
    \leq
    4 e \sum_{k=0}^{\infty}
    \int_{2^{-(k+2)}D}^{2^{-(k+1)}D}
    \log \big(1 + N(\varepsilon, \tilde\cT, d) \big)
    \diffi{\varepsilon} \\
    &\leq
    4 e
    \int_{0}^{D / 2}
    \log \big(1 + N(\varepsilon, \tilde\cT, d)\big)
    \diffi{\varepsilon}
    \leq
    4 e
    \int_{0}^{D / 2}
    \log \big(1 + N(\varepsilon/2, \cT, d)\big)
    \diffi{\varepsilon} \\
    &=
    8 e \int_{0}^{D / 4}
    \log \big(1 + N(\varepsilon, \cT, d)\big)
    \diffi{\varepsilon}.
  \end{align*}
  Taking an increasing sequence of finite sets whose union is the
  countable set $\cT'$, and since the above bound does not depend on
  the set $\tilde\cT$,
  the monotone convergence theorem yields
  \begin{align*}
    \Bigl\|
    \sup_{t \in \cT'}
    \big|X_t\big|
    \Bigr\|_\psi
    &\leq
    \bigl\|X_{t_0}\bigr\|_\psi
    + \Bigl\|
    \sup_{t \in \cT'}
    \big|X_t - X_{t_0}\big|
    \Bigr\|_\psi
    \leq
    \bigl\|X_{t_0}\bigr\|_\psi
    + 8 e \int_{0}^{D / 4}
    \log \big(1 + N(\varepsilon, \cT, d)\big)
    \diffi{\varepsilon}.
  \end{align*}
  Finally the result follows by separability, since
  $\sup_{t \in \cT} |X_t| = \sup_{t \in \cT'} |X_t|$ almost surely.
\end{proof}

\begin{lemma}[Orlicz norm supremum bound for symmetrised $U$-statistics]%
  \label{lem:orlicz_symmetrised}
  Let $\mathcal{Q}$ denote the set of finitely supported
  probability measures on $\mathcal{X} \times \mathcal{X}$,
  and for a measurable $f : \cX \times \cX \to \R$ and $Q \in
  \mathcal{Q}$, define
  $\|f\|_{Q, 2}^2 \vcentcolon= \E_{(Y_1, Y_2) \sim Q} \bigl\{ f(Y_1,
  Y_2)^2 \bigr\}$
  and $\|f\|_{2}^2 \vcentcolon= \E \bigl\{ f(X_1, X_2)^2 \bigr\}$.
  Let $\|F\|_\infty \vcentcolon= \sup_{x_1, x_2 \in \cX} |F(x_1, x_2)|$ and
  \begin{align*}
    J &\vcentcolon=
    \sup_{Q \in \mathcal Q} \int_0^1
    \log \bigl(1 + N(\varepsilon \|F\|_{Q,2}, \cF, \|\cdot\|_{Q,2}) \bigr)
    \diffi \varepsilon,
  \end{align*}
  with the covering number taken
  with respect to the pseudo-metric
  induced by $\|\cdot\|_{Q,2}$.
  Then
  \begin{align*}
    \bigl\|
    \|n W_n\|_\cF
    \bigr\|_\psi
    &\leq
    140 \|F\|_\infty J.
  \end{align*}
\end{lemma}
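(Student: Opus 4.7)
The plan is to exploit the Rademacher chaos structure of $W_n$ conditional on $\bX \vcentcolon= (X_1, \ldots, X_n)$, combining the exponential bound of Lemma~\ref{lem:exponential_chaos} with the Orlicz-norm chaining bound of Lemma~\ref{lem:orlicz_chaining}. I first introduce the (random) finitely supported probability measure
$\hat P_n \vcentcolon= \frac{1}{n(n-1)} \sum_{i \neq j} \delta_{(X_i, X_j)} \in \mathcal Q$
on $\cX \times \cX$, for which $\|h\|_{\hat P_n, 2}^2 = \frac{2}{n(n-1)} \sum_{i < j} h(X_i, X_j)^2$ when $h$ is symmetric. I may assume $\cF \neq \emptyset$ and $J < \infty$, since otherwise the conclusion is trivial.

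For any $f, g \in \cF$, using orthogonality of the products $(\varepsilon_i \varepsilon_j)_{i < j}$ over the Rademacher randomness, a direct computation gives
\begin{equation*}
  \|n W_n(f) - n W_n(g)\|_{2 \mid \bX}^2
  = \frac{4}{(n-1)^2} \sum_{i<j} (f-g)(X_i, X_j)^2
  = \frac{2n}{n-1} \|f - g\|_{\hat P_n, 2}^2
  \leq 4 \|f-g\|_{\hat P_n, 2}^2,
\end{equation*}
since $n \geq 2$, and hence by Lemma~\ref{lem:exponential_chaos}, $\|n W_n(f) - n W_n(g)\|_{\psi \mid \bX} \leq 6 \|f - g\|_{\hat P_n, 2}$; similarly, for any fixed $f_0 \in \cF$, $\|n W_n(f_0)\|_{\psi \mid \bX} \leq 6 \|F\|_{\hat P_n, 2}$. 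I apply Lemma~\ref{lem:orlicz_chaining} conditionally on $\bX$ with the pseudo-metric $d(f, g) \vcentcolon= 6 \|f - g\|_{\hat P_n, 2}$, where the required separability of the process $\bigl(n W_n(f)\bigr)_{f \in \cF}$ follows from the pointwise separability of $\cF$ combined with dominated convergence (using envelope $F$ and the finite support of $\hat P_n$) and the pointwise convergence of the Rademacher sums. The diameter satisfies $D \leq 12 \|F\|_{\hat P_n, 2}$; since a $d$-ball of radius $\varepsilon$ coincides with a $\|\cdot\|_{\hat P_n, 2}$-ball of radius $\varepsilon/6$, substituting $u = \varepsilon / (6 \|F\|_{\hat P_n, 2})$ yields
\begin{align*}
  8 e \int_0^{D/4} \log\bigl(1 + N(\varepsilon, \cF, d)\bigr) \diff \varepsilon
  &\leq 48 e \|F\|_{\hat P_n, 2} \int_0^{1/2} \log\bigl(1 + N(u \|F\|_{\hat P_n, 2}, \cF, \|\cdot\|_{\hat P_n, 2})\bigr) \diff u \\
  &\leq 48 e \|F\|_\infty J,
\end{align*}
using $\|F\|_{\hat P_n, 2} \leq \|F\|_\infty$ and $\hat P_n \in \mathcal Q$. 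Combining these bounds, $\bigl\| \|n W_n\|_\cF \bigr\|_{\psi \mid \bX} \leq 6 \|F\|_\infty + 48 e \|F\|_\infty J$ almost surely; since the right-hand side is deterministic, the same bound holds for the unconditional Orlicz norm.

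To absorb the leading constant into the $J$ term, observe that $\cF$ being non-empty implies $N(\cdot, \cF, \|\cdot\|_{Q, 2}) \geq 1$ for every $Q \in \mathcal Q$, whence $J \geq \log 2$. Since a numerical check shows $6 / \log 2 + 48 e < 140$, we obtain $6 \|F\|_\infty + 48 e \|F\|_\infty J \leq 140 \|F\|_\infty J$, as required. The main obstacle is the careful verification of the measurability and separability conditions needed to apply Lemma~\ref{lem:orlicz_chaining} conditionally on $\bX$ with the random pseudo-metric $d$, together with the precise tracking of constants through the change of variables so as to match the target constant $140$ exactly.
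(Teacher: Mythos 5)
Your proposal is correct and follows essentially the same route as the paper's proof. You apply Lemma~\ref{lem:exponential_chaos} to bound the conditional Orlicz norm of increments of the chaos $n W_n$ in terms of the empirical $L_2$-distance, then chain via Lemma~\ref{lem:orlicz_chaining} with the pseudometric $d = 6\|\cdot\|_{\hat P_n,2}$, exactly as in the paper; the only cosmetic difference is that you phrase the empirical seminorm $\|\cdot\|_{n,2}$ explicitly as $\|\cdot\|_{\hat P_n,2}$ for the finitely supported probability measure $\hat P_n$, which makes the step $\int_0^{1/2}(\cdots)\,\mathrm d u\le J$ transparent, and you are slightly more explicit about why separability of $\cF$ implies separability of the conditional process. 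The constants ($3\times 2=6$, $D/4 = 3\|F\|_{\hat P_n,2}$, $48e$, and the absorption via $J\ge\log 2$ giving $6/\log 2 + 48e < 140$) match the paper exactly.
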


\begin{proof}[Proof of Lemma~\ref{lem:orlicz_symmetrised}]
  The result is clear if $\|F\|_\infty = 0$, so assume that
  $\|F\|_\infty > 0$. Note that $W_n$ is a Rademacher chaos of degree two
  conditional on $\bX$, and for $f \in \cF$ define
  $\|f\|_{n,2}^2 \vcentcolon= \frac{2}{n(n-1)} \sum_{(i, j) \in \mathcal I}
  f(X_i, X_j)^2$. If $(i, j), (i', j') \in \mathcal I$, then
  $\E (\varepsilon_i \varepsilon_j \varepsilon_{i'} \varepsilon_{j'}) = 0$
  unless $(i, j) = (i', j')$. Hence, by Lemma~\ref{lem:exponential_chaos},
  \begin{align}
    \nonumber
    \big\| W_n(f) \big\|_{\psi \mid \bX}^2
    &\leq
    9 \|W_n(f)\|_{2 \mid \bX}^2
    =
    \frac{36}{n^2(n-1)^2} \,
    \E \biggl\{
      \biggl(
        \sum_{(i,j) \in \mathcal{I}}
        \varepsilon_i \varepsilon_j
        f(X_i, X_j)
      \biggr)^2
      \biggm| \bX
    \biggr\} \\
    \nonumber
    &=
    \frac{36}{n^2(n-1)^2} \,
    \sum_{(i,j) \in \mathcal{I}}
    \sum_{(i',j') \in \mathcal{I}}
    f(X_i, X_j)
    f(X_{i'}, X_{j'})
    \E ( \varepsilon_i \varepsilon_j
    \varepsilon_{i'} \varepsilon_{j'} ) \\
    \label{eq:Wn_Orlicz}
    &=
    \frac{36}{n^2(n-1)^2}
    \sum_{(i,j) \in \mathcal{I}}
    f(X_i, X_j)^2
    =
    \frac{18}{n(n-1)}
    \|f\|_{n,2}^2
    \leq
    \frac{36}{n^2}
    \|f\|_{n,2}^2.
  \end{align}
  For $f, f' \in \cF$, by linearity of $W_n$
  and~\eqref{eq:Wn_Orlicz}, we have
  $\big\| n W_n(f) - n W_n(f') \big\|_{\psi \mid \bX}^2
  \leq 36\|f - f'\|_{n,2}^2$.
  The diameter of $\cF$ under $\|\cdot\|_{n,2}$ is at most
  $2 \|F\|_{n,2}$, so by Lemma~\ref{lem:orlicz_chaining}
  with $d(f, f') = 6 \|f - f'\|_{n, 2}$,
  \begin{align*}
    \bigl\|
    \|n W_n\|_\cF
    \bigr\|_{\psi \mid \bX}
    &\leq
    \|n W_n(f_0) \|_{\psi \mid \bX}
    + 8 e \int_0^{3 \|F\|_{n, 2}}
    \log \bigl(1 + N(\varepsilon, \cF, 6 \|\cdot\|_{n,2})\bigr)
    \diffi \varepsilon \\
    &\leq
    6 \|f_0\|_{n, 2}
    + 48 e \|F\|_{n, 2} \int_0^{1/2}
    \log \bigl(1 + N(\varepsilon \|F\|_{n, 2}, \cF, \|\cdot\|_{n,2})\bigr)
    \diffi \varepsilon \\
    &\leq
    6 \|F\|_{n, 2} + 48 e \|F\|_{n, 2} J
    \leq
    \|F\|_{n, 2} J\Bigl(\frac{6}{J} + 48 e\Bigr)
    \leq
    140 \|F\|_{\infty} J,
  \end{align*}
  where the last inequality follows as $J \geq \log 2$
  and $6 / \log 2 + 48 e \leq 140$. Therefore,
  \begin{align*}
    \E \biggl\{
      \psi \biggl(
        \frac{\|n W_n\|_\cF}{ 140 \|F\|_\infty J}
      \biggr)
    \biggr\}
    &=
    \E \biggl[
      \E \biggl\{
        \psi \biggl(
          \frac{\|n W_n\|_\cF}{ 140 \|F\|_\infty J}
        \biggr)
        \Bigm|
        \bX
      \biggr\}
    \biggr]
    \leq 1,
  \end{align*}
  as required.
\end{proof}

\begin{lemma}[Orlicz norm supremum bound for
  $U$-statistics]\label{lem:orliczUStat}
  We have
  \begin{align*}
    \bigl\| \|n U_n\|_\cF \bigr\|_\psi
    \leq 17920 \|F\|_\infty J.
  \end{align*}
  Therefore, for $s \geq 1$,
  \begin{align*}
    \P \Big(
      \|n U_n\|_\cF \geq
      33104 s \|F\|_\infty J
    \Big)
    &\leq
    e^{-s}.
  \end{align*}
\end{lemma}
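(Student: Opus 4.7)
The plan is to combine the decoupling/randomisation result of Corollary~\ref{cor:decoupling} with the Orlicz-norm supremum bound for symmetrised $U$-statistics from Lemma~\ref{lem:orlicz_symmetrised}, and then convert the Orlicz bound into a tail bound via Markov's inequality.

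First, I would promote the pointwise moment comparison of Corollary~\ref{cor:decoupling} into an Orlicz-norm comparison. Set $c := \|\|nW_n\|_\cF\|_\psi$, which is finite by Lemma~\ref{lem:orlicz_symmetrised}, so that $\E\{\psi(\|nW_n\|_\cF/c)\} \leq 1$. Applying Corollary~\ref{cor:decoupling} to the convex, increasing function $\phi(x) := \psi(x/(128c))$ (applied to $U_n$ and $W_n$ scaled by $n$, noting that both processes admit the same countable dense subclass coming from the separability of $\cF$), I obtain
\[
  \E\bigl\{\psi\bigl(\|nU_n\|_\cF/(128c)\bigr)\bigr\}
  \leq \E\bigl\{\psi\bigl(\|nW_n\|_\cF/c\bigr)\bigr\} \leq 1,
\]
so that $\bigl\|\|nU_n\|_\cF\bigr\|_\psi \leq 128 c$. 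Inserting the estimate $c \leq 140 \|F\|_\infty J$ from Lemma~\ref{lem:orlicz_symmetrised} gives the claimed bound $\bigl\|\|nU_n\|_\cF\bigr\|_\psi \leq 17920 \|F\|_\infty J$.

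For the tail bound, I would use the standard Markov-type conversion: if $Z \geq 0$ has $\|Z\|_\psi \leq \alpha$, then for any $t > 0$, $\P(Z \geq t) \leq 1/\psi(t/\alpha) = 1/(e^{t/\alpha}-1)$. Taking $\alpha = 17920\|F\|_\infty J$ and $t = 33104 s \|F\|_\infty J$ with $s \geq 1$, the inequality $1/(e^{t/\alpha}-1) \leq e^{-s}$ amounts to $t/\alpha \geq \log(1+e^s)$, which holds since $t/\alpha \approx 1.847 s$ and $\log(1+e^s) \leq s + \log(1+e^{-1}) \leq s + 1/e$ for $s \geq 1$, yielding the numerical constant $33104$ with room to spare.

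The only real content is the first step, i.e.\ lifting Corollary~\ref{cor:decoupling} to Orlicz norms; this is routine once one chooses the rescaled convex function $\phi$ correctly, so I do not anticipate a serious obstacle.
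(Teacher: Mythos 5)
Your proposal is correct and mirrors the paper's argument: the first display uses Corollary~\ref{cor:decoupling} with the rescaled Orlicz function to transfer the Orlicz bound from $\|nW_n\|_{\mathcal F}$ to $\|nU_n\|_{\mathcal F}$, then plugs in Lemma~\ref{lem:orlicz_symmetrised}, and the tail bound is the standard Markov conversion for the Orlicz norm. The only cosmetic difference is in the final numerics: the paper bounds $1/(e^s-1)\le (e-1)^{-s}$ for $s\ge 1$ and obtains $33104 \geq 17920/\log(e-1)$, whereas you check the equivalent inequality $t/\alpha \geq \log(1+e^s)$ directly; both give the same constant.
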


\begin{proof}[Proof of Lemma~\ref{lem:orliczUStat}]
  By Corollary~\ref{cor:decoupling},
  we have
  \begin{align*}
    \E \biggl\{
      \psi \biggl(
        \frac{\|U_n\|_\cF}{128 \bigl\| \|W_n\|_\cF \bigr\|_\psi}
      \biggr)
    \biggr\}
    \leq
    \E \biggl\{
      \psi \biggl(
        \frac{\|W_n\|_\cF}{\bigl\| \|W_n\|_\cF \bigr\|_\psi}
      \biggr)
    \biggr\}
    \leq 1,
  \end{align*}
  so that by
  Lemma~\ref{lem:orlicz_symmetrised}, %
  \begin{align*}
    \bigl\| \|n U_n\|_\cF \bigr\|_\psi \leq 128 \bigl\| \|nW_n\|_\cF
    \bigr\|_\psi
    \leq 128 \times 140 \|F\|_\infty J
    = 17920 \|F\|_\infty J.
  \end{align*}
  Now, for $s \geq 1$ and a random variable $Z$, we have by Markov's
  inequality that
  \[
    \P \bigl( |Z| \geq s \|Z\|_\psi \bigr) =
    \P\biggl\{\psi\biggl(\frac{|Z|}{\|Z\|_\psi}\biggr) \geq
    \psi(s)\biggr\} \leq \frac{1}{\psi(s)}
    = \frac{1}{e^s-1}
    \leq \frac{1}{(e-1)^s}
    = e^{-s \log(e-1)},
  \]
  so
  \begin{align*}
    \P \Big(
      \|n U_n\|_\cF \geq
      33104 s \|F\|_\infty J
    \Big) &\leq \P \bigg(
      \|n U_n\|_\cF \geq
      \frac{17920 s}{\log(e-1)} \|F\|_\infty J
    \bigg) \\
    &\leq \P \bigg(
      \|n U_n\|_\cF \geq
      \frac{s}{\log(e-1)} \bigl\|\|n U_n\|_\cF\bigr\|_\psi
    \bigg) \leq
    e^{-s}.
    \qedhere
  \end{align*}
\end{proof}

\bibliographystyle{apalike}
\bibliography{refs}

\end{document}